\documentclass[11pt]{article}
\usepackage{amsmath, amssymb, amsthm}
\usepackage{physics}
\usepackage{braket}
\usepackage{xcolor}
\numberwithin{equation}{section}
\usepackage{makecell}
\usepackage{arydshln}
\usepackage{float}
\usepackage{graphicx}
\usepackage{subcaption}
\usepackage{comment}
\usepackage{blindtext}
\usepackage{titlesec}
\usepackage{hyperref}
\usepackage[style=numeric,sorting=none]{biblatex}
\usepackage{paralist}
\usepackage{braket}
\usepackage{pdflscape}
\usepackage{enumitem}
\usepackage{mathtools}
\usepackage{tocloft}
\addbibresource{bibliography.bib}

\usepackage{titling}
\setlength{\droptitle}{-2.1cm}

\makeatletter
\renewenvironment{abstract}
  {\small
   \begin{center}
     \bfseries \abstractname
   \end{center}
   \vspace{-1.5em}
   \list{}{\leftmargin=20pt\rightmargin=20pt}%
   \item\relax}
  {\endlist\vspace{-1em}}
\makeatother

\DeclareMathOperator*{\argmax}{arg max}
\DeclareMathOperator*{\sign}{sign}
\DeclareMathOperator*{\Span}{Span}

\newtheorem{definition}{Definition}
\newtheorem{theorem}{Theorem}
\newtheorem{proposition}[theorem]{Proposition}

\newtheorem{conjecture}{Conjecture}
\newtheorem{corollary}{Corollary}

\newcommand{\mH}{\mathcal{H}}
\newcommand{\mS}{\mathcal{S}}
\newcommand{\mSS}{\mathcal{SS}}
\newcommand{\mQ}{\mathcal{Q}}
\newcommand{\mT}{\mathcal{T}}
\newcommand{\mM}{\mathcal{M}}
\newcommand{\mL}{\mathcal{L}}
\newcommand{\mW}{\mathcal{W}}
\newcommand{\mP}{\mathcal{P}}
\newcommand{\mC}{\mathcal{C}}
\newcommand{\mD}{\mathcal{D}}
\newcommand{\mV}{\mathcal{V}}
\newcommand{\mB}{\mathcal{B}}
\newcommand{\mR}{\mathcal{R}}
\newcommand{\mF}{\mathcal{F}}

\makeatletter
\renewenvironment{proof}[1][\proofname]{\par
  \pushQED{\qed}
  \normalfont\topsep6pt \trivlist
  \item[\hskip\labelsep\itshape #1.]\mbox{}\\*  
  \ignorespaces
}{
  \popQED\endtrivlist\@endpefalse
}
\makeatother

\textheight 22cm\textwidth 15.5cm
\oddsidemargin 0pt\evensidemargin 0pt\topmargin -40pt
\topmargin-20pt

\addtolength{\parskip}{1ex}
\jot=.5ex

\title{Extremizing Measures of Magic on Pure States\\by Clifford-stabilizer States}

\author
{Muhammad Erew, Moshe Goldstein
\\
\\
\normalsize{Raymond and Beverly Sackler School of Physics and Astronomy,}\\
\normalsize{Tel-Aviv University, Tel-Aviv 6997801, Israel.}\\
\\
\normalsize{erew@tauex.tau.ac.il , mgoldstein@tauex.tau.ac.il}
}

\date{}

\begin{document}

\maketitle
 
\begin{abstract}
Magic states are essential resources enabling universal, fault-tolerant quantum computation within the stabilizer framework.
They were originally termed ``magic'' not only because their non-stabilizerness provides the extra resource needed for stabilizer codes, constrained by the Eastin-Knill theorem, to achieve universality, but also because they can be fault-tolerantly distilled in a manner compatible with stabilizer-based error correction.
Since then, various discrete and continuous measures have been introduced to quantify non-stabilizerness (``magic''), for both pure and mixed states.
However, not every state with nonzero magic, as defined by these measures, has been shown to be distillable by a stabilizer code, and many of currently known distillable states arise as special cases of \emph{Clifford-stabilizer states}, defined as pure states uniquely stabilized by finite subgroups of the Clifford group.
In this work, we develop a general framework for \emph{group-covariant functionals} on the real manifold of Hermitian operators.
We formalize the notions of $G$-stabilizer spaces, states and codes for arbitrary finite subgroups $G \subset \mathrm{U}(\mH)$, the group of unitary operators acting on the Hilbert space $\mH$, and introduce analytic families of $G$-covariant functionals on $\mathrm{Herm}(\mH)$, the real vector space of Hermitian operators on $\mH$.
Our main theorem shows that any $G$-invariant pure state is an extremal point for a broad class of derived functionals, including symmetric combinations, max-type functionals, and $\alpha$-Rényi-type sums, whenever the underlying family is $G$-covariant.
This extremality holds when variations are restricted to directions lying in the orthogonal complement of the stabilized subspace associated with the group, while the state remains pure.
Specializing to the Pauli and Clifford groups, this framework unifies the extremality structure of several canonical magic measures (mana, stabilizer Rényi entropies, generalized stabilizer entropies, stabilizer fidelity) within a single group-theoretic and geometric picture.
In particular, Clifford-stabilizer states extremize these three measures of magic for pure states.
We identify a distinguished class of Clifford-stabilizer states, the non-degenerate eigenstates of Clifford operations.
We find all Clifford-inequivalent states of those for qubits, qutrits, ququints, and two-qubit systems, discovering new candidates for magic distillation protocols.
Notably, we propose an inefficient distillation protocol for a two-qubit magic state with higher stabilizer fidelity than $\ket{TT}$ and $\ket{TH}$.
Finally, we conjecture, based on numerical results, that stabilized subspaces of finite subgroups of the Clifford group whose Wigner functions are everywhere non-vanishing maximize the mana of magic under variations in the orthogonal complement. Moreover, guided by both numerical observations and structural considerations, we conjecture that the so-called SIC-POVM fiducial states are Clifford-stabilizer states.
\end{abstract}

\newpage

\tableofcontents

\newpage

\section{Introduction}

Quantum computing promises to solve certain computational problems exponentially faster than classical computers~\cite{NielsenChuang}.
However, building a fault-tolerant quantum computer faces formidable challenges, primarily due to the fragile nature of quantum information.
Quantum error-correcting codes (QECCs) are indispensable for addressing this fragility~\cite{GottesmanThesis}. These codes protect quantum information by encoding logical qubits into larger physicsl spaces, allowing for the detection and correction of errors without collapsing the quantum state.

A fundamental limitation, the Gottesman-Knill theorem, asserts that operations composed solely of stabilizer elements (preparations of stabilizer states, Clifford unitaries, and Pauli measurements) can be efficiently simulated on a classical computer~\cite{GottesmanHeisenberg,AaronsonGottesman}.
Hence, they are insufficient for universal quantum computation.
Further compounding the challenge, the Eastin-Knill theorem shows that no QECC can implement a universal set of transversal gates, that is, fault-tolerant gates that act independently on each qudit~\cite{EastinKnill,Campbell2017}.
These results underscore the necessity of resources beyond stabilizer operations to achieve fault-tolerant universality.

One elegant response to these limitations is \emph{magic state distillation}.
Rather than circumventing these restrictions, magic state distillation supplements them by preparing special non-stabilizer states known as magic states~\cite{KitaevBravyi,ReichardtMagicCSS,HowardContextuality,Campbell2017,Seddon2019,Bravyi2019simulationofquantum}.
These states, when combined with stabilizer operations, enable universal quantum computation.
Magic state distillation protocols purify noisy non-stabilizer states into high-fidelity magic states using only stabilizer operations, thus remaining compatible with fault-tolerant error correction.

After the seminal work of Bravyi and Kitaev, and motivated by the Gottesman-Knill theorem, the literature began to use the term magic to refer to non-stabilizerness. This usage, however, slightly departs from the original sense in which Bravyi and Kitaev introduced the term.
In their work, magic captured a dual role: first, the non-stabilizer states that, when combined with Clifford operations and Pauli measurements, complete the set of resources required for universal quantum computation; and second, the fact that these same non-stabilizer states can be distilled by stabilizer codes to achieve fault-tolerant universality.
In this work, we will adopt the modern convention: whenever we say \emph{magic}, we refer to non-stabilizerness itself, or to the degree of deviation from the stabilizer set in a quantum state.

Quantifying magic is crucial for understanding which states serve as resources for quantum computation and how powerful they are.
To this end, several resource-theoretic measures of magic have been proposed~\cite{Veitch_2014,HowardContextuality,WinterYangCoherence,HowardCampell2017ResourceTheory,WangWilde2019thauma,WangThaumaBounds}.
For pure states, quantities such as stabilizer fidelity and the mana of magic play central roles.
Stabilizer fidelity captures how close a given state is to the set of stabilizer states, while mana quantifies the extent to which the discrete Wigner function of a state deviates from a genuine classical probability distribution on the phase space~\cite{GibbonsFiniteFieldWigner,Gross2007Hudson,HowardContextuality, Veitch2012,Veitch_2014,Pashayan2015}.
These measures, along with others like the min- and max-relative entropies of magic, offer a structured framework for studying and comparing quantum states from the resource perspective.

Various magic state distillation protocols have been developed across different system dimensions.
Seminal protocols include those proposed by Bravyi and Kitaev for qubits~\cite{KitaevBravyi}, and extensions to higher prime dimensions~\cite{Anwar_2012,Distillation_in_All_Prime_Dimensions,FaultTolerantinD,HowardCampell2017ResourceTheory}, as well as tight distillation schemes for qutrits~\cite{qutritTight}.
Beyond single-qudit settings, recent work has explored the distillation of magic states associated with multiqubit non-Clifford gates, most notably the three-qubit CCZ and Toffoli states~\cite{Toffoli1,Toffoli2,Gupta2024}.
A common thread running through these works is the appearance of highly symmetric, Clifford-related states and structures, hints that deeper principles may govern which states are especially suited for magic distillation.

Motivated by these observations, this work investigates magic measures in relation to stabilized spaces of finite subgroups of the Clifford group.
We consider a system of qudits with dimension $ d $, where $ d $ is prime, and denote by $ \mathcal{C}_{n,d} $ the Clifford group acting on $ n $ such qudits. 
Assume the existence of a stabilizer code that encodes a single logical qudit into $ N $ physical qudits, and that a gate $ M \in \mathcal{C}_{1,d}$ acts transversally; that is, the logical operation $ M_L $ is implemented by applying $ M $ independently to each physical qudit.
Initializing each qudit in some mixed state twirled over the finite group generated by $M$, $\langle M \rangle$, guarantees that, after projection onto the code space by measuring the trivial syndrome, the logical state remains in a twirled state over $\langle M_L \rangle$.
This follows from the fact that $M$ is transversal and the logical $M$, $M_L$, commutes with the code projector $\Pi$.
Consequently, consider a mixed input state that is close to a magic, non-degenerate eigenstate of $ M $, say the eigenstate corresponding to eigenvalue $ 1 $. 
After measuring and obtaining the trivial syndrome, the resulting logical state is a mixed state near the encoded logical same magic eigenstate. 
The central question is whether this output state is closer to the target eigenstate than the input was. If so, the process constitutes a form of distillation.

This motivates our investigation of eigenspaces of Clifford operators, both for single and multi-qudit systems, and more generally, of spaces stabilized by finite subgroups of the Clifford group.
We examine their behavior under magic measures, particularly stabilizer fidelity and mana, and show that pure states in such spaces extremize these measures under perturbations orthogonal to the stabilized subspace.
While such structures are trivial for qubits beyond one-dimensional spaces, richer stabilized subspaces emerge in qutrits and higher prime dimensions.
A particularly striking example is the code introduced in~\cite{qutritTight}, which distills a continuously parameterized family of states lying within a degenerate eigenspace of the squared Hadamard operator $H^2$.

These extremal properties reflect deep geometric features of the resource landscape and suggest that Clifford-stabilizer states, states that are uniquely stabilized by some finite subgroup of the Clifford group, play a central role in characterizing potentially distillable non-stabilizer states.
To support this claim, we provide explicit constructions and analyses of such states in various systems, including qutrits, ququints, and two-qubit configurations.
Our findings show that non-degenerate Clifford eigenstates with non-vanishing Wigner functions often correspond to smooth local maxima of mana, while degenerate Clifford eigenstates or states with vanishing Wigner function at some pase-space points, despite more intricate behavior, still exhibit local extremality within constrained subspaces.
These results motivate a conjecture: any Clifford-stabilizer state with a non-vanishing Wigner function locally maximizes the mana of magic under perturbations orthogonal to the stabilized subspace.

\noindent
\textbf{Our contribution.}
The central technical contribution of this work is a general theory of \emph{group-covariant functionals} on the space of Hermitian operators acting on a finite Hilbert space $\mH$.
We view $\mathrm{Herm}(\mH)$ as a finite-dimensional real manifold and consider smooth and analytic real- or complex-valued maps on it that transform covariantly under conjugation by a finite subgroup $G \subset \mathrm{U}(\mH)$.
Within this setting, we formalize the notions of stabilization by operators and by groups, introduce $G$-stabilizer spaces, $G$-stabilizer states, $G$-stabilizer codes, and the corresponding invariant subspace $\mS_G$.
We then define $G$-covariant families of functions $\{F_v\}_{v\in\mV}$ on $\mathrm{Herm}(\mH)$, where $\mV$ is a finite set, and prove a general extremality theorem: for any $G$-invariant pure state $\psi = \ket{\psi}\bra{\psi}$, the point $\psi$ is extremal on the submanifold
$
    \varrho_{\psi;G}
    =
    \{\rho \in \varrho(\mH)
      \; \mid \; \operatorname{supp}(\rho) \subset
      \Span\{\ket{\psi}\} \oplus \mS_G^{\perp}\}
$
for three large classes of functionals built from $\{F_v\}_{v\in\mV}$:
(i) symmetric compositions of the form $\Sigma(O) = S(F_{\mathrm{v}(1)}(O),\ldots,F_{\mathrm{v}(|\mV|)}(O))$,
(ii) maximized envelopes $\mF(O) = \max_{v\in\mV} F_v(O)$, and
(iii) semi-Rényi sums $\mR_\alpha(O) = \sum_{v\in\mV} |F_v(O)|^\alpha$.
We further introduce a componentwise version of covariance for product label sets $\mV = \prod_i \mV_i$ and prove that this structure is stable under natural “contraction” operations that sum over one of the components.
When instantiated for the Pauli group and its normalizer, the Clifford group, this general formalism produces a unified treatment of several standard magic measures: It shows that Clifford-invariant states are extremal for stabilizer fidelity, for mana of magic, for stabilizer $\alpha$-Rényi entropies, and for generalized stabilizers entropies, when the variations considered are restricted to directions orthogonal to the stabilized subspace of that group, while remaining within the (projective Hilbert) manifold of pure states.
We also extend this result for the stabilizer fidelity to arbitrary finite groups of unitaries and their normalizers, thereby generalizing the familiar Pauli-Clifford setting.
We further:
\begin{itemize}
    \item establish a general framework to fully characterize the type of criticality (or extremality) for the three functions; and
    \begin{itemize}
        \item completely characterize the critical behavior of the stabilizer fidelity and stabilizer Rényi entropy for all Clifford-inequivalent, non-degenerate Clifford eigenstates of single qubits;
        \item completely characterize the critical behavior of both the stabilizer fidelity and the mana of magic for all Clifford-inequivalent, non-degenerate Clifford eigenstates of single qutrits, and cover most of the analysis needed for the stabilizer Rényi entropy;
        \item completely characterize the critical behavior of the mana of magic for all but one Clifford-inequivalent, non-degenerate Clifford eigenstate of single ququints, and leave the behavior of the stabilizer fidelity and the and stabilizer Rényi entropy at that remaining case for the reader.
    \end{itemize}
    \item observe that for single ququints, there exist states with stabilizer fidelity lower than all the states mentioned above, indicating the presence of a global minimum at another state;
    \item identify all Clifford-inequivalent, non-degenerate Clifford eigenstates for two qubits, three of which are new. Among these, we find a two-qubit magic state that is (Clifford-equivalent to) a state recently discovered independently in parallel work, where it was shown to maximize the Rényi entropy of magic~\cite{Liu2025}. We show this maximization through our tools too.
\end{itemize}
Beyond theoretical insights, we also propose an explicit (albeit inefficient) distillation protocol for a new two-qubit magic state whose stabilizer fidelity surpasses previously known candidates such as $\ket{TT}$ and $\ket{TH}$.
This provides practical evidence that Clifford-stabilizer states serve not only as extremal points in the magic landscape but also as operationally relevant resources that complete the computation towards universality fault-tolerantly.

Moreover, we connect our results to the Zauner conjecture and to Flammia’s results on fiducial states \cite{Zauner1999,Flammia2006,BengtssonApplebyFlammia2018}, and formulate a new conjecture of our own. We further propose two novel measures of magic for future investigation, and introduce in Appendix~\ref{app:group-stabilizer extent} the notion of \emph{group-stabilizer extent}, a generalization of the stabilizer extent that captures the resource content of states stabilized by arbitrary finite unitary subgroups.
This extends the resource-theoretic framework beyond the standard Pauli and Clifford settings.

\textbf{Outline.} This work is organized as follows.
In Section~\ref{sec:Background and Preliminaries}, we review the necessary background on the generalized Pauli group, Clifford operations, the discrete Wigner function, and resource theories of magic.
Section~\ref{sec:Group-Covariant Functionals} introduces the notion of group-stabilized spaces and formalizes the extremization properties of magic measures.
Section~\ref{sec:Examples of Non-degenerate Clifford Eigenstates of Qudits} provides explicit examples of critical extremizing states for single qudits and discusses connections to known distillation protocols.
Section~\ref{sec:Non-Degenerate Eigenstates of Cliffords for Two Qubits} does the same for two-qubit systems by finding all Clifford-inequivalent non-degenerate eigenstates of Clifford operations on two qubits.
Section~\ref{sec:Three-Qubit States} presents notable three-qubit states that extremize stabilizer fidelity, explores their Clifford equivalence, and highlights their potential relevance to magic state distillation.
Section~\ref{sec:An Inefficient Distillation Protocol Demonstrating Magic} proposes an inefficient distillation scheme for a new two-qubit magic state, suggesting that it may be a strong candidate for magic distillation.
Section~\ref{sec: Candidate Measures to investigate} introduces two novel families of candidate measures of magic, which we propose as promising directions for future research.
Finally, we conclude in Section~\ref{sec:Conclusion} with relevant platforms, open questions, and conjectures regarding the role of group-stabilized structures in magic distillation.
Technical details are relegated to the Appendices.

\section{Background and Preliminaries}
\label{sec:Background and Preliminaries}

To ground our analysis, this section reviews the algebraic and phase-space structures underlying stabilizer quantum formalism and the resource theory of magic. We begin with the generalized Pauli group for $N$ qudits of prime dimension $d$ and its role in defining the discrete phase space for odd $d$. We then outline the associated Clifford operations, stabilizer states, and Wigner-function formalism. Finally, we recall standard resource-theoretic quantities- such as stabilizer fidelity and mana- that quantify deviations from classicality and will serve as the main tools in the subsequent sections.

\subsection{The Generalized Pauli Group for $N$-Qudits}

For a single qudit with prime dimension $d$, othe generalized Pauli operators are defined by first introducing two fundamental unitary operators: the shift operator $X$ and the phase operator $Z$.
Their actions on the computational basis states $\{\ket{j}\}_{j=0}^{d-1}$ are given by:
\begin{align}
  X\ket{j} &= \ket{j+1   \, \, (\mathrm{mod} \, \,  d)}, \\
  Z\ket{j} &= \omega^j \ket{j},
\end{align}
where $\omega = e^{2\pi i/d}$ is the primitive $d$th root of unity. These operators satisfy the commutation relation
\begin{equation}
  ZX = \omega XZ,
\end{equation}
which generalizes the well-known relation among the qubit Pauli matrices.
The \emph{generalized Pauli group} for a single qudit is then defined as
\begin{equation}
  \mP_{1,d} = \Big\{ (-1)^x \zeta^k X^a Z^b \; \mid \; a,b,k \in \mathbb{Z}_d \ , \ x \in \mathbb{Z}_2 \Big\},
\end{equation}
where $\zeta = e^{\pi i/d}$.

For an $N$-qudit system, the Hilbert space is given by $\mH = (\mathbb{C}^d)^{\otimes N}$.
The generalized Pauli group extends naturally via tensor products:
\begin{equation}
  \mP_{N,d} = \Big\{ (-1)^x \zeta^k   X^{\mathbf{a}} Z^{\mathbf{b}} \; \mid \; \mathbf{a}, \mathbf{b} \in \mathbb{Z}_d^N,\ k \in \mathbb{Z}_d \ , \ x \in \mathbb{Z}_2 \Big\},
\end{equation}
where
\begin{equation}
    X^{\mathbf{a}} = X^{a_1} \otimes X^{a_2} \otimes \cdots \otimes X^{a_N}, \quad Z^{\mathbf{b}} = Z^{b_1} \otimes Z^{b_2} \otimes \cdots \otimes Z^{b_N}.
\end{equation}

This group is non-Abelian and serves as a fundamental structure for describing symmetries, constructing quantum error-correcting codes, and forming the basis for the discrete phase space formalism.
In our convention, the group has order $2 d^{2N+1}$.
This differs from the choice made in Ref.~\cite{Gross2007Hudson} and other references, where the generalized Pauli group, \emph{the Weyl-Heisenberg group}, is defined as
\[
   \bigl\{ \omega^{k} X^{\mathbf{a}} Z^{\mathbf{b}}
   \; \big| \;
   \mathbf{a},\mathbf{b} \in \mathbb{Z}_d^{N},
   k \in \mathbb{Z}_d
   \bigr\},
\]
which incorporates fewer phase factors.
Our definition has the advantage of aligning with the standard qubit case, where the Pauli group explicitly contains all four phases
$\{\pm 1, \pm i\}$.

In the resource-theoretic study of magic and departures from stabilizerness, global phases have no operational significance.
It is therefore natural to introduce the \emph{phase-reduced Pauli group}
\[
    \tilde{\mP}_{N,d}
    \coloneqq
    \mP_{N,d} \big/ \braket{\zeta^{k}\mathbb{I}},
\]
whose elements are Pauli operators modulo global phases.
This quotient has cardinality $d^{2N}$ and will be the canonical object underlying some the constructions and analysis that follow.

\subsection{Heisenberg-Weyl Displacement Operators}

The algebraic structure of the generalized Pauli group lends itself naturally to a phase space formulation for finite-dimensional systems. One introduces \emph{Heisenberg-Weyl operators} (or displacement operators) associated with points in the discrete phase space $\mathbb{V}_{N,d}\equiv \mathbb{Z}_d^N\times\mathbb{Z}_d^N$.
A typical definition is:
\begin{equation}
  T_{\boldsymbol{\chi}} = \tau^{\mathbf{p} \cdot \mathbf{q}}   X^{\mathbf{p}} Z^{\mathbf{q}}, \quad \text{with } \boldsymbol{\chi}=(\mathbf{p},\mathbf{q}) \in \mathbb{V}_{N,d},
\end{equation}
where $\tau$ is a fixed phase factor often chosen as
$\tau =\omega^{2^{-1}}=\omega^{(d+1)/2}$, with $2^{-1}$ denoting the inverse of 2 in $\mathbb{Z}_d$.
The explicit forms of these operators are:
\begin{equation}
  T_{\boldsymbol{\chi}} = \tau^{\mathbf{p} \cdot \mathbf{q}} \sum_{\mathbf{j}\in\mathbb{Z}_d^N} \omega^{\mathbf{q}\cdot\mathbf{j}}  \ket{\mathbf{p}+\mathbf{j}}\bra{\mathbf{j}}, \quad \text{with } \boldsymbol{\chi}=(\mathbf{p},\mathbf{q}) \in \mathbb{V}_{N,d}.
\end{equation}

The Heisenberg-Weyl displacement operators $T_{\boldsymbol{\chi}}$ satisfy several important identities:
\begin{itemize}

\item \textbf{Trace:}
  \begin{equation}
    \Tr\Big( T_{\boldsymbol{\chi}} \Big) = d^N  \delta_{\boldsymbol{\chi},\boldsymbol{0}}.
  \end{equation}
  
  \item \textbf{Composition Rule:} For any two phase space points $\boldsymbol{\chi}$ and $\boldsymbol{\chi^\prime}$, the product of the corresponding Weyl operators obeys:
  \begin{equation}
    T_{\boldsymbol{\chi}}  T_{\boldsymbol{\chi^\prime}} = \tau^{-\langle \boldsymbol{\chi}, \boldsymbol{\chi^\prime} \rangle}   T_{\boldsymbol{\chi}+\boldsymbol{\chi^\prime}},
  \end{equation}
  where the addition $\boldsymbol{\chi}+\boldsymbol{\chi^\prime}$ is performed modulo $d$, and $\langle \cdot,\cdot \rangle$ denotes a symplectic inner product defined on $\mathbb{Z}_d^{N}$.
  A common choice for the symplectic product is:
\begin{equation}
\langle (\mathbf{p},\mathbf{q}), (\mathbf{p}',\mathbf{q}') \rangle = \mathbf{p} \cdot \mathbf{q}' - \mathbf{q} \cdot \mathbf{p}'.
\end{equation}
This symplectic structure ensures that the phase factors arising in the operator products capture the essential non-commutativity of the underlying quantum operators, and this rule of composition encapsulates the non-commutative geometry of the discrete phase space.
  
  \item \textbf{Adjoint and Inversion:} The unitarity of $T_{\boldsymbol{\chi}}$ guarantees that:
  \begin{equation}
    T^\dagger_{\boldsymbol{\chi}} = T_{-\boldsymbol{\chi}},
  \end{equation}
  which directly follows from the composition rule.

  \item \textbf{Orthogonality:}
  \begin{equation}
    \Tr\Big( T_{\boldsymbol{\chi}}  T^\dagger_{\boldsymbol{\chi^\prime}} \Big) = d^N  \delta_{\boldsymbol{\chi},\boldsymbol{\chi^\prime}},
  \end{equation}
  ensuring that $\{T_{\boldsymbol{\chi}}\}$ forms a complete orthogonal basis for operators on $\mH$.
  
  \item \textbf{Commutation Relations:} By exchanging the order of the operators, one finds:
  \begin{equation}
    T_{\boldsymbol{\chi}}  T_{\boldsymbol{\chi^\prime}} = \tau^{-2\langle \boldsymbol{\chi}, \boldsymbol{\chi^\prime} \rangle}   T_{\boldsymbol{\chi^\prime}}  T_{\boldsymbol{\chi}},
  \end{equation}
  where the phase factor $\tau^{-2\langle \boldsymbol{\chi}, \boldsymbol{\chi^\prime} \rangle}=\omega^{-\langle \boldsymbol{\chi}, \boldsymbol{\chi^\prime} \rangle}$ quantifies the non-commutativity.
\end{itemize}

These identities are not only mathematically elegant but also provide the backbone for practical computations in quantum state tomography, error correction, and the analysis of quantum dynamics in finite-dimensional systems.

\subsection{Phase Space Operators}

In addition to the Heisenberg-Weyl displacement operators, an alternative operator basis for the space of linear operators on $\mH$ is provided by the \emph{phase space operators} $\{ A_{\boldsymbol{\chi}} \}$.
These operators, sometimes called \emph{phase point operators}, are constructed via a symplectic Fourier transform of the displacement operators.
One common definition is:
\begin{equation}
  A_{\boldsymbol{\chi}} = \frac{1}{d^N} \sum_{\boldsymbol{\chi^\prime}\in\mathbb{V}_{N,d}} \omega^{-\langle \boldsymbol{\chi}, \boldsymbol{\chi^\prime} \rangle}   T_{\boldsymbol{\chi^\prime}}.
  \label{eq:phase_point_operator}
\end{equation}
The explicit forms of these operators are:
\begin{equation}
  A_{\boldsymbol{\chi}} = \sum_{\mathbf{j}\in\mathbb{Z}_d^N} \omega^{2\mathbf{q} \cdot (\mathbf{p}-\mathbf{j})}  \ket{2\mathbf{p}-\mathbf{j}}\bra{\mathbf{j}}, \quad \text{with } \boldsymbol{\chi}=(\mathbf{p},\mathbf{q}) \in \mathbb{V}_{N,d}.
\end{equation}

These phase space operators satisfy several key identities:

\begin{itemize}
  \item \textbf{Hermiticity:}  
  Since the displacement operators obey $T^\dagger_{\boldsymbol{\chi}} = T_{-\boldsymbol{\chi}}$ and the symplectic inner product is antisymmetric, one may show that
  \begin{equation}
    A^\dagger_{\boldsymbol{\chi}} = A_{\boldsymbol{\chi}}.
  \end{equation}
  Thus, the $A_{\boldsymbol{\chi}}$ are Hermitian operators.
  
  \item \textbf{Completeness:}  
  They provide a resolution of the identity in the operator space:
  \begin{equation}
    \sum_{\boldsymbol{\chi}\in \mathbb{V}_{N,d}} A_{\boldsymbol{\chi}} = d^N  I,
    \label{eq:completeness_of_phase_point_operators}
  \end{equation}
  where $I$ denotes the identity operator on $\mH$.

  \item \textbf{Orthogonality:}  
  The phase space operators form an orthogonal basis for the space of operators on $\mH$:
  \begin{equation}
    \Tr\Big[A_{\boldsymbol{\chi}}  A_{\boldsymbol{\chi^\prime}}\Big] = d^N  \delta_{\boldsymbol{\chi},\boldsymbol{\chi^\prime}}.
  \end{equation}
  
  \item \textbf{Covariance:}  
  Under displacements, the phase space operators transform in a covariant manner.
  That is, for any displacement operator $T_{\boldsymbol{\chi}}$ one has
  \begin{equation}
    T_{\boldsymbol{\chi}}  A_{\boldsymbol{\chi^\prime}}  T^\dagger_{\boldsymbol{\chi}} = A_{\boldsymbol{\chi} + \boldsymbol{\chi^\prime}}.
  \end{equation}
  So one can use an equivalent defenition of the phase-space operators:
    \begin{equation}
    A_{\boldsymbol{\chi}} = T_{\boldsymbol{\chi}}  A_{\boldsymbol{0}}  T^\dagger_{\boldsymbol{\chi}},
  \end{equation}
  where $ A_{\boldsymbol{0}} $ is defined as
  \begin{equation}
    A_{\boldsymbol{0}} = \frac{1}{d^N} \sum_{\boldsymbol{\chi}\in \mathbb{V}_{N,d}} T_{\boldsymbol{\chi}}.
  \end{equation}
  
  \item \textbf{Additional properties:}
  They satisfy also these relations:

  \begin{equation}
    A_{\boldsymbol{\chi}} = T_{2\boldsymbol{\chi}} A_{\boldsymbol{0}} = A_{\boldsymbol{0}} T^\dagger_{2\boldsymbol{\chi}},
  \end{equation}
    \begin{equation}
    A_{\boldsymbol{\chi_1}}A_{\boldsymbol{\chi_2}} = \omega^{2\langle \boldsymbol{\chi_1},\boldsymbol{\chi_2} \rangle} T_{2(\boldsymbol{\chi_1}-\boldsymbol{\chi_2})},
  \end{equation}
      \begin{equation}
    A_{\boldsymbol{\chi_1}}A_{\boldsymbol{\chi_2}}A_{\boldsymbol{\chi_3}} = \omega^{2(\langle \boldsymbol{\chi_1},\boldsymbol{\chi_2} \rangle+\langle \boldsymbol{\chi_2},\boldsymbol{\chi_3} \rangle+\langle \boldsymbol{\chi_3},\boldsymbol{\chi_1} \rangle)} A_{\boldsymbol{\chi_1}+\boldsymbol{\chi_2}-\boldsymbol{\chi_3}},
  \end{equation}
  
\end{itemize}

\subsection{Stabilizer States and Clifford Operations}

A significant subset of quantum states in the finite-dimensional setting are the \emph{stabilizer states} \cite{StabilizerForHigh}.
These are defined as simultaneous eigenstates associated to $+1$ eigenvalue of a maximal Abelian subgroup $\mS$ of the generalized Pauli group $\mP_{N,d}$, being maximal when $|\mS|=d^N$.
In prime dimensions, stabilizer states have the notable property that their discrete Wigner function is non-negative, which is intimately connected to their efficient classical simulability.

More formally~\cite{Gross2007Hudson}, a subset $ M \subset \mathbb{V}_{N,d} $ is called \textit{isotropic} if $ \langle \boldsymbol{\chi}_1,\boldsymbol{\chi}_2 \rangle = 0 $ for all $ \boldsymbol{\chi}_1,\boldsymbol{\chi}_2 \in M $.
The subset $ M $ is called \textit{maximally isotropic} if it is isotropic and has cardinality $ d^N $.
For a maximally isotropic subspace $ M \subset \mathbb{V}_{N,d} $, and for a vector $ \boldsymbol{\chi} \in \mathbb{V}_{N,d} $, up to a global phase, there is a unique state vector $ |M, \boldsymbol{\chi} \rangle $ that satisfies the eigenvalue equation \[ \omega^{\langle \boldsymbol{\chi}, \boldsymbol{\chi'} \rangle}   T(m)   |M, \boldsymbol{\chi} \rangle = |M, \boldsymbol{\chi} \rangle \quad \forall \boldsymbol{\chi'} \in M. \]
The state vector $ |M, \boldsymbol{\chi} \rangle $ is called the stabilizer state associated to $ M$ and $ \boldsymbol{\chi} $.

The set of all pure stabilizer states of $ N $ qudits is denoted by $ \mSS_{N,d} $. 
The convex hull of these states, denoted $ \mathrm{STAB}_{N,d} $, consists of all mixed stabilizer (density) states, whose extreme points are exactly the pure stabilizer states. 
Geometrically, $ \mathrm{STAB}_{N,d} $ forms a highly symmetric convex polytope- often referred to as the \emph{Wigner polytope}- whose vertices correspond to pure stabilizer states. 
For a single qubit, this reduces to the familiar octahedron of stabilizer states, and in general, the Clifford group acts multiply transitively on the polytope’s vertices.

\subsubsection*{Clifford Operations and Symplectic Permutations}

\emph{Clifford operations} are unitary transformations $ C \in \mathrm{U}(\mH) $ that map the generalized Pauli group onto itself under conjugation, i.e.,
\begin{equation}
    C P C^\dagger \in \mP_{N,d}, \quad \forall  P \in \mP_{N,d}.
\end{equation}
The set of all such operations forms a group, denoted in this work by $ \mC_{N,d} $, known as the \emph{Clifford group} on $ N $ qudits.

Since overall phase factors $e^{i\theta} I$ act trivially on all physical states and on the elements of $\mP_{N,d}$, it is natural to quotient them out.
The resulting quotient group is called the \emph{reduced Clifford group}:
\begin{equation}
  \overline{\mC}_{N,d} = \mC_{N,d} / U(1),
\end{equation}
where two Clifford unitaries that differ only by a global phase are identified.

Another convenient convention for defining a finite version of the Clifford group is to regard it as a subgroup of $\mathrm{SU}(\mH)$ rather than of $\mathrm{U}(\mH)$. 
We denote this subgroup by $\tilde{\mC}_{N,d}$.
In this work, we define the finite extension of the special Clifford group as
\begin{equation}
    \mC'_{N,d}
    = \Bigl\{  \lambda  C  \; \big| \; 
    C \in \tilde{\mC}_{N,d}, 
    \lambda \in \Lambda(\tilde{\mC}_{N,d})
    \Bigr\},
\end{equation}
where $\Lambda(\tilde{\mC}_{N,d})$ denotes the set of all eigenvalues of all elements in $\tilde{\mC}_{N,d}$.
We refer to the group $\mC'_{N,d}$ as the \emph{eigenphase-extended Clifford group}, obtained by adjoining to the special Clifford group $\tilde{\mC}_{N,d} \subset \mathrm{SU}(\mH)$ all possible scalar multiples corresponding to eigenvalues of its elements.
This construction guarantees closure under scalar multiplication by all phases appearing in the Clifford spectrum, while preserving finiteness as an extension of $\tilde{\mC}_{N,d}$.
Moreover, it simplifies subsequent analysis by enabling us to consider only the stabilized subspaces associated with its subgroups.

A key geometrical insight into the structure of Clifford operations is that they induce \emph{symplectic permutations} on the discrete phase space.
More precisely, for any Clifford unitary $C$, there exists an associated symplectic matrix $S_C$ (with entries in $\mathbb{Z}_d$) and a vector $\boldsymbol{a}_C \in \mathbb{V}_{N,d}$ such that:
\begin{equation}
  C  T_{\boldsymbol{\chi}}  C^\dagger = T_{\boldsymbol{a}_C} T_{S_C\boldsymbol{\chi}}T^\dagger_{\boldsymbol{a}_C}=\omega^{-\langle \boldsymbol{a}_C,S_C\boldsymbol{\chi} \rangle} T_{S_C\boldsymbol{\chi}}.
\end{equation}
\noindent
A matrix $ S $ is said to be \emph{symplectic} if it satisfies the condition
\begin{equation}
  S^T J S = J,
\end{equation}
with $J$ being the standard symplectic form on $\mathbb{Z}_d^{2N}$. This condition ensures that the symplectic inner product is preserved, meaning that the geometric structure of the phase space is maintained under Clifford operations.
Another way to state the above is to say that a unitary operation $C$ is Clifford iff there exists an invertible symplectic matrix $S_C$ (with entries in $\mathbb{Z}_d$) and a vector $\boldsymbol{a}_C \in \mathbb{V}_{N,d}$ such that:
\begin{equation}
  C A_{\boldsymbol{\chi}} C^\dagger = A_{\boldsymbol{a}_C+S_C\boldsymbol{\chi}}.
\end{equation}
Actually, $\boldsymbol{a}$ and $S$ define the Clifford operation up to a global phase.
Therefore, there are $|\mathbb{V}_{N,d}|\cdot |\text{Sp}(\mathbb{Z}_d^{2N})| = d^{2N} \cdot d^{N^2} \prod_{i=1}^{N} (d^{2i} - 1)$ different Clifford operations for $N$ qudits of $d$-level systems, where $d$ is prime.

Equivalently, the Clifford group can be defined as the set of unitary operations that map stabilizer states onto stabilizer states. We refer the reader to standard references for its formalism, tools, and further insights~\cite{GibbonsFiniteFieldWigner,Gross2007Hudson,Veitch2012,Veitch_2014,Pashayan2015}.

\subsection{The Discrete Wigner Function}

One of the powerful applications of the phase space formalism is the representation of quantum density states via the discrete Wigner function \cite{GibbonsFiniteFieldWigner}.
For a state described by the density operator $\rho$, the Wigner function is defined as
\begin{equation}
  W_{\boldsymbol{\chi}}(\rho) = \frac{1}{d^N} \Tr\Big( A_{\boldsymbol{\chi}}  \rho \Big),
\end{equation}
where $A_{\boldsymbol{\chi}}$ are the phase point operators (defined via a symplectic Fourier transform of the displacement operators as in Eq.~(\ref{eq:phase_point_operator})) and $\boldsymbol{\chi} \in \mathbb{V}_{N,d}$ labels a point in the discrete phase space.
The discrete Wigner function $W_{\boldsymbol{\chi}}(\rho)$ acts as a quasi-probability distribution over the finite phase space, allowing one to examine both quantum interference effects and non-classicality.
Although it is analogous to a probability distribution, its possible negative values are a signature of uniquely quantum mechanical phenomena.

\subsubsection*{Properties and Identities}

The discrete Wigner function satisfies several important properties:

\begin{itemize}

  \item \textbf{Reality:}  
    Since the phase point operators $A_{\boldsymbol{\chi}}$ are Hermitian, the Wigner function is real-valued:
    \begin{equation}
      W_{\boldsymbol{\chi}}(\rho) \in \mathbb{R}.
    \end{equation}

  \item \textbf{Normalization:}  
    The Wigner function is normalized over the discrete phase space:
    \begin{equation}
      \sum_{\boldsymbol{\chi} \in \mathbb{V}_{N,d}} W_{\boldsymbol{\chi}}(\rho) = 1.
    \end{equation}
    This follows from the completeness relation of the phase point operators (Eq.~(\ref{eq:completeness_of_phase_point_operators})).

  \item \textbf{Trace:}  
    The expectation value of a Hermitian observable $O$ can be expressed as:
    \begin{equation}
      \Tr(\rho O) = d^N \sum_{\boldsymbol{\chi} \in \mathbb{V}_{N,d}} W_{\boldsymbol{\chi}}(\rho)  W_{\boldsymbol{\chi}}(O),
    \end{equation}
    where $O_W(\boldsymbol{\chi})$ is also defined by
      $W_{\boldsymbol{\chi}}(O) \equiv \Tr(O A_{\boldsymbol{\chi}})$.

  \item \textbf{State Reconstruction:}  
    The quantum state $\rho$ can be reconstructed uniquely from its Wigner function via:
    \begin{equation}
      \rho = \sum_{\boldsymbol{\chi} \in \mathbb{V}_{N,d}} W_{\boldsymbol{\chi}}(\rho)  A_{\boldsymbol{\chi}}.
    \end{equation}

  \item \textbf{Covariance under Translations:}  
    The Wigner function transforms covariantly under phase-space translations. If $\rho' = T_{\boldsymbol{\chi}_0}  \rho  T_{\boldsymbol{\chi}_0}^\dagger$ is the displaced state, where $T(\boldsymbol{\chi}_0)$ is a Heisenberg-Weyl operator, then
    \begin{equation}
      W_{\boldsymbol{\chi}}(\rho') = W_{\boldsymbol{\chi} - \boldsymbol{\chi}_0}(\rho).
    \end{equation}

  \item \textbf{Purity:}  
    The purity of a quantum state is related to the squared $L_2$-norm of its Wigner function:
    \begin{equation}
      \Tr(\rho^2) = d^N \sum_{\boldsymbol{\chi} \in \mathbb{V}_{N,d}} \left[ W_{\boldsymbol{\chi}}(\rho) \right]^2.
    \end{equation}
    For pure states, $\Tr(\rho^2) = 1$, which imposes a constraint on the spread of $W(\rho)$ over phase space.
\end{itemize}

These properties highlight the role of the discrete Wigner function as a bridge between operator-based and quasi-probabilistic formulations of quantum mechanics.
Importantly, the appearance of negative values in $W_{\boldsymbol{\chi}}(\rho)$ signals quantumness and nonclassicality and has been directly linked to the presence of non-stabilizersness and magic, a key computational resource in quantum information processing.

\subsection{The Reduced Clifford Group for Single Qudit}

\subsubsection{Generators}

The single-qudit Clifford group $\mC_{1,d}$ is defined as the set of operators that map the Heisenberg-Weyl displacement group to itself under conjugation:
\begin{equation}
    \mC_{1,d} = \{C \in \text{U}(d) \; \mid \; C \mP_{1,d} C^\dagger = \mP_{1,d}\}.
\end{equation}
As discussed before, for convenience, one can consider the Clifford group as a subgroup of $\text{SU}(d)$ by defining the generators such that they have determinant 1 (see \cite{Prakash2021}).
It can be shown that this group is generated by the unitaries $S$ and $H$, defined for $d > 2$ as:
\begin{equation}
    S = \sum_{j=0}^{d-1} \tau^{j(j+1)} |j\rangle \langle j|,
\end{equation}
\begin{equation}
    H = \frac{\delta_d}{\sqrt{d}} \sum_{j,k=0}^{d-1} \omega^{jk} |j\rangle \langle k|,
\end{equation}
where $\tau$ is as before: $\tau =\omega^{2^{-1}}=\omega^{(d+1)/2}$, with $2^{-1}$ denoting the inverse of 2 in $\mathbb{Z}_d$.
The phase factor $\delta_d$ is given by
\footnote{
Using known results in number theory for qudratic Gauss sums, an alternative expression for $\delta_d$ is given by
\begin{equation}
    \delta_d = \frac{1}{\sqrt{d}} \sum_{k=0}^{d-1} \omega^{2k^2} =\left( \frac{2}{d} \right)_L \epsilon_d,
\end{equation}
where $\left( \frac{2}{d} \right)_L$ denotes the Legendre symbol from number theory
\footnote{
$\left( \frac{a}{d}\right)_L=+1$ if $a$ is a quadratic residue modulo $d$, $\left( \frac{a}{d}\right)_L=-1$ if $a$ is not a quadratic residue modulo $d$, and $\left( \frac{a}{d}\right)_L=0$ if $d$ divides $a$.
}
.
This choice of overall phase ensures that $\det H = 1$.
Furthermore, for $d \geq 5$, one can verify that $\det S = 1$.
}
\begin{equation}
    \delta_d = \begin{cases}
    1, & d \equiv 1 \mod 8, \\
    -i, & d \equiv 3 \mod 8, \\
    -1, & d \equiv 5 \mod 8, \\
    i, & d \equiv 7 \mod 8. \\
    \end{cases}
\end{equation}
Observe that if $\det C = 1$, then $\det \omega^n C = 1$ for any $n$. Therefore, if we consider the Clifford group as a subgroup of $\text{SU}(d)$, we should also include the generator $\omega$ in the set of generators.

\subsubsection{$\mathbb{Z}_d^2 \rtimes \text{SL}(2,\mathbb{Z}_d)$ and Explicit Forms}

The reduced Clifford group is isomorphic to the semi-direct product of $\mathbb{Z}_d^2$ and $\text{SL}(2,\mathbb{Z}_d)$, as shown explicitly in \cite{SingleQuditCliffordIsomorphism}. Up to an overall phase, any Clifford element can be uniquely expressed as:
\begin{equation}
    C \sim T_{\boldsymbol{\chi}} V_{\hat{F}},
\end{equation}
where the equivalence relation $\sim$ denotes equality modulo a global phase, $\boldsymbol{\chi}\in\mathbb{Z}_d^2$ and $\hat{F} \in \text{SL}(2,\mathbb{Z}_d)$.
The special linear group $\text{SL}(2,\mathbb{Z}_d)$ consists of all $2\times2$ matrices over $\mathbb{Z}_d$ with determinant congruent to $1$ modulo $d$, and it is generated by:
\begin{equation}
    \hat{S} = \begin{bmatrix} 1 & 0 \\ 1 & 1 \end{bmatrix}, \quad \hat{H} = \begin{bmatrix} 0 & -1 \\ 1 & 0 \end{bmatrix}.
\end{equation}
$V:\text{SL}(2,\mathbb{Z}_d)\rightarrow \text{SU}(d)$ is the group homomorphism defined as follows (see \cite{Prakash2021}).
\begin{equation}
    V_{\begin{bmatrix} a & b \\ c & d \end{bmatrix}} =
    \begin{cases} 
        \left( \frac{-2b}{d} \right)_L \epsilon_d \frac{1}{\sqrt{d}} \sum_{j,k=0}^{d-1} \omega^{\frac{a k^2 - 2 j k + d j^2}{2b}} \ket{j} \bra{k}, & b \neq 0, \\
        \left( \frac{a}{d} \right)_L \sum_{k=0}^{d-1} \omega^{\frac{a c k^2}{2}} \ket{ak} \bra{k}, & b = 0.
    \end{cases}
\end{equation}
It is straightforward to verify that $H=V_{\hat{H}}$ and $S=T_{(0,2^{-1})}V_{\hat{S}}$.

This decomposition of Clifford operations satisfies the relation:
\begin{equation}
    T_{\boldsymbol{\chi}_1} V_{\hat{F}_1} T_{\boldsymbol{\chi}_2} V_{\hat{F}_2} \sim T_{\boldsymbol{\chi}_1 + \hat{F}_1 \boldsymbol{\chi}_2} V_{\hat{F}_1 \hat{F}_2}.
\end{equation}
Thus, up to an overall phase, any element of the Clifford group can be written in terms of these maps, allowing for efficient representations of Clifford transformations.

\subsection{Quantifying Magic}
\label{subsec:magic-measures}

As outlined in the Introduction, stabilizer states together with Clifford operations and Pauli measurements admit efficient classical simulation and thus fall short of universality.
The relevant computational resource is precisely a state’s deviation from the stabilizer framework- colloquially, its magic.
Formally, for $N$ qudits of prime dimension $d$, let $\mSS_{N,d}$ denote the set of pure stabilizer states and $\mathrm{STAB}_{N,d} := \operatorname{conv}  \big(\mSS_{N,d}\big)$ the stabilizer polytope (the convex hull of stabilizer states) in the space of density operators.
A state is magic iff it lies outside $\mathrm{STAB}_{N,d}$.

To quantify the amount of this resource, one introduces \emph{magic monotones}: functions that (i) vanish exactly on $\mathrm{STAB}_{N,d}$, (ii) are nonincreasing under stabilizer operations (Cliffords, Pauli measurements with classical feedforward, and stabilizer-state injections), and often (iii) satisfy additional regularity such as convexity.
Several such measures have been proposed, differing in operational meaning, computability, and geometric sensitivity to the boundary of $\mathrm{STAB}_{N,d}$~\cite{Veitch_2014,HowardContextuality,WinterYangCoherence,HowardCampell2017ResourceTheory,WangWilde2019thauma,WangThaumaBounds}. 

In what follows we adopt the notational conventions introduced above (namely, $\mathrm{STAB}_{N,d}$ and $\mSS_{N,d}$) and briefly recall the principal magic monotones employed in this work.
For pure states, several quantities play a central role: the \emph{stabilizer fidelity}, which quantifies the distance to the stabilizer set; the \emph{$\alpha$-stabilizer Rényi entropy}, which probes the spread of the state over the Pauli (or phase-space) degrees of freedom and refines the distinction between stabilizer and non-stabilizer states across Rényi orders; the \emph{generalized stabilizer entropies}, which extend the integer-order stabilizer Rényi entropies to more general qudit settings and scalable many-body contexts; and the \emph{mana of magic} (Wigner negativity), which characterizes nonclassicality through the appearance of negative values in the discrete Wigner function~\cite{GibbonsFiniteFieldWigner,Gross2007Hudson,Veitch2012,Veitch_2014,Pashayan2015}.

A common structural feature of these measures is that each can be expressed as a finite composition of elementary operations, namely finite sums, absolute values, and real powers, applied to finite families of $C^{\infty}$-smooth functions.
This analytic form, combined with the key axioms of magic measures (monotonicity under free operations, i.e.\ $\mathrm{STAB}_{N,d}$-preserving maps including Clifford unitaries, and faithfulness), underpins the arguments developed in this paper.
Analogous techniques may extend to other magic monotones, though such considerations fall outside the scope of this paper.

\subsubsection{Wigner negativity and mana}

The mana of magic is~\cite{Veitch2012,Veitch_2014,Pashayan2015}
\begin{equation}
    \mathcal{M}(\rho)
    := 
    \log \left(
        \sum_{\boldsymbol{\chi} \in \mathbb{V}_{N,d}}
        \left| W_{\boldsymbol{\chi}}(\rho) \right|
    \right),
\end{equation}
that is, the logarithm of the $1$-norm of the discrete Wigner representation of~$\rho$.
In odd prime or prime-power dimensions, the discrete Wigner function 
$W_{\boldsymbol{\chi}}(\rho)$ provides a quasiprobability representation of the quantum state~$\rho$ with the key property that
stabilizer states are exactly those whose Wigner function is 
non-negative on all phase-space points
$\boldsymbol{\chi}\in\mathbb{V}_{N,d}$.
Thus, for stabilizer states the Wigner function behaves like a classical probability distribution on the finite phase space.

For general states, negativity in the Wigner function is a hallmark of nonclassicality.
A basic quantity capturing this is the \emph{Wigner trace norm},
\begin{equation}
    \| \rho \|_W
    :=
    \sum_{\boldsymbol{\chi}\in\mathbb{V}_{N,d}}
    \left| W_{\boldsymbol{\chi}}(\rho) \right|.
\end{equation}
This norm equals $1$ precisely for stabilizer states.
Whenever negative values occur, the absolute-value sum exceeds~$1$, and the excess defines the
\emph{Wigner negativity},
\begin{equation}
    \mathcal{N}(\rho)
    :=
    \frac{1}{2}\left( \| \rho \|_W - 1 \right).
\end{equation}
Thus $\mathcal{N}(\rho)=0$ if and only if $\rho$ is a stabilizer state or a convex mixture of stabilizer states, and $\mathcal{N}(\rho)>0$ quantifies the ``amount'' of non-stabilizerness present.

The mana is then defined as the logarithmic refinement of this quantity:
\begin{equation}
    \mathcal{M}(\rho)
    =
    \log\| \rho \|_W.
\end{equation}
It is one of the simplest and most widely used magic monotones, possessing several desirable properties:
\begin{itemize}
    \item \textbf{Faithfulness:} 
    $\mathcal{M}(\rho)=0$ if and only if $\rho$ has a non-negative Wigner function, i.e., if and only if $\rho$ is a stabilizer state or mixture thereof.
    \item \textbf{Additivity:}
    $\mathcal{M}(\rho\otimes\sigma)
    =
    \mathcal{M}(\rho)+\mathcal{M}(\sigma)$.
    \item \textbf{Monotonicity under stabilizer operations:}
    Clifford unitaries, Pauli measurements, and stabilizer-preserving channels cannot increase mana.
    \item \textbf{Operational meaning:}
    Mana directly characterizes the sample complexity of quasiprobability Monte-Carlo simulation of quantum circuits with magic-state inputs~\cite{Pashayan2015}: 
    the variance of such simulations scales as $\| \rho \|_W^{2}$.
\end{itemize}

\noindent
In this way, mana provides a computable, geometrically transparent measure of quantum magic.
It quantifies precisely how far the Wigner representation of a state deviates from a classical probability distribution, through the total amount of Wigner negativity encoded in~$W(\rho)$.

\subsubsection{Stabilizer fidelity and the min-relative entropy of magic}

The stabilizer fidelity of a pure state~$\ket{\psi}$ is defined~\cite{Bravyi2019simulationofquantum} as
\begin{equation}
    F(\ket{\psi})
    :=
    \max_{\ket{s} \in \mSS_{N,d}}
        \bigl| \langle \psi | s \rangle \bigr|^{2},
\end{equation}
where $\mSS_{N,d}$ denotes the finite set of pure stabilizer states in $\mH_{N,d}$.
This quantity measures the closeness of $\ket{\psi}$ to the stabilizer polytope:
It is the squared overlap with the nearest stabilizer state.
Thus $F(\ket{\psi})=1$ if and only if $\ket{\psi}$ is itself a stabilizer state, and it becomes strictly smaller than~$1$ for any genuine magic state.

Geometrically, stabilizer fidelity quantifies how far $\ket{\psi}$ lies outside the convex hull of stabilizer states.
It can be interpreted as the cosine squared of the smallest angle between $\ket{\psi}$ and the discrete set of stabilizer rays.
From an operational viewpoint, large stabilizer fidelity indicates that $\ket{\psi}$ can be well-approximated by a stabilizer state, whereas small fidelity signals a highly non-stabilizer (high-magic) state.

Stabilizer fidelity is closely related to the \emph{min-relative entropy of magic}.
For pure states, one has the identity
\begin{equation}
    \mD_{\min}(\ket{\psi})
    :=
    - \log F(\ket{\psi}),
\end{equation}
so the min-relative entropy of magic reduces to the negative logarithm of the stabilizer fidelity.
This makes $-\log F(\ket{\psi})$ a valid magic monotone: it is faithful, additive, and monotonic under stabilizer operations.

For general mixed states $\rho$, the min-relative entropy of magic is defined~\cite{LiuBuTakagi2019} by
\begin{equation}
    \mD_{\mathrm{min}}(\rho)
    :=
    \min_{\sigma \in \mathrm{STAB}_{N,d}}
        D_{\min}(\rho \| \sigma),
\end{equation}
where $\mathrm{STAB}_{N,d}$ is the convex set of stabilizer states and
\begin{equation}
    D_{\min}(\rho \| \sigma)
    :=
    - \log \operatorname{tr}\left( \Pi_{\rho} \sigma \right),
\end{equation}
with $\Pi_\rho$ the projector onto the support of~$\rho$.
The min-relative entropy is a one-shot analogue of the Umegaki relative entropy, and plays an important role in one-shot magic-state distillation, where it characterizes the number of high-fidelity magic states that can be extracted from a finite number of noisy copies.

The stabilizer fidelity $F(\ket{\psi})$ thus serves as a simple, computable, and operationally meaningful measure of magic.  
It directly quantifies the ``distance'' of $\ket{\psi}$ from the stabilizer manifold, connects neatly to relative-entropy measures, and appears naturally in tasks such as:
\begin{itemize}
    \item bounding classical simulation cost of Clifford+$\ket{\psi}$ circuits,
    \item analyzing robustness under Clifford mixing,
    \item understanding one-shot distillation protocols,
    \item and characterizing faithfulness of magic monotones.
\end{itemize}
In particular, states with small stabilizer fidelity (such as high-level magic states) are highly resourceful for universal quantum computation, whereas states with fidelity close to~$1$ behave nearly classically with respect to stabilizer operations.

\subsubsection{Stabilizer Rényi Entropy}

A particularly useful quantifier of nonstabilizerness is the stabilizer Rényi entropy (SRE) \cite{SRE1,SRE2,SRE3,Liu2025}. 
It assigns Rényi entropies to the probability distribution formed by a state's overlaps with Pauli or, more generally, Weyl-Heisenberg operators. 
The SRE is faithful, additive, and nonincreasing under stabilizer operations, providing a computable and experimentally accessible measure of how far a state departs from the stabilizer manifold. 
We now present its formal definition.

For a given $N$-qudit pure state $\ket{\psi}\in\mH_{N,d}$, define
\begin{equation}
\label{eq:P_chi}
    P_{\boldsymbol{\chi}}  \left(\ket{\psi}\right)
    \equiv
    \frac{1}{d^N}
    \left|
        \bra{\psi} T_{\boldsymbol{\chi}} \ket{\psi}
    \right|^2
    =
    \frac{1}{d^N}
    \bra{\psi} T_{\boldsymbol{\chi}} \ket{\psi}
    \bra{\psi} T^\dagger_{\boldsymbol{\chi}} \ket{\psi},
    \quad 
    \boldsymbol{\chi}\in \mathbb{V}_{N,d},
\end{equation}
which forms a valid probability distribution over the discrete phase space.
It is clear from the definition that $P_{\boldsymbol{\chi}}  \left(\ket{\psi}\right)=P_{-\boldsymbol{\chi}}  \left(\ket{\psi}\right)$.
Using this distribution, one can define the $\alpha$-SRE as the $\alpha$-Rényi entropy of the probability vector:
\begin{equation}
    M_\alpha(\ket{\psi})
    \equiv
    \frac{1}{1-\alpha}
    \log  \left[
        \frac{1}{d^{(1-\alpha)N}}
        \sum_{\boldsymbol{\chi}} 
        P_{\boldsymbol{\chi}}^\alpha(\ket{\psi})
    \right]
    =
    \frac{1}{1-\alpha}
    \log  \left[
        \sum_{\boldsymbol{\chi}}
        P_{\boldsymbol{\chi}}^\alpha(\ket{\psi})
    \right]
    - N\log d,
\end{equation}
where $\alpha \ge 2$. The offset term $-N\log d$ ensures that $M_\alpha(\ket{\psi}) \ge 0$. It is customary to introduce the shorthand
\begin{equation}
    \Xi_\alpha(\ket{\psi})\equiv\sum_{\boldsymbol{\chi}}
        P_{\boldsymbol{\chi}}^\alpha(\ket{\psi}) ,
\end{equation}
so that the SRE can be expressed as $M_\alpha(\ket{\psi})= \frac{1}{1-\alpha} \log \Xi_\alpha (\ket{\psi}) - N\log d$.

\noindent
\emph{Remark:} To accommodate the qubit case in a uniform manner, it is convenient to replace the label of phase-space points by elements of the phase-reduced Pauli group $\tilde{\mP}_{N,d}$ and work with the quantities
\[
    P_{p}(\ket{\psi})
    \coloneqq
    \frac{1}{d^{N}}
    \bigl|\bra{\psi} p \ket{\psi}\bigr|^{2}
    =
    \frac{1}{d^{N}}
    \bra{\psi} p \ket{\psi}
    \bra{\psi} p^{\dagger} \ket{\psi},
    \qquad
    p \in \tilde{\mP}_{N,d}.
\]
These coefficients play the role of a ``Pauli-basis probability
distribution'' associated with the pure state~$\ket{\psi}$, and have the advantage that they are defined uniformly for all local dimensions $d$, naturally including $d=2$, without any special construction of the phase-space for this case.
With this notation, the SRE of order~$\alpha$ takes the form
\[
    M_{\alpha}(\ket{\psi})
    \coloneqq
    \frac{1}{1-\alpha}
    \log
    \left[
        \frac{1}{d^{(1-\alpha)N}}
        \sum_{p \in \tilde{\mP}_{N,d}}
        P_{p}(\ket{\psi})^{\alpha}
    \right].
\]
Equivalently,
\[
    M_{\alpha}(\ket{\psi})
    =
    \frac{1}{1-\alpha}
    \log
    \left[
        \sum_{p \in \tilde{\mP}_{N,d}}
        P_{p}(\ket{\psi})^{\alpha}
    \right]
    - N \log d,
\]
which highlights that the shift by $N\log d$ simply ensures that stabilizer states have vanishing SRE for all~$\alpha$.

These stabilizer $\alpha$-Rényi entropies, defined for pure states, satisfy key properties that make them meaningful from a resource-theoretic standpoint: they are faithful, invariant under Clifford operations, and additive under tensor products.  
A central result established in~\cite{SRE1,SRE2,SRE3} is that, for any state $\ket{\psi}\in\mH_{N,d}$ and any $\alpha\geq2$, the $\alpha$-SRE is bounded by
\begin{equation}
    M_\alpha(\ket{\psi})
    \leq
    \frac{1}{1-\alpha}
    \log  \left[
        \frac{1 + (d^N-1)(d^N+1)^{1-\alpha}}{d^N}
    \right],
    \label{eq: Bound on SRE}
\end{equation}
with equality if and only if $\ket{\psi}$ belongs to a Weyl-Heisenberg (WH) covariant symmetric informationally complete (SIC) positive operator-valued measure POVM~\cite{SRE3}.

For mixed states, the $2$-SRE can be generalized as~\cite{SRE1,SRE2}:
\begin{equation}
    \tilde{M}_2(\rho)
    =
    -\log  \left[
        \frac{
            \sum_{\boldsymbol{\chi}} 
            \left| \Tr  \left( T_{\boldsymbol{\chi}} \rho \right) \right|^4
        }{
            \sum_{\boldsymbol{\chi}} 
            \left| \Tr  \left( T_{\boldsymbol{\chi}} \rho \right) \right|^2
        }
    \right].
\end{equation}
This mixed-state extension preserves the same desirable features: faithfulness, Clifford invariance, and additivity- now extended to arbitrary density operators~\cite{SRE1,SRE2}.

\noindent
\emph{Remark.}
The $\alpha$-SREs are closely related to the $L^p$-norms of characteristic functions introduced in Refs.~\cite{Dai2022,Feng2022}.
More precisely, for a pure state $\ket{\psi}$ one has
\begin{equation}
\label{eq:SREandLpNorm}
    M_\alpha(\ket{\psi})
    =
    \frac{1}{1-\alpha}
    \log   \left[
        d^{-N}\, L_{2\alpha}^{\,2\alpha}
          \left(\ket{\psi}\bra{\psi}\right)
    \right],
\end{equation}
where the $L^p$-norm of the characteristic function of a quantum state $\rho$ is defined as
\begin{equation}
\label{eq:LpNorm}
    L_p(\rho)
    :=
    \left(
        \sum_{\boldsymbol{\chi}\in\mathbb{V}_{N,d}}
        \left|
            \Tr   \left(
                T_{\boldsymbol{\chi}}\,\rho
            \right)
        \right|^p
    \right)^{1/p}.
\end{equation}
For pure states, this expression admits the equivalent form
\begin{equation}
    L_p(\ket{\psi}\bra{\psi})
    =
    d^{N/2}
    \left(
        \sum_{\boldsymbol{\chi}\in\mathbb{V}_{N,d}}
        \left|
            P_{\boldsymbol{\chi}}(\ket{\psi})
        \right|^{p/2}
    \right)^{1/p},
\end{equation}
where $P_{\boldsymbol{\chi}}(\ket{\psi})$ is defined in Eq.~\eqref{eq:P_chi}.
As is evident from the above expressions, $M_\alpha$ is a strictly monotone increasing function of $L_{2\alpha}$.
Consequently, questions concerning extremality, ordering, or optimization of the $\alpha$-SRE are entirely equivalent to those for the corresponding $L^{2\alpha}$-norms.
In particular, for the purposes of the present work it suffices to study extremal properties of the stabilizer R\'enyi entropies, without loss of generality.

\subsubsection{Generalized Stabilizer R\'enyi Entropies}

The SREs of \textbf{integer} order $\alpha \ge 2$ form a family of smooth, Clifford-invariant quantifiers of non-stabilizerness on the projective Hilbert space.
For pure states, they admit the representation
\begin{equation}
    M_\alpha (\ket{\psi})
    =
    \frac{1}{1-\alpha}
    \log \left[
        \Tr \left(
            Q_{2\alpha}
            \bigl( \ket{\psi} \bra{\psi} \bigr)^{\otimes 2\alpha}
        \right)
    \right],
\end{equation}
where
\begin{equation}
    Q_{2\alpha}
    \equiv
    \frac{1}{d^N}
    \sum_{p \in \tilde{\mathcal{P}}_{N,d}}
    \left(
        p \otimes p^\dagger
    \right)^{\otimes \alpha}.
\end{equation}
The operators $Q_{2\alpha}$ belong to the commutant of the Clifford group acting on $2\alpha$ replicas,
i.e., the set of operators commuting with $U^{\otimes 2\alpha}$ for all Clifford unitaries $U$,
and satisfy
\[
    \Tr \left(
        Q_{2\alpha}\, \rho^{\otimes 2\alpha}
    \right)
    \le 1,
\]
with equality if and only if $\rho$ is a stabilizer state.

While the smooth integer-order SREs constitute a natural and physically motivated hierarchy, they represent only a particular instance of a more general construction.
Recently, a systematic framework has been developed in which magic measures arise from the
\emph{intrinsic Clifford commutant} acting on multiple replicas of the system~\cite{Turkeshi2025A,Turkeshi2025B}.
Concretely, let $k$ be an integer such that the Clifford commutant $\mathrm{Comm}_k(\mathcal{C}_{N,d})$ strictly contains the permutation algebra $\mathrm{Comm}_k(\mathcal{U}(d^N))$.
Elements in the difference
\begin{equation}
\label{eq:IntComm}
    \overline{\mathrm{Comm}}_k(\mathcal{C}_{N,d})
    :=
    \mathrm{Comm}_k(\mathcal{C}_{N,d})
    \setminus
    \mathrm{Comm}_k(\mathcal{U}(d^N))
\end{equation}
give rise to genuinely Clifford-sensitive invariants.

For any operator $W \in \overline{\mathrm{Comm}}_k(\mathcal{C}_{N,d})$, the associated \emph{generalized stabilizer purity} is defined by
\begin{equation}
\label{eq:GSP}
    \zeta_W(\ket{\psi})
    :=
    \Tr  \left[
        W\,
        \bigl(\ket{\psi}\bra{\psi}\bigr)^{\otimes k}
    \right],
\end{equation}
and the corresponding \emph{generalized stabilizer entropy (GSE)} by
\begin{equation}
\label{eq:GSE}
    M_W(\ket{\psi})
    :=
    -\ln \zeta_W(\ket{\psi}).
\end{equation}
By construction, $\zeta_W(\ket{\psi}) \le 1$, with equality if and only if $\ket{\psi}$ is a stabilizer state.
Consequently, $M_W(\ket{\psi}) \ge 0$, vanishing precisely on stabilizer states.
Moreover, each GSE satisfies:
(i) invariance under Clifford unitaries, and
(ii) additivity under tensor products,
thereby constituting a valid measure of magic for pure many-body states.

Within this general framework, the stabilizer R\'enyi entropies arise from the specific choice
$W = Q_{2\alpha}$, yielding
\begin{equation}
    M_\alpha(\ket{\psi})
    \;\propto\;
    M_{Q_{2\alpha}}(\ket{\psi}).
\end{equation}
In this sense, the SREs form a distinguished one-parameter subfamily of GSEs, characterized by Pauli-moment operators acting on an even number of replicas.
The broader GSE framework, however, admits additional Clifford-sensitive observables not captured by the SRE hierarchy, including measures defined on fewer replicas (e.g.\ $k=3$ for odd-prime qudits) and exhibiting qualitatively different sensitivity to non-stabilizer structure.

It is important to emphasize that SREs and GSEs are not related by set inclusion.
In particular, GSEs do not subsume the full family of SREs; they contain only those SREs corresponding to integer Rényi orders $\alpha$.
Equivalently, the classes of GSEs and SREs have a nonempty intersection, while each also contains elements not present in the other.

Finally, we note that the class of GSEs falls within the scope of our general extremality results.
In particular, for any choice of $W \in \overline{\mathrm{Comm}}_k(\mathcal{C}_{N,d})$, the associated GSE constitutes a valid instance of the functionals covered by our theorem.
Nevertheless, in the present work we do not attempt a systematic analysis of extremality across the full GSE family.
Instead, we restrict explicit calculations to representative and physically motivated examples, namely the stabilizer fidelity, the mana of magic, and the $2$-SRE, for which analytic control and physical interpretation are most transparent.

\section{Group-Covariant Functionals}
\label{sec:Group-Covariant Functionals}

\subsection{Mathematical Preliminaries and Notation}
\label{subsec:MathPreliminaries}

Throughout this section we fix all notations used in the remainder of the paper.
We work over a finite-dimensional Hilbert space~$\mH$ defined over the complex
field~$\mathbb{C}$, and collect here the basic constructions, conventions,
and differential-geometric tools required for our analysis.

\subsubsection{Functions on Hermitian Operators}

Let $ \mH $ be a finite-dimensional Hilbert space over the field of complex numbers~$\mathbb{C}$, with dimension $\dim \mH = D_\mH$ and orthonormal basis~$\mB_\mH$.
We denote by
$
    \mathrm{Herm}(\mH)
$
the real vector space of Hermitian operators acting on~$ \mH $, and by
\[
    \mathrm{End}(\mH)
    \coloneqq
    \{ O : \mH \rightarrow \mH \; \mid \; O \text{ is complex-linear} \}
\]
the complex vector space of all linear endomorphisms of~$ \mH $.
The group of unitary operators acting on~$\mH$ is denoted by $ \mathrm{U}(\mH) $.

The space $ \mathrm{Herm}(\mH) $ is naturally identified as a real, flat,
finite-dimensional manifold of dimension $ D_\mH^2 $.
A real- or complex-valued $(0,n)$-tensor on~$\mathrm{Herm}(\mH)$ is a multilinear
map of the form
\[
    F^{(n)} : \mathrm{Herm}(\mH)^n \longrightarrow \mathbb{R}
    \quad \text{or} \quad \mathbb{C}.
\]
Every such map admits an expansion in matrix elements:
\begin{equation}
\label{eq:TensorExpansionImproved}
    F^{(n)}(O_1,\ldots,O_n)
    =
    \sum_{\substack{
        \ket{\psi_1},\ket{\phi_1},\ldots, \\
        \ket{\psi_n},\ket{\phi_n}\in\mB_\mH
    }}
    f^{(n)}_{\ket{\psi_1},\ket{\phi_1},\ldots,\ket{\psi_n},\ket{\phi_n}}
    \prod_{k=1}^{n} \bra{\psi_k}O_k\ket{\phi_k},
\end{equation}
where the coefficients
$
    f^{(n)}_{\ket{\psi_1},\ket{\phi_1},\ldots,\ket{\psi_n},\ket{\phi_n}}
$
are fixed complex numbers determined by the choice of the orthonormal basis~$\mB_\mH$.
They are chosen so that $ F^{(n)} $ is real- or complex-valued according
to its declared type.  
We denote the resulting vector space of such tensors by
$
    \mathrm{T}^{(0,n)}\bigl(\mathrm{Herm}(\mH)\bigr).
$

\paragraph{Symmetric maps and tensors.}
A map 
$
    F^{(n)} : A^n \to B
$
is symmetric if it is invariant under all permutations of its arguments:
\[
    F^{(n)}(a_{\pi(1)},\ldots,a_{\pi(n)})
    =
    F^{(n)}(a_1,\ldots,a_n)
    \qquad
    \forall \pi \in S_n.
\]
The set of all symmetric $(0,n)$-tensors on $\mathrm{Herm}(\mH)$ is a linear subspace,
denoted
$
    \mathrm{T}^{(0,n)}_{\mathrm{sym}}\bigl(\mathrm{Herm}(\mH)\bigr).
$

\paragraph{Smoothness and analyticity.}
Smoothness and analyticity on~$\mathrm{Herm}(\mH)$ are understood in the standard sense of differential geometry.
A function $ f : \mathrm{Herm}(\mH) \to \mathbb{R} $ (or $ \mathbb{C} $) is said to be of class~$C^k$ if all its partial derivatives up to order~$k$ exist and are continuous in any local coordinate chart.
The space of infinitely differentiable functions is denoted by $ C^{\infty}(\mathrm{Herm}(\mH)) $.
A smooth function $ F \in C^{\infty}(\mathrm{Herm}(\mH)) $ is called analytic if it admits, around every point $ O \in \mathrm{Herm}(\mH) $, a convergent Taylor expansion of the form
\[
    F(O + \Delta)
    =
    \sum_{n=0}^{\infty}
    \frac{1}{n!}
    F^{(n)}_O(\Delta, \ldots, \Delta),
\]
where each $ F^{(n)}_O $ is a symmetric $(0,n)$-tensor representing the $n$th differential of $f$ at~$O$.

\subsubsection{Stabilization, Stabilizer Spaces, and Twirling}

We now record precise terminology related to stabilization by operators and by groups.
Although these notions are standard, we include them here for completeness and to fix notation.

\begin{definition}[Stabilization]
Let $ O \in \mathrm{End}(\mH_{N,d}) $.
A vector $ \ket{\psi} \in \mH_{N,d} $ is said to be \textbf{stabilized} by $O$ if
\[
    O\ket{\psi} = \ket{\psi}.
\]
Equivalently, we say that $O$ \emph{stabilizes}~$\ket{\psi}$.
\end{definition}

\begin{definition}[Stabilized Space of an Operator]
For any $ O \in \mathrm{End}(\mH_{N,d}) $, the \textbf{stabilized space} of~$O$ is
\[
    \mS(O)
    \coloneqq
    \bigl\{
        \ket{\psi}\in\mH_{N,d}
        \; \big| \;
        O\ket{\psi} = \ket{\psi}
    \bigr\}.
\]
Equivalently, $ \mS(O) $ is the eigenvalue-$1$ eigenspace of $O$.
\end{definition}

\begin{definition}[{\boldmath$G$}-Invariant Subspace]
Let $ G \subset \mathrm{U}(\mH_{N,d}) $ be a finite subgroup.
The \textbf{$G$-invariant subspace} (or \textbf{stabilized space of $G$}) is
\[
    \mS_G
    \coloneqq
    \bigl\{
        \ket{\psi} \in \mH_{N,d}
        \; \big| \;
        g\ket{\psi} = \ket{\psi}
         \forall g \in G
    \bigr\}
    =
    \bigcap_{g\in G} \mS(g).
\]
\end{definition}
\noindent
This subspace plays a central role in the structural analysis of many
group-covariant quantities.

Let $ P_{V} $ denote the orthogonal projector onto a subspace
$ V \subseteq \mH_{N,d} $.
For any finite subgroup $ G \subset \mathrm{U}(\mH_{N,d}) $, it is a
standard fact that the orthogonal projector onto the stabilized subspace
$
    \mS_{G}
$
is obtained by averaging over the group elements:
\footnote{Let $M_G = |G|^{-1}\sum_{g\in G} g$.  
One verifies:
\begin{inparaenum}[(i)]
\item $M_G^\dagger = M_G$ (Hermiticity);
\item $M_G^2 = M_G$ (idempotence);
\item $M_G\ket{\psi} = \ket{\psi}$ for all $\ket{\psi}\in\mS_G$;
\item for all $g\in G$, $g(M_G\ket{\psi}) = M_G\ket{\psi}$ (global $G$-invariance);
\item $M_G\ket{\psi} =0$ for all $\ket{\psi}\in\mS_G^\perp$.
\end{inparaenum}
Hence $M_G$ is the projector onto~$\mS_G$.}
\begin{equation}
\label{eq:ProjectorSG}
    P_{\mS_{G}}
    =
    \frac{1}{|G|}
    \sum_{g \in G} g .
\end{equation}

\begin{definition}[{\boldmath$G$}-Stabilizer Code]
Let $ G \subset \mathrm{U}(\mH_{N,d}) $ be a finite subgroup.
A \textbf{$G$-stabilizer code} is a stabilized subspace of a subgroup $ H \subset G $.
\end{definition}

\begin{definition}[{\boldmath$G$}-Stabilizer State]
Let $ G \subset \mathrm{U}(\mH_{N,d}) $ be a finite subgroup.
A \textbf{$G$-stabilizer state} is a normalized vector $ \ket{\psi}\in\mH_{N,d} $ whose density operator is the rank-one projector onto a one-dimensional $G$-stabilizer code.  
Equivalently, it is the unique (up to global phase) generator of a one-dimensional $G$-stabilizer code.
\end{definition}
\noindent
We denote by $ \mSS_G $ the finite set of all such states, modulo global phase.

The normalizer and centralizer of a subgroup $H \subset G$ are
\[
    \mathrm{N}_G(H)
    =
    \{ g\in G \; \mid \; gHg^{-1} = H \},
    \qquad
    \mathrm{C}_G(H)
    =
    \{ g\in G \; \mid \; gh = hg \, \, \, \forall h\in H \}.
\]
At the level of operators on $\mH_{N,d}$, the \emph{commutant}
of $G$ is defined as
\[
\mathrm{Comm}(G)
:=
\{\,X\in \mathrm{End}(\mH_{N,d}) : [X,g]=0 \ \forall g\in G\,\}.
\]
Note that the action of the unitary normalizer preserves the set of
$G$-stabilizer states:
\begin{equation}
\label{eq:NormalizerAction}
    C \cdot \mSS_G = \mSS_G
    \qquad
    \forall C \in \mathrm{N}_{\mathrm{U}(\mH_{N,d})}(G).
\end{equation}

\begin{definition}[Twirling]
\textbf{Twirling} an operator $O\in\mathrm{End}(\mH_{N,d})$ over a finite subgroup $ G \subset \mathrm{U}(\mH_{N,d}) $ is the process of averaging its conjugation over all elements of $G$:
    \begin{equation}
        O   \to   \mT_G(O)=\frac{1}{|G|} \sum_{g \in G} g O g^{-1}   .
    \end{equation}
\end{definition}
\noindent
It is a standard consequence of Schur’s lemma that the twirling map $\mT_G$ is the \emph{unique Hilbert-Schmidt orthogonal projection} from $\mathrm{End}(\mH_{N,d})$ onto the commutant $\mathrm{Comm}(G)$.

\subsection*{A special case: the Pauli group and its normalizer, the Clifford group}

There exist many examples of finite groups of unitaries together with their corresponding stabilizer states. 
A simple illustrative case is the single-qubit group 
\[
\langle e^{i\pi X / 3},  e^{i\pi Z / 2} \rangle,
\]
where $X$ and $Z$ are the standard Pauli operators. 
This group has order $12$, and its stabilizer states (up to an overall global phase) are
\[
\ket{0},  \ket{1},  \ket{\pm},  
\frac{\ket{0} \pm i\sqrt{3}\ket{1}}{2},  
\frac{\sqrt{3}\ket{0} \pm i\ket{1}}{2}.
\]
The general results derived in this work apply to this example as well.

Although our results hold for arbitrary finite groups, it is instructive to connect them now, in advance, to the framework of quantum magic. 
When the finite group $G$ is taken to be the Pauli group, the corresponding $G$-stabilizer states (or \emph{Pauli-stabilizer states}) coincide with the familiar stabilizer states for $N$ qudits commonly discussed in the literature. 
Moreover, any subgroup of the normalizer of the Pauli group, that is, of the Clifford group $\mC_{N,d}$, is, in our convention, a subgroup of the finite eigenphase-extended Clifford group $\mC^\prime_{N,d}$ introduced before.
Stabilizer states of this group are termed \emph{Clifford stabilizer states}.

\subsection{Group-Covariant Functionals}

The Clifford invariance of the magic measures and the ability to project operators by twirling them with a subgroup of the Clifford group will play crucial roles, serving as fundamental cornerstones of our arguments.
It turns out that the results can be generalized for finite subgroups of $\mathrm{U}(\mH)$. Therefore, we start with general definitions and facts.

\begin{definition}[Group Covariance]
\label{def:GroupCovariance}
Let $ G \subset \mathrm{U}(\mH) $ be a subgroup, let $ \mathbb{F} $ be a target set, and let $ \mV $ be a finite indexing set.
A family of functions
\[
    F^{(n)}_{v} :
    \mathrm{Herm}(\mH)^n \longrightarrow \mathbb{F},
\]
indexed by labels $v \in \mV$, is said to be \emph{$G$-covariant under the induced action on~$\mV$} if, for every $ g \in G $, there exists a invertible mapping $ \pi_g : \mV \to \mV $ such that
\begin{equation}
\label{eq:GroupCovariance}
    F^{(n)}_{v}\left(g^\dagger O_1 g, \ldots, g^\dagger O_n g\right)
    =
    F^{(n)}_{\pi_g(v)}(O_1, \ldots, O_n),
\end{equation}
for all Hermitian operators $O_1, \ldots, O_n \in \mathrm{Herm}(\mH)$.
The correspondence $ g \mapsto \pi_g $ is referred to as the \emph{$G$-covariance rule on~$\mV$} associated with the family~$F^{(n)}$.
\end{definition}
\noindent
In fact, when $\mathbb{F}$ is a field, the collection
$\{ F^{(n)}_{v} \}_{v \in \mV}$ may be viewed as the components of a
single vector-valued map
\[
    F^{(n)} : \mathrm{Herm}(\mH)^n \longrightarrow \mathbb{F}^{|\mV|}.
\]
Under this identification, the covariance condition~\eqref{eq:GroupCovariance} takes the compact form
\begin{equation}
    F^{(n)} \left(g^\dagger O_1 g,\ldots,g^\dagger O_n g\right)
    =
    P(g)\, F^{(n)}(O_1,\ldots,O_n),
\end{equation}
where $P : G \to \mathrm{GL}(|\mV|,\mathbb{R})$ is a \emph{permutation representation} of~$G$ on~$\mathbb{F}^{|\mV|}$ (associated with the action $g \mapsto \pi_g$).
For the purposes of this work, however, it is more convenient to retain the formulation in Definition~\ref{def:GroupCovariance}.

For the sake of clarity and logical consistency, we now formalize the notion of an \emph{extremal point}, so that the reader is not left with any ambiguity regarding the conditions under which a point is considered extremal.
\begin{definition}[Extremal Point]
\label{def:extremal_point}
Let $ f : \mathbb{R} \to \mathbb{R} $ be a continuous function.  
A point $ x^{*} \in \mathbb{R} $ is called an \emph{extremal point} of $ f $ if and only if at least one of the following holds:
\begin{enumerate}
    \item $ f $ is differentiable at $ x^{*} $ and $ f'(x^{*}) = 0 $;
    \item There exists $ \epsilon > 0 $ such that
    \[
        f(x) \ge f(x^{*}) \quad \text{for all } x \in (x^{*} - \epsilon, x^{*} + \epsilon),
    \]
    in which case $ x^{*} $ is a \emph{local maximum};
    \item There exists $ \epsilon > 0 $ such that
    \[
        f(x) \le f(x^{*}) \quad \text{for all } x \in (x^{*} - \epsilon, x^{*} + \epsilon),
    \]
    in which case $ x^{*} $ is a \emph{local minimum};
    \item There exists $ \epsilon > 0 $ such that
    \[
        f(x) = f(x^{*}) \quad \text{for all } x \in (x^{*} - \epsilon, x^{*} + \epsilon),
    \]
    in which case $ x^{*} $ is a \emph{locally constant point}.
\end{enumerate}
\end{definition}

Now we are ready to state the main theorem.
\begin{theorem}
\label{theo:MainTheorem}
Let $ G \subset \mathrm{U}(\mH) $ be a finite subgroup, and denote its stabilized subspace by $ \mS_{G} $, with orthogonal complement $ \mS_{G}^{\perp} $.
Fix a normalized vector $ \ket{\psi} \in \mS_{G} $, and write
$
    \psi \coloneqq \ket{\psi}\bra{\psi}
$
for its associated rank-one projector.  
Consider the submanifold of rank-one, trace-one operators whose support is contained in
$
    \Span\{\ket{\psi}\} \oplus \mS_{G}^{\perp} \subset \mH,
$
namely,
\[
    \varrho_{\psi;G}
    \coloneqq
    \Bigl\{
        \rho \in \varrho(\mH)
        \; \big| \;
        \operatorname{supp}(\rho)
        \subset
        \Span\{\ket{\psi}\} \oplus \mS_{G}^{\perp}
    \Bigr\},
\]
where $ \varrho(\mH) $ denotes the manifold of pure-state density operators on~$\mH$.

Let
\[
    F_{v} :
    \mathrm{Herm}(\mH) \longrightarrow \mathbb{C},
    \qquad v \in \mV,
\]
be a family of complex-valued functions that is $G$-covariant under the induced action of $G$ on the finite index set~$\mV$.
Let 
\[
    \mathrm{v} : \{1,\ldots,|\mV|\} \longrightarrow \mV
\]
be a fixed bijective enumeration of~$\mV$.  

Then the following hold.

\begin{enumerate}[label=(\roman*)]

\item
\textbf{Extremality for symmetric functionals.}
If each $F_{v}$ is of class~$C^{1}$, then
\[
    \Sigma(O)
    \coloneqq
    S\bigl( F_{\mathrm{v}(1)}(O),\ldots,F_{\mathrm{v}(|\mV|)}(O) \bigr),
\]
with $ S:\mathbb{C}^{|\mV|}\to\mathbb{C} $ any symmetric $C^{1}$ function, has $ \psi $ as an extremal point when restricted to the manifold $ \varrho_{\psi;G} $.

\item
\textbf{Extremality for the maximized functional.}
If each $F_{v}$ is real-valued and of class~$C^{\infty}$, then the maximally lifted functional
\[
    \mF(O)
    \coloneqq
    \max_{v \in \mV} F_{v}(O)
\]
also has $ \psi $ as an extremal point when restricted to $ \varrho_{\psi;G} $.

\item
\textbf{Extremality for the semi-Rényi functional.}
If each $F_{v}$ is of class~$C^{\infty}$, then for every $ \alpha>0 $ the semi-Rényi functional
\[
    \mR_{\alpha}(O)
    \coloneqq
    \sum_{v\in\mV} \bigl|F_{v}(O)\bigr|^{\alpha}
\]
admits $ \psi $ as an extremal point when restricted to $ \varrho_{\psi;G} $.

\end{enumerate}
\end{theorem}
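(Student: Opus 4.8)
The plan is to reduce all three statements to a single geometric mechanism: the point $\psi$ is a fixed point of the conjugation action of $G$, since $\ket{\psi}\in\mS_G$ gives $g\psi g^\dagger=\psi$ for every $g\in G$, whereas \emph{every} tangent direction to $\varrho_{\psi;G}$ at $\psi$ is annihilated by the twirl $\mT_G$. Concretely, I would first identify the tangent space: a curve of pure states supported in $\Span\{\ket{\psi}\}\oplus\mS_G^\perp$ through $\psi$ has velocity of the form $\Delta=\ket{\psi}\bra{\phi}+\ket{\phi}\bra{\psi}$ with $\ket{\phi}\in\mS_G^\perp$ (the orthocomplement of $\ket{\psi}$ inside that support). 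The pivotal lemma is then that $\mT_G(\Delta)=\tfrac1{|G|}\sum_{g}g\Delta g^\dagger=\ket{\psi}\bra{P_{\mS_G}\phi}+\ket{P_{\mS_G}\phi}\bra{\psi}=0$, because each $g$ fixes $\ket{\psi}$ while averaging $\ket{\phi}\in\mS_G^\perp$ to $P_{\mS_G}\ket{\phi}=0$. This one fact—that tangent directions twirl to zero—is exactly what couples the $G$-covariance of the family $\{F_v\}$ to the extremality of $\psi$.

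For part (i) I would first record that $\Sigma$ is genuinely $G$-invariant: since $S$ is symmetric and each $g$ acts on the labels by the permutation $\pi_g$, the tuple $(F_{\pi_g(\mathrm{v}(1))}(O),\dots)$ is a reordering of $(F_{\mathrm{v}(1)}(O),\dots)$, so $\Sigma(g^\dagger O g)=\Sigma(O)$. Differentiating this identity at $O=\psi$ along a tangent $\Delta$ and using $g^\dagger\psi g=\psi$ yields $D\Sigma_\psi[g^\dagger\Delta g]=D\Sigma_\psi[\Delta]$ for every $g$; averaging over $G$ and invoking linearity together with the pivotal lemma gives $D\Sigma_\psi[\Delta]=D\Sigma_\psi[\mT_G(\Delta)]=0$. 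Hence every directional derivative of $\Sigma$ along $\varrho_{\psi;G}$ vanishes, so $\psi$ is a critical point (case~1 of the extremal-point definition) for both the real and imaginary parts of $\Sigma$.

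For parts (ii) and (iii) the functionals are again manifestly $G$-invariant ($\mF$ because $\pi_g$ permutes the arguments of the max, $\mR_\alpha$ because it permutes the summands), but they are in general non-smooth, so I would pass to one-sided directional derivatives. The key auxiliary identity, proved exactly as above but applied to the individual differentials, is that for any $G$-orbit $\mathcal{O}$ the active slopes sum to zero, $\sum_{v\in\mathcal{O}}DF_v|_\psi[\Delta]=0$, since $\sum_g DF_v|_\psi[g^\dagger\Delta g]=DF_v|_\psi[|G|\,\mT_G(\Delta)]=0$. For $\mF$, the active set $A=\argmax_v F_v(\psi)$ is a union of orbits, Danskin's formula gives $\mF'(\psi;\Delta)=\max_{v\in A}DF_v|_\psi[\Delta]$, and the zero-sum constraint forces a clean dichotomy: either all active slopes vanish (so $\mF\circ\gamma$ is differentiable at $0$ with zero derivative, case~1) or the maximum is strictly positive and the minimum strictly negative, whence $\psi$ is a strict local minimum along that direction (case~3). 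For $\mR_\alpha$ I would split $\mV=\mV_0\sqcup\mV_+$ according to whether $F_v(\psi)=0$; both pieces are $G$-stable. The non-vanishing part $\sum_{\mV_+}|F_v|^\alpha$ is smooth near $\psi$ and $G$-invariant, so its directional derivative vanishes by the part-(i) argument, while the vanishing part $\sum_{\mV_0}|F_v|^\alpha$ is nonnegative and equal to $0$ at $\psi$, hence has a global minimum there; combining the two (critical plus local-min) gives extremality in the generic regimes, with the case $\alpha>1$ collapsing to honest differentiability with zero derivative.

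The main obstacle I anticipate is precisely the interface between smoothness classes in part (iii): when $\alpha\le 1$ and $F_v$ vanishes at $\psi$, the term $|F_v|^\alpha$ is non-differentiable, and when a tangent direction lies in the common kernel of all the active first differentials the first-order mechanism is silent. Resolving these flat directions requires comparing the order of vanishing $\alpha k_{\min}$ of the nonnegative singular part against the $O(t^2)$ coming from the smooth part: when $\alpha k_{\min}<2$ the nonnegative part dominates and forces a local minimum, when it exceeds $2$ the smooth part's definite second-order term decides, and the borderline and doubly-flat cases must be absorbed into cases~1 or~4 of the extremal-point definition. Making this order comparison uniform over arbitrary curves $\gamma$, rather than straight lines, and confirming that no genuine saddle survives, is the delicate step; for part (ii) the analogous difficulty does not arise, because the zero-sum identity already produces the sharp dichotomy with no second-order input.
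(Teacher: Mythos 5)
Your proposal is correct and follows essentially the same route as the paper: the same pivotal fact that tangent directions $\sigma=\ket{\psi}\bra{\varphi}+\ket{\varphi}\bra{\psi}$ with $\ket{\varphi}\in\mS_G^{\perp}$ twirl to zero, the same invariance--differentiate--average argument for part (i), the same orbit zero-sum dichotomy for the maximized functional, and the same $\mV_{0}/\mV_{\bar 0}$ split for $\mR_\alpha$. The only remark worth adding is that the $\alpha\le 1$ edge cases you worry about at the end are largely defused by the paper's deliberately weak Definition~\ref{def:extremal_point}, under which ``differentiable with vanishing derivative'' already qualifies as extremal, exactly as you yourself conclude.
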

\noindent
\emph{Remark.}
Note that whenever $ F_v(\phi) \ge 0 $ for all $v \in \mV$ and all $\phi$ in the domain, $\mF(O)$ coincides with the limiting functional
\[
    \lim_{\alpha \to \infty} \sqrt[\alpha]{\mR_{\alpha}}
    =
    \max_{v \in \mV} |F_v|
    =
    \max_{v \in \mV} F_v = \mF.
\]

\begin{proof}[Proof of Theorem~\ref{theo:MainTheorem}]
We work on the manifold of rank-one Hermitian operators with unit trace, 
parametrized by projectors $ \phi = \ket{\phi}\bra{\phi} $,
which form the projective Hilbert manifold.  
Consider a smooth variation of $ \ket{\psi} $ within the subspace 
$\Span\{\ket{\psi}\} \oplus \mS_{G}^{\perp}$
toward a normalized orthogonal state $ \ket{\varphi} \perp \ket{\psi} $.
Any variation containing a component along~$\ket{\psi}$ can, after appropriate rescaling and rephasing, 
be reduced to such an orthogonal variation.
Hence we may write
\begin{subequations}
\begin{gather}
    \ket{\psi(\epsilon)} = 
    \frac{\ket{\psi} + \epsilon \ket{\varphi}}{\sqrt{1+\epsilon^2}}, \\ \psi(\epsilon)
    = \ket{\psi(\epsilon)}\bra{\psi(\epsilon)}
    = \psi + \frac{\epsilon}{1+\epsilon^2} \sigma
      + \frac{\epsilon^2}{1+\epsilon^2} \mu,
\end{gather}
\label{eq:stateWithVariation}
\end{subequations}
where
\begin{subequations}
\begin{gather}
    \sigma = \ket{\varphi}\bra{\psi} + \ket{\psi}\bra{\varphi},\\
    \mu     = \varphi - \psi = \ket{\psi} \bra{\psi} - \ket{\varphi} \bra{\varphi}.
\end{gather}
\end{subequations}

\textbf{Proof of (i).}
For every $O\in\mathrm{Herm}(\mH)$, denote by $ \Sigma^{(n)}_O $ the symmetric $(0,n)$-tensor representing the $n$th differential of $\Sigma$ at $O$.
We need to show that
\[
\left.\frac{d\Sigma(\psi(\epsilon))}{d\epsilon}\right|_{\epsilon=0}
= \Sigma^{(1)}_\psi(\sigma) = 0.
\]

Since $ \Sigma(O) \equiv S(F_{\mathrm{v}(1)}, \ldots, F_{\mathrm{v}(|\mV|)}) $ 
is a symmetric function of its arguments, and each component $ F_{v(i)} $ transforms covariantly under conjugation $ O \mapsto g^\dagger O g $ for any $ g \in G $, 
the arguments of $ S $ merely undergo a permutation induced by $ g $.  
Hence, the overall value of $ \Sigma $ remains unchanged under conjugation by any element of~$G$.
Explicitly, for every $ g \in G $,
\begin{equation}
\begin{split}
    \Sigma(g^\dagger O g)
    &= S\bigl(F_{\mathrm{v}(1)}(g^\dagger O g), \ldots, F_{\mathrm{v}(|\mV|)}(g^\dagger O g)\bigr) \\[3pt]
    &= S\bigl(F_{\pi_g(\mathrm{v}(1))}(O), \ldots, F_{\pi_g(\mathrm{v}(|\mV|))}(O)\bigr) \\[3pt]
    &= S\bigl(F_{\mathrm{v}(\tilde{\pi}_g(1))}(O), \ldots, F_{\mathrm{v}(\tilde{\pi}_g(|\mV|))}(O)\bigr) \\[3pt]
    &= \Sigma(O),
\end{split}
\end{equation}
where $ g \mapsto \pi_g $ denotes the $G$-covariance rule of the family $F$ acting on~$\mV$,
and $ \tilde{\pi}_g : \{1,\ldots,|\mV|\} \to \{1,\ldots,|\mV|\} $ is the permutation induced by~$\pi_g$ 
on the index set labeling the elements of~$\mV$,
explicitly given by $ \tilde{\pi}_g \coloneqq \mathrm{v}^{-1} \circ \pi_g \circ \mathrm{v}. $

Because $ g^\dagger \psi g = \psi $ for every $ g \in G $,
it follows that
\begin{equation}
    \Sigma(\psi + \Delta) = \Sigma(\psi + g^\dagger \Delta g)
    \qquad \forall \Delta \in \mathrm{Herm}(\mH),
\end{equation}
and hence, by differentiating at $ \psi $,
\begin{equation}
    \Sigma^{(n)}_\psi(\Delta) = \Sigma^{(n)}_\psi(g^\dagger \Delta g)
    \qquad \forall n \in \mathbb{N}, \Delta \in \mathrm{Herm}(\mH).
\end{equation}
Averaging over the finite group~$G$ gives
\begin{equation}
    \Sigma^{(n)}_\psi(\Delta)
    = \frac{1}{|G|} \sum_{g \in G} \Sigma^{(n)}_\psi(g^\dagger \Delta g).
\end{equation}

For $ n=1 $, the first differential $ \Sigma^{(1)}_\psi $ is a $(0,1)$-tensor, so this relation implies
\begin{equation}
    \Sigma^{(1)}_\psi(\Delta)
    = \Sigma^{(1)}_\psi\left( \frac{1}{|G|} \sum_{g \in G} g^\dagger \Delta g \right).
\end{equation}
When $ \Delta = \sigma $, the twirling operation
$
\frac{1}{|G|} \sum_{g \in G} g^\dagger \sigma g
$
vanishes. Indeed,
\begin{equation}
    \frac{1}{|G|} \sum_{g \in G} g^\dagger \ket{\psi}\bra{\varphi} g = \frac{1}{|G|} \sum_{g \in G} \ket{\psi}\bra{\varphi} g =  \ket{\psi}\bra{\varphi} P_{\mS_G}=0 ,
\end{equation}
where $P_{\mS_G}$ is the projector onto the stabilized subspace $\mS_{G}$ of $G$.
Since $\ket{\varphi}$ is orthogonal to $\mS_G$, the expression vanishes.
Consequently,
\[
\Sigma^{(1)}_\psi(\sigma) = 0,
\]
which establishes the claim.
\hfill$\square$

\textbf{Proof of (ii).}
Define, for any $ O \in \mathrm{Herm}(\mH) $,
\begin{subequations}
\begin{align}
    \mV_{0,O} &\coloneqq \{ v \in \mV \; \mid \; F_v(O) = 0 \}, \\[3pt]
    \mV_{\bar{0},O} &\coloneqq \{ v \in \mV \; \mid \; F_v(O) \neq 0 \}.
\end{align}
\end{subequations}
These two disjoint subsets form a partition of~$\mV$.  
It follows immediately that for every $ g \in G $ inducing an action
\[
    g \mapsto \pi_g
\]
on~$\mV$ via the $G$-covariance rule of the family~$\{F_v\}_{v\in\mV}$,
the permutation~$\pi_g$ acts separately on the two subsets:
\[
    \pi_g(\mV_{0,O}) = \mV_{0,O},
    \qquad
    \pi_g(\mV_{\bar{0},O}) = \mV_{\bar{0},O}.
\]

The functional $\mR_\alpha(\psi(\epsilon))$ can be decomposed as
\begin{equation}
\label{eq:R_alpha}
    \mR_\alpha(\psi(\epsilon)) =
    \sum_{v \in \mV_{\bar{0},\psi}}
        \bigl| F_v(\psi(\epsilon)) \bigr|^{\alpha}
    +
    \sum_{v \in \mV_{0,\psi}}
        \bigl| F_v(\psi(\epsilon)) \bigr|^{\alpha}.
\end{equation}
For sufficiently small values of $ |\epsilon| $, the first sum can be rewritten as
\[
    \sum_{v \in \mV_{\bar{0},\psi}}
        \bigl( \sign[F_v(\psi)]  F_v(\psi(\epsilon)) \bigr)^{\alpha},
\]
which defines a smooth and analytic function of~$\epsilon$.
This contribution satisfies the assumptions of part~(i), and therefore,
according to the result established there, its first-order term
in~$\epsilon$ vanishes.

Whenever $ F_v(\psi) = 0 $, the dependence of $ F_v(\psi(\epsilon)) $ on~$\epsilon$ can be written in the form
\[
    F_v(\psi(\epsilon))
    =
    \epsilon  G_v(\epsilon; \psi, \sigma, \varphi),
\]
where $ G_v(\epsilon; \psi, \sigma, \varphi) $ is a smooth and analytic function of~$\epsilon$.
Consequently, the functional $\mR_\alpha(\psi(\epsilon))$ can be expressed as the sum of  
a smooth analytic term whose first-order contribution vanishes,
and a residual part of the form
\begin{equation}
\label{eq:Falpha_residual}
    \left| \epsilon \right|^{\alpha}
    \sum_{v \in \mV_{0,\psi}}
        \bigl|
            G_v(\epsilon; \psi, \sigma, \varphi)
        \bigr|^{\alpha},
\end{equation}
where each $ G_v(\epsilon; \psi, \sigma, \varphi) $ is smooth and analytic in~$\epsilon$.
Hence, the point $\epsilon = 0$ is an extremum of~$\mR_\alpha(\psi(\epsilon))$,
and every normalized state $ \ket{\psi} \in \mS_G $ constitutes an extremal point of~$\mR_\alpha$
when restricted to the submanifold
\[
    \varrho\bigl(\Span\{\ket{\psi}\} \oplus \mS_G^{\perp}\bigr).
\]
\hfill$\square$

\textbf{Proof of (iii).}

We now show that the point $\epsilon = 0$ is an extremum of the functional
\[
    \mF(\psi(\epsilon))
    \coloneqq
    \max_{v \in \mV} F_v\bigl(\psi(\epsilon)\bigr).
\]
For any operator $ O \in \mathrm{Herm}(\mH) $, let $ M_O \subseteq \mV $ denote the set of maximizing indices,
\begin{equation}
\label{eq:def_MO}
    M_O
    \coloneqq
    \bigl\{
        v \in \mV
        \; \big| \;
        F_v(O)
        = \max_{v' \in \mV} F_{v'}(O)
    \bigr\}.
\end{equation}
For sufficiently small values of $ |\epsilon| $, the indices $v \in \mV$ that can maximize
$ F_v(\psi(\epsilon)) $ are necessarily contained in~$ M_\psi $.

Let $ v \in M_\psi $.  
Since $ g^\dagger \psi g = \psi $ for all $ g \in G $, the $G$-covariance rule implies
$ \pi_g(v) \in M_\psi $.
Denote by $ o_v \subseteq \mV $ the orbit of $v$ under the induced action
$ \pi_g $ for $ g \in G $.
It follows that $ o_v \subseteq M_\psi $, and that the induced permutation~$\pi_g$
acts separately on the disjoint subsets $ o_v $ and $ \mV \setminus o_v $.

Using the result established in part~(i), we deduce that for every $ v \in M_\psi $,
the average of $ F_{v'}(\psi(\epsilon)) $ over all $ v' \in o_v $ satisfies
\begin{equation}
\label{eq:orbit_average}
    \frac{1}{|G|}
    \sum_{g \in G}
        F_{\pi_g(v)}\bigl(\psi(\epsilon)\bigr)
    =
    F_v(\psi)
    + O(\epsilon^2).
\end{equation}
Since this holds for all $ v \in M_\psi $, the orbit-stabilizer theorem implies
\begin{equation}
\label{eq:Mpsi_average}
    \frac{1}{|M_\psi|}
    \sum_{v \in M_\psi}
        F_v\bigl(\psi(\epsilon)\bigr)
    =
    F_v(\psi)
    + O(\epsilon^2).
\end{equation}

Because the average of the linear term in~$\epsilon$ vanishes,
two possibilities arise:
either the linear term vanishes individually for every $ v \in M_\psi $,
or it takes positive coefficients for some $v$ and negative ones for others.
In the latter case, the point $ \epsilon = 0 $ corresponds to a
\emph{sharp local minimum}.
Otherwise, if the linear term vanishes for all $ v \in M_\psi $, then
\begin{equation}
\label{eq:local_expansion}
    F_v\bigl(\psi(\epsilon)\bigr)
    =
    F_v(\psi)
    + O(\epsilon^2).
\end{equation}
Hence, for sufficiently small~$|\epsilon|$,
the maximizing element of~$\mV$ may change between the cases
$\epsilon > 0$ and $\epsilon < 0$
(possibly remaining the same).
In either situation, the right and left derivatives of~$\mF(\psi(\epsilon))$
coincide and both vanish, implying that the point~$\epsilon = 0$
is \emph{extremal along the path}.
Although the function $\mF(\psi(\epsilon))$ need not be smooth at this point,
the extremality of~$\epsilon = 0$ remains guaranteed.
\end{proof}

\paragraph{Notes on the theorem.}
\begin{itemize}

\item[]
\textbf{(1) The full-manifold case.}
If $ \dim \mS_{G} = 1 $, then
\[
    \varrho_{\psi;G} = \varrho(\mH),
\]
i.e., the constraint on the support disappears and the admissible variations span the entire real manifold of rank-one, trace-one operators on~$\mH$, of real dimension $2D_{\mH}-2$.

\item[]
\textbf{(2) Extension to mixed states in $\mS_{G}$.}
The theorem extends directly to mixed states supported on $\mS_{G}$ of the form
\[
    \rho = \sum_{i=1}^{\dim \mS_{G}} p_i \ket{\psi_i}\bra{\psi_i},
\]
where $ \{\ket{\psi_i}\}_{i=1}^{\dim \mS_{G}} $ is an orthonormal basis of the invariant subspace $\mS_{G}$, and
$ \{p_i\}_{i=1}^{\dim \mS_{G}} $ is a classical probability distribution.

We restrict to variations of the form
\[
    \ket{\psi_i}
    \longrightarrow
    \ket{\psi_i(\vec{\epsilon})}
    =
    \frac{
        \ket{\psi_i} + \epsilon_i\ket{\tilde{\varphi}_i}
    }{
        \sqrt{1 + \epsilon_i^2 \braket{\tilde{\varphi}_i|\tilde{\varphi}_i}}
    },
\]
with variation directions
\[
    \ket{\tilde{\varphi}_i}
    =
    \sum_{j=1}^{\dim \mS_{G}^{\perp}}
        f_{i,j}\ket{\varphi_j},
\]
where $ \{\ket{\varphi_j}\}_{j=1}^{\dim \mS_{G}^{\perp}} $ is an orthonormal basis of the orthogonal complement $\mS_{G}^{\perp}$, and $f_{i,j}\in\mathbb{C}$ are arbitrary coefficients.
Under these variations, the same argument as in the pure-state case shows that $\rho$ remains a critical point of the functional.

\item[]
\textbf{Linearity and $\boldsymbol{G}$-stabilizer fidelity.}
Assume that the family $ \{F_v\}_{v\in\mV} $ consists of rank-$(0,1)$ tensors on $ \mathrm{Herm}(\mH) $.
Then one has the expansion
\begin{equation}
    \frac{1}{|M_{\psi}|}
    \sum_{v \in M_{\psi}}
    F_v\bigl(\psi(\epsilon)\bigr)
    =
    F_v(\psi)
    +
    \frac{\epsilon^2}{1+\epsilon^2} \frac{1}{|M_{\psi}|} \sum_{v \in M_{\psi}}F_v\bigl(\mT_G(\mu)\bigr),
\end{equation}
so that, if the linear term vanishes for all $v\in M_\psi$, one obtains
\begin{equation}
    F_v\bigl(\psi(\epsilon)\bigr)
    =
    F_v(\psi)
    +
    \frac{\epsilon^2}{1+\epsilon^2}F_v(\mu).
\end{equation}

Consequently, the nature of the extremum at $\epsilon=0$ is governed by the signs of the quantities $F_v(\mu)$:
\begin{enumerate}
    \item If $F_v(\mu) < 0$ for all $v\in M_\psi$, then $\epsilon=0$ is a local \emph{smooth maximum}.
    \item If $F_v(\mu) \le 0$ for all $v\in M_\psi$, with equality for at least one $v$, then $\epsilon=0$ is a \emph{flat extremum}, i.e., the function is constant along the entire variation path.
    \item If there exists $v\in M_\psi$ with $F_v(\mu) > 0$, then $\epsilon=0$ is a local \emph{smooth minimum}.
\end{enumerate}
A particularly important instance of such functionals is the following fidelity-type quantity.

\begin{definition}
    Let $ G \subset \mathrm{U}(\mH) $ be a finite subgroup.  
    The \textbf{$G$-stabilizer fidelity} of a state $ \ket{\psi}\in\mH $ is defined by
    \begin{equation}
        F_G(\ket{\psi})
        =
        \max_{\ket{s} \in \mSS_G}
        \left|\braket{s|\psi}\right|^{2},
    \end{equation}
    where $ \mSS_G $ denotes the (finite) set of $G$-stabilizer states, modulo global phase.
\end{definition}
\noindent
\emph{Remark.}
For the $G$-stabilizer fidelity (and for the $G$-stabilizer extent defined in Appendix~\ref{app:group-stabilizer extent}) we do not impose the condition $\Span \mSS_G = \mH$ or any related spanning assumptions.

As an immediate consequence of Theorem~\ref{theo:MainTheorem} one obtains:
\begin{corollary}
    Let $ G \subset \mathrm{U}(\mH) $ be a finite subgroup, and let $ \mQ \subset \mathrm{N}_{\mathrm{U}(\mH)}(G) $ be any finite subgroup of the normalizer of $G$.  
    Then every normalized vector $ \ket{\psi} \in \mS_{\mQ} $ is an extremum of the $G$-stabilizer fidelity $F_G$ on the subspace
    $
        \Span\{\ket{\psi}\} \oplus \mS_{G}^{\perp}.
    $
\end{corollary}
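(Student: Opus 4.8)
The plan is to exhibit $F_G$ as a maximized envelope of linear functionals, apply Theorem~\ref{theo:MainTheorem} with the covariance group taken to be $\mQ$, and then transport the resulting extremality from the $\mQ$-adapted submanifold to the one named in the statement. Index the family by $\mV=\mSS_G$ and set
\[
    F_s(O):=\Tr\!\big(\ket{s}\!\bra{s}\,O\big),\qquad s\in\mSS_G,
\]
so that each $F_s$ is a real-valued rank-$(0,1)$ tensor on $\mathrm{Herm}(\mH)$ (hence $C^{\infty}$) and $F_G(\ket{\psi})=\max_{s\in\mSS_G}F_s(\psi)=\mF(\psi)$ with $\psi=\ket{\psi}\!\bra{\psi}$. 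First I would check that this family is $\mQ$-covariant in the sense of Definition~\ref{def:GroupCovariance}: for $q\in\mQ$,
\[
    F_s(q^\dagger O q)=\Tr\!\big(q\ket{s}\!\bra{s}q^\dagger O\big)=F_{q\ket{s}}(O),
\]
and because $\mQ\subset\mathrm{N}_{\mathrm{U}(\mH)}(G)$ the normalizer action~\eqref{eq:NormalizerAction} gives $q\cdot\mSS_G=\mSS_G$, so $\pi_q:s\mapsto q\ket{s}$ is an invertible self-map of $\mV$. Thus $\{F_s\}_{s\in\mSS_G}$ is $\mQ$-covariant.

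Since $\ket{\psi}\in\mS_{\mQ}$, Theorem~\ref{theo:MainTheorem}(ii) applies with $G$ replaced by $\mQ$ and yields that $\psi$ is an extremal point of $\mF=F_G$ restricted to $\varrho_{\psi;\mQ}$, i.e.\ under all pure-state variations whose support lies in $\Span\{\ket{\psi}\}\oplus\mS_{\mQ}^{\perp}$. This is the substantive analytic input; what remains is geometric and concerns only which variation directions are admissible.

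The bridge to the stated submanifold is the inclusion of variation spaces
\[
    \Span\{\ket{\psi}\}\oplus\mS_G^{\perp}
    \ \subseteq\
    \Span\{\ket{\psi}\}\oplus\mS_{\mQ}^{\perp},
\]
which upon taking orthogonal complements is equivalent to the single containment $\mS_{\mQ}\cap\{\ket{\psi}\}^{\perp}\subseteq\mS_G$. Granting it, any variation keeping support in $\Span\{\ket{\psi}\}\oplus\mS_G^{\perp}$ is a special case of one already controlled on $\varrho_{\psi;\mQ}$, so extremality descends to $\varrho_{\psi;G}$ and the corollary follows. I expect this containment to be the crux of the argument, as it is precisely where $\mS_G$ and $\mS_{\mQ}$ must be reconciled. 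To establish it I would use that each $q\in\mQ$ preserves the $G$-invariant orthogonal splitting $\mH=\mS_G\oplus\mS_G^{\perp}$: indeed $q$ normalizes $G$, so for $\ket{\chi}\in\mS_G$ and any $g\in G$ one has $g\,(q\ket{\chi})=q\,(q^{-1}gq)\ket{\chi}=q\ket{\chi}$, whence $q\mS_G=\mS_G$. Consequently $\mS_{\mQ}$ decomposes as $(\mS_{\mQ}\cap\mS_G)\oplus(\mS_{\mQ}\cap\mS_G^{\perp})$, and the containment reduces to showing that the second summand is spanned by the $\mS_G^{\perp}$-component of $\ket{\psi}$ alone.

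In the setting of primary interest---$G$ the Pauli group (so $\mS_G=\{0\}$ and $\mS_G^{\perp}=\mH$) and $\mQ$ a finite subgroup of the Clifford group for which $\ket{\psi}$ is uniquely stabilized, i.e.\ $\mS_{\mQ}=\Span\{\ket{\psi}\}$---this second summand is trivial, $\mS_{\mQ}\cap\{\ket{\psi}\}^{\perp}=\{0\}\subseteq\mS_G$, the two submanifolds coincide with the full manifold of pure states, and the bridge is automatic. The genuinely degenerate case $\dim\mS_{\mQ}>1$ is exactly where this step requires care, since there the extra directions inside $\mS_{\mQ}$ are not governed by the $\mQ$-twirling argument of Theorem~\ref{theo:MainTheorem}; controlling them is what the containment $\mS_{\mQ}\cap\{\ket{\psi}\}^{\perp}\subseteq\mS_G$ is designed to rule out, and verifying it is the main obstacle to a fully general statement.
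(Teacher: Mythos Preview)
Your steps 1--3 are exactly the paper's argument: the paper treats this corollary as an ``immediate consequence'' of Theorem~\ref{theo:MainTheorem}, and the only thing to check is that the family $\{F_s\}_{s\in\mSS_G}$ is $\mQ$-covariant, which follows from the normalizer action~\eqref{eq:NormalizerAction} precisely as you wrote. So your proof that $\psi$ is extremal on $\varrho_{\psi;\mQ}$ is complete and matches the paper.

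You are also right to flag the ``bridge'' to $\varrho_{\psi;G}$ as problematic. In fact the containment $\mS_{\mQ}\cap\{\ket{\psi}\}^{\perp}\subseteq\mS_G$ fails in general (take $\mQ=\{I\}$, or take $G$ to be the Pauli group so $\mS_G=\{0\}$ and $\mQ$ any Clifford subgroup with $\dim\mS_{\mQ}\ge 2$), and with it the corollary as literally stated. The paper offers no bridging argument either; comparing with the analogous Corollary~\ref{cor:CliffordExtremality}, which is the specialization to $G=\mP_{N,d}$ and states extremality on $\Span\{\ket{\psi}\}\oplus\mS_{\mQ}^{\perp}$, it is clear that the $\mS_G^{\perp}$ in this corollary is a typo for $\mS_{\mQ}^{\perp}$. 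With that correction your proof is finished at step~3 and no bridge is required.
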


\item[]
\textbf{(4) Group-stabilizer extent.}
At this point in the flow of the paper, it is worth briefly mentioning that one can also define a notion of group-stabilizer extent. While this concept is not directly central to the main goals of this work, it arises naturally within the broader discussion. Similar to how we generalized the stabilizer fidelity, one can extend other measures defined on pure states and derive new, insightful properties and statements. A detailed discussion of the group-stabilizer extent is provided in Appendix~\ref{app:group-stabilizer extent}.

\end{itemize}

\subsection{Componentwise Group-Covariant Functionals}

In many constructions, the label set of a covariant family naturally factorizes
into several components.  This subsection formalizes covariance properties that
respect such a product structure and establishes a stability result under
componentwise contractions.

\begin{definition}[Componentwise Map]
\label{def:ComponentwiseMap}
Let
\[
    \mV = \prod_{i=1}^{k} \mV_i
\]
be the Cartesian product of finite sets $ \{\mV_i\}_{i=1}^{k} $.
A map
\[
    s : \mV \longrightarrow \mV
\]
is said to be \emph{componentwise} (with respect to the decomposition
$\mV = \prod_{i=1}^{k} \mV_i$) if it acts independently on each factor, i.e.,
if there exist invertible maps
\[
    s_i : \mV_i \longrightarrow \mV_i,
    \qquad i = 1,\ldots,k,
\]
such that
\begin{equation}
\label{eq:componentwise}
    s(v_1,\ldots,v_k)
    =
    \bigl(
        s_1(v_1),
        \ldots,
        s_k(v_k)
    \bigr)
    \qquad
    \forall (v_1,\ldots,v_k)\in\mV.
\end{equation}
We call $(s_1,\ldots,s_k)$ the \emph{component maps} of $s$.
\end{definition}

\begin{definition}[Componentwise $G$-Covariant Family]
\label{def:ComponentwiseCovariant}
Let $G\subset \mathrm{U}(\mH)$ be a subgroup and let
\[
    \mV = \prod_{i=1}^{k} \mV_i
\]
be the Cartesian product of finite sets $ \{\mV_i\}_{i=1}^{k} $.
Consider a $G$-covariant family of real-valued functions
\[
    F^{(n)}_{v} :
    \mathrm{Herm}(\mH)^n \longrightarrow \mathbb{R},
    \qquad v\in\mV,
\]
equipped with a covariance rule
\[
    g \longmapsto \pi_g : \mV \longrightarrow \mV.
\]
The family $ \{F^{(n)}_{v}\}_{v\in\mV} $ is said to be
\emph{componentwise $G$-covariant} (with respect to the decomposition
$\mV = \prod_{i=1}^{k}\mV_i$)
if every map $ \pi_g $ is componentwise in the sense of
Definition~\ref{def:ComponentwiseMap}.
\end{definition}

\begin{proposition}[Induced Covariance Under Componentwise Contraction]
\label{prop:InducedCovariance}
Let $G\subset \mathrm{U}(\mH)$ be a subgroup, let
\[
    \mV = \prod_{i=1}^{k} \mV_i
\]
be the Cartesian product of finite sets $ \{\mV_i\}_{i=1}^{k} $,
and let $ \{F^{(n)}_{v}\}_{v\in\mV} $ be a componentwise $G$-covariant family (with respect to the decomposition
$\mV = \prod_{i=1}^{k}\mV_i$).
Let
\[
    r : \mV \longrightarrow \mV
\]
be an invertible map for which the following holds:
for every $g\in G$, there exists an invertible componentwise map
\[
    \tau_g : \mV \longrightarrow \mV
\]
such that the \emph{intertwining condition}
\begin{equation}
\label{eq:intertwining}
    \pi_g \circ r = r \circ \tau_g
\end{equation}
is satisfied.

Define an induced family indexed by the reduced label set
\[
    \widetilde{\mV}
    \coloneqq
    \prod_{i=1}^{k-1} \mV_i
\]
via the contraction
\begin{equation}
\label{eq:induced_function}
    \widetilde{F}^{(n)}_{(v_1,\ldots,v_{k-1})}(O_1,\ldots,O_n)
    \coloneqq
    \sum_{v_k \in \mV_k}
        F^{(n)}_{r(v_1,\ldots,v_k)}(O_1,\ldots,O_n).
\end{equation}
Then the induced family
$
    \{\widetilde{F}^{(n)}_{(v_1,\ldots,v_{k-1})}\}_{(v_1,\ldots,v_{k-1})\in\widetilde{\mV}}
$
is itself componentwise $G$-covariant under the separable action
\[
    (v_1,\ldots,v_{k-1})
    \longmapsto
    \bigl(\tau_{g,1}(v_1),\ldots,\tau_{g,k-1}(v_{k-1})\bigr),
\]
where $\tau_{g,i}$ are the component maps of $\tau_g$.
\end{proposition}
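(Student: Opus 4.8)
The plan is to verify the componentwise covariance functional equation directly for the contracted family $\{\widetilde{F}^{(n)}_{(v_1,\ldots,v_{k-1})}\}$ through a short chain of substitutions, the decisive ingredient being a re-indexing of the contraction sum that exploits the bijectivity of the last component map of $\tau_g$. First I would fix $g\in G$ and a label $(v_1,\ldots,v_{k-1})\in\widetilde{\mV}$, then unfold the definition~\eqref{eq:induced_function} to write
\[
    \widetilde{F}^{(n)}_{(v_1,\ldots,v_{k-1})}\bigl(g^\dagger O_1 g,\ldots,g^\dagger O_n g\bigr)
    =
    \sum_{v_k\in\mV_k}
        F^{(n)}_{r(v_1,\ldots,v_k)}\bigl(g^\dagger O_1 g,\ldots,g^\dagger O_n g\bigr).
\]

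Next I would apply the $G$-covariance of the underlying family (Definition~\ref{def:ComponentwiseCovariant} together with~\eqref{eq:GroupCovariance}) to each summand, replacing the label $r(v_1,\ldots,v_k)$ by $\pi_g\bigl(r(v_1,\ldots,v_k)\bigr)$ while stripping the conjugation off the operator arguments. I would then invoke the intertwining hypothesis~\eqref{eq:intertwining}, $\pi_g\circ r=r\circ\tau_g$, to rewrite each label as $r\bigl(\tau_g(v_1,\ldots,v_k)\bigr)$, and use that $\tau_g$ is componentwise to expand this as $r\bigl(\tau_{g,1}(v_1),\ldots,\tau_{g,k}(v_k)\bigr)$. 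At this stage only the final slot still depends on the summation variable, through $\tau_{g,k}$.

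The crux is then the change of summation variable $w_k=\tau_{g,k}(v_k)$. Because $\tau_g$ is invertible and componentwise, its last component $\tau_{g,k}:\mV_k\to\mV_k$ is a bijection, so as $v_k$ runs over $\mV_k$ the image $w_k$ runs over all of $\mV_k$ exactly once; the value of the sum is unchanged but it now reads $\sum_{w_k\in\mV_k} F^{(n)}_{r(\tau_{g,1}(v_1),\ldots,\tau_{g,k-1}(v_{k-1}),w_k)}(O_1,\ldots,O_n)$, which is precisely the definition of $\widetilde{F}^{(n)}_{(\tau_{g,1}(v_1),\ldots,\tau_{g,k-1}(v_{k-1}))}(O_1,\ldots,O_n)$. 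This closes the functional equation and identifies the induced covariance rule as $\widetilde{\pi}_g:(v_1,\ldots,v_{k-1})\mapsto(\tau_{g,1}(v_1),\ldots,\tau_{g,k-1}(v_{k-1}))$. To finish, I would observe that each $\tau_{g,i}$ is invertible (being a component of the invertible map $\tau_g$), so $\widetilde{\pi}_g$ is an invertible componentwise map, which is exactly the claim that the induced family is componentwise $G$-covariant under the stated separable action.

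I do not anticipate a genuine obstacle here: the argument is a clean bookkeeping computation with no analytic content. The one point requiring care is the legitimacy of the re-indexing, and this is exactly where the invertibility of $\tau_g$ — equivalently of its last component $\tau_{g,k}$ — is essential; dropping this hypothesis would destroy the bijection that allows the contraction sum to be relabeled, and the covariance would fail. It is therefore worth emphasizing in the write-up that invertibility of the component maps is used in precisely this step and nowhere else.
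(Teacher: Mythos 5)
Your proposal is correct and follows essentially the same chain of substitutions as the paper's proof (unfold the contraction, apply covariance of the underlying family, invoke the intertwining relation, expand $\tau_g$ componentwise, and re-index the sum over $\mV_k$). The only difference is that you make the final re-indexing step $w_k=\tau_{g,k}(v_k)$ and its reliance on bijectivity explicit, which the paper leaves implicit; this is a welcome clarification but not a different argument.
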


\begin{proof}[Proof of Proposition~\ref{prop:InducedCovariance}]
Fix $g\in G$.
Write the componentwise decomposition of $\tau_g$ as
\begin{equation}
\label{eq:qg_components}
    \tau_g(v_1,\ldots,v_k)
    =
    \bigl(
        \tau_{g,1}(v_1),
        \ldots,
        \tau_{g,k}(v_k)
    \bigr),
    \qquad
    \tau_{g,i} : \mV_i \to \mV_i \text{ invertible}.
\end{equation}
Using $G$-covariance of $F^{(n)}$ and the intertwining relation
\eqref{eq:intertwining}, we compute:
\begin{align}
\tilde{F}^{(n)}_{(v_1,\ldots,v_{k-1})}(g^\dagger O_1 g, \ldots, g^\dagger O_n g)
    &=
    \sum_{v_k \in \mV_k}
        F^{(n)}_{r(v_1,\ldots,v_k)}(g^\dagger O_1 g, \ldots, g^\dagger O_n g)
        \notag \\
    &=
    \sum_{v_k \in \mV_k}
        F^{(n)}_{\pi_g(r(v_1,\ldots,v_k))}(O_1, \ldots, O_n)
        \notag \\
    &=
    \sum_{v_k \in \mV_k}
        F^{(n)}_{r(\tau_g(v_1,\ldots,v_k))}(O_1, \ldots, O_n)
        \notag \\
    &=
    \sum_{v_k \in \mV_k}
        F^{(n)}_{r(\tau_{g,1}(v_1), \tau_{g,2}(v_2), \ldots, \tau_{g,k}(v_k))}(O_1, \ldots, O_n)
        \notag \\
    &=
    \tilde{F}^{(n)}_{(\tau_{g,1}(v_1), \tau_{g,2}(v_2), \ldots, \tau_{g,{k-1}}(v_{k-1}))}(O_1, \ldots, O_n).
    \label{eq:covariance_result}
\end{align}
Thus the induced family $\widetilde{F}^{(n)}$ transforms according to the
componentwise rule
\[
    (v_1,\ldots,v_{k-1}) \longmapsto
    \bigl(
        \tau_{g,1}(v_1),\ldots,\tau_{g,k-1}(v_{k-1})
    \bigr),
\]
establishing componentwise $G$-covariance.
\end{proof}

\subsection{Clifford-Covariance and Magic Measures}

\subsubsection{Notation for the rest of the paper}

From now on in the paper, we work throughout with the Hilbert space
\[
    \mH_{N,d} = (\mathbb{C}^d)^{\otimes N},
\]
corresponding to an $N$-qudit system in which each local subsystem (qudit) has a prime dimension~$d$.
The generalized Pauli group acting on~$\mH_{N,d}$, also known as the Weyl-Heisenberg group, 
is denoted by~$\mP_{N,d}$.
The Clifford group on~$\mH_{N,d}$, defined as the normalizer of~$\mP_{N,d}$ 
within the unitary group, is denoted by~$\mC_{N,d}$.

The set of all stabilizer states on~$\mH_{N,d}$ is denoted by~$\mSS_{N,d}$,
and their convex hull by~$\mathrm{STAB}_{N,d}$.
The latter consists of all mixed stabilizer (density) states, whose extreme points are exactly the pure stabilizer states.

We denote by~$\mathrm{Herm}(\mH_{N,d})$ the real vector space of Hermitian operators acting on~$\mH_{N,d}$,
and by
\[
    \mathrm{End}(\mH_{N,d})
     \equiv 
    \{  O:\mH_{N,d} \rightarrow \mH_{N,d} \; \mid \; O \text{ is linear}  \}
\]
the set of all linear endomorphisms on~$\mH_{N,d}$.
The group of unitary linear operators on~$\mH_{N,d}$ is denoted by~$\mathrm{U}(\mH_{N,d})$.

Recall that for any Clifford operation $C \in \mC_{N,d}$, there exist a unique invertible symplectic matrix $S_C \in \mathrm{Sp}(2N,\mathbb{Z}_d)$ and a unique phase-space displacement vector $\boldsymbol{a}_C \in \mathbb{V}_{N,d}$, where
\[
    \mathbb{V}_{N,d} \coloneqq \mathbb{Z}_d^{N} \times \mathbb{Z}_d^{N}
\]
denotes the discrete phase space, such that
\begin{subequations}
\label{eq:Clifford_action_on_TA}
\begin{gather}
    C  T_{\boldsymbol{\chi}}  C^\dagger
    =
    T_{\boldsymbol{a}_C}  T_{S_C \boldsymbol{\chi}}  T_{\boldsymbol{a}_C}^\dagger
    =
    \omega^{-\langle\boldsymbol{a}_C, S_C \boldsymbol{\chi}\rangle}
     T_{S_C \boldsymbol{\chi}},
    \\[0.4em]
    C  A_{\boldsymbol{\chi}}  C^\dagger
    =
    A_{\boldsymbol{a}_C + S_C \boldsymbol{\chi}},
\end{gather}
\end{subequations}
where $ T_{\boldsymbol{\chi}} $ are the Weyl-Heisenberg displacement operators and
$ A_{\boldsymbol{\chi}} $ are the associated phase-point operators.

Consequently, for any operator $O \in \mathrm{Herm}(\mH_{N,d})$,
the corresponding Wigner function transforms as
\begin{equation}
\label{eq:Clifford_covariance_of_Wigners}
    W_{C^\dagger O C}(\boldsymbol{\chi})
     = 
    W_O(\boldsymbol{a}_C + S_C \boldsymbol{\chi}),
\end{equation}
demonstrating that Clifford operations act as
\emph{affine symplectic transformations} on the discrete phase space.  
Up to an overall phase of $C$,
this establishes a one-to-one correspondence between elements of the Clifford group
and affine symplectic transformations~\cite{Gross2007Hudson}.

\begin{definition}[Associated Symplectic Matrix and Displacement Vector]
\label{def:AssociatedSymplecticDisplacement}
For every Clifford operation $C \in \mC_{N,d}$,
the unique pair consisting of the symplectic matrix
$S_C \in \mathrm{Sp}(2N,\mathbb{Z}_d)$
and the displacement vector
$\boldsymbol{a}_C \in \mathbb{V}_{N,d}$
satisfying
\begin{equation}
    C  T_{\boldsymbol{\chi}}  C^\dagger
    = T_{\boldsymbol{a}_C}  T_{S_C \boldsymbol{\chi}}  T^\dagger_{\boldsymbol{a}_C}
    = \omega^{-\langle \boldsymbol{a}_C,  S_C \boldsymbol{\chi} \rangle}
      T_{S_C \boldsymbol{\chi}},
\end{equation}
is called, respectively, the \emph{associated symplectic matrix} and
the \emph{associated displacement vector} of~$C$.
\end{definition}

\subsubsection{Symplectic Clifford-Covariant Families on the Phase-Space}

In many phase-space-based constructions, one naturally encounters families of operator-valued or scalar-valued functions whose indices label discrete phase-space points.
When such families arise from pointwise symplectic Clifford-covariant structures, as will be the case in the next subsubsection (Section~\ref{sec:Weyl-Heisenberg Correlation Functions}), it is natural to also consider their behavior under symplectic Fourier transforms on phase space.
To formalize this behavior, we introduce the notion of \emph{symplectically componentwise Clifford-covariant} families.
This notion captures the idea that Clifford conjugation of the operator arguments induces, for each phase-space index separately, an affine symplectic transformation of the corresponding label.

\begin{definition}[Symplectically componentwise Clifford-covariant families]
\label{def:SymplecticCliifordCovariance}
Let
\[
F_{\boldsymbol{\chi}_1,\ldots,\boldsymbol{\chi}_n} :
\mathrm{Herm}^k(\mH) \longrightarrow \mathbb{C},
\qquad
\boldsymbol{\chi}_i \in \mathbb{V}_{N,d},
\]
be a componentwise Clifford-covariant family of functions indexed by phase-space points.
We say that the family
$F_{\boldsymbol{\chi}_1,\ldots,\boldsymbol{\chi}_n}$
is \emph{symplectically componentwise Clifford-covariant} if, for every Clifford unitary
$C \in \mC_{N,d}$, there exist
\begin{itemize}
    \item symplectic matrices $\{ (S_C)_i \}_{i=1}^n \subset \mathrm{Sp}(2N,\mathbb{Z}_d)$, and
    \item phase-space translation vectors $\{ (\boldsymbol{a}_C)_i \}_{i=1}^n \subset \mathbb{V}_{N,d}$,
\end{itemize}
such that, for all $(O_1,\ldots,O_k) \in \mathrm{Herm}^k(\mH)$,
\begin{equation}
\label{eq:symplectic_componentwise_covariance}
F_{\boldsymbol{\chi}_1,\ldots,\boldsymbol{\chi}_n}
 \left(
C^\dagger O_1 C,\ldots,C^\dagger O_k C
\right)
=
F_{(S_C)_1 \boldsymbol{\chi}_1 + (\boldsymbol{a}_C)_1,\;\ldots,\;
(S_C)_n \boldsymbol{\chi}_n + (\boldsymbol{a}_C)_n}
 \left(
O_1,\ldots,O_k
\right).
\end{equation}
\end{definition}

An important structural question is whether this notion of covariance is preserved under standard transformations on phase-space-indexed families.
In particular, symplectic Fourier transforms play a central role in discrete phase-space analysis, relating complementary representations and often revealing invariant or extremal structures.

The following proposition shows that symplectically componentwise Clifford covariance is stable under such transformations in a precise sense.
While a symplectic Fourier transform may introduce index-dependent phase factors, these phases are purely character-valued and disappear upon taking appropriate absolute values, yielding strictly covariant families.

\begin{proposition}[Stability under symplectic Fourier transforms]
\label{prop:SymplecticFourierStability}
Let
\[
F_{\boldsymbol{\chi}_1,\ldots,\boldsymbol{\chi}_n} :
\mathrm{Herm}^k(\mH) \longrightarrow \mathbb{C},
\qquad
\boldsymbol{\chi}_i \in \mathbb{V}_{N,d},
\]
be a symplectically componentwise Clifford-covariant family of functions indexed by phase-space points, in the sense of Definition~\ref{def:SymplecticCliifordCovariance}.
Then, for any subset of indices
\[
\mathcal{I} \subseteq \{1,\ldots,n\},
\]
the family obtained by applying a symplectic Fourier transform with respect to each index
$\boldsymbol{\chi}_i$ for $i \in \mathcal{I}$ (denoted by
$\widetilde{F}_{\boldsymbol{\eta}_1,\ldots,\boldsymbol{\eta}_n}$)
is symplectically componentwise Clifford-covariant \emph{up to an index-dependent phase factor}.
In particular, the family of pointwise squared absolute values,
\[
\bigl\{\, \lvert \widetilde{F}_{\boldsymbol{\eta}_1,\ldots,\boldsymbol{\eta}_n} \rvert^{2} \,\bigr\}_{(\boldsymbol{\eta}_1,\ldots,\boldsymbol{\eta}_n)\in\mathbb{V}_{N,d}^{\,n}},
\]
is a symplectically componentwise Clifford-covariant family of functions indexed by phase-space points and depends smoothly on the original family $F$.
\end{proposition}

\begin{proof}[Proof of Proposition~\ref{prop:SymplecticFourierStability}]
Fix a subset of indices $\mathcal{I}\subseteq\{1,\ldots,n\}$. 
For notational convenience, write $\boldsymbol{\chi}=(\boldsymbol{\chi}_1,\ldots,\boldsymbol{\chi}_n)$ and let
$\boldsymbol{\eta}=(\boldsymbol{\eta}_1,\ldots,\boldsymbol{\eta}_n)$ denote the new index variables, where
$\boldsymbol{\eta}_i\in\mathbb{V}_{N,d}$ for $i\in\mathcal{I}$ and $\boldsymbol{\eta}_i=\boldsymbol{\chi}_i$ for $i\notin\mathcal{I}$.
Define the \emph{partial symplectic Fourier transform} of $F$ over the indices in $\mathcal{I}$ by
\begin{equation}
\label{eq:partial_symplectic_FT_def}
\widetilde{F}_{\boldsymbol{\eta}_1,\ldots,\boldsymbol{\eta}_n}(O_1,\ldots,O_k)
:=
\frac{1}{d^{N|\mathcal{I}|}}
\sum_{\{\boldsymbol{\chi}_i\in\mathbb{V}_{N,d}\,:\, i\in\mathcal{I}\}}
\omega^{-\sum_{i\in\mathcal{I}}\langle \boldsymbol{\eta}_i,\boldsymbol{\chi}_i\rangle}\,
F_{\boldsymbol{\chi}_1,\ldots,\boldsymbol{\chi}_n}(O_1,\ldots,O_k),
\end{equation}
where for $i\notin\mathcal{I}$ the variables $\boldsymbol{\chi}_i$ are held fixed and equal to $\boldsymbol{\eta}_i$.
We must show that $\widetilde{F}$ satisfies the covariance condition of
Definition~\ref{def:SymplecticCliifordCovariance}.

Let $C\in\mC_{N,d}$ be arbitrary. By the assumed symplectic componentwise Clifford covariance of $F$,
there exist symplectic matrices $(S_C)_i\in\mathrm{Sp}(2N,\mathbb{Z}_d)$ and translations $(\boldsymbol{a}_C)_i\in\mathbb{V}_{N,d}$
such that for all $(O_1,\ldots,O_k)\in\mathrm{Herm}^k(\mH)$,
\begin{equation}
\label{eq:F_covariance_used_in_proof}
F_{\boldsymbol{\chi}_1,\ldots,\boldsymbol{\chi}_n}(C^\dagger O_1 C,\ldots,C^\dagger O_k C)
=
F_{(S_C)_1\boldsymbol{\chi}_1+(\boldsymbol{a}_C)_1,\ldots,(S_C)_n\boldsymbol{\chi}_n+(\boldsymbol{a}_C)_n}(O_1,\ldots,O_k).
\end{equation}
Apply the definition \eqref{eq:partial_symplectic_FT_def} to the conjugated operators to obtain
\begin{equation}
\label{eq:start_transform_proof}
\begin{split}
\widetilde{F}_{\boldsymbol{\eta}_1,\ldots,\boldsymbol{\eta}_n} & (C^\dagger O_1 C,\ldots,C^\dagger O_k C)
=
\frac{1}{d^{N|\mathcal{I}|}}
\sum_{\{\boldsymbol{\chi}_i\,:\, i\in\mathcal{I}\}}
\omega^{-\sum_{i\in\mathcal{I}}\langle \boldsymbol{\eta}_i,\boldsymbol{\chi}_i\rangle}\,
F_{\boldsymbol{\chi}_1,\ldots,\boldsymbol{\chi}_n}(C^\dagger O_1 C,\ldots,C^\dagger O_k C) \\
&=
\frac{1}{d^{N|\mathcal{I}|}}
\sum_{\{\boldsymbol{\chi}_i\,:\, i\in\mathcal{I}\}}
\omega^{-\sum_{i\in\mathcal{I}}\langle \boldsymbol{\eta}_i,\boldsymbol{\chi}_i\rangle}\,
F_{(S_C)_1\boldsymbol{\chi}_1+(\boldsymbol{a}_C)_1,\ldots,(S_C)_n\boldsymbol{\chi}_n+(\boldsymbol{a}_C)_n}(O_1,\ldots,O_k),
\end{split}
\end{equation}
where in the second line we used \eqref{eq:F_covariance_used_in_proof}.

Now perform, for each $i\in\mathcal{I}$, the change of variables
\[
\boldsymbol{\chi}_i' := (S_C)_i\boldsymbol{\chi}_i + (\boldsymbol{a}_C)_i.
\]
Since $(S_C)_i\in\mathrm{Sp}(2N,\mathbb{Z}_d)\subseteq \mathrm{GL}(2N,\mathbb{Z}_d)$ is invertible over $\mathbb{Z}_d$,
this is a bijection of $\mathbb{V}_{N,d}$ and hence preserves the uniform sum over $\mathbb{V}_{N,d}$.
Equivalently,
\[
\boldsymbol{\chi}_i = (S_C)_i^{-1}\bigl(\boldsymbol{\chi}_i' - (\boldsymbol{a}_C)_i\bigr),
\qquad i\in\mathcal{I}.
\]
For indices $i\notin\mathcal{I}$, we introduce the same notation
\[
\boldsymbol{\chi}_i' := (S_C)_i\,\boldsymbol{\chi}_i + (\boldsymbol{a}_C)_i,
\]
which constitutes a relabeling rather than a change of summation variables, as no sum is taken over these indices.
Substituting into \eqref{eq:start_transform_proof} yields
\begin{equation}
\begin{split}
\widetilde{F}_{\boldsymbol{\eta}_1,\ldots,\boldsymbol{\eta}_n} & (C^\dagger O_1 C,\ldots,C^\dagger O_k C) \\
&=
\frac{1}{d^{N|\mathcal{I}|}}
\sum_{\{\boldsymbol{\chi}_i'\,:\, i\in\mathcal{I}\}}
\omega^{-\sum_{i\in\mathcal{I}}\left\langle \boldsymbol{\eta}_i,\ (S_C)_i^{-1}(\boldsymbol{\chi}_i'-(\boldsymbol{a}_C)_i)\right\rangle}\,
F_{\boldsymbol{\chi}_1',\ldots,\boldsymbol{\chi}_n'}(O_1,\ldots,O_k) \\
&=
\frac{1}{d^{N|\mathcal{I}|}}
\sum_{\{\boldsymbol{\chi}_i'\,:\, i\in\mathcal{I}\}}
\omega^{-\sum_{i\in\mathcal{I}}\left\langle (S_C)_i\boldsymbol{\eta}_i,\ \boldsymbol{\chi}_i'\right\rangle}\,
\omega^{\sum_{i\in\mathcal{I}}\left\langle (S_C)_i\boldsymbol{\eta}_i,\ (\boldsymbol{a}_C)_i\right\rangle}\,
F_{\boldsymbol{\chi}_1',\ldots,\boldsymbol{\chi}_n'}(O_1,\ldots,O_k).
\end{split}
\end{equation}
In the second line we used the bilinearity of the pairing $\langle\cdot,\cdot\rangle$ and the invariance of the symplectic inner product under symplectic transformations, namely
\(
\langle \boldsymbol{\eta}, S^{-1}\boldsymbol{x}\rangle
=
\langle S\,\boldsymbol{\eta}, \boldsymbol{x}\rangle,
\)
which holds for any symplectic matrix $S\in\mathrm{Sp}(2N,\mathbb{Z}_d)$.

The phase factor
$\omega^{\sum_{i\in\mathcal{I}}\langle (S_C)_i\boldsymbol{\eta}_i,\ (\boldsymbol{a}_C)_i\rangle}$
is independent of the summation variables $\boldsymbol{\chi}_i'$ and can be pulled out. Hence
\begin{equation}
\begin{split}
\widetilde{F}_{\boldsymbol{\eta}_1,\ldots,\boldsymbol{\eta}_n} & (C^\dagger O_1 C,\ldots,C^\dagger O_k C)
\\ & =
\omega^{\sum_{i\in\mathcal{I}}\left\langle (S_C)_i\boldsymbol{\eta}_i,\ (\boldsymbol{a}_C)_i\right\rangle}\,
\frac{1}{d^{N|\mathcal{I}|}}
\sum_{\{\boldsymbol{\chi}_i'\,:\, i\in\mathcal{I}\}}
\omega^{-\sum_{i\in\mathcal{I}}\left\langle (S_C)_i\boldsymbol{\eta}_i,\ \boldsymbol{\chi}_i'\right\rangle}\,
F_{\boldsymbol{\chi}_1',\ldots,\boldsymbol{\chi}_n'}(O_1,\ldots,O_k).
\end{split}
\end{equation}
Recognizing the remaining sum as precisely the partial symplectic Fourier transform
\eqref{eq:partial_symplectic_FT_def}, we conclude that
\begin{equation}
\label{eq:covariance_for_Ftilde_general}
\widetilde{F}_{\boldsymbol{\eta}_1,\ldots,\boldsymbol{\eta}_n}(C^\dagger O_1 C,\ldots,C^\dagger O_k C)
=
\omega^{\sum_{i\in\mathcal{I}}\left\langle (S_C)_i\boldsymbol{\eta}_i,\ (\boldsymbol{a}_C)_i\right\rangle}\,
\widetilde{F}_{\boldsymbol{\eta}_1',\ldots,\boldsymbol{\eta}_n'}(O_1,\ldots,O_k),
\end{equation}
where
\[
\boldsymbol{\eta}_i'=
\begin{cases}
(S_C)_i\boldsymbol{\eta}_i, & i\in\mathcal{I},\\
(S_C)_i\boldsymbol{\eta}_i+(\boldsymbol{a}_C)_i, & i\notin\mathcal{I}.
\end{cases}
\]

Therefore, although the transformed family $\widetilde{F}_{\boldsymbol{\eta}_1,\ldots,\boldsymbol{\eta}_n}$ acquires an index-dependent phase character under Clifford conjugation, as displayed in \eqref{eq:covariance_for_Ftilde_general}, this character is entirely eliminated upon taking pointwise absolute values.
Consequently, the family
\[
\bigl\{\, \lvert \widetilde{F}_{\boldsymbol{\eta}_1,\ldots,\boldsymbol{\eta}_n} \rvert \,\bigr\}_{(\boldsymbol{\eta}_1,\ldots,\boldsymbol{\eta}_n)\in\mathbb{V}_{N,d}^{\,n}}
\]
is manifestly symplectically componentwise Clifford-covariant in the sense of Definition~\ref{def:SymplecticCliifordCovariance}.

Moreover, since the partial symplectic Fourier transform consists of finite linear combinations with fixed coefficients, and the map $z \mapsto |z|^2$ is a smooth operation on $\mathbb{C}$, the squared moduli
\[
\bigl\{\, \lvert \widetilde{F}_{\boldsymbol{\eta}_1,\ldots,\boldsymbol{\eta}_n} \rvert^{2} \,\bigr\}_{(\boldsymbol{\eta}_1,\ldots,\boldsymbol{\eta}_n)\in\mathbb{V}_{N,d}^{\,n}}
\]
depend smoothly on the original family
\[
\bigl\{\, F_{\boldsymbol{\chi}_1,\ldots,\boldsymbol{\chi}_n} \,\bigr\}_{(\boldsymbol{\chi}_1,\ldots,\boldsymbol{\chi}_n)\in\mathbb{V}_{N,d}^{\,n}} .
\]
This completes the proof.
\end{proof}

The proposition thus establishes that symplectically componentwise Clifford-covariant families are closed, up to harmless phase characters, under symplectic Fourier transforms.
In particular, the squared-modulus families obtained in this way define genuine Clifford-covariant phase-space objects, free of residual gauge phases.

This observation will be especially useful in later constructions, where Fourier-dual representations naturally arise and Clifford-invariant quantities must be extracted.
In such settings, the squared absolute values provide a canonical and smoothly related class of phase-space invariants derived from the original family.

\subsubsection{Weyl-Heisenberg Correlation Functions}
\label{sec:Weyl-Heisenberg Correlation Functions}

A simple and highly instructive example of a componentwise Clifford-covariant family is provided by the \emph{Weyl-Heisenberg $n$-point correlation functions}.  
For phase-space points
$
    \boldsymbol{\chi}_1,\ldots,\boldsymbol{\chi}_n \in \mathbb{V}_{N,d}
$
and operators
$
    O_1,\ldots,O_n \in \mathrm{Herm}(\mH_{N,d}),
$
they are defined by
\begin{equation}
    C^{\mathrm{WH}}_{\boldsymbol{\chi}_1,\ldots,\boldsymbol{\chi}_n}(O_1,\ldots,O_n)
    \coloneqq
    \prod_{i=1}^n W_{O_i}(\boldsymbol{\chi}_i),
\end{equation}
where $W_{O_i}$ denotes the discrete Wigner function of~$O_i$.

The family $ C^{\mathrm{WH}} $ satisfies the uniform componentwise Clifford-covariance rule:
\begin{equation}
    C^{\mathrm{WH}}_{\boldsymbol{\chi}_1,\ldots,\boldsymbol{\chi}_n}
    (C^\dagger O_1 C,\ldots,C^\dagger O_n C)
    =
    C^{\mathrm{WH}}_{S_C \boldsymbol{\chi}_1 + \boldsymbol{a}_C,
                    \ldots,
                    S_C \boldsymbol{\chi}_n + \boldsymbol{a}_C}
    (O_1,\ldots,O_n),
\end{equation}
for every Clifford operation $C \in \mC_{N,d}$ with associated
symplectic matrix $ S_C \in \mathrm{Sp}(2N,\mathbb{Z}_d) $
and phase-space translation vector
$ \boldsymbol{a}_C \in \mathbb{V}_{N,d} $.
The covariance property is preserved under multiplication by any function depending solely on the symplectic inner products
$
    \langle \boldsymbol{\chi}_i,\boldsymbol{\chi}_j\rangle
$,
that is, any functional of the form
\begin{equation}
    \tilde{C}^{\mathrm{WH}}_{\boldsymbol{\chi}_1,\ldots,\boldsymbol{\chi}_n}(O_1,\ldots,O_n)
    =
    f\left(\{\langle \boldsymbol{\chi}_i,\boldsymbol{\chi}_j\rangle\}_{i<j}\right)
    \prod_{i=1}^n W_{O_i}(\boldsymbol{\chi}_i),
\end{equation}
with $ f:\mathbb{R}^{n(n-1)/2}\to\mathbb{R} $ arbitrary, is again componentwise Clifford covariant.

Let
$
    F^{(k,n)}_{\boldsymbol{\chi}_1,\ldots,\boldsymbol{\chi}_k}(O_1,\ldots,O_n)
$
be a family that is componentwise Clifford covariant under affine invertible transformations of each phase-space entry, i.e.~for every Clifford operation $ C\in\mC_{N,d} $ there exist
invertible matrices
$
    \{S_{C,i}\}_{i=1}^k\subset \mathrm{Sp}(2N,\mathbb{Z}_d)
$
and displacement vectors
$
    \{\boldsymbol{b}_{C,i}\}_{i=1}^k \subset \mathbb{V}_{N,d}
$
such that
\begin{equation}
\label{eq:Uniform_Clifford_covariance_Fkn_improved}
    F^{(k,n)}_{\boldsymbol{\chi}_1,\ldots,\boldsymbol{\chi}_k}
    (C^\dagger O_1 C,\ldots,C^\dagger O_n C)
    =
    F^{(k,n)}_{S_{C,1}\boldsymbol{\chi}_1+\boldsymbol{b}_{C,1},\ldots,
               S_{C,k}\boldsymbol{\chi}_k+\boldsymbol{b}_{C,k}}
    (O_1,\ldots,O_n).
\end{equation}
From such a family one may construct lower-rank componentwise Clifford-covariant functionals by summation over one phase-space variable.  
Let $l = (l_{i,j})_{i,j=1}^k$ be an invertible matrix over $\mathbb{Z}_d$
and let $\{\boldsymbol{c}_i\}_{i=1}^k\subset\mathbb{V}_{N,d}$ be fixed displacements.
Define
\begin{equation}
\label{eq:Fk-1_definition_improved}
    \tilde{F}^{(k-1,n)}_{\boldsymbol{\chi}_1,\ldots,\boldsymbol{\chi}_{k-1}}
    (O_1,\ldots,O_n)
    \coloneqq
    \sum_{\boldsymbol{\chi}_k \in \mathbb{V}_{N,d}}
    F^{(k,n)}_{l_{1,j}\boldsymbol{\chi}_j+\boldsymbol{c}_1,
               \ldots,
               l_{k,j}\boldsymbol{\chi}_j+\boldsymbol{c}_k}
    (O_1,\ldots,O_n),
\end{equation}
where repeated indices are summed (Einstein convention).

This construction is a special instance of Proposition~\ref{prop:InducedCovariance}.
Nevertheless, one can verify the covariance property directly:
\begin{equation}
\begin{split}
\tilde{F}^{(k-1,n)}_{\boldsymbol{\chi}_1,\ldots,\boldsymbol{\chi}_{k-1}}
    &(C^\dagger O_1 C,\ldots,C^\dagger O_n C)
\\
    &= \sum_{\boldsymbol{\chi}_k \in {\mathbb{V}_{N,d}}}
       F^{(k,n)}_{l_{1,j}\boldsymbol{\chi}_j+\boldsymbol{c}_1, 
                  \ldots, 
                  l_{k,j}\boldsymbol{\chi}_j+\boldsymbol{c}_k}
       (C^\dagger O_1 C,\ldots,C^\dagger O_n C)
\\
    &= \sum_{\boldsymbol{\chi}_k \in {\mathbb{V}_{N,d}}}
       F^{(k,n)}_{S_{C,1}(l_{1,j}\boldsymbol{\chi}_j+\boldsymbol{c}_1)+\boldsymbol{b}_{C,1}, 
                  \ldots, 
                  S_{C,k}(l_{k,j}\boldsymbol{\chi}_j+\boldsymbol{c}_k)+\boldsymbol{b}_{C,k}}
       (O_1,\ldots,O_n)
\\
    &= \sum_{\boldsymbol{\chi}_k \in {\mathbb{V}_{N,d}}}
       F^{(k,n)}_{l_{1,j}(S_{C,1}\boldsymbol{\chi}_j)
                  + S_{C,1}\boldsymbol{c}_1 + \boldsymbol{b}_{C,1}, 
                  \ldots, 
                  l_{k,j}(S_{C,k}\boldsymbol{\chi}_j)
                  + S_{C,k}\boldsymbol{c}_k + \boldsymbol{b}_{C,k}}
       (O_1,\ldots,O_n)
\\
    &= \sum_{\boldsymbol{\chi}_k \in {\mathbb{V}_{N,d}}}
       F^{(k,n)}_{l_{1,j}(S_{C,1}\boldsymbol{\chi}_j+\boldsymbol{d}_j)+\boldsymbol{c}_1, 
                  \ldots, 
                  l_{k,j}(S_{C,k}\boldsymbol{\chi}_j+\boldsymbol{d}_j)+\boldsymbol{c}_k}
       (O_1,\ldots,O_n)
\\
    &= \tilde{F}^{(k-1,n)}_{S_{C,1}\boldsymbol{\chi}_1+\boldsymbol{d}_1, 
                    \ldots, 
                    S_{C,k-1}\boldsymbol{\chi}_{k-1}+\boldsymbol{d}_{k-1}}
       (O_1,\ldots,O_n),
\end{split}
\end{equation}
where the displacement vectors $\boldsymbol{d}_1,\ldots,\boldsymbol{d}_k \in \mathbb{V}_{N,d}$
are determined uniquely by the linear system
\begin{equation}
\label{eq:d_constraint_improved}
    \sum_{j=1}^k l_{i,j}\boldsymbol{d}_j + \boldsymbol{c}_i
    =
    S_{C,i}\boldsymbol{c}_i + \boldsymbol{b}_{C,i},
    \qquad i=1,\ldots,k,
\end{equation}
whose explicit solution is
\begin{equation}
\label{eq:d_solution_improved}
    \boldsymbol{d}_i
    =
    \sum_{j=1}^k (l^{-1})_{i,j}
    \bigl(
        S_{C,j}\boldsymbol{c}_j - \boldsymbol{c}_j + \boldsymbol{b}_{C,j}
    \bigr).
\end{equation}

Thus, the family
$
    \tilde{F}^{(k-1,n)}
$
inherits affine componentwise Clifford covariance from
$
    F^{(k,n)}
$.
The construction may be applied recursively, producing an entire hierarchy of
lower-order affine componentwise Clifford-covariant families.
In particular, Weyl-Heisenberg correlation functions constitute a natural
initial building block from which large classes of such covariant functionals can be systematically generated.

Finally, it is natural to consider the behavior of Weyl-Heisenberg correlation functions under symplectic Fourier transforms on one or more of their phase-space arguments.
Since the discrete Wigner function itself is defined via a symplectic Fourier transform of Weyl-Heisenberg displacement operators, applying additional symplectic Fourier transforms to the phase-space labels $\boldsymbol{\chi}_1,\ldots,\boldsymbol{\chi}_n$ produces dual families that mix complementary phase-space representations.
By Proposition~\ref{prop:SymplecticFourierStability}, such transformations preserve symplectic componentwise Clifford covariance up to index-dependent phase factors, and therefore give rise to symplecticallt Clifford-covariant families upon taking pointwise absolute values or squared moduli.
In this sense, symplectic Fourier transforms provide a systematic mechanism for generating new Clifford-covariant phase-space functionals from Weyl-Heisenberg correlation functions, enlarging the class of admissible covariant observables while retaining full control over their transformation properties.

\subsubsection{Magic Measures as Representative Examples}

We now present four representative examples of (affine, componentwise) Clifford-covariant families of functions.  
Each of these gives rise to a well-known magic measure, and each falls within the scope of Theorem~\ref{theo:MainTheorem}.  
The examples are organized in increasing structural complexity: starting from one-point correlations (mana), progressing to two-point kernels (stabilizer Rényi entropies), then to stabilizer-overlap functionals (stabilizer fidelity), and finally to generalized stabilizer entropies.

\paragraph{(i) Mana of Magic.}
The simplest Clifford-covariant family is given by the one-point Weyl-Heisenberg correlation function, namely the Wigner function:
\begin{equation}
    W_{\boldsymbol{\chi}}(O)
    \coloneqq
    \frac{1}{d^N}
    \Tr\bigl[ O A_{\boldsymbol{\chi}} \bigr],
\end{equation}
which obeys the fundamental covariance rule
\begin{equation}
\label{eq:Wigner_covariance}
    W_{\boldsymbol{\chi}}(C^{\dagger} O C)
    =
    W_{S_C\boldsymbol{\chi} + \boldsymbol{a}_C}(O),
\end{equation}
for every Clifford unitary $ C \in \mC_{N,d} $, with associated symplectic action
$S_C \in \mathrm{Sp}(2N,\mathbb{Z}_d)$ and displacement vector
$\boldsymbol{a}_C \in \mathbb{V}_{N,d}$.
From this one-point function one defines the Wigner trace norm
\begin{equation}
    \| O \|_{W}
    \coloneqq
    \sum_{\boldsymbol{\chi}\in\mathbb{V}_{N,d}}
    \bigl| W_{\boldsymbol{\chi}}(O) \bigr|.
\end{equation}
For a quantum state $\rho$, the mana of magic is then
\begin{equation}
    \mM(\rho)
    \coloneqq
    \log \| \rho \|_{W}.
\end{equation}

\paragraph{(ii) Weyl-Heisenberg Kernel and Stabilizer Rényi Entropies.}
A natural two-point generalization of the Wigner function is the
\emph{Weyl-Heisenberg kernel}:
\begin{equation}
    K_{\boldsymbol{\chi}}(O_1,O_2)
    \coloneqq
    \frac{1}{d^N}
    \Tr\bigl[
        O_1 T_{\boldsymbol{\chi}} O_2 T_{\boldsymbol{\chi}}^{\dagger}
    \bigr]
    =
    \sum_{\boldsymbol{\chi}'\in\mathbb{V}_{N,d}}
        W_{O_1}(\boldsymbol{\chi}')
        W_{O_2}(\boldsymbol{\chi}' - \boldsymbol{\chi}),
\end{equation}
which is a special case of Eq.~\eqref{eq:Fk-1_definition_improved} and of Proposition~\ref{prop:InducedCovariance}.  
It satisfies the affine Clifford-covariance law
\begin{equation}
\label{eq:WH_kernel_covariance}
    K_{\boldsymbol{\chi}}(C^\dagger O_1 C, C^\dagger O_2 C)
    =
    K_{S_C\boldsymbol{\chi}}(O_1,O_2).
\end{equation}
From this kernel one defines the $\alpha$-SRE:
\begin{equation}
    M_{\alpha}(O)
    =
    \frac{1}{1-\alpha}
    \log\left[
        \frac{1}{d^{(1-\alpha)N}}
        \sum_{\boldsymbol{\chi}}
        \bigl| P_{\boldsymbol{\chi}}(O) \bigr|^{\alpha}
    \right],
\end{equation}
where
\begin{equation}
    P_{\boldsymbol{\chi}}(O)
    \coloneqq
    \frac{K_{\boldsymbol{\chi}}(O,O)}{\bigl( \Tr O \bigr)^2}
\end{equation}
defines a normalized probability distribution on~$\mathbb{V}_{N,d}$ whenever $O\ge 0$ and $O\not\equiv 0$.

For later comparison, it is convenient to identify the index set $\mV$ from Theorem~\ref{theo:MainTheorem} either with the discrete phase space $ \mathbb{V}_{N,d} $ or with the generalized Pauli group, the Weyl-Heisenberg group $ \mP_{N,d} $.  
Under this identification, $ M_\alpha(O) $ may be rewritten as
\begin{equation}
\begin{split}
    M_\alpha(O)
    &=
    \frac{1}{1-\alpha}
    \log\left[
        \frac{1}{2 d^{N+1}}
        \sum_{g\in\mP_{N,d}}
        \left|
            \frac{
                \Tr(O g O g^\dagger)
            }{
                (\Tr O)^2
            }
        \right|^{\alpha}
    \right],
\end{split}
\end{equation}
which makes the qubit case $d=2$ manifest.

\noindent
\emph{Remark.}
Recall that the phase-point operators
$\{ A_{\boldsymbol{\chi}} \}_{\boldsymbol{\chi}\in\mathbb{V}_{N,d}}$
and the Weyl-Heisenberg displacement operators
$\{ T_{\boldsymbol{\chi}} \}_{\boldsymbol{\chi}\in\mathbb{V}_{N,d}}$
are related by the symplectic Fourier transform and its inverse.
In view of Proposition~\ref{prop:SymplecticFourierStability}, this relation immediately implies that Clifford covariance of the Wigner-function family
$\{ W_{\boldsymbol{\chi}} \}_{\boldsymbol{\chi}\in\mathbb{V}_{N,d}}$
is inherited by the associated classical stabilizer distribution on phase space,
$\{ P_{\boldsymbol{\chi}} \}_{\boldsymbol{\chi}\in\mathbb{V}_{N,d}}$,
without the need for a separate proof.

\paragraph{(iii) Stabilizer Fidelity.}
Let us now take the index set $\mV$ in Theorem~\ref{theo:MainTheorem} to be the set of stabilizer states $\mSS_{N,d}$.  
The stabilizer fidelity of an operator $O$ is
\begin{equation}
    F(O)
    \coloneqq
    \max_{\ket{s}\in\mSS_{N,d}}
    \bigl| \bra{s} O \ket{s} \bigr|.
\end{equation}

A structural and another way to express stabilizer states is via maximal isotropic subspaces of
$\mathbb{V}_{N,d}$.  
Let $\mathrm{M}$ denote the set of all such subspaces.  
For every
\[
    \mathcal{M}
    =
    \{\boldsymbol{\chi}_1,\ldots,\boldsymbol{\chi}_{d^N}\}
    \in \mathrm{M},
\qquad
\braket{\boldsymbol{\chi}_i,\boldsymbol{\chi}_j}=0,
\]
and every displacement $\boldsymbol{\chi}\in\mathbb{V}_{N,d}$, there exists a unique (up to phase) stabilizer state
$\ket{\mathcal{M};\boldsymbol{\chi}}$
satisfying the eigenvalue equations
\begin{equation}
\label{eq:stabilizer_eigenvalue_equation}
    \omega^{\braket{\boldsymbol{\chi},\boldsymbol{\chi}_i}}
      T_{\boldsymbol{\chi}_i}
      \ket{\mathcal{M};\boldsymbol{\chi}}
    =
    \ket{\mathcal{M};\boldsymbol{\chi}},
    \qquad i=1,\ldots,d^N.
\end{equation}
Its Wigner function is uniformly supported on the affine subspace
$\mathcal{M}+\boldsymbol{\chi}$:
\begin{equation}
\label{eq:stabilizer_Wigner}
    W_{\boldsymbol{\chi}'}\left(
        \ket{\mathcal{M};\boldsymbol{\chi}}
        \bra{\mathcal{M};\boldsymbol{\chi}}
    \right)
    =
    \frac{1}{d^N}
    \sum_{i=1}^{d^N}
        \delta_{\boldsymbol{\chi}',\boldsymbol{\chi}_i+\boldsymbol{\chi}}.
\end{equation}

From this we define the \emph{stabilizer-overlap functional}:
\begin{equation}
\label{eq:stabilizer_indicator_fidelity}
    \mathcal{F}_{\boldsymbol{\chi}_1,\ldots,\boldsymbol{\chi}_{d^N};\boldsymbol{\chi}}(O)
    \coloneqq
    \frac{1}{d^N}
    \left( \prod_{i\ne j} \delta_{\braket{\boldsymbol{\chi}_i,\boldsymbol{\chi}_j}} \right)
    \sum_{i=1}^{d^N}
        W_{\boldsymbol{\chi}_i+\boldsymbol{\chi}}(O),
\end{equation}
which vanishes unless the set $\mathcal{M}$ is a maximal isotropic subspace.  
Otherwise,
\[
    \mathcal{F}_{\mathcal{M};\boldsymbol{\chi}}(O)
    =
    \bra{\mathcal{M};\boldsymbol{\chi}} O \ket{\mathcal{M};\boldsymbol{\chi}}.
\]
This functional is uniformly Clifford-covariant:
\begin{equation}
    \mathcal{F}_{\boldsymbol{\chi}_1,\ldots,\boldsymbol{\chi}_{d^N};\boldsymbol{\chi}}
    (C^\dagger O C)
    =
    \mathcal{F}_{S_C\boldsymbol{\chi}_1,\ldots,S_C\boldsymbol{\chi}_{d^N};S_C\boldsymbol{\chi}+\boldsymbol{a}_C}(O).
\end{equation}
Taking the limit $\alpha\to\infty$ of the corresponding $\alpha$-Rényi functional yields
\begin{equation}
\label{eq:FidelityReps}
\begin{split}
    F(O)
    &=
    \max_{\mathcal{M}\in\mathrm{M}}
    \max_{\boldsymbol{\chi}\in\mathbb{V}_{N,d}}
    \bigl|
        \mathcal{F}_{\mathcal{M};\boldsymbol{\chi}}(O)
    \bigr|
    =
    \max_{\ket{s}\in\mSS_{N,d}}
    \bigl| \bra{s} O \ket{s} \bigr|.
\end{split}
\end{equation}
\noindent
Both representations in Eq.~\eqref{eq:FidelityReps} clearly demonstrate that the stabilizer fidelity falls within the class of quantities addressed by our main theorem.

\paragraph{(iv) GSEs.}

By construction, the GSEs $M_W(\rho)$ defined in Eqs.~\eqref{eq:IntComm}, \eqref{eq:GSP}, and \eqref{eq:GSE} are smooth, real-analytic functions on the space of Hermitian operators
$\mathrm{Herm}(\mH)$.
Moreover, they are invariant under conjugation by Clifford unitaries, in the sense that
\[
M_W\!\left(C\,\rho\,C^\dagger\right) = M_W(\rho)
\qquad \forall\, C \in \mC_{N,d}.
\]
Note that in this example the smooth Clifford-covariant family underlying the construction of $M_W$ is trivial, consisting of a single element- namely, $M_W$ itself.

Building on the smooth Clifford-covariant families exhibited in these four examples, and on the manner in which the corresponding magic measures are constructed from them, we obtain the following corollary as a direct consequence of Theorem~\ref{theo:MainTheorem}.
\paragraph{Corollary of Theorem~\ref{theo:MainTheorem}.}
\begin{corollary}
\label{cor:CliffordExtremality}
Let $ \mQ \subset \mC_{N,d} $ be a finite subgroup, and denote its stabilized subspace by $ \mS_{\mQ} $, with orthogonal complement $ \mS_{\mQ}^{\perp} $.  
Fix a normalized vector $ \ket{\psi} \in \mS_{\mQ} $, and write $ \psi \coloneqq \ket{\psi}\bra{\psi} $.  
Consider the submanifold of rank-one projectors whose support lies in
\[
    \Span\{\ket{\psi}\}
    \oplus
    \mS_{\mQ}^{\perp}
    \subset
    \mH,
\]
namely
\[
    \varrho_{\psi;\mQ}
    \coloneqq
    \Bigl\{
        \rho\in\varrho(\mH)
        \; \big| \;
        \operatorname{supp}(\rho)
        \subset
        \Span\{\ket{\psi}\}
        \oplus
        \mS_{\mQ}^{\perp}
    \Bigr\},
\]
where $\varrho(\mH)$ denotes the manifold of pure-state density operators.

Then the operator $\psi$ is an extremal point of each of the following magic measures, when restricted to the manifold $\varrho_{\psi;\mQ}$:
\begin{itemize}
    \item the mana of magic,
    \item the stabilizer $\alpha$-Rényi entropies (and the $L^p$-norms of characteristic functions),
    \item the stabilizer fidelity,
    \item the generalized stabilizer entropies.
\end{itemize}
\end{corollary}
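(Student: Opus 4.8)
The plan is to realize Corollary~\ref{cor:CliffordExtremality} as a direct instance of Theorem~\ref{theo:MainTheorem} with the choice $G = \mQ$. The only structural input I need is that each of the three magic measures is built, via one of the functional constructions (i)--(iii) of the theorem, from a family of functions on $\mathrm{Herm}(\mH_{N,d})$ that is $\mQ$-covariant. This covariance comes essentially for free: the relevant families -- the Wigner functions, the Weyl--Heisenberg kernel, and the stabilizer-overlap functionals -- are \emph{uniformly Clifford-covariant} by Eqs.~\eqref{eq:Wigner_covariance}, \eqref{eq:WH_kernel_covariance}, and the covariance law for $\mathcal{F}_{\mathcal{M};\boldsymbol{\chi}}$. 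Since $\mQ \subset \mC_{N,d}$, restricting each covariance rule $C \mapsto \pi_C$ to the elements of $\mQ$ yields a genuine $\mQ$-covariance rule on the corresponding index set, and because $\ket{\psi} \in \mS_{\mQ}$ the projector $\psi$ is fixed by every element of $G = \mQ$, exactly as the theorem requires.

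Next I would match each measure to its functional class. For the \textbf{mana}, the family is $\{W_{\boldsymbol{\chi}}\}_{\boldsymbol{\chi}\in\mathbb{V}_{N,d}}$, whose members are linear (hence $C^{\infty}$) in $O$; the Wigner trace norm $\|O\|_W = \sum_{\boldsymbol{\chi}}|W_{\boldsymbol{\chi}}(O)|$ is precisely the semi-Rényi functional $\mR_{1}$ of part~(iii) with $\alpha = 1$, so $\psi$ is extremal for $\|\cdot\|_W$ on $\varrho_{\psi;\mQ}$. For the \textbf{stabilizer $\alpha$-Rényi entropy} (with $\alpha > 0$, $\alpha \neq 1$), the family is $\{P_{\boldsymbol{\chi}}\}$ with $P_{\boldsymbol{\chi}}(O) = K_{\boldsymbol{\chi}}(O,O)/(\Tr O)^{2}$; this is $\mQ$-covariant under the translation-free rule $\boldsymbol{\chi}\mapsto S_C\boldsymbol{\chi}$ and smooth on the pure-state slice where $\Tr O = 1$, so part~(iii) gives that $\psi$ is extremal for $\mR_{\alpha}(O) = \sum_{\boldsymbol{\chi}}|P_{\boldsymbol{\chi}}(O)|^{\alpha}$. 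For the \textbf{stabilizer fidelity}, the family is the stabilizer-overlap functional $\{\mathcal{F}_{\mathcal{M};\boldsymbol{\chi}}\}$; on $\varrho_{\psi;\mQ}$ every operator is a genuine density matrix, so $\mathcal{F}_{\mathcal{M};\boldsymbol{\chi}}(\rho) = \bra{s}\rho\ket{s} \ge 0$ and $F(\rho) = \max_{\mathcal{M},\boldsymbol{\chi}}\mathcal{F}_{\mathcal{M};\boldsymbol{\chi}}(\rho)$ is exactly the maximized envelope $\mF$ of part~(ii) -- the absolute value being vacuous by positivity, cf.\ the Remark following Theorem~\ref{theo:MainTheorem} -- whence $\psi$ is extremal for $F$.

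Finally I would pass from these bare functionals to the actual measures by composing with an outer scalar function: $\mM = \log\|\cdot\|_W$ and $M_{\alpha} = \tfrac{1}{1-\alpha}\log[d^{(\alpha-1)N}\,\mR_{\alpha}]$. Each outer map $t \mapsto g(t)$ is strictly monotonic and $C^{1}$ near the relevant value (note $\|\psi\|_W \ge 1 > 0$ and $\mR_{\alpha}(\psi) > 0$, so the logarithms are well defined and smooth there). Under the notion of extremal point in Definition~\ref{def:extremal_point}, composition with such a $g$ preserves extremality: it sends critical points to critical points via $(g\circ f)'(0) = g'(f(0))\,f'(0) = 0$, and it sends local maxima, minima, or locally constant points to points of the same type (interchanging maximum and minimum when $g$ is decreasing, i.e.\ for $\alpha > 1$). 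This covers the smooth SRE case directly, and for the non-smooth mana and fidelity functionals it reduces to the one-sided-derivative and monotone-envelope reasoning already carried out in the proofs of parts~(ii) and~(iii).

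I expect the main difficulty to lie not in any single deep step but in the careful bookkeeping that threads each measure through the correct part of the theorem while respecting its subtleties: verifying smoothness of $P_{\boldsymbol{\chi}}$ only on the $\Tr O = 1$ slice, confirming that positivity of density operators collapses $\max|\mathcal{F}_{\mathcal{M};\boldsymbol{\chi}}|$ to $\max\mathcal{F}_{\mathcal{M};\boldsymbol{\chi}}$ so that part~(ii) applies cleanly, and checking that the outer logarithms are evaluated away from zero so that monotone composition genuinely transports the extremal structure. Once these points are secured, the conclusion that $\psi$ extremizes the mana, the stabilizer $\alpha$-Rényi entropies, and the stabilizer fidelity on $\varrho_{\psi;\mQ}$ follows immediately.
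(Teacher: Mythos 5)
Your proposal is correct and follows essentially the same route as the paper: the corollary is obtained by instantiating Theorem~\ref{theo:MainTheorem} with $G=\mQ$, using the Clifford covariance of the Wigner functions, the Weyl--Heisenberg kernel, and the stabilizer-overlap functionals (restricted to the subgroup $\mQ$) to feed the mana into the semi-R\'enyi construction, the SRE into the semi-R\'enyi construction on the trace-one slice, and the fidelity into the maximized envelope. Your explicit remarks on the outer monotone logarithms and on positivity collapsing $\max|\mathcal{F}_{\mathcal{M};\boldsymbol{\chi}}|$ to $\max\mathcal{F}_{\mathcal{M};\boldsymbol{\chi}}$ are details the paper leaves implicit, but they do not change the argument.
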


\subsection{Extremality Analysis for the Three Representative Measures}

In this subsection we specialize the general expansion of Theorem~\ref{theo:MainTheorem} to the first three magic measures appearing in Corollary~\ref{cor:CliffordExtremality}.  
For each measure we analyze the behavior of the perturbed state
$\ket{\psi(\epsilon)}$, defined previously in
Eq.~\eqref{eq:stateWithVariation},
\[
    \ket{\psi(\epsilon)}
    =\frac{\psi + \epsilon \sigma + \epsilon^{2}\varphi}{1+\epsilon^{2}} ,
\]
in a neighbourhood of the reference state~$\ket{\psi}$, and determine the nature of the extremum at~$\epsilon=0$.

\paragraph{(i) The mana of magic.}
For the Wigner trace norm one obtains the expansion
\begin{equation}
\label{eq: variation of Wigner norm}
\begin{split}
\| \psi(\epsilon) \|_W  =  \| \psi \|_W & + \frac{|\epsilon|}{1+\epsilon^2} \cdot \sum_{\substack{\boldsymbol{\chi}\in\mathbb{V}_{N,d} \\ W_{\boldsymbol{\chi}}(\psi)=0}} |W_{\boldsymbol{\chi}}(\sigma)| + 
\\ & +\frac{\epsilon^2}{1+\epsilon^2} \cdot  
\sum_{\boldsymbol{\chi}\in\mathbb{V}_{N,d}} \sign[W_{\boldsymbol{\chi}}(\psi)]\cdot W_{\boldsymbol{\chi}}(\mT_\mQ(\mu))
\\ & +\frac{\epsilon^2}{1+\epsilon^2} \cdot  \sum_{\substack{\boldsymbol{\chi}\in\mathbb{V}_{N,d} \\ W_{\boldsymbol{\chi}}(\psi)=0}} \sign[W_{\boldsymbol{\chi}}(\epsilon \sigma)] \cdot W_{\boldsymbol{\chi}}(\mu)
\\ & +\frac{\epsilon^2}{1+\epsilon^2} \cdot \sum_{\substack{\boldsymbol{\chi}\in\mathbb{V}_{N,d} \\ W_{\boldsymbol{\chi}}(\psi)=0 \\ W_{\boldsymbol{\chi}}(\sigma)=0}} |W_{\boldsymbol{\chi}}(\mu)|.
\end{split}
\end{equation}
Here the function $\mathrm{sign}(x)$ is defined to be zero at $x=0$.
Therefore,
\begin{enumerate}
    \item If $\exists \boldsymbol{\chi} \in \mathbb{V}_{N,d}$ such that $W_{\boldsymbol{\chi}}(\psi)=0\neq W_{\boldsymbol{\chi}}(\sigma)$, there will be a positive dominant leading order of $|\epsilon|$, and the point $\epsilon=0$ is a sharp minima of the path.
    \item Otherwise, we are left with
        \begin{equation}
        \begin{split}
            \| \psi(\epsilon) \|_W  = & \| \psi \|_W +\frac{\epsilon^2}{1+\epsilon^2} \sum_{\substack{\boldsymbol{\chi}\in\mathbb{V}_{N,d} \\ W_{\boldsymbol{\chi}}(\psi)=0 \\ W_{\boldsymbol{\chi}}(\sigma)=0}} |W_{\boldsymbol{\chi}}(\varphi)|
            \\ &  +\frac{\epsilon^2}{1+\epsilon^2} \cdot  
            \sum_{\boldsymbol{\chi}\in\mathbb{V}_{N,d}} \sign[W_{\boldsymbol{\chi}}(\psi)]\cdot W_{\boldsymbol{\chi}}(\mT_\mQ(\mu))   ,
        \end{split}
        \label{eq:varied Wigner trace norm when the linear term vanishes}
        \end{equation}
and, hence, at $\epsilon=0$ we get a smooth extremum.
\end{enumerate}

\paragraph{(ii) The stabilizer fidelity.}
Let
\[
    S_{\mathrm{nearest}}(\psi)
    =
    \bigl\{
        \ket{s}\in\mSS_G :
        |\braket{s|\psi}|^{2}
        = F(\psi)
    \bigr\}
\]
be the set of $G$-stabilizer states achieving the stabilizer fidelity of $\ket{\psi}$.  
For sufficiently small $|\epsilon|$ only these states can maximize the overlap with $\ket{\psi(\epsilon)}$.

As in the second note after the proof of Theorem~\ref{theo:MainTheorem}, averaging over this finite set gives
\begin{equation}
\label{eq:averageOfOverlaps}
\begin{split}
    \frac{1}{|S_{\text{nearest}}(|\psi\rangle)|}
    &\sum_{\ket{s}\in S_{\text{nearest}}(|\psi\rangle)}
    \left| \braket{s | \psi (\epsilon)} \right|^2
    \\
    & =
    F(\psi)
    +
    \frac{\epsilon^2}{1+\epsilon^2}\frac{1}{|S_{\text{nearest}}(|\psi\rangle)|}
    \sum_{\ket{s}\in S_{\text{nearest}}(|\psi\rangle)}\braket{s|\mT_G(\mu)|s}.
\end{split}
\end{equation}
Because the average of the linear term in~$\epsilon$ vanishes,
two possibilities arise:
either the linear term vanishes individually for every $\ket{s}\in S_{\text{nearest}}(|\psi\rangle)$,
or it takes positive coefficients for some $\ket{s}$ and negative ones for others.
In the latter case, the point $ \epsilon = 0 $ corresponds to a
\emph{sharp local minimum}.
Otherwise, if the linear term of $|\braket{s'|\psi(\epsilon)}|^2$ in $\epsilon$ vanishes for all $\ket{s}\in S_{\text{nearest}}(|\psi\rangle)$, one obtains for every $\ket{s}\in S_{\text{nearest}}(|\psi\rangle)$
\begin{equation}
    |\braket{s|\psi(\epsilon)}|^2
    =
    F(\psi)
    +
    \frac{\epsilon^2}{1+\epsilon^2}\braket{s|\mu|s}.
    \label{eq:variation in fidelity when the linear sharp term vanishes}
\end{equation}

Consequently, the nature of the extremum at $\epsilon=0$ is governed by the signs of the quantities $\braket{s|\mu|s}$:
\begin{enumerate}
    \item If $\braket{s|\mu|s} < 0$ for all $\ket{s}\in S_{\text{nearest}}(|\psi\rangle)$, then $\epsilon=0$ is a local \emph{smooth maximum}.
    \item If $\braket{s|\mu|s} \le 0$ for all $\ket{s}\in S_{\text{nearest}}(|\psi\rangle)$, with equality for at least one $\ket{s}$, then $\epsilon=0$ is a \emph{flat extremum}, i.e., the function is constant along the entire variation path.
    \item If there exists $\ket{s}\in S_{\text{nearest}}(|\psi\rangle)$ with $\braket{s|\mu|s} > 0$, then $\epsilon=0$ is a local \emph{smooth minimum}.
\end{enumerate}

\subsubsection*{A note for future calculations}

For every state $ \ket{s} \in S_{\mathrm{nearest}}(\ket{\psi}) $, the quantity appearing in the linear coefficient of the stabilizer-fidelity expansion satisfies
\begin{equation}
\label{eq:L matrix}
    \braket{s|\sigma(\psi,\varphi)|s}
    =
    2\operatorname{Re}\left( \braket{\varphi|s}\braket{s|\psi} \right)
    =:
    2\operatorname{Re}\ell_{\ket{\psi}}\left(\ket{s}; \ket{\varphi}\right).
\end{equation}
The map
\[
    \ell_{\ket{\psi}}\left(\ket{s}; \cdot \right)
    : \mH_{N,d} \longrightarrow \mathbb{C},
\qquad
    \ket{\varphi} \longmapsto \braket{\varphi|s}\braket{s|\psi},
\]
is linear in its second argument.  
Thus, to understand the full behavior of the linear term it is sufficient to evaluate
$\ell_{\ket{\psi}}(\ket{s};\ket{\varphi})$ on any basis of the orthogonal complement
$\mathrm{Span}^{\perp}\{\ket{\psi}\}$.  
This observation will be used in the numerical investigations of Section~\ref{sec:Examples of Non-degenerate Clifford Eigenstates of Qudits}.

\paragraph{(iii) The $2$-SRE.} Since $\Xi_\alpha(\ket{\psi})$ is a smooth function on the projective Hilbert manifold, its behavior in the vicinity of critical states can be systematically analyzed using known differential methods. 
For integer values of~$\alpha$, this analysis can be carried out in a guaranteed finite number of steps. 
In what follows, we focus on the case $\alpha=2$, which is of particular interest because it naturally generalizes to include mixed states.

Recalling that
\[
\psi(\epsilon)
=\frac{\psi + \epsilon \sigma
  + \epsilon^2\varphi}{1+\epsilon^2},
\]
we can express
\begin{equation}
    \tilde{P}_{\boldsymbol{\chi}}(\ket{\psi(\epsilon)}) \equiv \left(1+\epsilon^2\right)^2 P_{\boldsymbol{\chi}}(\ket{\psi(\epsilon)}) 
    =\left(1+\epsilon^2\right)^4 K_{\boldsymbol{\chi}}(\psi(\epsilon),\psi(\epsilon))= \sum_{i=0}^4 \epsilon^i \tilde{P}^{(i)}_{\boldsymbol{\chi}},
\end{equation}
where
\begin{subequations}
    \begin{gather}
    \tilde{P}^{(0)}_{\boldsymbol{\chi}}=K_{\boldsymbol{\chi}}(\psi,\psi) , \\ \tilde{P}^{(1)}_{\boldsymbol{\chi}}=K_{\boldsymbol{\chi}}(\psi,\sigma)+K_{\boldsymbol{\chi}}(\sigma,\psi) , \\ \tilde{P}^{(2)}_{\boldsymbol{\chi}}=K_{\boldsymbol{\chi}}(\psi,\varphi)+K_{\boldsymbol{\chi}}(\sigma,\sigma)+K_{\boldsymbol{\chi}}(\varphi,\psi) , \\ \tilde{P}^{(3)}_{\boldsymbol{\chi}}=K_{\boldsymbol{\chi}}(\varphi,\sigma)+K_{\boldsymbol{\chi}}(\sigma,\varphi) , \\ \tilde{P}^{(4)}_{\boldsymbol{\chi}}=K_{\boldsymbol{\chi}}(\varphi,\varphi) .
    \end{gather}
\end{subequations}
Consequently,
\begin{equation}
\begin{split}
    \tilde{\Xi}_2(\ket{\psi(\epsilon)})
    &\equiv \left(1+\epsilon^2\right)^4 \Xi_2(\ket{\psi(\epsilon)})= \left(1+\epsilon^2\right)^4 \sum_{\boldsymbol{\chi}} P_{\boldsymbol{\chi}}^2(\ket{\psi(\epsilon)})
     = \sum_{i=0}^8 \epsilon^i \tilde{\Xi}^{(i)}_{\boldsymbol{\chi}},
\end{split}
\end{equation}
where
\begin{subequations}
    \begin{align}
        \tilde{\Xi}^{(0)}_2 &= \sum_{\boldsymbol{\chi}} \left({\tilde{P}}^{(0)}_{\boldsymbol{\chi}}\right)^2, \\
        \tilde{\Xi}^{(1)}_2 &= 2 \sum_{\boldsymbol{\chi}} {{\tilde{P}}^{(0)}_{\boldsymbol{\chi}}} {{\tilde{P}}^{(1)}_{\boldsymbol{\chi}}}, \\
        \tilde{\Xi}^{(2)}_2 &= \sum_{\boldsymbol{\chi}}
        \left[
        2{{\tilde{P}}^{(0)}_{\boldsymbol{\chi}}} {{\tilde{P}}^{(2)}_{\boldsymbol{\chi}}}+
        \left({\tilde{P}}^{(1)}_{\boldsymbol{\chi}}\right)^2
        \right], \\
        \tilde{\Xi}^{(3)}_2 &= 2 \sum_{\boldsymbol{\chi}}
        \left[
        {{\tilde{P}}^{(0)}_{\boldsymbol{\chi}}}
        {{\tilde{P}}^{(3)}_{\boldsymbol{\chi}}}
        +{{\tilde{P}}^{(1)}_{\boldsymbol{\chi}}}
        {{\tilde{P}}^{(2)}_{\boldsymbol{\chi}}}
        \right], \\
        \tilde{\Xi}^{(4)}_2 &= \sum_{\boldsymbol{\chi}}
        \left[
        2{{\tilde{P}}^{(0)}_{\boldsymbol{\chi}}}
        {{\tilde{P}}^{(4)}_{\boldsymbol{\chi}}}
        +2{{\tilde{P}}^{(1)}_{\boldsymbol{\chi}}}
        {{\tilde{P}}^{(3)}_{\boldsymbol{\chi}}}
        +\left({\tilde{P}}^{(2)}_{\boldsymbol{\chi}}\right)^2
        \right], \\
        \tilde{\Xi}^{(5)}_2 &= 2 \sum_{\boldsymbol{\chi}}
        \left[
        {{\tilde{P}}^{(1)}_{\boldsymbol{\chi}}}
        {{\tilde{P}}^{(4)}_{\boldsymbol{\chi}}}
        +{{\tilde{P}}^{(2)}_{\boldsymbol{\chi}}}
        {{\tilde{P}}^{(3)}_{\boldsymbol{\chi}}}
        \right], \\
        \tilde{\Xi}^{(6)}_2 &= \sum_{\boldsymbol{\chi}}
        \left[
        2{{\tilde{P}}^{(2)}_{\boldsymbol{\chi}}} {{\tilde{P}}^{(4)}_{\boldsymbol{\chi}}}+
        \left({\tilde{P}}^{(3)}_{\boldsymbol{\chi}}\right)^2
        \right], \\
        \tilde{\Xi}^{(7)}_2 &= 2 \sum_{\boldsymbol{\chi}} {{\tilde{P}}^{(3)}_{\boldsymbol{\chi}}}
        {{\tilde{P}}^{(4)}_{\boldsymbol{\chi}}}, \\
        \tilde{\Xi}^{(8)}_2 &= \sum_{\boldsymbol{\chi}} \left({\tilde{P}}^{(4)}_{\boldsymbol{\chi}}\right)^2 .
    \end{align}
\end{subequations}

Theorem~\ref{theo:MainTheorem} guarantees that $\tilde{\Xi}_2^{(1)} = 0$.
Using the series
\[
\frac{1}{\left(1+\epsilon^2\right)^4} = \sum_{n=0}^\infty f_n\epsilon^{2n}=\sum_{n=0}^\infty \binom{n+3}{3}(-\epsilon^2)^n
\]
we can write
\begin{equation}
    \Xi_2(\ket{\psi(\epsilon)})
    =\frac{\tilde{\Xi}_2(\ket{\psi(\epsilon)})}{\left(1+\epsilon^2\right)^4} =  \sum_{n=0}^\infty \epsilon^n \Xi_2^{(n)},
\end{equation}
where
\begin{equation}
    \Xi_2^{(n)}=\sum_{i=0}^{\left\lfloor{n/2}\right\rfloor} f_i \tilde{\Xi}_2^{(n-2i)}.
\end{equation}
In particular, the first coefficients are
\begin{subequations}
\label{eq:Xi-explicit-up-to-8}
\begin{align}
\Xi_2^{(0)} &= \tilde{\Xi}_2^{(0)},\\
\Xi_2^{(1)} &= \tilde{\Xi}_2^{(1)}=0 \quad \text{(by Theorem~\ref{theo:MainTheorem})},\\
\Xi_2^{(2)} &= \tilde{\Xi}_2^{(2)} - 4 \tilde{\Xi}_2^{(0)},\\
\Xi_2^{(3)} &= \tilde{\Xi}_2^{(3)},\\
\Xi_2^{(4)} &= \tilde{\Xi}_2^{(4)} - 4 \tilde{\Xi}_2^{(2)} + 10 \tilde{\Xi}_2^{(0)},\\
\Xi_2^{(5)} &= \tilde{\Xi}_2^{(5)} - 4 \tilde{\Xi}_2^{(3)},\\
\Xi_2^{(6)} &= \tilde{\Xi}_2^{(6)} - 4 \tilde{\Xi}_2^{(4)} + 10 \tilde{\Xi}_2^{(2)} - 20 \tilde{\Xi}_2^{(0)},\\
\Xi_2^{(7)} &= \tilde{\Xi}_2^{(7)} - 4 \tilde{\Xi}_2^{(5)} + 10 \tilde{\Xi}_2^{(3)},\\
\Xi_2^{(8)} &= \tilde{\Xi}_2^{(8)} - 4 \tilde{\Xi}_2^{(6)} + 10 \tilde{\Xi}_2^{(4)} - 20 \tilde{\Xi}_2^{(2)} + 35 \tilde{\Xi}_2^{(0)}.
\end{align}
\end{subequations}

Note that if $\tilde{\Xi}_2^{(i)}=0$ for all $i=2,\dots,8$, this ``completely'' determines $\tilde{\Xi}_2(\ket{\psi(\epsilon)})$:
\begin{equation}
    \tilde{\Xi}_2^{(1)} = 0,
    \tilde{\Xi}_2^{(2)} = 4\tilde{\Xi}_2^{(0)},
    \tilde{\Xi}_2^{(3)} = 0,
    \tilde{\Xi}_2^{(4)} = 6\tilde{\Xi}_2^{(0)},
    \tilde{\Xi}_2^{(5)} = 0,
    \tilde{\Xi}_2^{(6)} = 4\tilde{\Xi}_2^{(0)},
    \tilde{\Xi}_2^{(7)} = 0,
    \tilde{\Xi}_2^{(8)} = \tilde{\Xi}_2^{(0)}.
\end{equation}
Consequently,
\begin{equation}
    \Xi_2(\ket{\psi(\epsilon)})=\Xi_2(\ket{\psi}),
\end{equation}
meaning that the function $\Xi_2$ remains constant along the considered path $\ket{\psi(\epsilon)}$.
Otherwise, let 
\[
    m = \min \left\{ i \in \{2,...,8\}  \; \middle| \;  \Xi_2^{(i)} \neq 0 \right\}.
\]
Then, the nature of the stationary point at $\ket{\psi}$ is determined by the parity and sign of $\Xi_2^{(m)}$:
\begin{enumerate}[label=(\roman*)]
    \item If $m$ is even:
    \begin{enumerate}[label=\arabic*.]
        \item If $\tilde{\Xi}_2^{(m)} > 0$, then $\ket{\psi}$ is a local \emph{minimum} along the path.
        \item If $\tilde{\Xi}_2^{(m)} < 0$, then $\ket{\psi}$ is a local \emph{maximum} along the path.
    \end{enumerate}
    \item If $m$ is odd, $\ket{\psi}$ is an \emph{inflection point} along the path.
\end{enumerate}

\noindent
\emph{Remark:} It is important to note that the critical behavior of the function at a given state is not necessarily preserved across different values of~$\alpha$. 
This stems from the fact that the ordering of Rényi entropies between two probability distributions is, in general, not invariant under changes of~$\alpha$. 
In other words, $M_{\alpha}(p) > M_{\alpha}(q)$ for some value of~$\alpha$ does not imply $M_{\alpha'}(p) > M_{\alpha'}(q)$ for all~$\alpha'$. 
Such a monotonic ordering across all $\alpha>0$ holds if and only if the distribution~$p$ \emph{majorizes}~$q$. As an illustrative example, consider varying the qutrit critical state $\ket{\mathbb{T}_0}$ along the path
\[
\ket{\varphi} = e^{i\phi_1}\cos\theta \ket{\mathbb{T}_1} 
               + e^{i\phi_2}\sin\theta \ket{\mathbb{T}_2},
\]
with $\theta=\frac{\pi}{4}$ and $\phi_2=\phi-\phi_1$. 
In this case, one finds
\[
\Xi_2^{(2)} = \frac{2(1-\cos\phi)}{9},
\qquad
\Xi_3^{(2)} = \frac{2(4-5\cos\phi)}{81},
\]
which can acquire opposite signs for the same value of~$\phi$. 
Hence, along this path the same state can correspond to a minimum for one Rényi measure and a maximum for another.

\section{Examples of Non-degenerate Clifford Eigenstates of Single Qudits}
\label{sec:Examples of Non-degenerate Clifford Eigenstates of Qudits}

As follows from the preceding definitions, a \emph{Clifford-stabilized space} is the stabilized subspace of some subgroup of the Clifford group acting on $\mH$. 
For example, the entire Hilbert space $\mH$ is trivially a Clifford-stabilized space, since it is stabilized by the trivial subgroup.
A \emph{Clifford-stabilizer state} is a normalized pure state whose density operator is the projector onto a Clifford-stabilized subspace. 
Equivalently, it is the unique normalized state (up to an overall global phase) that spans a one-dimensional Clifford-stabilized space. 

All Pauli-stabilizer states in $\mH$ are, in particular, Clifford-stabilizer states, as they are stabilized by subgroups of the Pauli group (which is itself a subgroup of the Clifford group).
Similarly, all non-degenerate eigenstates of Clifford unitaries on $\mH$ are Clifford-stabilizer states. 
The so-called ``\emph{Clifford magic states}'', as defined in Ref.~\cite{Bravyi2019simulationofquantum}, constitute a specific subclass of Clifford-stabilizer states. 
The following result is therefore immediate.
\begin{corollary}
    Each Clifford-stabilizer state extremizes the mana of magic, the stabilizer Rényi entropies and the stabilizer fidelity (and hence the min-relative entropy of magic) on pure states.
\end{corollary}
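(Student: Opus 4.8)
The plan is to deduce the statement directly from Corollary~\ref{cor:CliffordExtremality} together with the full-manifold observation recorded in the first note following Theorem~\ref{theo:MainTheorem}. The only content that must be supplied is the translation of the defining property of a Clifford-stabilizer state into the hypotheses of that corollary, plus a short remark handling the min-relative entropy.

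First I would unpack the definition: a Clifford-stabilizer state $\ket{\psi}$ is, by hypothesis, the unique (up to phase) generator of a one-dimensional Clifford-stabilized space. Hence there exists a finite subgroup $\mQ \subset \mC_{N,d}$ whose stabilized subspace satisfies $\mS_\mQ = \Span\{\ket{\psi}\}$, and in particular $\dim \mS_\mQ = 1$. With this $\mQ$ fixed, Corollary~\ref{cor:CliffordExtremality} applies verbatim: since $\ket{\psi}\in\mS_\mQ$, the rank-one projector $\psi = \ket{\psi}\bra{\psi}$ is an extremal point of the mana, of the stabilizer $\alpha$-Rényi entropies, and of the stabilizer fidelity, when restricted to the submanifold $\varrho_{\psi;\mQ}$ of pure states supported on $\Span\{\ket{\psi}\}\oplus\mS_\mQ^\perp$.

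The crucial step is then to observe that the support constraint defining $\varrho_{\psi;\mQ}$ becomes vacuous in this case. Because $\dim\mS_\mQ = 1$, we have $\mS_\mQ^\perp = \Span\{\ket{\psi}\}^\perp$, so that $\Span\{\ket{\psi}\}\oplus\mS_\mQ^\perp = \mH$ and therefore $\varrho_{\psi;\mQ} = \varrho(\mH)$, exactly as stated in the full-manifold note after Theorem~\ref{theo:MainTheorem}. Thus the extremality holds with respect to arbitrary pure-state variations, which is precisely the meaning of the phrase ``on pure states''. For the min-relative entropy of magic I would invoke the identity $\mD_{\min}(\ket{\psi}) = -\log F(\ket{\psi})$ established earlier: since $F(\ket{\psi})>0$ for every state and $t\mapsto -\log t$ is smooth and strictly monotone on $(0,\infty)$, post-composition preserves extremality in the sense of Definition~\ref{def:extremal_point} --- a vanishing first derivative remains vanishing by the chain rule, while local maxima and minima are merely interchanged --- so the extremality of $F$ at $\psi$ transfers immediately to $\mD_{\min}$.

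Since everything is a specialization of already-established results, no genuine obstacle arises; the single point requiring care is the identification $\varrho_{\psi;\mQ} = \varrho(\mH)$, which rests entirely on the defining condition $\dim\mS_\mQ = 1$. Were one only to know $\ket{\psi}\in\mS_\mQ$ for some subgroup $\mQ$ with $\dim\mS_\mQ>1$, the conclusion would be confined to the proper submanifold $\varrho_{\psi;\mQ}$ rather than to the whole space of pure states, so it is essential that the Clifford-stabilizer property supplies a subgroup that pins down $\ket{\psi}$ uniquely.
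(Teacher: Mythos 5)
Your proposal is correct and follows essentially the same route as the paper, which simply declares the corollary an immediate consequence of Theorem~\ref{theo:MainTheorem} via its Clifford specialization (Corollary~\ref{cor:CliffordExtremality}) and the full-manifold observation that $\dim\mS_\mQ=1$ makes the support constraint vacuous. Your explicit handling of the $\mD_{\min}=-\log F$ transfer and your closing caveat about why one-dimensionality of $\mS_\mQ$ is essential are both accurate elaborations of what the paper leaves implicit.
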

\noindent
This follows from the main Theorem~\ref{theo:MainTheorem}. Here we will give examples of non-degenerate eigenstates of Cliffords, which are the simplest case of Clifford-stabilizer states.

In this section, we explicitly enumerate the eigenstates of single-qutrit and single-ququint Clifford operators and analyze their behavior as critical points of both the stabilizer fidelity and the mana of magic.
We also give some examples for the $2$-SRE.
Since Clifford transformations can be trivially incorporated into any distillation or state-injection routine without altering its performance, it suffices to study one representative from each class of states that are related by a single-qudit Clifford unitary, which we refer to as \emph{Clifford-equivalent}. 
Indeed, any magic-state distillation or injection protocol constructed for a given state $\ket{\psi}$ can be straightforwardly adapted to operate on $C\ket{\psi}$ for any Clifford unitary $C$. 
Accordingly, in what follows, we enumerate all \textbf{Clifford-inequivalent non-degenerate eigenstates} of Clifford operators for qutrits and ququints, and analyze their key properties.

\subsection{Charachtarizing the criticality of the stabilizer fidelity\\and the mana of magic: $\mL$ and $\mW$ Matrices}

If $\{\ket{\psi_i}\}_{i=1}^d$ is an orthonormal basis of the Hilbert space of a single qudit of $d$-level system ($d$ is prime) where $\ket{\psi_1}$ is a non-degenerate eigenstate of a Clifford operator, then we define the matrix $\mL(\ket{\psi_1},...,\ket{\psi_d})$ as the $ |S_{\text{nearest}}(\ket{\psi_1})| \times (d-1)$ matrix with entries
\begin{equation}
    \mL_{i,j}(\ket{\psi_1},...,\ket{\psi_d}) = \ell_{\ket{\psi_1}}\Bigl(\ket{s_{\ket{\psi_1},i}}; \ket{\psi_j}\Bigr)  ,
\end{equation}
where $\ket{s_{\ket{\psi},i}}$ in the $i^{\text{th}}$ element in $S_{\text{nearest}}(\ket{\psi})$, $j$ runs from $2$ to $d$, and $\ell$ as defined in Eq.~(\ref{eq:L matrix}):
\[
    \bra{s}\sigma(\ket{\psi},\ket{\varphi})\ket{s} = 2\operatorname{Re}  \left(\langle \varphi|s\rangle\langle s|\psi\rangle \right)
=: 2\operatorname{Re} \ell_{\ket{\psi}}\Bigl(\ket{s}; \ket{\varphi}\Bigr)  .
\]
Note that, as dictated by Eq.~(\ref{eq:averageOfOverlaps}), the sum  $\sum_i \mL_{i,j}(\ket{\psi_1}, \ldots, \ket{\psi_d})$- that is, the sum of any column of the matrix $\mL$- vanishes. This reflects the fact that the first-order term in the average stabilizer fidelity of $\ket{\psi(\epsilon)}$, taken over all nearest stabilizer states to the slightly perturbed Clifford-stabilizer state, is zero, as previously shown.

With this definition, given any state $\ket{\varphi}=\sum_{i=2}^{d} \alpha_i\ket{\psi_i}$:
\begin{equation}
    \ell_{\ket{\psi_1}}\Bigl(\ket{s_{\ket{\psi_1},i}}; \ket{\varphi}\Bigr)=\sum_{j=1}^d \mL_{i,j}(\ket{\psi_1},...,\ket{\psi_d}) \alpha_j   ,
\end{equation}
helping us investigating the behavior of the stationary point of the stabilizer fidelity.
If $\operatorname{Re} \ell_{\ket{\psi_1}}\Bigl( \ket{s_{\ket{\psi_1},i}}; \ket{\varphi} \Bigr) \neq 0$ for some $i$, then $\ket{\psi_1}$ is a sharp local minimum along the direction defined by $\ket{\psi}$. Otherwise, namely if $\operatorname{Re} \ell_{\ket{\psi_1}}\Bigl( \ket{s_{\ket{\psi_1},i}}; \ket{\varphi} \Bigr) = 0
 \text{ for all } i,$ one should investigate the second-order term.

For odd dimensions, if in addition $\ket{\psi_1}$ has non-vanishing Wigner function everywhere, the first order term in Eq.~(\ref{eq: variation of Wigner norm}) vanishes. Therefore, supposing that $\{\ket{\psi_i}\}_{i=1}^d$ is a set of non-degenerate eigenstates of  a Clifford operator for a single qudit, Eq.~(\ref{eq:varied Wigner trace norm when the linear term vanishes}) simplifies to
\small{
\begin{equation}
\begin{split}
\| \rho(|\psi_i (\epsilon) \rangle) & \|_W - \| \rho(|\psi_i\rangle) \|_W   =  \frac{\epsilon^2}{1+\epsilon^2} \cdot \left( \sum_{\boldsymbol{\chi}\in\mathbb{V}_{1,d}} \sign [W_{\boldsymbol{\chi}}(\psi_i)] \cdot W_{\boldsymbol{\chi}}(\mu) - \| \psi_i \|_W \right)
 \\ & = \frac{\epsilon^2}{1+\epsilon^2} \cdot \left( \sum_{j=1}^d \left| \langle\varphi|\psi_j\rangle \right|^2 \sum_{\boldsymbol{\chi}\in\mathbb{V}_{1,d}} \sign [W_{\boldsymbol{\chi}}(\psi_i)] \cdot  W_{\boldsymbol{\chi}} (\psi_j) - \| \psi_i \|_W \right)   .
\end{split}
\label{eq:varied Wigner trace norm for a state having nonvanishing Wigner function everywhere and all eigenstates are nondegenerate}
\end{equation}
}
\normalsize{}
Therefore, all one needs to find is the $d\times d$ matrix $\mW(\ket{\psi_1},...,\ket{\psi_d})$, defined by
\begin{equation}
    \mW_{i,j}(\ket{\psi_1},...,\ket{\psi_d}) = \sum_{\boldsymbol{\chi}\in\mathbb{V}_{1,d}} \sign [W_{\boldsymbol{\chi}}(\psi_i)]\cdot W_{\boldsymbol{\chi}} (\psi_j),
\end{equation}
in order to investigate the behavior of the mana in the stationary points. Observing that the diagonal entry in each row of $\mW$ exceeds all other entries, one can deduce from Eq.~(\ref{eq:varied Wigner trace norm for a state having nonvanishing Wigner function everywhere and all eigenstates are nondegenerate}) that these states are local maxima of the mana.

\subsection{Single Qubit Examples}
For single qubits, the non-degenerate eigenstates of Clifford operations include six stabilizer states, eight $T$-states, and twelve $H$-states.
Figure~\ref{fig:Stabilizer fidelity for all single-qubit states} displays the stabilizer fidelity function for all qubits, where each state is plotted in its Bloch sphere representation, but instead of a unit radius, the radial coordinate corresponds to the stabilizer fidelity value.
The $T$-states appear as sharp minima in this function, while the $H$-states also form sharp minima in all but one direction.
In that particular direction, the $H$-states instead create a smooth maximum.

\begin{figure}[h!]
    \centering
    \includegraphics[trim=3.1cm 3cm 3.1cm 3cm,clip,width=0.6\linewidth]{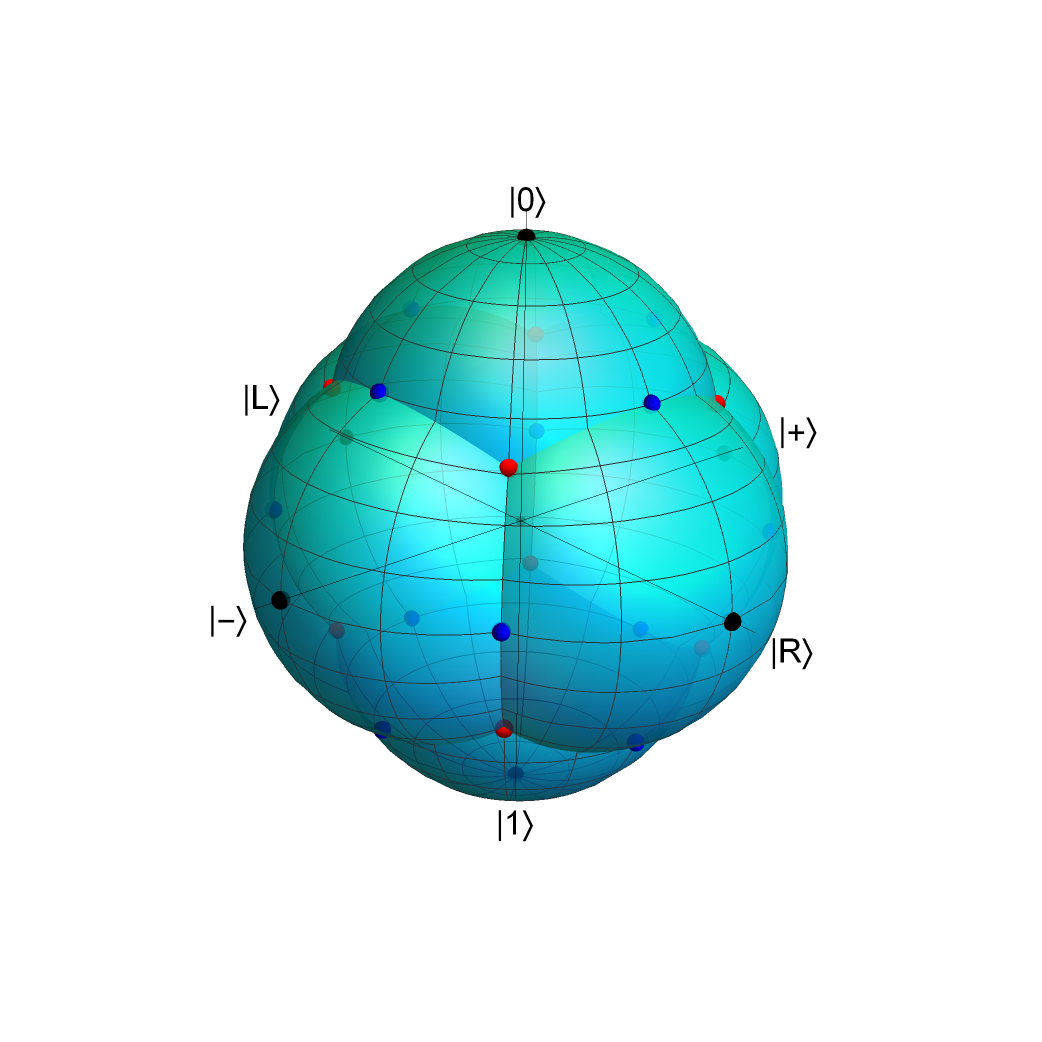}
    \caption{
        Stabilizer fidelity for all single-qubit states. 
        Each state is plotted in its Bloch sphere representation, but instead of a unit radius, the radial coordinate corresponds to the stabilizer fidelity value. 
        The black points represent the stabilizer states.
        The red points indicate the $T$-states, while the blue points correspond to the $H$-states. 
        The plot clearly shows that the $T$-states form sharp minima in the stabilizer fidelity function. Similarly, the $H$-states also create sharp minima in all but one direction. 
        In that particular direction, however, the $H$-states give rise to a smooth maximum instead.
    }
    \label{fig:Stabilizer fidelity for all single-qubit states}
\end{figure}

Now we calculate the corresponding $\mL$ matrices. For the $T$-states we work with the basis
\begin{equation}
    \ket{T_0}\equiv \sqrt{\frac{3+\sqrt{3}}{6}}\ket{0}+e^{i \frac{\pi}{4}}\sqrt{\frac{3-\sqrt{3}}{6}}\ket{1} \quad , \quad \ket{T_1}\equiv -\sqrt{\frac{3-\sqrt{3}}{6}}\ket{0}+e^{i \frac{\pi}{4}}\sqrt{\frac{3+\sqrt{3}}{6}}\ket{1}     ,
\end{equation}
and for the $H$-states we work with the basis
\begin{equation}
    \ket{H_0}\equiv \sqrt{\frac{2+\sqrt{2}}{4}}\ket{0}+\sqrt{\frac{2-\sqrt{2}}{4}}\ket{1} \quad , \quad \ket{H_1}\equiv -\sqrt{\frac{2-\sqrt{2}}{4}}\ket{0}+\sqrt{\frac{2+\sqrt{2}}{4}}\ket{1}     .
\end{equation}
The nearest stabilizer states to $\ket{T_0}$ are $\ket{0},\ket{+},\ket{R}$, and the nearest stabilizer states to $\ket{H_0}$ are $\ket{0},\ket{+}$, so that
\begin{subequations}
    \begin{gather}
            \mL\left( \ket{{T_0}}, \ket{T_1} \right)  = \left(
            \begin{array}{c}
              -\frac{1}{\sqrt{6}} \\
              \frac{e^{-\frac{i \pi }{3}}}{\sqrt{6}}
               \\
              \frac{e^{\frac{i \pi }{3}}}{\sqrt{6}}
               \\
            \end{array}
            \right)  \approx \left(
            \begin{array}{c}
              -0.408 \\
              0.204  -0.354 i \\
              0.204  +0.354 i \\
            \end{array}
            \right)   ,
        \\
        \mL\left( \ket{H_0}, \ket{H_1} \right)= \left(
        \begin{array}{c}
          -\frac{1}{2 \sqrt{2}} \\
          \frac{1}{2 \sqrt{2}} \\
        \end{array}
        \right) 
        \approx  \left(
        \begin{array}{c}
          -0.354 \\
          0.354 \\
        \end{array}
        \right)   .
    \end{gather}
    \label{eq:L matrices for qubits}
\end{subequations}
As expected, based on the above proofs, the sum of each column in these matrices vanishes. Notably, by varying $\ket{H_0}$ in a very specific direction, namely along $i\ket{H_1}$, the linear term in $\epsilon$ is nullified (recall Eq.~(\ref{eq:L matrix})). This is the only direction in which $\ket{H_0}$ is a smooth extremum along the path; in all other directions, it forms a sharp minimum.
For the $\ket{T_0}$ state, as evident from the $\mL$-matrix and Eq.~(\ref{eq:L matrix}), it exhibits a sharp minimum in every direction. This follows from the fact that, for any complex $\alpha\in\mathbb{C}$, the three quantities $\operatorname{Re}[-0.408 \alpha]$, $\operatorname{Re}[(0.204-0.354i) \alpha]$, and $\operatorname{Re}[(0.204+0.354i) \alpha]$ cannot simultaneously vanish.
These characteristics of the states are also clearly illustrated in Figure~\ref{fig:Stabilizer fidelity for all single-qubit states}.

For the behavior of the $2$-SRE
\footnote{
In Appendix~\ref{app:alpha-SREforsinglequbits}, we derive a closed-form expression for the $\alpha$-SRE of an arbitrary single-qubit pure state $\ket{\psi} = \cos\theta \ket{0} + e^{i\phi}\sin\theta \ket{1}$ in terms of the Bloch-sphere angles $(\theta,\phi)$. We then evaluate this general formula for the two canonical single-qubit magic states, $\ket{T}$ and $\ket{H}$, showing explicitly that $\ket{T}$ attains the larger stabilizer $\alpha$-SRE and in fact saturates the general upper bound from Eq.~(\ref{eq: Bound on SRE}). For the particularly relevant case $\alpha=2$, we obtain the concise values $M_2(\ket{T}) = \log(3/2)$ and $M_2(\ket{H}) = \log(4/3)$, which serve as convenient benchmarks for single-qubit magic in the main text.
}, consider first the $\ket{T_0}$ state, varied along the direction of $e^{i\phi}\cos\theta \ket{T_1}$:
\begin{equation}
    \ket{T_0(\epsilon)} = 
    \frac{\ket{T_0} + \epsilon e^{i\phi}\cos\theta \ket{T_1}}
    {\sqrt{1+\epsilon^{2}}}.
\end{equation}
The resulting quantity $\tilde{\Xi}_2(\ket{T_0(\epsilon)})$ is a polynomial in $\epsilon$ (but not in $\epsilon^2$), explicitly given by
\begin{equation}
    \Xi_2  \left(\ket{T_0(\epsilon)}\right)
    = 
    \frac{
        1 + 8\epsilon^{2} + 8\epsilon^{6} + \epsilon^{8} 
        + 4\sqrt{2} \epsilon^{3}(-1 + \epsilon^{2})\cos(3\phi)
    }{
        3(1 + \epsilon^{2})^{4}
    }.
\end{equation}
The expansion reveals that $\ket{T_0}$ is a \emph{minimum} of $\Xi_2$, as 
\[
\Xi_2^{(2)} = \frac{4}{3} > 0
\]
for all values of $\phi$.

Next, consider the variation of the $\ket{H_0}$ state along the direction $e^{i\phi}\cos\theta \ket{H_1}$:
\begin{equation}
    \ket{H_0(\epsilon)} = 
    \frac{\ket{H_0} + \epsilon e^{i\phi}\cos\theta \ket{H_1}}
    {\sqrt{1+\epsilon^{2}}}.
\end{equation}
Here too, $\tilde{\Xi}_2(\ket{H_0(\epsilon)})$ is polynomial in $\epsilon^2$, with
\begin{equation}
    \Xi_2  \left(\ket{H_0(\epsilon)}\right)
    = 
    \frac{3}{8}  \left[
        1
        + \frac{
            4\epsilon^{2}  \left(1 - 3\epsilon^{2} + \epsilon^{4}
            + \epsilon^{2}\cos(2\phi)\right)
        }{
            3(1+\epsilon^{2})^{4}
        }
        \big(1 + 3\cos(2\phi)\big)
    \right].
\end{equation}
In this case, the nature of the stationary point depends on the direction parameter~$\phi$:  
$\ket{H_0}$ is a local \emph{maximum} or \emph{minimum} depending on the sign of $1 + 3\cos(2\phi)$.  
When $3\cos(2\phi) = -1$, the variation becomes \emph{flat} along that direction, indicating a neutral curvature in the landscape of~$\Xi_2$.

\subsection{Single Qutrit Examples}

As can be checked in \cite{Veitch_2014,WangThaumaBounds,QutritQuquint} and Appendix~\ref{app:Wigner function for the single-qutrit states}, each non-degenerate eigenstate of a Clifford operator for single qutrits has non-vanishing Wigner function everywhere and is an eigenstate of a Clifford operation with simple spectrum (its eigenvalues are all distinct). The Clifford-inequivalent states and the Clifford operators, which have a simple spectrum and for which these states are eigenstates, are:
\begin{enumerate}
    \item The starnge state
    \begin{equation}
        \ket{\mathbb{S}}\equiv\frac{\ket{1}-\ket{2}}{\sqrt{2}}
    \end{equation}
    and the ``H plus'' state
    \begin{equation}
        \ket{H_+}=\frac{(1+\sqrt{3})\ket{0}+\ket{1}+\ket{2}}{\sqrt{2(3+\sqrt{3})}}
    \end{equation}
    state are non-degenerate eigenstates of the Hadamard gate $H=V_{\hat{H}}$. The third non-degenerate eigenstate, the ``H minus'' state, $\ket{H_-}=\frac{(1-\sqrt{3})\ket{0}+\ket{1}+\ket{2}}{\sqrt{2(3-\sqrt{3})}}$ is Clifford-equivalent to $\ket{H_+}$.
    \item The Norell state
    \begin{equation}
        \ket{\mathbb{N}}\equiv\frac{-\ket{0}+2\ket{1}-\ket{2}}{6}
    \end{equation}
    is a non-degenerate eigenstate of $(XH)V_{\hat{N}}(XH)^{-1}$. The other two non-degenerate eigenstates of this Clifford $\frac{\ket{0}-\ket{2}}{\sqrt{2}}$ and $\frac{\ket{0}+\ket{1}+\ket{2}}{\sqrt{3}}$ are Clifford-equivalent to $\ket{\mathbb{S}}$ and $\ket{0}$ respectively via $XH$.
    \item The $\mathbb{T}$-state
    \begin{equation}
        \ket{\mathbb{T}}\equiv\ket{\mathbb{T}_0}=\frac{\xi\ket{0}+\ket{1}+\xi^{-1}\ket{2}}{\sqrt{3}}
    \end{equation}
    where $\xi=e^{2\pi i/9}$ is a non-degenerate eigenstate of $(XH^2)N_{\hat{N}\hat{S}}(XH^2)^{-1}$. The other two non-degenerate eigenstates of this Clifford, $\ket{\mathbb{T}_1}=\frac{\xi^{-2}\ket{0}+\ket{1}+\xi^{2}\ket{2}}{\sqrt{3}}$ and $\ket{\mathbb{T}_2}=\frac{\xi^4\ket{0}+\ket{1}+\xi^{-4}\ket{2}}{\sqrt{3}}$, are Clifford equivalent to $\ket{\mathbb{T}}$ via $Z$ and $Z^2$ respectively.
\end{enumerate}

\subsubsection{$\mW$ matrices and Mana}
For each of the Clifford operators- namely,
\[
H = V_{\hat{H}}, \quad (XH)V_{\hat{N}}(XH)^{-1}, \quad \text{and} \quad (XH^2)N_{\hat{N}\hat{S}}(XH^2)^{-1},
\]
we compute the matrix $\mW$ for their non-degenerate eigenstates. The results are presented below:
\small{
\begin{subequations}
    \begin{gather}
        \mW\left( \ket{\mathbb{S}}, \ket{H_+}, \ket{H_-} \right)=\left(
\begin{array}{ccc}
 \frac{5}{3} & \frac{1}{3} & \frac{1}{3} \\
 -\frac{1}{3} & \frac{1}{3}+\frac{2}{\sqrt{3}} &
   \frac{1}{3}-\frac{2}{\sqrt{3}} \\
 -\frac{1}{3} & \frac{1}{3}-\frac{2}{\sqrt{3}} &
   \frac{1}{3}+\frac{2}{\sqrt{3}} \\
\end{array}
\right) \approx \left(
\begin{array}{ccc}
 1.667 & 0.333 & 0.333 \\
 -0.333 & 1.488 & -0.821 \\
 -0.333 & -0.821 & 1.488 \\
\end{array}
\right)   , \\
        \mW\left( \ket{\mathbb{N}}, \frac{\ket{0}-\ket{2}}{\sqrt{2}}, \frac{\ket{0}+\ket{1}+\ket{2}}{\sqrt{3}} \right)=\left(
\begin{array}{ccc}
 \frac{5}{3} & \frac{1}{3} & -\frac{1}{3} \\
 \frac{1}{3} & \frac{5}{3} & \frac{1}{3} \\
 0 & 0 & 1 \\
\end{array}
\right) \approx  \left(
\begin{array}{ccc}
 1.667 & 0.333 & -0.333 \\
 0.333 & 1.667 & 0.333 \\
 0& 0& 1. \\
\end{array}
\right)   , \\
        \begin{split}
                    \mW & \left( \ket{\mathbb{T}}, \frac{\xi^{-2}\ket{0}+\ket{1}+\xi^{2}\ket{2}}{\sqrt{3}}, \frac{\xi^4\ket{0}+\ket{1}+\xi^{-4}\ket{2}}{\sqrt{3}} \right) \\
                    & =\frac{1}{3}\left(
\begin{array}{ccc}
 1+4 \cos \left(\frac{\pi }{9}\right) & 1-2 \cos \left(\frac{\pi
   }{9}\right)-2 \sqrt{3} \sin \left(\frac{\pi }{9}\right) & 1-2 \cos
   \left(\frac{\pi }{9}\right)+2 \sqrt{3} \sin \left(\frac{\pi }{9}\right)
   \\
 1-2 \cos \left(\frac{\pi }{9}\right)+2 \sqrt{3} \sin \left(\frac{\pi
   }{9}\right) & 1+4 \cos \left(\frac{\pi }{9}\right) & 1-2 \cos
   \left(\frac{\pi }{9}\right)-2 \sqrt{3} \sin \left(\frac{\pi }{9}\right)
   \\
 1-2 \cos \left(\frac{\pi }{9}\right)-2 \sqrt{3} \sin \left(\frac{\pi
   }{9}\right) & 1-2 \cos \left(\frac{\pi }{9}\right)+2 \sqrt{3} \sin
   \left(\frac{\pi }{9}\right) & 1+4 \cos \left(\frac{\pi }{9}\right) \\
\end{array}
\right) \\
& \approx \left(
\begin{array}{ccc}
 1.586 & -0.688 & 0.102 \\
 0.102 & 1.586 & -0.688 \\
 -0.688 & 0.102 & 1.586 \\
\end{array}
\right)   .
        \end{split}
    \end{gather}
    \label{eq:SW matrices for qutrits}
\end{subequations}
}
\normalsize{}
Indeed, the diagonal entry in each row exceeds all other entries, and we deduce from Eq.~(\ref{eq:varied Wigner trace norm for a state having nonvanishing Wigner function everywhere and all eigenstates are nondegenerate}) that these states are local maxima of the mana.

\subsubsection{$\mL$ matrices and Stabilizer Fidelity}

After numbering the nearest stabilizer states corresponding to the four nonstabilizer states 
$\ket{\mathbb{S}}$, $\ket{\mathbb{N}}$, $\ket{H_+}$, and $\ket{\mathbb{T}}$ 
in the order presented in Table~\ref{tab:Single-qutrit nonstabilizer Clifford-inequivalent Clifford nondegenerate eigenstates and Wigner function} 
and Table~\ref{tab:Single-qutrit nonstabilizer Clifford-inequivalent Clifford nondegenerate eigenstates and nearest SS} 
in Appendix~\ref{app:Wigner function for the single-qutrit states}, 
we computed the corresponding $\mL$ matrices. 
The results are summarized in Table~\ref{tab:L matrices for single qutrits}.

As expected, the entries in each column of these matrices sum to zero. 
While the $\ket{\mathbb{S}}$ state exhibits a sharp minimum of the stabilizer fidelity in all directions, the states $\ket{\mathbb{N}}$, $\ket{H_+}$, and $\ket{\mathbb{T}}$ 
display a more intricate structure, suggesting richer geometric behavior in their neighborhoods. 
In the following, we analyze each of these four states in detail.

\begin{table}[h!]
\centering
\label{tab:L matrices qutrits}
\renewcommand{\arraystretch}{1.5}
\resizebox{\textwidth}{!}{
\begin{tabular}{|c|c|c|}
\hline
 \textbf{Basis} & \textbf{Exact $\mL$ Matrix} & \textbf{Approximate $\mL$ Matrix} \\
\hline

$\left\{ \ket{\mathbb{S}}, \ket{H_+}, \ket{H_-} \right\}$ &
$\begin{pmatrix}
\frac{1}{2\sqrt{3+\sqrt{3}}} & \frac{1}{2\sqrt{3-\sqrt{3}}} \\
-\frac{1}{2\sqrt{3+\sqrt{3}}} & -\frac{1}{2\sqrt{3-\sqrt{3}}} \\
-\frac{(1+\sqrt{3})e^{i\pi/4}}{2\sqrt{2(3+\sqrt{3})}} & \frac{(-1+\sqrt{3})e^{-i\pi/4}}{2\sqrt{2(3-\sqrt{3})}} \\
-\frac{(1+\sqrt{3})e^{-i\pi/4}}{2\sqrt{2(3+\sqrt{3})}} & \frac{(-1+\sqrt{3})e^{i\pi/4}}{2\sqrt{2(3-\sqrt{3})}} \\
\frac{(1+\sqrt{3})e^{-i\pi/4}}{2\sqrt{2(3+\sqrt{3})}} & -\frac{(-1+\sqrt{3})e^{i\pi/4}}{2\sqrt{2(3-\sqrt{3})}} \\
-\frac{i}{2\sqrt{3+\sqrt{3}}} & \frac{i}{2\sqrt{3-\sqrt{3}}} \\
\frac{i}{2\sqrt{3+\sqrt{3}}} & -\frac{i}{2\sqrt{3-\sqrt{3}}} \\
\frac{(1+\sqrt{3})e^{i\pi/4}}{2\sqrt{2(3+\sqrt{3})}} & -\frac{(-1+\sqrt{3})e^{-i\pi/4}}{2\sqrt{2(3-\sqrt{3})}}
\end{pmatrix}$ &
$\begin{pmatrix}
0.23 & 0.444 \\
-0.23 & -0.444 \\
-0.314-0.314i & 0.163-0.163i \\
-0.314+0.314i & 0.163+0.163i \\
0.314-0.314i & -0.163-0.163i \\
-0.23i & 0.444i \\
0.23i & -0.444i \\
0.314+0.314i & -0.163+0.163i
\end{pmatrix}$ \\
\hline

$\left\{ \ket{H_+}, \ket{H_-}, \ket{\mathbb{S}} \right\}$ &
$\begin{pmatrix}
-\frac{1+\sqrt{3}}{\sqrt{2}(3+\sqrt{3})} & 0 \\
-\frac{1}{6}(\sqrt{3}-3)\sqrt{2+\sqrt{3}} & 0
\end{pmatrix}$ &
$\begin{pmatrix}
-0.408 & 0 \\
0.408 & 0
\end{pmatrix}$ \\
\hline

$\left\{ \ket{\mathbb{N}}, \frac{\ket{0}-\ket{2}}{\sqrt{2}}, \frac{\ket{0}+\ket{1}+\ket{2}}{\sqrt{3}} \right\}$ &
$\begin{pmatrix}
 0 & \frac{\sqrt{2}}{3} \\
 0 & \frac{(\sqrt{3}-3i)^2}{18\sqrt{2}} \\
 0 & \frac{i(\sqrt{3}+i)}{3\sqrt{2}}
\end{pmatrix}$ &
$\begin{pmatrix}
 0 & 0.471 \\
 0 & -0.236-0.408i \\
 0 & -0.236+0.408i
\end{pmatrix}$ \\
\hline

$\left\{ \ket{\mathbb{T}}, \ket{\mathbb{T}_1} , \ket{\mathbb{T}_2} \right\}$ &
$\begin{pmatrix}
\frac{2}{9}(2\cos\frac{2\pi}{9} + \sin\frac{\pi}{18}) & \frac{2}{9}(-2\cos\frac{\pi}{9} + \cos\frac{2\pi}{9}) \\
\frac{2}{9}e^{-2\pi i/3}(2\cos\frac{2\pi}{9} + \sin\frac{\pi}{18}) & \frac{1}{9}e^{2\pi i/3}(-\sqrt{3}\cos\frac{\pi}{18} - 3\sin\frac{\pi}{18}) \\
\frac{2}{9}e^{2\pi i/3}(2\cos\frac{2\pi}{9} + \sin\frac{\pi}{18}) & \frac{1}{9}e^{-2\pi i/3}(-\sqrt{3}\cos\frac{\pi}{18} - 3\sin\frac{\pi}{18})
\end{pmatrix}$ &
$\begin{pmatrix}
0.379 & -0.247 \\
-0.19-0.328i & 0.124-0.214i \\
-0.19+0.328i & 0.124+0.214i
\end{pmatrix}$ \\
\hline
\end{tabular}}
\caption{
Exact and approximate forms of the $\mL$ matrices for the Clifford-inequivalent non-stabilizer non-degenerate Clifford eigenstates for single-qutrits. Each matrix is computed with respect to a basis of nearest stabilizer states as identified in Table~\ref{tab:Single-qutrit nonstabilizer Clifford-inequivalent Clifford nondegenerate eigenstates and nearest SS}. The matrices quantify the first-order variation in the squared overlap amplitude under general perturbations, and serve as a diagnostic tool for analyzing stabilizer fidelity landscapes around each non-stabilizer state.
}
\label{tab:L matrices for single qutrits}
\end{table}

\subsubsection*{The $\ket{\mathbb{S}}$ state}
It can be observed from the first row of Table~\ref{tab:L matrices for single qutrits} that the strange state, $\ket{\mathbb{S}}$, constitutes a sharp minimum along each considered direction. 
In particular, there exist no complex amplitudes $\alpha_2,\alpha_3 \in \mathbb{C}$ (except for the trivial case $\alpha_2=\alpha_3=0$) such that the superposition
\begin{equation}
    \ket{\varphi} \equiv \alpha_2 \ket{H_+} + \alpha_3 \ket{H_-}
\end{equation}
yields to
\begin{equation}
    \operatorname{Re} \ell_{\ket{\mathbb{S}}}  \left( \ket{s_{\ket{\mathbb{S}},i}}; \ket{\varphi} \right)
= 
\operatorname{Re}  \left[
  \alpha_2 \mL_{i,2}(\ket{\mathbb{S}},\ket{H_+},\ket{H_-})
 + \alpha_3 \mL_{i,3}(\ket{\mathbb{S}},\ket{H_+},\ket{H_-})
\right]
\end{equation}
that vanishes simultaneously for $i = 1, 2, 6, 7$.
This confirms that $\ket{\mathbb{S}}$ indeed corresponds to a local extremum that is a strict and sharp minimum in all relevant directions.

\subsubsection*{The $\ket{\mathbb{N}}$ state}
We observe from row 3 in Table~\ref{tab:L matrices for single qutrits} that for the state $\ket{\mathbb{N}}$, there exists a family of directions in the orthogonal complement of its spanned subspace, namely, the directions defined by $e^{i\phi}\frac{\ket{0}-\ket{2}}{\sqrt{2}}$ for all $\phi$, such that any continuous, finite (but bounded) perturbation of $\ket{\mathbb{N}}$ along these directions does not induce a first-order change in the stabilizer fidelity. 
Calculating $\braket{s|\mu|s}$ for the nearest stabilizer states yields a constant value of $-\frac{2}{3}$ in all cases, indicating that $\ket{\mathbb{N}}$ is a local maximum of the stabilizer fidelity on this two-dimensional surface
\footnote{
For qutrits, varying a state while keeping it normalized and disregarding its overall global phase ultimately results in four degrees of freedom. Consequently, these variations take place within a four-dimensional manifold, unlike the two-dimensional case for a single qubit, which is the so-called projective Hilbert space. 
The two-dimensional surface referred to here consists of the set of states given by $\frac{\ket{\mathbb{N}}+\epsilon e^{i\phi}\frac{\ket{0}-\ket{2}}{\sqrt{2}}}{\sqrt{1+\epsilon^2}}$.
}
:
\begin{equation}
    F\left( \frac{\ket{\mathbb{N}}+\epsilon e^{i\phi}\frac{\ket{0}-\ket{2}}{\sqrt{2}}}{\sqrt{1+\epsilon^2}} \right)=\frac{2}{3}-\frac{2}{3}\frac{\epsilon^2}{1+\epsilon^2}=\frac{2}{3(1+\epsilon^2)}
\end{equation}
Once we vary $\ket{\mathbb{N}}$ outside this surface, we find that this point is a sharp local minimum. This behavior is analogous to the $\ket{H_0}$ state in two dimensions, which acts as a smooth maximum when variations are restricted along a single line. However, when variations extend beyond this line, $\ket{H_0}$ becomes a sharp local minimum along the new path.

\subsubsection*{The $\ket{H_+}$ state}
From row 2 in Table~\ref{tab:L matrices for single qutrits}, we observe that $\ket{H_+}$ exhibits a distinct behavior. For example, it is immediate to see that there exists a family of directions in the orthogonal complement of its spanned subspace, namely, the directions defined by $e^{i\phi}\ket{\mathbb{S}}$ for all $\phi$, such that any perturbation of $\ket{H_+}$ along these directions does not induce a first-order change in the stabilizer fidelity. 
Computing $\braket{s|\mu|s}$ for the two nearest stabilizer states yields a value of $-\frac{3+\sqrt{3}}{6}$ for both, confirming that $\ket{H_+}$ is a local maximum on this two-dimensional surface.
When varying outside this surface, but not in the specific direction of $i \ket{H_-}$, we find that this point behaves as a sharp local minimum. However, when the variation occurs along the $i \ket{H_-}$ direction, there is no linear term, and calculating $\braket{s|\mu|s}$ yields $-\frac{1}{\sqrt{3}}$ for each of the two nearest stabilizer states.
Now, let us take the $\ket{\varphi}$ along which we vary to be
\begin{equation}
    \ket{\varphi}=i \cos\alpha \ket{H_-}+e^{i\phi} \sin \alpha \ket{\mathbb{S}}   .
\end{equation}
Calculating the second term in Eq.~(\ref{eq:variation in fidelity when the linear sharp term vanishes}) for the two nearest stabilizer states to $\ket{H_+}$,
\begin{subequations}
    \begin{gather}
        \braket{0|\mu|0}= -\frac{1 + (-2 + \sqrt{3}) \cos^2\alpha}{3 - \sqrt{3}} \\ 
        \frac{\bra{0}+\bra{1}+\bra{2}}{\sqrt{3}}\mu\frac{\ket{0}+\ket{1}+\ket{2}}{\sqrt{3}}=-\frac{1 + (-2 + \sqrt{3}) \cos^2\alpha}{3 - \sqrt{3}}
    \end{gather}
\end{subequations}
we find that
\begin{equation}
    \label{eq:fidelity of varied norell in the hypersurface}
    F\left( \frac{\ket{H_+}+\epsilon (i \cos\alpha \ket{H_-}+e^{i\phi} \sin \alpha \ket{\mathbb{S}}) }{\sqrt{1+\epsilon^2}} \right)
    = F\left(\ket{H_+}\right) - \frac{\epsilon^2}{1+\epsilon^2} \frac{1 + (-2 + \sqrt{3}) \cos^2\alpha}{3 - \sqrt{3}},
\end{equation}
indicating that the state $\ket{H_+}$ is a smooth maximum within this three-dimensional hypersurface. If we move outside this hypersurface, however, the state becomes a sharp local minimum along the path.

\subsubsection*{The $\ket{\mathbb{T}}$ state}

From the fourth row of Table~\ref{tab:L matrices for single qutrits}, 
one can directly solve and find that the condition
\begin{equation}
    \operatorname{Re} \ell_{\ket{\mathbb{T}}}  \left( \ket{s_{\ket{\mathbb{T}},i}}; \ket{\varphi} \right) = 0
\end{equation}
for all $i=1,2,3$ is satisfied only for variations of the form
\begin{equation}
    \ket{\varphi}
 = \cos\gamma  e^{i\phi} \ket{\mathbb{T}_1}
 + \sin\gamma  e^{-i\phi} \ket{\mathbb{T}_2},
\end{equation}
where
\begin{equation}
    \cos\gamma = \sqrt{\frac{1}{1 + 4\cos^2  \left(\tfrac{2\pi}{9}\right)}},
\end{equation}
and $\phi \in [0,2\pi)$ is a free phase parameter.
For any variation outside this two-dimensional manifold, 
the state $\ket{\mathbb{T}}$ constitutes a sharp minimum of the stabilizer fidelity.

We now calculate the second term in Eq.~(\ref{eq:variation in fidelity when the linear sharp term vanishes}) for the three nearest stabilizer states to $\ket{\mathbb{T}}$. The results are
\begin{subequations}
    \begin{gather}
        \frac{\bra{0}+\bra{1}+\bra{2}}{\sqrt{3}}\mu\frac{\ket{0}+\ket{1}+\ket{2}}{\sqrt{3}}=-A - B \cos(2\phi)\\
        \frac{e^{-2\pi i/3}\bra{0}+\bra{1}+\bra{2}}{\sqrt{3}}\mu\frac{e^{2\pi i/3}\ket{0}+\ket{1}+\ket{2}}{\sqrt{3}}=-A + \frac{B}{2} \cos(2\phi) - C \sin(2\phi)\\
        \frac{e^{-2\pi i/3}\bra{0}+e^{-2\pi i/3}\bra{1}+\bra{2}}{\sqrt{3}}\mu\frac{e^{2\pi i/3}\ket{0}+e^{2\pi i/3}\ket{1}+\ket{2}}{\sqrt{3}}=-A + \frac{B}{2} \cos(2\phi) + C \sin(2\phi),
    \end{gather}
\end{subequations}
where
\begin{subequations}
    \begin{gather}
        A=\frac{28 \cos  \frac{\pi}{9} - 18 \sin  \frac{\pi}{18} + 7}{51}\approx0.591877,\\
        B=\frac{30 - 16 \cos  \frac{\pi}{9} + 20 \sin  \frac{\pi}{18}}{153}\approx0.120509,\\
        C=\frac{\sqrt{3} - 2 \sin  \frac{\pi}{9}}{3  \left(3 + 2 \sin  \frac{\pi}{18}\right)}\approx0.104364.
    \end{gather}
\end{subequations}
Therefore, the $\ket{\mathbb{T}}$ state is a smooth maximum in this two-dimensional surface.

\subsubsection{The SRE}

In Appendix~\ref{app:SREforSingleQutrits}, we analyze in detail the SRE for single-qutrit states. We first derive the full discrete phase-space probability distribution $P_{\boldsymbol{\chi}}$ associated with a general single-qutrit pure state parametrized by four real angles, and from it obtain an explicit expression for the quantity $\Xi_2$ and hence for the $2$-SRE $M_2=-\log\bigl[3\Xi_2\bigr]$. We then evaluate the resulting $\alpha$-SRE for the four Clifford-inequivalent nonstabilizer eigenstates of single-qutrit Clifford operations, $\ket{\mathbb{S}},\ket{\mathbb{N}},\ket{H_+}$, and $\ket{\mathbb{T}}$, providing closed-form expressions valid for general $\alpha$ and simple logarithmic values when $\alpha=2$. This analysis reveals, in particular, that the Strange and Norell states saturate the same upper bound on SRE as in Eq.~(\ref{eq: Bound on SRE}), thereby identifying them as maximally magic within this family of qutrit states. Here we show the behavior near these four extremal points.

\subsubsection*{The Strange State $\ket{\mathbb{S}}$}
Consider varying the strange state $\ket{\mathbb{S}}$ along the general orthogonal direction
\begin{equation}
    \ket{\varphi}
    = e^{i\phi_1}\cos\theta \ket{0}
    + e^{i\phi_2}\sin\theta \frac{\ket{1}+\ket{2}}{\sqrt{2}}.
\end{equation}
This yields a function $\tilde{\Xi}_2(\ket{\mathbb{S}(\epsilon)})$ that is polynomial in $\epsilon^2$, with
\begin{equation}
    \Xi_2^{(2)} = \frac{2}{3}\sin^2\theta.
\end{equation}
Hence, $\ket{\mathbb{S}}$ is a local minimum of $\Xi_2(\ket{\mathbb{S}(\epsilon)})$ for all $\theta\neq0,\pi$.  
For $\theta=0$ or $\pi$, one finds
\begin{equation}
    \Xi_2(\ket{\mathbb{S}(\epsilon)})
    = \frac{1}{6} + \frac{\epsilon^4}{(1+\epsilon^2)^4},
\end{equation}
which confirms that $\ket{\mathbb{S}}$ is also a local minimum in these directions.  
Consequently, the strange state $\ket{\mathbb{S}}$ represents a \emph{maximum} of the magic monotone $M_\alpha(\ket{\mathbb{S}(\epsilon)})$ in all directions.

\subsubsection*{The Norrell State $\ket{\mathbb{N}}$}
Next, consider varying the Norrell state $\ket{\mathbb{N}}$ by the general orthogonal state
\begin{equation}
    \ket{\varphi}
    = e^{i\phi_1}\cos\theta \frac{\ket{0}-\ket{2}}{\sqrt{2}}
    + e^{i\phi_2}\sin\theta \frac{\ket{0}+\ket{1}+\ket{2}}{\sqrt{3}}.
\end{equation}
In this case, $\tilde{\Xi}_2(\ket{\mathbb{N}(\epsilon)})$ is polynomial in $\epsilon$ (but not in $\epsilon^2$), and
\begin{equation}
    \Xi_2^{(2)} = \frac{2}{3}\sin^2\theta.
\end{equation}
Therefore, $\ket{\mathbb{N}}$ is a local minimum of $\Xi_2(\ket{\mathbb{N}(\epsilon)})$ for all $\theta\neq0,\pi$.  
For $\theta=0$ or $\pi$, one finds
\begin{equation}
    \Xi_2(\ket{\mathbb{N}(\epsilon)})
    = \frac{1}{6} + \frac{8\sin^4\phi_1 \epsilon^4}{3(1+\epsilon^2)^4},
\end{equation}
showing that $\ket{\mathbb{N}}$ remains a minimum provided $\phi_1\neq0,\pi$.  
When $\phi_1=0$ or $\pi$, the function becomes constant.  
Thus, $\ket{\mathbb{N}}$ is likewise a \emph{maximum} of $M_\alpha(\ket{\mathbb{N}(\epsilon)})$ in all directions.

\subsubsection*{The $\ket{\mathbb{T}}$ State}
Next, consider varying the $\ket{\mathbb{T}}$ state by the general orthogonal state
\begin{equation}
    \ket{\varphi}
    = e^{i\phi_1}\cos\theta \ket{\mathbb{T}_1}
    + e^{i\phi_2}\sin\theta \ket{\mathbb{T}_2}.
\end{equation}
The resulting function $\tilde{\Xi}_2(\ket{\mathbb{T}(\epsilon)})$ is polynomial in $\epsilon$ (but not in $\epsilon^2$), and its low-order expansion coefficients are
\begin{subequations}
    \begin{gather}
        \Xi_2^{(2)}
        = \frac{4}{9}\cos\theta
          \left[\cos\theta - 2\sin\theta\cos(\phi_1+\phi_2)\right], \\[4pt]
        \begin{split}
            \Xi_2^{(3)} = \frac{8}{27}\big[
                &2\cos^3\theta \cos(3\phi_1)
                - 6\cos^2\theta \cos(2\phi_1 - \phi_2)\sin\theta  \\
                &+ 3\cos\theta \cos(\phi_1 - 2\phi_2)\sin^2\theta
                - \cos(3\phi_2)\sin^3\theta
            \big],
        \end{split} \\[4pt]
        \begin{split}
            \Xi_2^{(4)} = \frac{1}{18}\big[
                &-21 - 16\cos(2\theta) - 3\cos(4\theta)
                + 4\cos  \big(2(\phi_1+\phi_2)\big)\sin^2(2\theta) \\
                &\quad - 6\cos(\phi_1+\phi_2)\big(-6\sin(2\theta) + \sin(4\theta)\big)
            \big].
        \end{split}
    \end{gather}
\end{subequations}

From $\Xi_2^{(2)}$, we deduce that $\ket{\mathbb{T}}$ is a local minimum of $\Xi_2(\ket{\mathbb{T}(\epsilon)})$ for all $\theta\neq\pm\frac{\pi}{2}$, provided that $\cot\theta \neq 2\cos(\phi_1+\phi_2)$.  
For $\theta=\pm\frac{\pi}{2}$, we obtain
\begin{equation}
    \Xi_2^{(3)} = \mp\frac{8}{27}\cos(3\phi_2),
    \qquad
    \Xi_2^{(4)} = -\frac{4}{9}.
\end{equation}
Thus, when $\cos(3\phi_2)\neq0$, the $\ket{\mathbb{T}}$ state is an \emph{inflection point} along the path $\ket{\mathbb{T}(\epsilon)}$.  
When $\cos(3\phi_2)=0$, it becomes a \emph{maximum}, since $\Xi_2^{(4)}<0$.  
We leave the special case $\cot\theta = 2\cos(\phi_1+\phi_2)$, but note that if $\Xi_2^{(2)}=\Xi_2^{(3)}=0$, then necessarily $\Xi_2^{(4)}\neq0$, indicating that the fourth-order expansion is sufficient to capture the local behavior.

\subsubsection*{The $\ket{H_+}$ State}

Finally, consider varying the $\ket{H_+}$ state along
\begin{equation}
    \ket{\varphi}
    = e^{i\phi_1}\cos\theta \ket{H_-}
    + e^{i\phi_2}\sin\theta \ket{\mathbb{S}}.
\end{equation}
The resulting function $\tilde{\Xi}_2(\ket{H_+(\epsilon)})$ is polynomial in $\epsilon$ (but not in $\epsilon^2$), and its low-order expansion coefficients are
\begin{subequations}
    \begin{gather}
        \Xi_2^{(2)}
        = \frac{1}{2} \left( \cos(2\theta_1) + \sqrt{3}\cos^2  \theta_1 \cos(2\phi_1) \right), \\[4pt]
        \Xi_2^{(3)}
        = 0, \\[4pt]
        \begin{split}
            \Xi_2^{(4)} = \frac{1}{48} & \left( -9 - 132\cos(2\theta_1) + 9\cos(4\theta_1) + 48\sqrt{3}\cos^2  \theta_1 (-3 + \cos(2\theta_1))\cos(2\phi_1) \right. \\
            & \quad \left. + 12\cos^4  \theta_1\cos(4\phi_1) - 8\sqrt{3}\cos(4\phi_2)\sin^4  \theta_1 \right)
        \end{split} \\[4pt]
        \Xi_2^{(5)} = \frac{2}{3}\sqrt{2}  \cos\theta_1  \big( -3\cos(\phi_1 - 2\phi_2) + \sqrt{3}\cos(\phi_1 + 2\phi_2) \big) \sin^4  \theta_1.
    \end{gather}
\end{subequations}
We leave the investigation here for the reader, but note that if $\Xi_2^{(2)}=\Xi_2^{(4)}=0$, then necessarily $\Xi_2^{(5)}\neq0$, indicating that the fifth-order expansion is sufficient to capture the local behavior.

\subsection{Single Ququint Examples}

Since there may be typographical errors in \cite{QutritQuquint}, we present the states explicitly for clarity.
Defining the following constants
\begin{subequations}
    \begin{gather}
        \chi = \sqrt{\frac{5 + \sqrt{5}}{10}}   , \\
        \eta_\pm = \mp\sqrt{30 - 6\sqrt{5}} + \sqrt{5} - 3   , \\
        \kappa_\pm = \frac{1}{2} \left( \pm \sqrt{6(5 + \sqrt{5})} - \sqrt{5} - 3 \right)   ,
    \end{gather}
\end{subequations}
and labeling each eigenstate by its corresponding eigenoperator (i.e., the operator for which it is an eigenstate) and eigenvalue, the unnormalized nonstabilizer Clifford-inequivalent non-degenerate eigenstates of Clifford operations on single ququints are \textbf{a subset} of
\begin{subequations}
    \begin{gather}
        \ket{H,\pm i} \propto \sqrt{1\pm\chi}(\ket{1}-\ket{4}) 
        \pm \sqrt{1\mp \chi}(\ket{2} -\ket{3})   , \\
        \ket{H,-1} \propto (1-\sqrt{5})\ket{0} 
        + \ket{1} 
        +\ket{2} 
        + \ket{3} 
        + \ket{4}   , \\
        \ket{X V_{\hat{S}},1} \propto \ket{0} + \ket{1} + \omega^3\ket{2} + \ket{3} + \omega^2\ket{4}   , \\
        \ket{B^\prime,-1} \propto (3+\sqrt{5})\ket{0} 
        - 2(\ket{1} + \ket{2} + \ket{3} + \ket{4})   , \\
        \ket{B^\prime,-e^{\pm \frac{2\pi i}{3}}} \propto 4 \kappa_\pm\ket{0} 
        - \kappa_\pm^2(\ket{1}+\ket{4}) +4(\ket{2} +\ket{3})   , \\
        \ket{B^\prime,e^{\pm\frac{2\pi i}{3}}} \propto \eta_\pm(\ket{1}-\ket{4} )
        + 4(\ket{2} - \ket{3} )   , \\
        \ket{A_{\pm\omega_5^2}} \propto \ket{2} \pm \ket{3}   ,
    \end{gather}
\end{subequations}
where $A=V_{\hat{S}}H^2$ , $B=H^3V_{\hat{S}}$ and $B^\prime=V_{\hat{K}} B V_{\hat{K}}^{-1}$, when $\hat{K}=\begin{pmatrix} 1 & 2 \\ 2 & 0 \end{pmatrix}$.
Among these states, $\ket{H,\pm i}$ are Clifford-equivalent, $\ket{B',e^{\pm \frac{2\pi i}{3}}}$ are Clifford-equivalent, and $\ket{B',-e^{\pm \frac{2\pi i}{3}}}$ are also Clifford-equivalent. However, $\ket{A_{\pm\omega_5^2}}$ are not Clifford-equivalent. This can be seen immediately by evaluating the set of overlap amplitudes with all stabilizer states of both states.

The Clifford operations $X V_{\hat{S}}$, $B'$, and $A$ each have a simple spectrum, meaning that all their eigenstates are non-degenerate.
For the Hadamard operation, however, the eigenvalue $1$ is degenerate, and its eigenspace is spanned by
\begin{equation}
\ket{H,1;1} \propto (1+\sqrt{5})\ket{0} + 2\big( \ket{1} + \ket{4} \big),
\end{equation}
and
\begin{equation}
\ket{H,1;2} \propto (1+\sqrt{5})\ket{0} + 2\big( \ket{2} + \ket{3} \big).
\end{equation}
For completeness, we present in Appendix~\ref{app:The bases worked with for each single-ququint Clifford} the bases of orthonormal eigenstates we work with for each Clifford.

\subsubsection{$\mL$ Matrices and Stabilizer Fidelity}

Following the ordering of the nearest stabilizer states as given in Table~\ref{tab:Single-ququint nonstabilizer Clifford-inequivalent Clifford nondegenerate eigenstates and Wigner function} and Table~\ref{tab:Single-ququint nonstabilizer Clifford-inequivalent Clifford nondegenerate eigenstates and nearest SS}, we computed the corresponding $\mL$ matrices. The full results are provided in Appendix~\ref{app:L matrices and Stabilizer Fidelity for Single Ququint Clifford-stabilizer States}.

To assess the local stabilizer fidelity behavior, we consider a general variation $\ket{\varphi} = \sum_{j=1}^5 (a_j + i b_j) \ket{\psi_j}$ with real coefficients $a_j, b_j$ satisfying $\sum_{j=1}^5 (a_j^2 + b_j^2) = 1$ and $\ket{\psi_j}$ running over the five eigenstates of the relevant operator. We compute the first-order variation in $\epsilon$ of the squared amplitude overlap with the nearest stabilizer states for each investigated Clifford-inequivalent, non-degenerate Clifford eigenstate.

Our findings, detailed in Appendix~\ref{app:L matrices and Stabilizer Fidelity for Single Ququint Clifford-stabilizer States}, indicate that none of these states constitutes a sharp minimum of the stabilizer fidelity in all directions. A more comprehensive analysis of their directional behavior is left for the reader.

Moreover, numerical evidence from extensive random sampling suggests the presence of states whose stabilizer fidelity is strictly smaller than that of any non-degenerate eigenstate of the single-ququint Clifford group.
This reveals that the global minimum of stabilizer fidelity dis not achieved by any of these eigenstates.

\subsubsection{$\mW$ matrices and Mana}

\subsubsection*{Smooth Local Maxima}

All eigenstates of $B^\prime$ are non-degenerate and possess Wigner functions that are nowhere vanishing.
Therefore, Eq.~(\ref{eq:varied Wigner trace norm for a state having nonvanishing Wigner function everywhere and all eigenstates are nondegenerate}) applies to them.
We compute the corresponding $\mW$ matrix and obtain:
\begin{equation}
\label{eq:W matrices for ququints - Btag}
 \begin{split}
             \mW_{B^\prime}:=\mW & \left( \ket{B^\prime,-1}, \ket{B^\prime,  -e^{\frac{2\pi i}{3}}}, \ket{B^\prime,-e^{-\frac{2\pi i}{3}}} , \ket{B^\prime, e^{\frac{2\pi i}{3}}} , \ket{B^\prime,e^{-\frac{2\pi i}{3}}} \right)
             \\
             & \approx 
\left(
\begin{array}{ccccc}
 1.98885 & -0.694427 & -0.694427 & -0.2 & -0.2 \\
 -0.294427 & 2.11335 & -0.0189273 & 0.651682 & 0.148318 \\
 -0.294427 & -0.0189273 & 2.11335 & 0.148318 & 0.651682 \\
 1.09443 & -0.498895 & 0.00446812 & 1.86614 & -0.266141 \\
 1.09443 & 0.00446812 & -0.498895 & -0.266141 & 1.86614 \\
\end{array}
\right)
          .
 \end{split}
\end{equation}
In each row, the diagonal element dominates, and, in accordance with Eq.~(\ref{eq:varied Wigner trace norm for a state having nonvanishing Wigner function everywhere and all eigenstates are nondegenerate}), each of these states is a smooth local maximum of the mana of magic in all directions.

All non-degenerate eigenstates of $H$ possess Wigner functions that are nowhere vanishing too. Computing the $\mW$ matrix for the $H$ operator eigenstates yields:
\begin{equation}
\label{eq:W matrices for ququints - H}
    \begin{split}
        \mW_H := \mW & \left( \ket{H,+i}, \ket{H,-i}, \ket{H,-1} , \ket{H,1;1} , \ket{H,1;2} \right) \\ & \approx 
        \left(
\begin{array}{ccccc}
 1.83107 & -0.631073 & -0.6 & 0.385871 & -0.0929356 \\
 -0.631073 & 1.83107 & -0.6 & -0.0929356 & 0.385871 \\
 0.2 & 0.2 & 1.8 & 0.307064 & 0.307064 \\
 1.07023 & 0.129772 & -0.0472136 & 1.90706 & 0.653532 \\
 0.129772 & 1.07023 & -0.0472136 & 0.653532 & 1.90706 \\
\end{array}
\right)
  .     
\end{split}
\end{equation}
Also here the diagonal element dominates in every row. However, Eq.~(\ref{eq:varied Wigner trace norm for a state having nonvanishing Wigner function everywhere and all eigenstates are nondegenerate}) does not directly apply in this case, as there are two degenerate eigenstates. Twirling $\mu$ in this situation preserves the entire density matrix supported on the subspace spanned by these two states. Therefore, in accordance with Eq.~(\ref{eq:varied Wigner trace norm when the linear term vanishes}), we must also compute:
\begin{equation}
\begin{split}
        \sum_{\boldsymbol{\chi}\in\mathbb{V}_{N,d}} & s_{\rho(\ket{H,i})}(\boldsymbol{\chi})\cdot W_{\rho(\cos{\alpha}\ket{H,1;1}+e^{i\phi}\sin{\alpha}\ket{H,1;2})}=\cos^2{\alpha} {\mW_H}_{1,4}+\sin^2{\alpha} {\mW_H}_{1,5}
        \\ & + \frac{1}{2}\sin{2\alpha} \sum_{\boldsymbol{\chi}\in\mathbb{V}_{N,d}} s_{\rho(\ket{H,i})}(\boldsymbol{\chi})\cdot (W_{\ket{H,1;1}\bra{H,1;2}}+W_{\ket{H,1;2}\bra{H,1;1}})
        \\
        & \approx 0.385871\cos^2{\alpha} - 0.440895 \cos{\alpha}\sin{\alpha} - 0.0929356 \sin^2{\alpha} < {\mW_H}_{1,1} ,
\end{split}
\end{equation}
and hence $\ket{H,i}$ is a local maximum of the mana in all directions. Similarly,
\begin{equation}
\begin{split}
        \sum_{\boldsymbol{\chi}\in\mathbb{V}_{N,d}} & s_{\rho(\ket{H,-1})}(\boldsymbol{\chi})\cdot W_{\rho(\cos{\alpha}\ket{H,1;1}+e^{i\phi}\sin{\alpha}\ket{H,1;2})}=\cos^2{\alpha} {\mW_H}_{3,4}+\sin^2{\alpha} {\mW_H}_{3,5}
        \\ & + \frac{1}{2}\sin{2\alpha} \sum_{\boldsymbol{\chi}\in\mathbb{V}_{N,d}} s_{\rho(\ket{H,-1})}(\boldsymbol{\chi})\cdot (W_{\ket{H,1;1}\bra{H,1;2}}+W_{\ket{H,1;2}\bra{H,1;1}})
        \\
        & \approx 0.307064 + 1.56209 \cos{\alpha}\sin{\alpha} < {\mW_H}_{3,3} ,
\end{split}
\end{equation}
and hence $\ket{H,-1}$ is a local maximum of the mana in all directions.

\subsubsection*{Anisotropic Saddle States}

The Wigner function of $\ket{A,-\omega^2=e^{-\frac{\pi i}{5}}}\propto \ket{2}-\ket{3}$ does vanish in some phase-space points. Therefore, one has to compute the linear term in Eq.~(\ref{eq: variation of Wigner norm}). For that, we define
\begin{equation}
    \bar{\sigma}(\ket{\psi_i},\ket{\psi_j}):= \ket{\psi_i} \bra{\psi_j} - \braket{\psi_i|\psi_j}  \ket{\psi_i} \bra{\psi_i}
\end{equation}
such that
\begin{equation}
    \sigma(\ket{\psi_i},\ket{\psi_j})=\bar{\sigma}(\ket{\psi_i},\ket{\psi_j})+{\bar{\sigma}}^\dagger (\ket{\psi_i},\ket{\psi_j})   ,
\end{equation}
and hence,
\begin{equation}
    W_{\sigma(\ket{\psi_i},\ket{\psi_j})}=W_{\bar{\sigma}(\ket{\psi_i},\ket{\psi_j})}+W^*_{\bar{\sigma}(\ket{\psi_i},\ket{\psi_j})}
\end{equation}
and $\bar{\sigma}(\ket{\psi_i},\ket{\psi_j})$ is linear in $\ket{\varphi}$. Then we compute:
\tiny
\begin{subequations}
\begin{gather}
        W_{\bar{\sigma}( \ket{A,e^{-\frac{\pi i}{5}}}, \ket{A,e^{\frac{\pi i}{5}}})} \approx \left(
\begin{array}{ccccc}
 0 & 0 & 0 & 0 & 0 \\
 -0.1 & -0.0309+0.0951 i & 0.0809  +0.0588 i & 0.0809  -0.0588 i & -0.0309-0.0951 i
   \\
 0.1 & -0.0809+0.0588 i & 0.0309  -0.0951 i & 0.0309  +0.0951 i & -0.0809-0.0588 i \\
 0.1 & -0.0809-0.0588 i & 0.0309  +0.0951 i & 0.0309  -0.0951 i & -0.0809+0.0588 i \\
 -0.1 & -0.0309-0.0951 i & 0.0809  -0.0588 i & 0.0809  +0.0588 i & -0.0309+0.0951 i
   \\
\end{array}
\right)   ,
        \\
        W_{\bar{\sigma}( \ket{A,e^{-\frac{\pi i}{5}}}, \ket{A,e^{\frac{4\pi i}{5}}})} \approx \left(
\begin{array}{ccccc}
 0 & -0.1902 i & -0.1176 i & 0.1176 i & 0.1902 i \\
 0 & 0 & 0 & 0 & 0 \\
 0.1 & 0.1 & 0.1 & 0.1 & 0.1 \\
 -0.1 & -0.1 & -0.1 & -0.1 & -0.1 \\
 0 & 0 & 0 & 0 & 0 \\
\end{array}
\right)   ,
        \\
        W_{\bar{\sigma}( \ket{A,e^{-\frac{\pi i}{5}}}, \ket{A,e^{-\frac{4\pi i}{5}}})} \approx \left(
\begin{array}{ccccc}
 0 & 0 & 0 & 0 & 0 \\
 -0.1 & -0.0309+0.0951 i & 0.0809  +0.0588 i & 0.0809  -0.0588 i & -0.0309-0.0951 i
   \\
 -0.1 & 0.0809  -0.0588 i & -0.0309+0.0951 i & -0.0309-0.0951 i & 0.0809  +0.0588 i
   \\
 0.1 & -0.0809-0.0588 i & 0.0309  +0.0951 i & 0.0309  -0.0951 i & -0.0809+0.0588 i \\
 0.1 & 0.0309  +0.0951 i & -0.0809+0.0588 i & -0.0809-0.0588 i & 0.0309  -0.0951 i \\
\end{array}
\right)   ,
        \\
        W_{\bar{\sigma}( \ket{A,e^{-\frac{\pi i}{5}}}, \ket{A,1})} \approx \left(
\begin{array}{ccccc}
 0 & 0 & 0 & 0 & 0 \\
 0.1414 & -0.1144+0.0831 i & 0.0437  -0.1345 i & 0.0437  +0.1345 i & -0.1144-0.0831 i
   \\
 0 & 0 & 0 & 0 & 0 \\
 0 & 0 & 0 & 0 & 0 \\
 -0.1414 & 0.1144  +0.0831 i & -0.0437-0.1345 i & -0.0437+0.1345 i & 0.1144  -0.0831
   i \\
\end{array}
\right)   .
\end{gather}
\end{subequations}
\normalsize{}
As calculated in Table~\ref{tab:Single-ququint nonstabilizer Clifford-inequivalent Clifford nondegenerate eigenstates and Wigner function}:
\[
W_{\ket{A,e^{\frac{4\pi i}{5}}}\bra{A,e^{\frac{4\pi i}{5}}}} \approx  \begin{pmatrix}
-0.2 & -0.0618 & 0.1618 & 0.1618 & -0.0618 \\
0& 0& 0& 0& 0\\
0.1 & 0.1 & 0.1 & 0.1 & 0.1 \\
0.1 & 0.1 & 0.1 & 0.1 & 0.1 \\
0& 0& 0& 0& 0.
\end{pmatrix}   .
\]

With these results, it is straightforward to verify that for any variation
\[
\ket{\varphi} \in \text{Span}^\perp\left( \ket{A, e^{-\frac{\pi i}{5}}} \right),
\]
the discrete Wigner function $W_{\sigma\left( \ket{A, e^{-\frac{\pi i}{5}}}, \ket{\varphi} \right)}$ vanishes at all entries where the Wigner function of the pure state 
$W_{\ket{A, e^{-\frac{\pi i}{5}}}\bra{A, e^{-\frac{\pi i}{5}}}}$ vanishes if and only if
\[
\ket{\varphi} = e^{i\phi} \ket{A, e^{\frac{4\pi i}{5}}}
\quad \text{for some } \phi \in \mathbb{R}.
\]
Along any path outside this two-dimensional surface, the state $\ket{A, e^{-\frac{\pi i}{5}}}$ is a sharp minima of the mana of magic.

To characterize the behavior of the mana in the vicinity of $\ket{A, e^{-\frac{\pi i}{5}}}$ when the variation is given by 
$\ket{\varphi} = e^{i\phi} \ket{A, e^{\frac{4\pi i}{5}}}$, we invoke Eq.~\eqref{eq:varied Wigner trace norm when the linear term vanishes}. Using the Wigner functions
\[
W_{\ket{A, e^{-\frac{\pi i}{5}}}\bra{A, e^{-\frac{\pi i}{5}}}} 
\quad \text{and} \quad 
W_{\ket{\varphi}\bra{\varphi}} = 
W_{\ket{A, e^{\frac{4\pi i}{5}}}\bra{A, e^{\frac{4\pi i}{5}}}},
\]
as tabulated in Table~\ref{tab:Single-ququint nonstabilizer Clifford-inequivalent Clifford nondegenerate eigenstates and Wigner function}, we find:
\begin{equation}
    \left\| \rho\left( \ket{A, e^{-\frac{\pi i}{5}}}(\epsilon) \right) \right\|_W 
    - \left\| \rho\left( \ket{A, e^{-\frac{\pi i}{5}}} \right) \right\|_W 
    \approx -1.2944   \frac{\epsilon^2}{1+\epsilon^2}   ,
\end{equation}
which indicates that $\ket{A, e^{-\frac{\pi i}{5}}}$ is a smooth local maximum of the mana along this two-dimensional submanifold.

The eigenstate $\ket{A,\omega^2 = e^{\frac{4\pi i}{5}}} \propto \ket{2}+\ket{3}$ exhibits a similar behavior. We find
\tiny
\begin{subequations}
\begin{gather}
        W_{\bar{\sigma}( \ket{A,e^{\frac{4\pi i}{5}}}, \ket{A,e^{\frac{-4\pi i}{5}}}} \approx \left(
\begin{array}{ccccc}
 0 & 0 & 0 & 0 & 0 \\
 0.1 & 0.0309  -0.0951 i & -0.0809-0.0588 i & -0.0809+0.0588 i & 0.0309  +0.0951 i \\
 0.1 & -0.0809+0.0588 i & 0.0309  -0.0951 i & 0.0309  +0.0951 i & -0.0809-0.0588 i \\
 0.1 & -0.0809-0.0588 i & 0.0309  +0.0951 i & 0.0309  -0.0951 i & -0.0809+0.0588 i \\
 0.1 & 0.0309  +0.0951 i & -0.0809+0.0588 i & -0.0809-0.0588 i & 0.0309  -0.0951 i \\
\end{array}
\right)   ,
        \\
        W_{\bar{\sigma}( \ket{A,e^{\frac{4\pi i}{5}}}, \ket{A,e^{\frac{\pi i}{5}}}} \approx \left(
\begin{array}{ccccc}
 0 & 0 & 0 & 0 & 0 \\
 0.1 & 0.0309  -0.0951 i & -0.0809-0.0588 i & -0.0809+0.0588 i & 0.0309  +0.0951 i \\
 -0.1 & 0.0809  -0.0588 i & -0.0309+0.0951 i & -0.0309-0.0951 i & 0.0809  +0.0588 i
   \\
 0.1 & -0.0809-0.0588 i & 0.0309  +0.0951 i & 0.0309  -0.0951 i & -0.0809+0.0588 i \\
 -0.1 & -0.0309-0.0951 i & 0.0809  -0.0588 i & 0.0809  +0.0588 i & -0.0309+0.0951 i
   \\
\end{array}
\right)   ,
        \\
        W_{\bar{\sigma}( \ket{A,e^{\frac{4\pi i}{5}}}, \ket{A,e^{-\frac{\pi i}{5}}}} \approx \left(
\begin{array}{ccccc}
 0 & 0.1902 i & 0.1176 i & -0.1176 i & -0.1902 i \\
 0 & 0 & 0 & 0 & 0 \\
 0.1 & 0.1 & 0.1 & 0.1 & 0.1 \\
 -0.1 & -0.1 & -0.1 & -0.1 & -0.1 \\
 0 & 0 & 0 & 0 & 0 \\
\end{array}
\right)   ,
        \\
        W_{\bar{\sigma}( \ket{A,e^{\frac{4\pi i}{5}}}, \ket{A,1})} \approx \left(
\begin{array}{ccccc}
 0 & 0 & 0 & 0 & 0 \\
 0.1414 & -0.1144+0.0831 i & 0.0437  -0.1345 i & 0.0437  +0.1345 i & -0.1144-0.0831 i
   \\
 0 & 0 & 0 & 0 & 0 \\
 0 & 0 & 0 & 0 & 0 \\
 0.1414 & -0.1144-0.0831 i & 0.0437  +0.1345 i & 0.0437  -0.1345 i & -0.1144+0.0831 i
   \\
\end{array}
\right)   .
\end{gather}
\end{subequations}
\normalsize{}
As calculated in Table~\ref{tab:Single-ququint nonstabilizer Clifford-inequivalent Clifford nondegenerate eigenstates and Wigner function}:
\[
W_{\ket{A,e^{\frac{4\pi i}{5}}}\bra{A,e^{\frac{4\pi i}{5}}}} \approx  \begin{pmatrix}
0.2 & 0.0618 & -0.1618 & -0.1618 & 0.0618 \\
0& 0& 0& 0& 0\\
0.1 & 0.1 & 0.1 & 0.1 & 0.1 \\
0.1 & 0.1 & 0.1 & 0.1 & 0.1 \\
0& 0& 0& 0& 0.
\end{pmatrix}   .
\]
And again, with these results, we confirm that for any variation
\[
\ket{\varphi} \in \text{Span}^\perp\left( \ket{A, e^{\frac{4\pi i}{5}}} \right),
\]
the Wigner function $W_{\sigma\left( \ket{A, e^{\frac{4\pi i}{5}}}, \ket{\varphi} \right)}$ vanishes at all entries where $W_{\ket{A, e^{\frac{4\pi i}{5}}}\bra{A, e^{\frac{4\pi i}{5}}}}$ vanishes if and only if
\[
\ket{\varphi} = e^{i\phi} \ket{A, e^{-\frac{\pi i}{5}}}
\quad \text{for some } \phi \in \mathbb{R}.
\]
Along any path outside this two-dimensional surface, the state $\ket{A, e^{\frac{\pi i}{5}}}$ is a sharp minima of the mana of magic.

Therefore, we focus on the behavior of the mana in the two-dimensional variation subspace spanned by $\ket{A, e^{-\frac{\pi i}{5}}}$. Using Eq.~\eqref{eq:varied Wigner trace norm when the linear term vanishes}, and the fact that $W_{\ket{A, e^{\frac{4\pi i}{5}}}\bra{A, e^{\frac{4\pi i}{5}}}}$ and $W_{\ket{A, e^{-\frac{\pi i}{5}}}\bra{A, e^{-\frac{\pi i}{5}}}}$ are known, we obtain:
\begin{equation}
    \left\| \rho\left( \ket{A, e^{\frac{4\pi i}{5}}}(\epsilon) \right) \right\|_W 
    - \left\| \rho\left( \ket{A, e^{\frac{4\pi i}{5}}} \right) \right\|_W 
    \approx -1.2944   \frac{\epsilon^2}{1+\epsilon^2}   .
\end{equation}

This confirms that $\ket{A, e^{\frac{4\pi i}{5}}}$, like $\ket{A, e^{-\frac{\pi i}{5}}}$, constitutes a smooth local \emph{maximum} of the mana in its own uniquely aligned two-dimensional variation submanifold. The symmetry observed here between the two Clifford eigenstates is a hallmark of their shared structural features in the Wigner representation.

\subsubsection*{Last Critical Point}

Similarly, for $\ket{XV_{\hat{S}},1}$ we find
\tiny
\begin{subequations}
\begin{gather}
        W_{\bar{\sigma}(\ket{XV_{\hat{S}},1},\ket{XV_{\hat{S}},\omega)}} \approx \left(
\begin{array}{ccccc}
 0& 0.0171  -0.0526 i & -0.0276+0.0851 i & 0.0276  -0.0851 i & 0.0447  -0.1376 i
   \\
 0& -0.0447-0.0325 i & 0.0724  +0.0526 i & -0.0724-0.0526 i & -0.1171-0.0851 i \\
 -0.1171+0.0851 i & 0& -0.0447+0.0325 i & 0.0724  -0.0526 i & -0.0724+0.0526 i \\
 -0.0276-0.0851 i & 0.0276  +0.0851 i & 0.0447  +0.1376 i & 0& 0.0171  +0.0526 i
   \\
 0.1447 & 0& 0.0553 & -0.0894 & 0.0894 \\
\end{array}
\right)   ,
        \\
        W_{\bar{\sigma}(\ket{XV_{\hat{S}},1},\ket{XV_{\hat{S}},\omega^{-1})}} \approx \left(
\begin{array}{ccccc}
 0.0447  +0.1376 i & 0& 0.0171  +0.0526 i & -0.0276-0.0851 i & 0.0276  +0.0851 i
   \\
 -0.1171+0.0851 i & 0& -0.0447+0.0325 i & 0.0724  -0.0526 i & -0.0724+0.0526 i \\
 -0.0724-0.0526 i & -0.1171-0.0851 i & 0& -0.0447-0.0325 i & 0.0724  +0.0526 i \\
 0.0171  -0.0526 i & -0.0276+0.0851 i & 0.0276  -0.0851 i & 0.0447  -0.1376 i & 0.
   \\
 0.0894 & 0.1447 & 0& 0.0553 & -0.0894 \\
\end{array}
\right)   ,
        \\
        W_{\bar{\sigma}(\ket{XV_{\hat{S}},1},\ket{XV_{\hat{S}},\omega^2)}} \approx \left(
\begin{array}{ccccc}
 -0.0724-0.0526 i & -0.1171-0.0851 i & 0& -0.0447-0.0325 i & 0.0724  +0.0526 i \\
 0.0276  +0.0851 i & 0.0447  +0.1376 i & 0& 0.0171  +0.0526 i & -0.0276-0.0851 i
   \\
 -0.0276+0.0851 i & 0.0276  -0.0851 i & 0.0447  -0.1376 i & 0& 0.0171  -0.0526 i
   \\
 0& -0.0447+0.0325 i & 0.0724  -0.0526 i & -0.0724+0.0526 i & -0.1171+0.0851 i \\
 -0.0894 & 0.0894 & 0.1447 & 0& 0.0553 \\
\end{array}
\right)   ,
        \\
        W_{\bar{\sigma}(\ket{XV_{\hat{S}},1},\ket{XV_{\hat{S}},\omega^{-2})}} \approx \left(
\begin{array}{ccccc}
 -0.0447+0.0325 i & 0.0724  -0.0526 i & -0.0724+0.0526 i & -0.1171+0.0851 i & 0\\
 0.0171  -0.0526 i & -0.0276+0.0851 i & 0.0276  -0.0851 i & 0.0447  -0.1376 i & 0.
   \\
 0& 0.0171  +0.0526 i & -0.0276-0.0851 i & 0.0276  +0.0851 i & 0.0447  +0.1376 i
   \\
 -0.0724-0.0526 i & -0.1171-0.0851 i & 0& -0.0447-0.0325 i & 0.0724  +0.0526 i \\
 0& 0.0553 & -0.0894 & 0.0894 & 0.1447 \\
\end{array}
\right)   .
\end{gather}
\end{subequations}
\normalsize{}
As calculated in Table~\ref{tab:Single-ququint nonstabilizer Clifford-inequivalent Clifford nondegenerate eigenstates and Wigner function}:
\[
W_{\ket{XV_{\hat{S}},1}\bra{XV_{\hat{S}},1}} \approx \begin{pmatrix}
-0.0894 & 0.0894 & 0.1447 & 0& 0.0553 \\
-0.0894 & 0.0894 & 0.1447 & 0& 0.0553 \\
0.0553 & -0.0894 & 0.0894 & 0.1447 & 0\\
0.1447 & 0& 0.0553 & -0.0894 & 0.0894 \\
0.0553 & -0.0894 & 0.0894 & 0.1447 & 0.
\end{pmatrix}   .
\]
A straightforward calculation yields that for any variation
\[
\ket{\varphi} \in \text{Span}^\perp\left( \ket{XV_{\hat{S}},1} \right),
\]
the Wigner function $W_{\sigma\left( \ket{XV_{\hat{S}},1}, \ket{\varphi} \right)}$ vanishes at all entries where $W_{\ket{XV_{\hat{S}},1}\bra{XV_{\hat{S}},1}}$ vanishes if and only if
\begin{equation}
    \begin{split}
        \ket{\varphi} = &
    e^{i\phi_1}\cos\alpha \ket{XV_{\hat{S}},\omega}
    + e^{-i\phi_1}\cos\alpha \ket{XV_{\hat{S}},\omega^{-1}}
    \\ &  + \sqrt{\frac{3 - \sqrt{5}}{6}} \left(
    e^{i\phi_2} \sin\alpha \ket{XV_{\hat{S}},\omega^{-2}}
    - \frac{3+\sqrt{5}}{2} e^{-i\phi_2} \sin\alpha \ket{XV_{\hat{S}},\omega^{2}}
    \right) \\
     & \quad \quad \quad \quad \text{for some } \alpha,\phi_1,\phi_2 \in \mathbb{R}   .
    \end{split}
\end{equation}
The behavior of the mana for such variations is significantly more involved, and we leave its detailed analysis to the reader.

\subsubsection*{Summary}

To summarize the findings of this section, we present in Table~\ref{tab:Summary of mana for ququints} an overview of the qualitative behavior of the mana for the investigated Clifford inequivalent non-degenerate non-stabilizer Clifford eigenstates for single ququints. The table classifies each state according to whether its Wigner function is fully supported on all entries or vanishes on some entries, and describes the resulting nature of its mana landscape, such as smooth local maxima, subspace-restricted maxima, or critical points whose detailed behavior remains unresolved.
\begin{table}[h!]
\centering
\begin{tabular}{|c|c|c|}
\hline
\textbf{Eigenstate} & \textbf{Wigner Function Support} & \textbf{Mana Behavior} \\
\hline
$\ket{B',-1}$ & Non-vanishing & Smooth Local Max \\
$\ket{B',-e^{\frac{2\pi i}{3}}}$ & Non-vanishing & Smooth Local Max \\
$\ket{B',e^{\frac{2\pi i}{3}}}$ & Non-vanishing & Smooth Local Max \\
$\ket{H,+i}$ & Non-vanishing & Smooth Local Max \\
$\ket{H,-1}$ & Non-vanishing & Smooth Local Max \\
$\ket{A,e^{-\frac{\pi i}{5}}}$ & Partially vanishing & Smooth Local Max in 2D subspace \\
$\ket{A,e^{\frac{4\pi i}{5}}}$ & Partially vanishing & Smooth Local Max in 2D subspace \\
$\ket{XV_{\hat{S}},1}$ & Partially vanishing & Undetermined / Critical Point \\
\hline
\end{tabular}
\caption{Summary of mana behavior for all Clifford-inequivalent non-degenerate non-stabilizer Clifford eigenstates of single ququints. Each eigenstate is labeled by its corresponding eigenvalue, with the Wigner function support indicating whether the state's Wigner representation vanishes at any phase-space point. The mana behavior is qualitatively characterized based on whether the state exhibits a smooth local maximum in all directions, a restricted maximum within a subspace, or an unresolved critical point.}
\label{tab:Summary of mana for ququints}
\end{table}

\subsubsection{On the $\alpha$-SRE}

In Appendix~\ref{app:SREforSingleQuquints}, we extend the computation of the $\alpha$-SRE to all single-ququint non-stabilizer, Clifford-inequivalent, nondegenerate Clifford eigenstates considered here in the main text.
For each such eigenstate, we present an explicit closed-form expression $M_\alpha(\ket{\psi})$, and we evaluate their values at the special point $\alpha=2$ to obtain compact logarithmic formulas.
This systematic survey shows that, in contrast to the qubit and qutrit cases, none of the investigated ququint eigenstates saturates the general upper bound in Eq.~(\ref{eq: Bound on SRE}), a fact that we relate to the observation that none of them simultaneously attains the minimal stabilizer fidelity.
Together, these results provide a comprehensive characterization of SRE in prime dimensions $d=2,3,5$.
However, a detailed analysis of the type of extremum they represent is left to the reader.

\subsection{SIC-POVM Fiducial States}

Symmetric informationally complete (SIC) positive operator-valued measure (POVM) fiducial states stand at the intersection of quantum state geometry, symmetry, and resource theories.
Originally introduced in the context of optimal quantum state tomography, they have since emerged as extremal objects in a wide range of seemingly unrelated settings, including discrete phase-space formulations, frame theory, and measures of non-stabilizerness.
In the present work, SIC fiducials arise naturally as extremizers of Clifford-covariant magic functionals, placing them conceptually alongside Clifford-stabilizer states as highly symmetric and structurally distinguished points in state space.
This subsection explores this connection in detail, progressing from quantitative extremality properties to a symmetry-based structural conjecture.

\subsubsection{Fiducial States, and Extremizing $p$-Norms and SREs}

We begin by recalling a striking extremality property of SIC-POVM fiducial states with respect to a broad family of magic measures.
In particular, for a continuous range of parameters, fiducial states arise as the unique maximizers or minimizers of $L^p$-norm magic and, equivalently, of stabilizer R\'enyi entropies.
These results provide a purely quantitative characterization of SIC fiducials, independent of any explicit symmetry assumptions.
As such, they offer a natural bridge between operational notions of quantum magic and the highly constrained geometric structure exhibited by SIC states.

\begin{definition}[SIC-POVM Fiducial State]
A SIC \emph{fiducial state} is a normalized state
$\ket{\psi} \in \mH_{1,d}$ such that
\[
    \left| \braket{\psi | T_{\boldsymbol{\chi}} | \psi} \right|
    = \frac{1}{\sqrt{d+1}}
    \qquad
    \forall\, \boldsymbol{\chi} \in \mathbb{V}_{1,d} \setminus \{0\}.
\]
\end{definition}
\noindent
Given such a fiducial state, one defines the orbit states
\begin{equation}
    \ket{\psi_{\boldsymbol{\chi}}}
    \;=\;
    T_{\boldsymbol{\chi}} \ket{\psi},
\end{equation}
and the associated rank-one POVM elements
\begin{equation}
    E_{\boldsymbol{\chi}}
    \;=\;
    \frac{1}{d}\,
    \ket{\psi_{\boldsymbol{\chi}}}\bra{\psi_{\boldsymbol{\chi}}}.
\end{equation}
The collection
$\{ E_{\boldsymbol{\chi}} \}_{\boldsymbol{\chi} \in \mathbb{V}_{1,d}}$
then forms a SIC-POVM on $\mH_{1,d}$.
In fact, the vast majority of SIC-POVMs constructed to date are \emph{group-covariant} under the action of the Weyl-Heisenberg (WH) group, with the fiducial state generating the entire POVM via its group orbit.

It was shown in Ref.~\cite{Feng2022} that for \( p \in [1,2] \), the \( p \)-norm magic, defined in Eq.~\eqref{eq:LpNorm}, satisfies
\begin{equation}
\label{eq:BoundLpA}
    d^{N/p}
    \;\leq\;
    L_p(\rho)
    \;\leq\;
    \bigl[ 1 + (d^N-1)(d^N+1)^{1 - p/2} \bigr]^{1/p},
\end{equation}
where, in the this regime, the lower bound is attained if and only if \( \rho \) is a stabilizer state, while the upper bound (equivalently, the maximal value of \( L_p \) whenever a fiducial state exists) is attained if and only if \( \rho \) is a fiducial state.
For \( p > 2 \), the ordering of the bounds is reversed:
\begin{equation}
\label{eq:BoundLpB}
    \bigl[ 1 + (d^N-1)(d^N+1)^{1 - p/2} \bigr]^{1/p}
    \;\leq\;
    L_p(\rho)
    \;\leq\;
    d^{N/p},
\end{equation}
where, in this case, the upper bound is attained if and only if \( \rho \) is a stabilizer state, whereas the lower bound (corresponding to the minimal value of \( L_p \) when a fiducial state exists) is attained if and only if \( \rho \) is a fiducial state.

\noindent
\emph{Remark.}
In view of the relation~\eqref{eq:SREandLpNorm} between SREs and $L^p$ norms, the bounds in Eqs.~\eqref{eq: Bound on SRE} and~\eqref{eq:BoundLpB} are merely two equivalent formulations of the same underlying constraint.
Consequently, a state saturates this bound (equivalently, maximizes all stabilizer SREs) if and only if it is a SIC-POVM fiducial state.
At present, however, this characterization does not guarantee the existence of such states in arbitrary dimensions.

\subsubsection{Fiducial States as Clifford Eigenstates}

Over the years, a substantial body of evidence has accumulated in the literature
supporting the view that SIC fiducial states possess distinguished symmetries
with respect to the Clifford group.
Starting from Zauner’s original thesis \cite{Zauner1999}, and continuing through the works of Appleby~\cite{Appleby2005,Appleby2012Imprimitivity,Appleby2012Galois}, Flammia~\cite{Flammia2006}, and later by Bengtsson, Appleby, Flammia, as well as by Len Bos and Shayne Waldron~\cite{BengtssonApplebyFlammia2018}, it has been consistently observed that every \emph{known} WH covariant SIC fiducial state is an eigenstate of a Clifford unitary of order three.
Canonical examples include the single-qubit $\ket{T}$ state (and its Clifford orbit), as well as the single-qutrit strange state $\ket{\mathbb{S}}$ and the Norell state (together with their respective Clifford orbits)~\cite{SIC2004,Feng2022}.

While this structural picture is strongly supported by numerical and theoretical evidence, it remains a conjecture rather than a fully proven statement in general dimensions.
Nevertheless, if the conjecture that every SIC-POVM fiducial state is a non-degenerate eigenstate of a Clifford operation holds, then SIC-POVM fiducial states indeed fall into the class of Clifford-stabilizer states considered in our work.
In that sense, the extremality results established for $L^p$-norm magic~\cite{Feng2022} can be viewed as a special, and very interesting, instance of the more general group-covariant extremality framework we developed.

\subsubsection{The Possibility of a Converse to our Main Theorem and SIC-POVM Fiducial States}

A direct corollary of our main theorem is the following one-directional statement.
\begin{corollary}
Let $ G \subset \mathrm{U}(\mH) $ be a finite subgroup, and let \( f : \mathrm{Herm}(\mH) \to \mathbb{R} \) be an \emph{analytic} functional that is invariant under $G$-conjugation, i.e.,
\[
    f(O) = f\!\left(g^\dagger O g\right)
    \qquad
    \forall\, g \in G.
\]
Then every $G$-stabilizer state is a critical point of the restriction of $f$ to the manifold of normalized pure states.
\end{corollary}
\noindent
This result establishes that group-stabilized pure states necessarily give rise to stationary points of any analytic $G$-covariant functional. Importantly, however, the statement holds only in this direction.

Motivated by the empirical observation that all currently known pure states which are extremal or critical for standard measures of quantum magic are Clifford-stabilizer states, it is natural to ask whether some form of a converse statement might hold.  
A tempting, but ultimately incorrect, guess is the following.

\noindent\textbf{Na\"ive Converse Claim.}\\
Let \( G \subset \mathrm{U}(\mH) \) be a finite subgroup, and let
\[
    f : \mathrm{Herm}(\mH) \longrightarrow \mathbb{R}
\]
be an \emph{analytic} functional that is invariant under conjugation by \( G \), i.e.,
\[
    f(O) = f\!\left(g^\dagger O g\right)
    \qquad
    \forall\, g \in G .
\]
Suppose that \( \ket{\psi} \in \mH_{1,d} \) is a normalized pure state such that the rank-one projector
\( \ket{\psi}\!\bra{\psi} \) is a critical point of the restriction of \( f \) to the manifold of normalized pure states.
Then \( \ket{\psi} \) must be a \( G \)-stabilizer state.

Despite its superficial plausibility, this converse statement is \emph{false in general}.
The obstruction is conceptual rather than technical: invariance under a finite symmetry group is simply too weak a constraint to enforce stabilizer structure.

Indeed, Given any functional that is invariant under conjugation by a finite group \( G \), it is, by definition, invariant under conjugation by any subgroup $H \subset G$.
Such functionals can admit critical points that are $G$-stabilizer states but not $H$ stabilizer states.
An example of such a construction for single qubits is the functional
\[
    f(\ket{\psi})
    \;=\;
    \sum_{\mu=0}^{3} a_\mu
    \bigl( \bra{\psi}\,\sigma_\mu\,\ket{\psi} \bigr)^2 ,
\]
where \( \{\sigma_\mu\}_{\mu=0}^3 = \{ \mathbb{I}, \sigma_x, \sigma_y, \sigma_z \} \) denotes the Pauli operators, and
\( \{a_\mu\}_{\mu=0}^3 \subset \mathbb{R} \) are pairwise distinct, nonzero coefficients.
This functional is manifestly invariant under conjugation by the Pauli group, since Pauli conjugation merely permutes (up to signs) the expectation values
\( \bra{\psi}\sigma_\mu\ket{\psi} \).
However, the explicit asymmetry in the coefficients \( a_\mu \) breaks invariance under any strictly larger subgroup of \( \mathrm{U}(2) \).
In particular, \( f \) is not invariant under any non-Pauli Clifford operation.
It is easy to verify this through explicit calculation.

Therefore, one might think that the claim should take the maximal group of symmetry of the functional under conjugation.
However, the claim will still not hold and will be wrong. A mere looking at Fig.~\ref{fig:Stabilizer fidelity for all single-qubit states}, would convince the reader that simple changes can be made in the function such that it stills have the Clifford-conjugation symmetry, but it can have minima or maxima that is not a Clifford-stabilizer state.

Despite these considerations, the results of the present work, together with the influential observations of Flammia and others, strongly motivate the following conjecture.
\begin{conjecture}[Stabilizer nature of SIC fiducials]
Every SIC-POVM fiducial state is a Clifford-stabilizer state.
\end{conjecture}
If true, this conjecture would imply that the extremality of SIC fiducials for $L^p$-norm magic and stabilizer R\'enyi entropies is not accidental, but rather a direct consequence of an underlying stabilizer symmetry.
From this perspective, SIC fiducials would represent a distinguished boundary case of stabilizer geometry, simultaneously maximizing non-stabilizerness while remaining fixed points of finite Clifford subgroups.

\section{Non-Degenerate Eigenstates of Cliffords for Two Qubits}
\label{sec:Non-Degenerate Eigenstates of Cliffords for Two Qubits}

In the single-qubit setting, the well-known states $\ket{H}$ and $\ket{T}$ are not only non-stabilizer states but also eigenstates of specific single-qubit Clifford unitaries: the Hadamard gate for $\ket{H}$ and a particular Clifford operator for $\ket{T}$. Their capability to be distilled into higher-fidelity versions using only Clifford operations and stabilizer measurements underlies many magic-state distillation protocols. This property is precisely why the authors of Ref.~\cite{KitaevBravyi} designated them as ``magic.''

Analogously, in higher dimensions, eigenstates of Clifford unitaries emerge as natural candidates for magic states, offering promising pathways for the fault-tolerant implementation of non-Clifford operations. By classifying two-qubit states up to Clifford equivalence, we can systematically identify distinct families of candidate magic states, much as the twenty single-qubit magic states can be reduced, under Clifford transformations, to two canonical representatives corresponding to $\ket{H}$ and $\ket{T}$.

In this section, we extend this framework and perspective to the two-qubit domain. Specifically, we aim to identify and analyze the eigenstates of representative elements from the conjugacy classes of the two-qubit Clifford group. This analysis might provide a foundation for prospective applications in multi-qubit magic-state distillation and the realization of non-Clifford gates within fault-tolerant architectures.

With the aid of Table~1 in Ref.~\cite{CliffordClassesForTwoQubits}, which classifies the two-qubit Clifford group into conjugacy classes, we identify all Clifford-inequivalent non-degenerate eigenstates of two-qubit Clifford operations.
The results are summarized in Table~\ref{tab:Clifford-inequivalent non-degenerate eigenstates of Clifford operations for two qubits}.
Further details are provided in Appendix~\ref{app:Finding the eigenstates of 2 qubits}.
It is worth noting, as shown in the table, that for these states the stabilizer fidelity is multiplicative. This aligns with two key facts: first, the states in question are Clifford-stabilizer states, for which the stabilizer fidelity equals the inverse of the stabilizer extent (see Appendix~\ref{app:group-stabilizer extent} or Theorem 4 in~\cite{Bravyi2019simulationofquantum}); and second, the stabilizer extent is known to be multiplicative for systems of up to three qubits (see Proposition 1 in \cite{Bravyi2019simulationofquantum}).

\begin{table}[h]
    \centering
    \resizebox{\linewidth}{!}{
    \begin{tabular}{|c|c|c|c|}
        \hline
        \textbf{State} & \textbf{\footnotesize{\makecell{Stabilizer\\Fidelity}}} & \textbf{\footnotesize{\makecell{Number of\\Nearest SS}}} & \textbf{\small{\makecell{Clifford\\equivalent}}} \\
        \hline
        \normalsize{}
        $|00\rangle$ &  1  &  1  & $\ket{00}$ \\ \hline
        $|H 0\rangle$ &  $\frac{2+\sqrt{2}}{4}\approx0.853553$  &  2  & $\ket{H0}$  \\ \hline
        $|T 0\rangle$  &  $\frac{3+\sqrt{3}}{6}\approx0.788675$  &  3  & $\ket{T0}$  \\ \hline   
        $\ket{HH}$  &  $\frac{3+2\sqrt{2}}{8}\approx0.728553$  &  4  & $\ket{HH}$  \\ \hline
        $|T H\rangle$  &  $\frac{(2+\sqrt{2})(3+\sqrt{3})}{24}\approx0.673176$  &  6  & $\ket{TH}$  \\ \hline
        $|TT\rangle$  &  $\frac{2+\sqrt{3}}{6}\approx0.622008$  &  9  & $\ket{TT}$ \\ \hline
        $\ket{G_{4},2}=\dfrac{2 \ket{00}+\ket{01}+\ket{10}}{\sqrt{6}}$  & $\frac{3}{4}=0.75$   &  2  & $\frac{\ket{T_0T_0}-\ket{T_1T_1}}{\sqrt{2}}$ \\ \hline
        $\ket{G_{16},1}$ , $\ket{G_{16},2}$ , $\ket{G_{16},3}$ , $\ket{G_{16},4}$  &  $\frac{5+\sqrt{5}+2\sqrt{5+2\sqrt{5}}}{20} \approx 0.669572$  &  5  &   \\ \hline
        $\ket{G_{20},1}$ , $\ket{G_{20},2}$ , $\ket{G_{20},3}$ , $\ket{G_{20},4}$  &  $\frac{5}{8}=0.625$  &  8  &  \\ \hline
    \end{tabular}
    }
    \caption{Clifford-inequivalent non-degenerate eigenstates of Clifford operations for two qubits with their stabilizer fidelity and number of nearest stabilizer states.}
    \label{tab:Clifford-inequivalent non-degenerate eigenstates of Clifford operations for two qubits}
\end{table}

The last three states in Table~\ref{tab:Clifford-inequivalent non-degenerate eigenstates of Clifford operations for two qubits} are newly identified. The first among them is Clifford-equivalent to the state $\frac{\ket{T_0T_0}-\ket{T_1T_1}}{\sqrt{2}}$, and we propose a distillation protocol based on the five-qubit code, originally designed for distilling the single-qubit $\ket{T}$ state, that leverages this structural similarity. 
We also note that, during the course of this work, an independent study was published on the subject of maximal magic in two-qubit systems, using the SRE of magic as the quantifier~\cite{Liu2025}. In that work, the states exhibiting maximal magic (for $\alpha=2$) are in fact Clifford-equivalent to the states denoted here by $\ket{G_{20},1;2;3;4}$.
For instance, the state $\ket{G_{20},1}$ considered in this paper is related to
\[ \ket{\psi_{\text{max},SRE,\alpha=2}}\equiv\frac{\ket{00}+i\ket{01}+i\ket{10}+i\ket{11}}{2}
\]
through the Clifford operation
\[
\ket{G_{20},1}= e^{i\frac{\pi}{4}} H_2\cdot S_1^3\cdot \mathrm{CZ} \cdot H_1 \cdot S_1 \cdot \mathrm{CZ}   \ket{\psi_{\text{max},SRE,\alpha=2}}.
\]

For completeness, we present below the $\alpha$-SRE values for all these Clifford-inequivalent, nondegenerate eigenstates of two-qubit Clifford operations. 
The entropy values are given as follows:
\begin{subequations}
    \begin{gather}
        M_\alpha\left( \ket{00} \right) = 0 ,\\
        M_\alpha\left( \ket{H_00} \right) = \frac{1}{1-\alpha} \log \left[ \tfrac{1}{2}\left( 1 + 2^{1-\alpha} \right) \right] ,\\
        M_\alpha\left( \ket{T_00} \right) = \frac{1}{1-\alpha} \log \left[ \tfrac{1}{2}\left( 1 + 3^{1-\alpha} \right)  \right] ,\\
        M_\alpha\left( \ket{H_0H_0} \right) = \frac{1}{1-\alpha} \log \left[ \tfrac{1}{4}\left( 1 + 2^{1-\alpha} \right)^2 \right] ,\\
        M_\alpha\left( \ket{T_0H_0} \right) = \frac{1}{1-\alpha} \log \left[ \tfrac{1}{4}\left( 1 + 3^{1-\alpha} \right) \left( 1 + 2^{1-\alpha} \right) \right] ,\\
        M_\alpha\left( \ket{T_0T_0} \right) = \frac{1}{1-\alpha} \log \left[ \tfrac{1}{4}\left( 1 + 3^{1-\alpha} \right)^2 \right] ,\\
        M_\alpha\left( \frac{\ket{T_0T_0}-\ket{T_1T_1}}{\sqrt{2}} \right) = \frac{1}{1-\alpha} \log \left[ \frac{1}{4} \left( 1 + 3 \cdot 9^{-\alpha}  + 4 \cdot \left( \frac{3}{2} \right) ^{1-2\alpha} \right) \right] ,\\
        M_\alpha\left( \ket{G_{16},1} \right) = \frac{1}{1-\alpha} \log \left[ \frac{1}{4}\Bigl( 1 + 5^{1 - \alpha} + 5^{1 - 2\alpha} \cdot \frac{5^{\alpha} + (5 + 2\sqrt{5})^{2\alpha}}{(5 + 2\sqrt{5})^{\alpha}} \Bigr) \right] ,\\
        M_\alpha\left( \ket{G_{20},1} \right) = M_\alpha\left( \frac{\ket{00}+i\ket{01}+i\ket{10}+i\ket{11}}{2} \right) = \frac{1}{1-\alpha} \log \left[ \frac{1}{4} \left( 1 + 3 \cdot 4^{1-\alpha}\right) \right].
    \end{gather}
\end{subequations}
As observed, the $\ket{G_{20},1}$ state exhibit the maximal SRE according to this measure, thereby saturating the upper bound in Eq.~(\ref{eq: Bound on SRE}).
In the special case $\alpha = 2$, the expressions above for the three newly introduced states simplify to
\begin{equation}
    M_2 \left( \frac{\ket{T_0T_0}-\ket{T_1T_1}}{\sqrt{2}} \right) = \log \frac{9}{5}
    \quad , \quad
    M_2 \left( \ket{G_{16},1} \right) = \log \frac{25}{12}
    \quad , \quad
    M_2 \left( \ket{G_{20},1} \right) = \log \frac{16}{7}.
\end{equation}

We now investigate the behavior of $ M_{2} $ in a specific example, focusing on a neighborhood of any state in the Clifford orbit of the state $ \ket{G_{20},1} $.
Consider the following orthonormal basis of two-qubit states, each of which attains the maximal value of $M_2$:~\cite{Liu2025}
\begin{subequations}
    \begin{align}
        \ket{\psi_{\text{max},0}} &\equiv
        \tfrac{1}{2}\bigl(\ket{0} + i\ket{1} + i\ket{2} + i\ket{3}\bigr),\\[4pt]
        \ket{\psi_{\text{max},1}} &\equiv
        \tfrac{1}{2}\bigl(i\ket{0} + \ket{1} + \ket{2} - \ket{3}\bigr),\\[4pt]
        \ket{\psi_{\text{max},2}} &\equiv
        \tfrac{1}{2}\bigl(i\ket{0} + \ket{1} - \ket{2} + \ket{3}\bigr),\\[4pt]
        \ket{\psi_{\text{max},3}} &\equiv
        \tfrac{1}{2}\bigl(-i\ket{0} + \ket{1} - \ket{2} - \ket{3}\bigr).
    \end{align}
\end{subequations}
Each of these basis states satisfies
\[
    M_2 = \log  \left(\tfrac{16}{7}\right),
\]
and thus represents a maximally non-stabilizer configuration in the two-qubit Hilbert space.

To examine the local behavior of $\Xi_2$ around one of these states, we vary $\ket{\psi_{\text{max},0}}$ along a general orthogonal direction parameterized as
\begin{equation}
    \ket{\varphi} =
    e^{i\phi_1}\cos\theta_1 \ket{\psi_{\text{max},0}}
    + e^{i\phi_2}\sin\theta_1\cos\theta_2 \ket{\psi_{\text{max},1}}
    + e^{i\phi_3}\sin\theta_1\sin\theta_2 \ket{\psi_{\text{max},2}}.
\end{equation}
The resulting expansion of $\Xi_2(\ket{\psi_{\text{max},0}(\epsilon)})$ yields the second-order coefficient
\begin{equation}
    \begin{split}
        \Xi_2^{(2)}
        &= \frac{1}{16}
        \Bigl(
        -7
        + 3\cos^2  \theta_1 [4 + \cos(2\phi_1)]
        \\
        &\quad
        + 3\sin^2  \theta_1\bigl[
            \cos^2  \theta_2 [4 + \cos(2\phi_2)]
            + \sin^2  \theta_2 [4 + \cos(2\phi_3)]
        \bigr]
        \Bigr).
    \end{split}
\end{equation}
Since $\Xi_2^{(2)} > \tfrac{1}{8}$ in all directions, it follows that $\ket{\psi_{\text{max},0}}$ is a local \emph{minimum} of $\Xi_2$ and hence a maximizer of the $M_2$ measure. In fact, it was shown in~\cite{Liu2025} that this is the global maximum of $\Xi_2$ for two-qubit states.

\section{Notable Three-Qubit Examples}
\label{sec:Three-Qubit States}

During the course of our calculations, we encountered several interesting three-qubit states that extremize the stabilizer fidelity. Among them, we highlight the following:

\begin{enumerate}
    \item The state
    \begin{equation}
        \frac{\ket{100} + \ket{010} + \ket{001}}{\sqrt{3}}
    \end{equation}
    has two nearest stabilizer states among all three-qubit states, with a stabilizer fidelity of $\frac{3}{4}$. This observation suggests that it may be Clifford-equivalent to a tensor product of $\ket{0}$ and the two-qubit state $\ket{G_4,2}$ from Table~\ref{tab:Clifford-inequivalent non-degenerate eigenstates of Clifford operations for two qubits}. A randomized Clifford search confirms this conjecture, yielding:
    \begin{equation}
        C_1 \left( \frac{\ket{100} + \ket{010} + \ket{001}}{\sqrt{3}} \right) = i \ket{0} \ket{G_4,2}   ,
    \end{equation}
    where the Clifford operator $C_1$ is given by:
    \begin{equation}
        C_1 = H_2 S_1  \text{CZ}_{23}  H_1 H_2 H_3  S_3 S_2 S_1  \text{CZ}_{12} \text{CZ}_{13}  S_1  H_1 H_3 H_3  \text{CZ}_{23}  S_1   .
    \end{equation}

    \item Similarly, the state
    \begin{equation}
        \frac{\ket{100} + \ket{010} + \ket{001} + i\ket{111}}{2}
    \end{equation}
    has eight nearest stabilizer states and a stabilizer fidelity of $\frac{5}{8}$. This suggests a possible Clifford equivalence to $\ket{0} \ket{G_{20},1}$, as listed in Table~\ref{tab:Clifford-inequivalent non-degenerate eigenstates of Clifford operations for two qubits}. A randomized search verifies this hypothesis by establishing:
    \begin{equation}
        \frac{i - 2}{\sqrt{5}} \cdot \frac{\ket{100} + \ket{010} + \ket{001} + i\ket{111}}{2} = C_2 \ket{0} \ket{G_{20},4}   ,
    \end{equation}
    where $C_2$ is the Clifford operator
    \begin{equation}
        C_2 = \text{CZ}_{13} S_3  \text{CZ}_{13} S_3  H_1  \text{CZ}_{13} \text{CZ}_{12}  S_1 S_1  H_3 H_2 H_1  \text{CZ}_{12} S_2 H_2  \text{CZ}_{12} \text{CZ}_{23} \text{CZ}_{13}   .
    \end{equation}
    As noted in the previous section, the state $\ket{G_{20},4}$ attains the maximal SRE (for $\alpha = 2$) among all two-qubit states.

    \item On the other hand, the Toffoli state
    \begin{equation}
        \ket{\text{TOF}} = \frac{\ket{000} + \ket{010} + \ket{100} + \ket{111}}{2},
    \end{equation}
    which is Clifford-equivalent to the $\text{CCZ}$ state,
    \begin{equation}
        \ket{\text{CCZ}} = \text{CCZ} \ket{+++} = H_3 \ket{\text{TOF}} = \frac{1}{\sqrt{8}} \sum_{x,y,z \in \{0,1\}} (-1)^{xyz} \ket{xyz}   ,
    \end{equation}
    has eight nearest stabilizer states with a stabilizer fidelity of $\frac{9}{16}$. Unlike the previous examples, it is not evident whether the CCZ state is Clifford-equivalent to a tensor product involving lower-dimensional states. Moreover, an exhaustive randomized search indicates that it is not a non-degenerate eigenstate of any Clifford operator on three qubits. Nonetheless, the Toffoli and CCZ states exhibit significant symmetry, possess known distillation protocols using Clifford operations \cite{Toffoli1,Toffoli2,Gupta2024}, and extremize stabilizer fidelity. These features, in fact, constitute part of the motivation underlying the present work, and these states might be stabilizer states for some subgroup of the Clifford group of three qubits.
\end{enumerate}

\newpage
\section{An Inefficient Distillation Protocol Demonstrating Magic}
\label{sec:An Inefficient Distillation Protocol Demonstrating Magic}

The following three observations suggest the possibility of a new process of magic state distillation:
\begin{itemize}
    \item The gate $\hat{T} \equiv e^{i\pi/4} SH$ is transversal in the perfect five-qubit code, playing a central role in which its eigenstate $\ket{T_0}$ is distillable via this code.
    \item The state $\frac{\ket{T_0T_0} - \ket{T_1T_1}}{\sqrt{2}}$ is a Clifford-stabilizer state, as shown in Table~\ref{tab:Clifford-inequivalent non-degenerate eigenstates of Clifford operations for two qubits}.
    \item The state $\frac{\ket{T_0T_0} - \ket{T_1T_1}}{\sqrt{2}}$ is an eigenstate of $\hat{T} \otimes \hat{T}^{-1}$.
\end{itemize}
These observations lead to the conjecture that the state $\frac{\ket{T_0T_0} - \ket{T_1T_1}}{\sqrt{2}}$ might be distillable via ``a direct product of the perfect code with itself''. Since this state is degenerate, it is expected to mix with the other orthogonal eigenstate:
\[
\frac{\ket{T_0T_0} + \ket{T_1T_1}}{\sqrt{2}} = \frac{\ket{00} + i \ket{11}}{\sqrt{2}},
\]
which is a stabilizer state. The key question is whether distillation remains effective under these conditions.

Indeed, defining the states:
\begin{subequations}
    \begin{gather}
        \ket{\psi_{0}}=\ket{\psi_{00}} = \frac{\ket{T_0 T_0} - \ket{T_1 T_1}}{\sqrt{2}}   , \\
        \ket{\psi_{1}}=\ket{\psi_{01}} = \ket{T_0 T_1}   , \\
        \ket{\psi_{2}}=\ket{\psi_{10}} = \ket{T_1 T_0}   , \\
        \ket{\psi_{3}}=\ket{\psi_{11}} = \frac{\ket{T_0 T_0} + \ket{T_1 T_1}}{\sqrt{2}}   ,
    \end{gather}
\end{subequations}

and encoding two logical qubits by encoding each one using the perfect five-qubit code, we consider an initial state of five copies of a two-qubit system given by:
\begin{equation}
    \rho = (1 - \epsilon_1 - \epsilon_2 - \epsilon_3) \ket{\psi_{00}} \bra{\psi_{00}} + \epsilon_1 \ket{\psi_{01}} \bra{\psi_{01}} + \epsilon_2 \ket{\psi_{10}} \bra{\psi_{10}} + \epsilon_3 \ket{\psi_{11}} \bra{\psi_{11}}   .
    \label{eq:faulty rho}
\end{equation}
This state can be obtained by a dephasing process where the operations $I$, $T \otimes T^{-1}$, and $T^2 \otimes T^{-2}$ are applied with equal probability of $\frac{1}{3}$ each, followed by the application of $I$ or the entangling Clifford gate
\[\begin{pmatrix}
0 & 0 & 0 & i \\
0 & 1 & 0 & 0 \\
0 & 0 & 1 & 0 \\
- i & 0 & 0 & 0
\end{pmatrix},\]
each with probability $\frac{1}{2}$, demonstrating the fact that these are Clifford-stabilizer states.
Notably, a slower distillation process can be performed, yielding a higher fidelity of $0.75$, compared to $0.622$ when distilling two $\ket{T_0}$ states solely using the five-qubit code. This improvement is illustrated in Table~\ref{tab:Clifford-inequivalent non-degenerate eigenstates of Clifford operations for two qubits}. Now we want to show that this distillation occurs.

First, we outline the procedure. Consider two systems, denoted $ A $ and $ B $, each comprising five qubits. The Hilbert space of the $ i $-th qubit ($ i = 1, \dots, 5 $) in system $ s \in \{A, B\} $ is denoted by $ \mH_{s,i} $. The total Hilbert space of the ten-qubit composite system is then
\[
\mH = \bigoplus_{i=1}^{5} \left( \mH_{A,i} \oplus \mH_{B,i} \right).
\]
We refer to each two-qubit subsystem $ \mH_i = \mH_{A,i} \oplus \mH_{B,i} $ as the $ i $-th pair. Each pair is initially prepared in the pure state $ \ket{\psi_{00}} $; however, due to imperfect preparation and subsequent dephasing, the actual state of each pair is the mixed state $ \rho $ defined in Eq.~(\ref{eq:faulty rho}) or (\ref{eq:rho_initial}).
Next, we measure the stabilizer generators of the well-known five-qubit perfect code on both system $ A $ and system $ B $ independently (see Fig.~\ref{fig:new_distillation} for an illustration). We regard the distillation process as successful when all stabilizer measurements yield the $ +1 $ outcome- that is, when the trivial syndrome is obtained and each subsystem is projected onto the corresponding code space of its five-qubit code. Conditioned on this success, the two logical qubits encoded in systems $ A $ and $ B $ are significantly closer- in trace distance or fidelity- to the ideal state $ \ket{\psi_{00}}\bra{\psi_{00}} $ than the initial (ensemble of the) five noisy pairs. The following proposition starts the proof of this statement.

\begin{figure}[h!]
    \centering
    \includegraphics[width=0.99\linewidth,page=3]{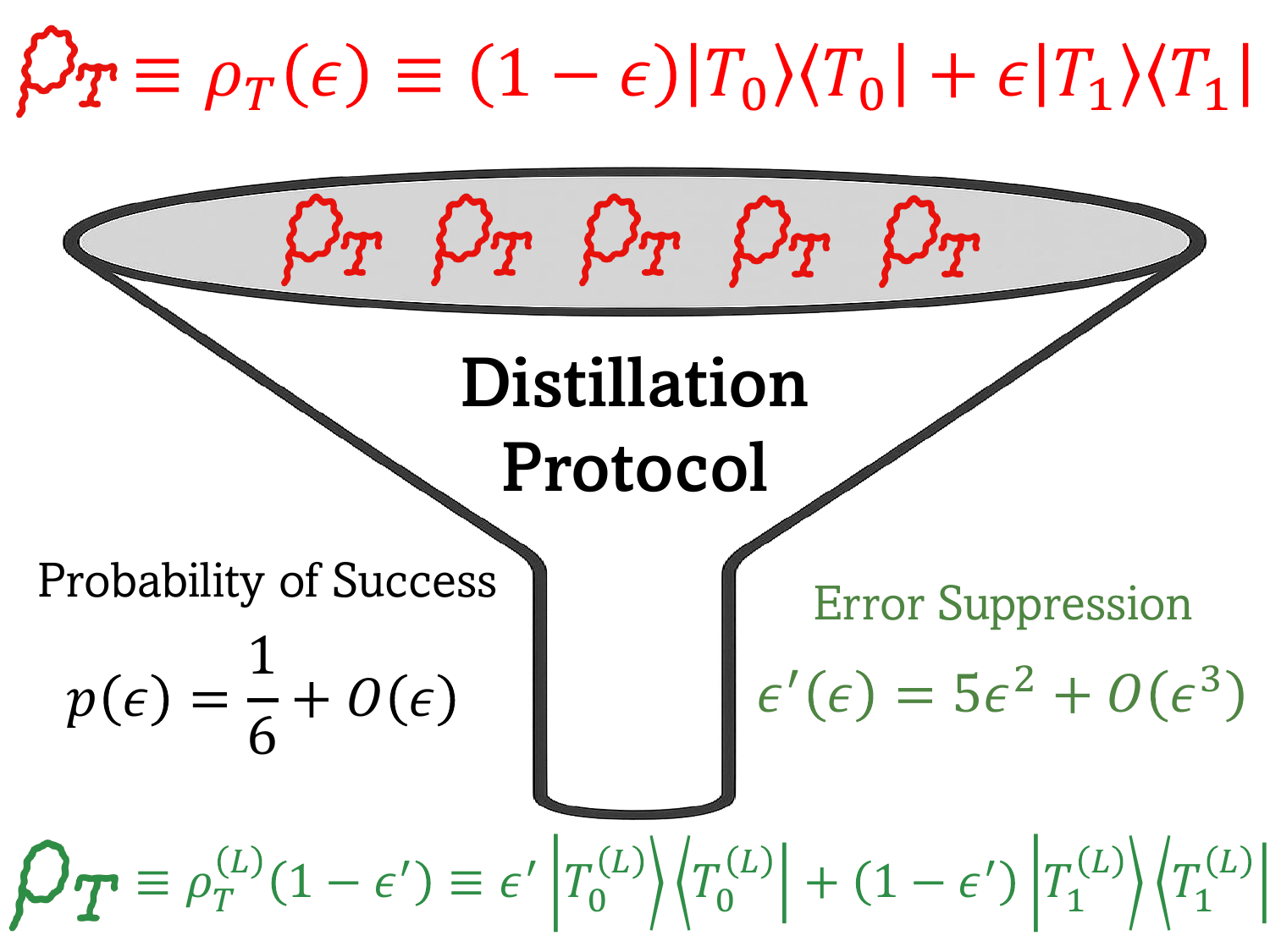}
\caption{
Schematic illustration of the distillation protocol for the state $\ket{\psi_{00}}$. 
(a)~The eight stabilizers of the full ten-qubit system, showing systems $A$ (blue) and $B$ (green) separately, each encoding a logical qubit via the five-qubit perfect code. 
(b)~The same stabilizers represented after reordering the physical qubits according to the convention described in the main text. 
(c)~Preparation stage: each pair $\mH_i$ is initialized in the imperfect state $\rho$ defined in Eq.~(\ref{eq:faulty rho}) or (\ref{eq:rho_initial}), representing a dephased noisy version of $\ket{\psi_{00}}$. 
(d)~Conceptual illustration of the distillation step: simultaneous stabilizer measurements project both subsystems onto their respective code spaces whenever all syndromes are trivial (i.e., all stabilizer outcomes are $+1$). 
Conditioned on this outcome, the resulting logical qubits in $\mH$ exhibit a reduced error rate, with the dominant contribution suppressed linearly in the original preparation error.
}
    \label{fig:new_distillation}
\end{figure}

\begin{proposition}
\label{prop:preserving the structure}
    Consider the mixed state
    \begin{equation}
    \label{eq:rho_initial}
            \rho = (1 - \epsilon_1 - \epsilon_2 - \epsilon_3) \ket{\psi_{0}} \bra{\psi_{0}}
             + (a + b i)\ket{\psi_0}\bra{\psi_3} + (a - b i)\ket{\psi_3}\bra{\psi_0}+ \sum_{i=1}^3 \epsilon_i \ket{\psi_{i}} \bra{\psi_{i}},
    \end{equation}
    where $\epsilon_1$, $\epsilon_2$, $\epsilon_3$, $a$, and $b$ are real parameters constrained such that $\rho$ is a valid density operator (i.e., positive semidefinite and $\mathrm{Tr} \rho = 1$).
    We consider five independent copies of this state, each characterized by its own set of parameters $\{\epsilon_1, \epsilon_2, \epsilon_3, a, b\}$.
    Upon measuring the stabilizer generators of the code described above and obtaining the trivial syndrome (all $+1$ outcomes), the post-selected logical state retains the same algebraic form as Eq.~(\ref{eq:rho_initial}), but with renormalized parameters that are nonlinear functions of the initial ones.
\end{proposition}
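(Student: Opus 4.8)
The plan is to recast the algebraic form of Eq.~\eqref{eq:rho_initial} as a \emph{symmetry} (commutant) condition and then propagate that symmetry through the encoding and post-selection. First I would observe that, since $\hat{T}=e^{i\pi/4}SH$ has determinant $1$ and eigenvalues $e^{\pm i\pi/3}$, the operator $\hat{T}\otimes\hat{T}^{-1}$ is of order three: it has eigenvalue $1$ on the two-dimensional subspace $\Span\{\ket{T_0T_0},\ket{T_1T_1}\}=\Span\{\ket{\psi_0},\ket{\psi_3}\}$, together with the two distinct nontrivial cube roots of unity $\omega=e^{2\pi i/3}$ and $\omega^2$ on the nondegenerate lines $\mathbb{C}\ket{\psi_1}$ and $\mathbb{C}\ket{\psi_2}$. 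The Hermitian operators commuting with $\hat{T}\otimes\hat{T}^{-1}$ are therefore block-diagonal: an arbitrary $2\times 2$ Hermitian block on the eigenvalue-$1$ subspace together with arbitrary real weights on the two nondegenerate lines. Matching entries, this is \emph{exactly} the family in Eq.~\eqref{eq:rho_initial}, with the single surviving coherence $a+bi$ occupying the $\ket{\psi_0}\!\bra{\psi_3}$ slot. Thus proving the proposition reduces to showing that the post-selected logical state again commutes with a logical copy of $\hat{T}\otimes\hat{T}^{-1}$.

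Next I would lift this symmetry to the ten-qubit system. Because each of the five pairs is in a state commuting with $\hat{T}\otimes\hat{T}^{-1}$ (regardless of its individual parameters), the full input $\rho^{\otimes 5}$ commutes with $U\coloneqq \hat{T}_A^{\otimes 5}\otimes(\hat{T}^{-1})_B^{\otimes 5}$. Transversality of $\hat{T}$ in the five-qubit perfect code means $\hat{T}^{\otimes 5}$ preserves the code space, so it commutes with the code projector; hence $U$ commutes with the total trivial-syndrome projector $\Pi\coloneqq \Pi_A\otimes\Pi_B$. The key step is then that post-selection cannot break this symmetry: writing the unnormalized conditional state as $\Pi\,\rho^{\otimes 5}\,\Pi$, both factors commute with $U$, so
\[
    U\,\Pi\,\rho^{\otimes 5}\,\Pi
    = \Pi\,U\,\rho^{\otimes 5}\,\Pi
    = \Pi\,\rho^{\otimes 5}\,U\,\Pi
    = \Pi\,\rho^{\otimes 5}\,\Pi\,U,
\]
i.e.\ the output commutes with $U$ as well.

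Restricted to the logical space $\mathrm{code}_A\otimes\mathrm{code}_B\cong\mathbb{C}^2\otimes\mathbb{C}^2$, the operator $U$ acts as $\hat{T}_{L,A}^{\,s}\otimes\hat{T}_{L,B}^{-s}$ with $s=\pm 1$, where $\hat{T}_L$ is the logical gate implemented by transversal $\hat{T}$, whose eigenstates are by construction the encoded $\ket{T_0}_L,\ket{T_1}_L$ (this is precisely what makes $\ket{T_0}$ distillable by the code). Since $\hat{T}_L$ shares the spectrum $\{e^{\pm i\pi/3}\}$ of $\hat{T}$, the logical operator $\hat{T}_{L,A}^{\,s}\otimes\hat{T}_{L,B}^{-s}$ has the same eigenvalue-degeneracy pattern as $\hat{T}\otimes\hat{T}^{-1}$: a two-dimensional eigenvalue-$1$ block $\Span\{\ket{\psi_0}_L,\ket{\psi_3}_L\}$ and two distinct nontrivial eigenvalues on $\ket{\psi_1}_L,\ket{\psi_2}_L$. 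Commuting with it therefore confines the normalized output $\rho_{\mathrm{out}}=\Pi\rho^{\otimes 5}\Pi/\Tr[\Pi\rho^{\otimes 5}\Pi]$ to the same commutant, which is the algebraic form of Eq.~\eqref{eq:rho_initial}; the sign $s$ at most interchanges the labels of $\ket{\psi_1}$ and $\ket{\psi_2}$, leaving the form invariant, and the renormalized parameters are nonlinear in the originals because of the division by the success probability.

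The main obstacle is the logical-action step: one must verify that transversal $\hat{T}$ really induces a logical unitary $\hat{T}_L$ conjugate to $\hat{T}$, with the encoded magic states as its eigenvectors and with its full (distinct, order-$6$) spectrum intact, so that the two-fold degeneracy protecting the $\ket{\psi_0}$–$\ket{\psi_3}$ coherence survives while the $\ket{\psi_1},\ket{\psi_2}$ lines stay nondegenerate and decoupled. Everything else is bookkeeping: once the commutant of the logical $\hat{T}_L\otimes\hat{T}_L^{-1}$ is identified with the template of Eq.~\eqref{eq:rho_initial}, no coherence can appear between the degenerate block and the nondegenerate lines, nor between $\ket{\psi_1}$ and $\ket{\psi_2}$, and the form is preserved automatically. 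The explicit renormalized parameters could be obtained afterwards by evaluating $\Pi\rho^{\otimes 5}\Pi$ in the logical basis, but the proposition asserts only the preservation of form, which the symmetry argument yields directly.
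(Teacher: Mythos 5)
Your proof is correct and rests on the same two key facts as the paper's: the eigenvalue pattern of $\hat{T}\otimes\hat{T}^{-1}$ on the $\ket{\psi_n}$ (eigenvalue $1$ doubly degenerate on $\Span\{\ket{\psi_0},\ket{\psi_3}\}$, nondegenerate $\omega,\omega^2$ on $\ket{\psi_1},\ket{\psi_2}$) and the fact that the transversal logical operator $\hat{T}_{A,L}\hat{T}_{B,L}^{-1}$ commutes with the code projector $\Pi$. The only difference is packaging: you phrase the conclusion as membership in the commutant of $U$, whereas the paper tracks explicitly which coherences $\ket{\psi_{\vec{n}}}\bra{\psi_{\vec{m}}}$ survive the projection (those with $\sum_i n_i \equiv \sum_i m_i \bmod 3$) --- the two formulations are equivalent.
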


\begin{proof}[Proof of Proposition~\ref{prop:preserving the structure}]
    Note that $\hat{T}\otimes\hat{T}^{-1}\ket{\psi_n} = e^{i \frac{2\pi n}{3}}\ket{\psi_n}$ for each pair. Therefore, denoting $\ket{\psi_{\vec{n}}}\equiv\bigotimes_{i=1}^5 \ket{\psi_{n_i}}_i$ and applying this logically (transversally in this code) on the logical qubits, we get:
    \begin{equation}
        \hat{T}_{A,L}\hat{T}_{B,L}^{-1}\ket{\psi_{\vec{n}}} = e^{i \frac{2\pi}{3}\sum_{i=1}^5 n_i} \ket{\psi_{\vec{n}}}.
    \end{equation}
    Since $\hat{T}_{A,L}\hat{T}_{B,L}^{-1}$ commutes with the projector onto the code space, $\Pi$, we conclude that:
    \begin{subequations}
        \begin{gather}
            \Pi \ket{\psi_{\vec{n}}} \propto \Pi \ket{\psi_{(2,2,2,2,2)}} \quad \text{if} \quad \sum_{i=1}^5 n_i \mod 3 = 1, \\
            \Pi \ket{\psi_{\vec{n}}} \propto \Pi \ket{\psi_{(1,1,1,1,1)}} \quad \text{if} \quad \sum_{i=1}^5 n_i \mod 3 = 2, \\
            \Pi \ket{\psi_{\vec{n}}} = \alpha_{\vec{n}} \Pi \ket{\psi_{(0,0,0,0,0)}} + \beta_{\vec{n}} \Pi \ket{\psi_{(3,3,3,3,3)}} \quad \text{if} \quad \sum_{i=1}^5 n_i \mod 3 = 0.
        \end{gather}
    \end{subequations}
    The density matrix of a system consisting of five instances of the state in Eq.~(\ref{eq:rho_initial}), not necessarily with identical parameter values, is a linear combination of terms of the form $\ket{\psi_{\vec{n}}} \bra{\psi_{\vec{m}}}$, where the indices satisfy the condition 
    \begin{equation}
        \sum_{i=1}^{5} n_i \mod 3 = \sum_{i=1}^{5} m_i \mod 3.
    \end{equation}
    Thus, with projecting onto the code space, the proposition follows.
\end{proof}

A direct calculation for the case of five exact copies leads to the updated parameters to the linear order as follows:
\begin{subequations}
    \begin{gather}
        \epsilon_1^\prime=\frac{45}{49} \epsilon_1   , \\
        \epsilon_2^\prime=\frac{45}{49} \epsilon_2   , \\
        \epsilon_3^\prime=\frac{5}{49} \epsilon_3   , \\
        a^\prime+b^\prime i = -\frac{5}{7}(a+bi)   .
    \end{gather}
\end{subequations}
with a probability of $p=\left(\frac{7}{48}\right)^2+O(\epsilon_1,\epsilon_2,\epsilon_3,a,b)$ for measuring the trivial syndrome. Starting with $a=b=\epsilon_1=\epsilon_2=0,\epsilon_3=\epsilon$, the probability of the right syndrome is $p=\frac{49 - 240 \epsilon + 600 \epsilon^2 - 640 \epsilon^3 + 240 \epsilon^4}{2304}$ and the updated error is $\epsilon^\prime=\frac{\epsilon \left(5 + 100 \epsilon - 240 \epsilon^2 + 160 \epsilon^3 - 16 \epsilon^4\right)}
    {49 - 240 \epsilon + 600 \epsilon^2 - 640 \epsilon^3 + 240 \epsilon^4}$.

Although the distillation process is very slow, as $\epsilon^\prime$ scales linearly with $\epsilon$, we have demonstrated that this new state can indeed be distilled using stabilizer codes.
Moreover, this non-stabilizer state can be used as a magic state for state-injection into a stabilizer-code-based quantum computation, thereby enabling universal quantum computation.
Thus, it qualifies as a magic state, in the same sense as $T$-states, $H$-states, $\text{CZZ}$-states, $\text{TOF}$-states, and other single-qudit magic states, such as the strange states in single qutrits.
Its stabilizer fidelity exceeds that of other known magic two-qubit states like $\ket{TT}$, $\ket{TH}$, and $\ket{HH}$, leaving open the possibility that a more efficient distillation protocol exists, one that could suppress errors more rapidly and enhance the practicality of this state as a magic resource.

\section{$L^{p}$-Norm Generalizations of Magic Quantifiers}
\label{sec: Candidate Measures to investigate}

Motivated by the role of $L^{p}$ norms of characteristic functions in phase-space formulations, as well as by the structure of stabilizer Rényi entropies (SREs), we propose a natural $L^{p}$-norm generalization of two standard quantifiers of non-stabilizerness: the stabilizer fidelity and the mana.
Specifically, we introduce the following two families of functionals on pure states:
\begin{subequations}
\begin{gather}
    F_p(\ket{\psi})
    :=
    \left( \sum_{\ket{s} \in \mSS_{N,d}}
        \bigl| \langle \psi | s \rangle \bigr|^{p} \right)^{1/p} , \\
    M_p(\ket{\psi})
    :=
    \left( \sum_{\boldsymbol{\chi} \in \mathbb{V}_{N,d}}
        \bigl| W_{\boldsymbol{\chi}} (\ket{\psi}\bra{\psi}) \bigr|^{p} \right)^{1/p} .
\end{gather}
\end{subequations}
For $p=\infty$, $F_p$ reduces to the square root of the stabilizer fidelity, while $M_1$ coincides with the mana.
It would be of interest to investigate the mathematical properties of these functionals, their behavior under Clifford operations and stabilizer-preserving maps, and assess the extent to which they qualify as valid magic measures within the stabilizer resource theory.

\section{Conclusion}
\label{sec:Conclusion}

The framework of group-covariant functionals developed in this work provides a unifying geometric perspective on the extremal behavior of magic measures and similar quantities.
As an important specialization of this general theory, we obtain a clear structural understanding of quantum states that extremize standard measures of non-stabilizerness or magic.
Specifically, we showed that pure states stabilized by finite subgroups of the Clifford group, which belong to some Clifford-stabilized space in our terminology, serve as extremizers or critical states for the stabilizer fidelity, the stabilizer Rényi entropies, the generalized stabilizer entropies and the mana of magic when variations are restricted to directions orthogonal to their stabilized spaces. This result identifies these states as special geometric points in the magic landscape, which are critical under local perturbations, and hence fundamental to the geometry of non-stabilizerness.

Importantly, our analysis is not confined to the conventional Pauli-Clifford framework.
Instead, it reveals a more general connection between the extremality of suitably symmetric functionals and underlying group-theoretic stabilization principles.
This perspective naturally extends to a broad class of resource theories, including measures of fermionic magic and non-Gaussianity in finite-dimensional Hilbert spaces \cite{FM1,FM2,FM3,FM4,FM5}.
In such settings, the relevant group is the group of Gaussian unitary transformations rather than the Clifford group.
Within this generalized framework, our results encompass recently proposed commutant-based measures of non-Gaussianity too~\cite{FermionicMagic2025}, placing them within a unified structural picture.
Furthermore, we introduced the notion of the \emph{group-stabilizer extent} (Appendix~\ref{app:group-stabilizer extent}), which extends the standard stabilizer extent beyond the Clifford setting to arbitrary finite unitary groups.
This construction provides a refined and flexible quantitative tool for characterizing the resource content of structured quantum states, and highlights the central role of symmetry and covariance in shaping extremal behavior across diverse quantum resource theories.

Through explicit examples, we classified and analyzed the critical behavior of non-degenerate Clifford eigenstates in several important systems.
For single qudits, we characterized the extremizing behavior of these measures in dimensions $d=2,3,5$, revealing that the Clifford-stabilizer framework naturally captures all known single-qudit magic candidates and exposes new ones.
For $d=5$, we completely characterized the extremizing behavior of the mana of magic at all these states but one, though we left the behavior of the stabilizer fidelity and Rényi entropies for the reader.
For two qubits, we identified all Clifford-inequivalent non-degenerate eigenstates of Clifford operations, including three previously unreported ones. Among these, we highlighted a new two-qubit magic state that, up to Clifford equivalence, coincides with the state recently shown to maximize the SRE of magic~\cite{Liu2025}. 

Furthermore, we proposed an explicit (albeit inefficient) distillation protocol that produces this state, whose stabilizer fidelity surpasses known benchmarks such as $\ket{TT}$ and $\ket{TH}$. This provides operational evidence that the extremizing structures we identify are not merely mathematical curiosities but also physically relevant distillation targets.

Our numerical results also support a broader conjecture: \emph{any Clifford-stabilizer state with a non-vanishing Wigner function locally maximizes the mana of magic under perturbations orthogonal to its stabilized subspace}. 
If true, this conjecture would establish a deep connection between phase-space positivity and local extremality in the geometry of quantum magic.

Another central open question emerging from our analysis concerns the structural nature of SIC-POVM fiducial states.
Motivated by the observed extremality of SIC fiducials for $L^p$-norm magic and stabilizer R\'enyi entropies, together with numerical and theoretical evidence indicating that all known fiducials are eigenstates of finite-order Clifford unitaries, we have put forward the conjecture that every SIC-POVM fiducial state is a Clifford-stabilizer state.
If true, this conjecture would imply that the distinguished role of SIC fiducials in quantum information is not merely a consequence of their informational completeness, but rather reflects an underlying stabilizer symmetry that places them at a boundary between stabilizer geometry and maximal non-stabilizerness.

Several other open questions remain.
Chief among them is whether Clifford-stabilizer states constitute the only pure states that locally extremize magic monotones.
Another important direction concerns the design of efficient distillation protocols for the candidate states identified in this work, especially the high-fidelity new two-qubit state for which we proposed an inefficient distillation protocol.
A particularly accessible and concrete direction for future work is the complete classification of Clifford-stabilizer states for two qubits. This task is now within reach, especially in light of recent enumerations of all subgroups of the two-qubit Clifford group~\cite{Grier2022classificationof,kubischta2024classificationsubgroupstwoqubitclifford}.
A systematic investigation of these states, their extremizing properties, and their potential utility in magic distillation could yield valuable insights and possibly lead to new, practically relevant protocols.
Finally, pursuing a deeper understanding of the behavior of other magic measures under perturbations orthogonal to stabilized subspaces naturally presents itself as a promising avenue for revealing new geometric principles that govern the structure of quantum resources.

\section*{Acknowledgments}

Our work has been supported by the Israel Science Foundation (ISF) and the Directorate for Defense Research and Development (DDR\&D) Grant No. 3427/21, the ISF Grant No. 1113/23, and the US-Israel Binational Science Foundation (BSF) Grants No. 2020072 and 2024140.
We are grateful to X. Turkeshi and S. Luo for valuable comments regarding the role of $L^p$-based measures, closely related to stabilizer Rényi entropies, and for insightful discussions and suggestions.

\printbibliography

\appendix

\newpage
\section{Group-Stabilizer Extent
\label{app:group-stabilizer extent}}

In this appendix, we introduce and study the concept of \emph{group-stabilizer extent}, a natural generalization of stabilizer extent to quantum states stabilized by finite subgroups of the Clifford group. Although not directly essential to the main results of this work, this quantity fits naturally into the broader framework developed here and offers a deeper perspective on resource quantification within group-stabilized subspaces. In a similar spirit to our generalization of stabilizer fidelity, the group-stabilizer extent exemplifies how resource measures can be extended to more structured settings, potentially leading to new insights and applications.

\subsection*{The Stabilizer Extent}

We begin by recalling the \emph{stabilizer extent}~\cite{Bravyi2019simulationofquantum}, a central quantity in the resource theory of magic:
\begin{equation}
    \xi(\ket{\psi})
    :=
    \min
    \left\{
        \left( \sum_{\ket{s} \in \mSS_{N,d}} |c_{\ket{s}}| \right)^{2}
        \; \middle| \;
        \ket{\psi} = \sum_{\ket{s} \in \mSS_{N,d}} c_{\ket{s}} \ket{s}
    \right\}.
\end{equation}
The stabilizer extent measures the ``best'' decomposition of a given pure state $\ket{\psi}$ into stabilizer states in terms of the minimal possible $\ell_{1}$-norm of the coefficient vector.
Operationally, it quantifies the minimal overhead required to simulate $\ket{\psi}$ using stabilizer-state expansions and plays a key role in classical simulation algorithms: States with small extent admit efficient simulation, whereas states with large extent generally do not.  
Among all stabilizer decompositions of $\ket{\psi}$, the optimal one is sometimes called the \emph{minimum stabilizer decomposition}.

For pure states, the stabilizer extent is directly related to the Max-relative entropy of magic:
\[
    \mD_{\mathrm{max}}(\ket{\psi}) = \log \xi(\ket{\psi}).
\]
The Max-relative entropy of magic (equivalently, the logarithm of the generalized robustness) is defined for arbitrary states by~\cite{LiuBuTakagi2019,Regula2021}
\begin{equation}
    \mD_{\mathrm{max}}(\rho)
    :=
    \log\bigl( 1 + R_{g}(\rho) \bigr),
\end{equation}
where the generalized robustness satisfies
$
    1 + R_{g}(\rho)
    =
    \min \{ \lambda \; \mid \; \rho \le \lambda \sigma,\ \sigma \in \mathrm{STAB}_{N,d} \}.
$
Equivalently, the max-relative entropy between two states is
\[
    D_{\mathrm{max}}(\rho\|\sigma)
    :=
    \log \min \bigl\{ \lambda \ \; \big| \;\ \rho \le \lambda \sigma \bigr\},
\]
where the matrix inequality $\rho \le \lambda \sigma$ means that
$\lambda \sigma - \rho$ is positive semidefinite.

Thus, $\mD_{\mathrm{max}}$ captures the smallest multiplicative factor by which a stabilizer state must be ``stretched'' to dominate $\rho$, and for pure states this quantity exactly reproduces the stabilizer extent.
This connection highlights the stabilizer extent as a geometric measure of how far a state lies outside the stabilizer polytope and provides the bridge between stabilizer decompositions and robustness-based notions of magic.

\subsection*{Group-Stabilizer Extent}

We now introduce the notion of \emph{$G$-stabilizer extent}.
In the following, we show that this generalized extent satisfies a structural property directly paralleling a well-known result for the ordinary stabilizer extent.

\begin{definition}
    Given a finite subgroup $ G \subset \mathrm{U}(\mH) $, the $\boldsymbol{G}$\textbf{-stabilizer extent} of a normalized state $\ket{\psi}\in\mH$ for which $F_G(\ket{\psi})>0$, denoted by $\xi_G(\ket{\psi})$, is
    \begin{equation}
        \xi_G(\ket{\psi}) := \min \left\{ \left(\sum_{\ket{s}\in\mSS_G} |c_{\ket{s}}|\right)^2      :     P_{\Span(\mS_G)}\ket{\psi} = \sum_{\ket{s}\in\mSS_G} c_{\ket{s}} \ket{s} \right\}  .
    \end{equation}
\end{definition}

\begin{theorem}
\label{theo:G-stabilizer extent}
    Let $G$ be a finite subgroup of $\mathrm{U}(\mH)$. Then, for every normalized $\ket{\psi}\in\mH$ with $F_G(\ket{\psi})>0$, it holds that
    \[\xi_G(\ket{\psi}) \geq \frac{\braket{\psi|P_{\Span(\mS_G)}|\psi}^2}{F_G(\ket{\psi})}  .\]
    If in addition $\ket{\psi}$ is a $\mQ$-stabilizer state where $\mQ$ is a finite subgroup of $\mathrm{N}_{\mathrm{U(\mH)}}(G)$, then
    \[P_{\Span(\mS_G)}\ket{\psi} =\ket{\psi} \quad \text{and} \quad \xi_G(\ket{\psi})=\frac{1}{F_G(\ket{\psi})}   .\]
\end{theorem}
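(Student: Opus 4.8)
The plan is to establish the inequality by a duality-type estimate and then sharpen it to an equality under the $\mQ$-stabilizer hypothesis, closely paralleling the proof that $\xi=1/F$ for the ordinary stabilizer extent (Theorem~4 in~\cite{Bravyi2019simulationofquantum}).

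\emph{Lower bound.} Write $P\coloneqq P_{\Span(\mS_G)}$ for the orthogonal projector onto the linear span of $\mSS_G$ entering the definition of $\xi_G$, and fix any admissible decomposition $P\ket{\psi}=\sum_{\ket{s}\in\mSS_G}c_{\ket{s}}\ket{s}$. Since $\braket{\psi|P|\psi}=\|P\ket{\psi}\|^2\ge 0$ is real, I would expand
\[
\braket{\psi|P|\psi}=\sum_{\ket{s}\in\mSS_G}c_{\ket{s}}\braket{\psi|s}
\]
and bound it by the triangle inequality together with $|\braket{\psi|s}|\le\sqrt{F_G(\ket{\psi})}$, which holds for every $\ket{s}\in\mSS_G$ by the definition of $F_G$. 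This gives $\braket{\psi|P|\psi}\le\sqrt{F_G(\ket{\psi})}\sum_{\ket{s}}|c_{\ket{s}}|$, hence $\bigl(\sum_{\ket{s}}|c_{\ket{s}}|\bigr)^2\ge \braket{\psi|P|\psi}^2/F_G(\ket{\psi})$; taking the infimum over decompositions yields the stated bound. This direction needs no symmetry assumption.

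\emph{Equality, step 1.} Assume now $\ket{\psi}$ is a $\mQ$-stabilizer state for a finite $\mQ\subset\mathrm{N}_{\mathrm{U}(\mH)}(G)$, so there is a subgroup $H\subset\mQ$ with $\mS_H=\Span\{\ket{\psi}\}$ one-dimensional. First I would show $P\ket{\psi}=\ket{\psi}$. By the normalizer action~\eqref{eq:NormalizerAction}, every element of $\mQ$ permutes $\mSS_G$, so $\Span(\mSS_G)$ is $\mQ$-invariant and $P$ commutes with all of $H$. Consequently $P$ preserves $\mS_H$, and as $\mS_H$ is one-dimensional, $P\ket{\psi}=\lambda\ket{\psi}$ with $\lambda\in\{0,1\}$ (since $P$ is a projector). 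The hypothesis $F_G(\ket{\psi})>0$ forces $\ket{\psi}$ to be non-orthogonal to $\Span(\mSS_G)$, ruling out $\lambda=0$; hence $P\ket{\psi}=\ket{\psi}$ and the lower bound specializes to $\xi_G(\ket{\psi})\ge 1/F_G(\ket{\psi})$.

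\emph{Equality, step 2 (the main obstacle) and the matching upper bound.} Let $\ket{s^*}\in\mSS_G$ attain $|\braket{s^*|\psi}|^2=F_G(\ket{\psi})$. The crux is to exhibit a decomposition of $\ket{\psi}$ into $G$-stabilizer states whose $\ell_1$-coefficient norm equals $1/\sqrt{F_G(\ket{\psi})}$. Using the averaging identity~\eqref{eq:ProjectorSG} for the one-dimensional code, $\ket{\psi}\bra{\psi}=P_{\mS_H}=\tfrac{1}{|H|}\sum_{h\in H}h$, I would apply both sides to $\ket{s^*}$ and divide by $\braket{\psi|s^*}\neq 0$ to get
\[
\ket{\psi}=\frac{1}{|H|\,\braket{\psi|s^*}}\sum_{h\in H}h\ket{s^*}.
\]
Each $h\ket{s^*}$ is again a $G$-stabilizer state (up to phase) because $h\in\mathrm{N}_{\mathrm{U}(\mH)}(G)$ and~\eqref{eq:NormalizerAction} applies; grouping the terms mapping to the same element of $\mSS_G$ defines the coefficients $c_{\ket{s}}$. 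The main obstacle is the bookkeeping: distinct $h$ may send $\ket{s^*}$ to the same stabilizer state with different phases, so I must argue via the triangle inequality that $\sum_{\ket{s}}|c_{\ket{s}}|\le\sum_{h\in H}\tfrac{1}{|H|\,|\braket{\psi|s^*}|}=1/\sqrt{F_G(\ket{\psi})}$, collisions only decreasing the norm. Since $\ket{\psi}=P\ket{\psi}$ by step~1, this is a genuine admissible decomposition, giving $\xi_G(\ket{\psi})\le 1/F_G(\ket{\psi})$, which together with step~1 forces equality. I expect the phase/collision accounting and the verification that the constructed family indeed reconstructs $P\ket{\psi}=\ket{\psi}$ to be the only delicate points; everything else reduces to the invariance~\eqref{eq:NormalizerAction} and the projector formula~\eqref{eq:ProjectorSG}.
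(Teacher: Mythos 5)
Your proposal is correct and follows essentially the same route as the paper: the identical triangle-inequality lower bound, and the same key decomposition $\ket{\psi}=\tfrac{1}{|H|\braket{\psi|s^{*}}}\sum_{h\in H}h\ket{s^{*}}$ obtained from the averaging identity applied to the nearest $G$-stabilizer state. The only (harmless) deviations are that you establish $P_{\Span(\mS_G)}\ket{\psi}=\ket{\psi}$ abstractly via commutation of the projector with the stabilizing subgroup rather than reading it off the explicit decomposition, and that you dispose of phase collisions by noting the triangle inequality can only decrease the $\ell_1$-norm, whereas the paper shows the colliding phases are in fact all trivial --- a slightly stronger statement than the upper bound requires.
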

The proof of this theorem follows exactly the same reasoning as the proof of Proposition 2 in \cite{Bravyi2019simulationofquantum}.  
Notably, the authors extend their result at the conclusion of their proof to encompass Clifford-stabilizer states, as defined here, rather than restricting themselves to the "Clifford magic states" introduced in their paper.  
The only substantive difference in our proof, which remains otherwise equivalent, is that we do not impose the restriction that $ \ket{s} $ (denoted as $ \ket{\phi_0} $ in \cite{Bravyi2019simulationofquantum}) must be stabilized by only a single operator in $ \mQ $, specifically the identity operator.
Apart from this, our proof is more general: it applies beyond the Pauli and Clifford groups, does not assume that the group-stabilizer states span the entire Hilbert space, and accommodates additional generalizations. Nonetheless, it remains nearly identical to the original proof.
\begin{proof}[Proof of Theorem~\ref{theo:G-stabilizer extent}]
    For any $G$-stabilizer decomposition of $P_{\Span(\mS_G)}\ket{\psi}$
    \[
    P_{\Span(\mS_G)}\ket{\psi}=\sum_{i=1}^{r} c_i \ket{s_i}   ,
    \]
    where $\ket{s_i}\in\mSS_G   \forall i$, we have
    \[
    \braket{\psi|P_{\Span(\mS_G)}|\psi}=\left| \sum_{i=1}^{r} c_i \braket{\psi|s_i} \right| \leq  \sum_{i=1}^{r} \left|c_i\right| \left|\braket{\psi|s_i}\right| \leq \sqrt{F_G(\ket{\psi})} \sum_{i=1}^{r} \left|c_i\right|   .
    \]
    Therefore,
    \[
    \sqrt{\xi_G(\psi)} \geq \frac{\braket{\psi|P_{\Span(\mS_G)}|\psi}}{\sqrt{F_G(\ket{\psi})}}   .
    \]

    Now, let $\mQ$ be a finite subgroup of $\mathrm{N}_{\mathrm{U}(\mH)}(G)$ such that
    \[\ket{\psi}\bra{\psi}=\frac{1}{|\mQ|}\sum_{q\in\mQ}q  .
    \]
    Denote
    \[
    \ket{s}=\argmax_{\ket{s^\prime}\in\mSS_G}|\langle s^\prime \ket{\psi}|^2   .
    \]
    $|\braket{\psi|s}|>0$ since $F_G(\ket{\psi})>0$. Then
    \[
    \ket{\psi}=\frac{\ket{\psi}\braket{\psi|s}}{\braket{\psi|s}}=\frac{1}{|\mQ|\braket{\psi|s}}\sum_{q\in\mQ}q\ket{s} ,
    \]
    such that $\ket{\psi}\in\Span(\mS_G)$ and then $P_{\Span(\mS_G)}\ket{\psi} =\ket{\psi}$.
    Now, each $q\ket{s}$ in the sum is a stabilizer state since $q\in\mathrm{N}_{\mathrm{U}(\mH)}(G)$. If $q_1\ket{s}=e^{i\phi}q_2\ket{s}$ for some $q_1,q_2\in\mQ$, then $\phi=0$ since $0\neq\braket{\psi|s}=\bra{\psi}q\ket{s}$ for every $q\in\mQ$. Therefore,
    \[
    \ket{\psi}=\frac{1}{|\mQ|\braket{\psi|s}}\sum_{i=1}^p n_i\ket{s_i} ,
    \]
    where the set $\{ \ket{s_i} \}_{i=1}^p$ is a set of distinct stabilizer states, and $\{ n_i \}_{i=1}^p$ are natural numbers such that $\sum_{i=1}^p n_i=|\mQ|$. Therefore,
    \[
    \sqrt{\xi_G(\ket{\psi})} \leq \sum_{i=1}^p \left| \frac{n_i}{|\mQ|\braket{\psi|s}}\right|=\frac{1}{\left| \braket{\psi|s} \right|}=\frac{1}{\sqrt{F_G(\ket{\psi})}}   .
    \]
\end{proof}

Another theorem than can be proved is the following.
\begin{theorem}
	\label{theo:witness}
        Let $G$ be a finite subgroup of $\mathrm{U}(\mH)$. Then, for every normalized state $\ket{\psi}$ with $F_G(\ket{\psi})>0$, it holds that
	\begin{equation}
	\xi_G(\ket{\psi})=\max_{\ket{\omega}} \frac{\left|\braket{\psi|\omega}\right|^2}{F_G(\ket{\omega})},
	\label{eq:witness}
	\end{equation}
	where the maximum is over all states $\ket{\omega}$ in the Hilbert space. 
\end{theorem}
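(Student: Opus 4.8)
The plan is to recognize $\sqrt{\xi_G(\ket\psi)}$ as the optimal value of a complex $\ell_1$-minimization problem subject to a single affine constraint, and then to dualize it. Write $V \coloneqq \Span\mSS_G$, with orthogonal projector $\Pi \coloneqq P_{\Span(\mSS_G)}$ (the projector appearing in the definition of $\xi_G$, onto the span of the $G$-stabilizer states), enumerate $\mSS_G = \{\ket{s_1},\dots,\ket{s_r}\}$, and let $A:\mathbb{C}^r\to\mH$ be the synthesis map $Ac = \sum_i c_i\ket{s_i}$. The defining constraint then reads $Ac = \Pi\ket\psi$, so that
\[
\sqrt{\xi_G(\ket\psi)} = \min\bigl\{\,\|c\|_1 : Ac = \Pi\ket\psi \,\bigr\},
\qquad \|c\|_1 = \sum_i |c_i|.
\]
Feasibility is immediate since $\Pi\ket\psi \in V = \operatorname{range}A$, and $F_G(\ket\psi)>0$ ensures $\Pi\ket\psi \neq 0$. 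This is a convex program whose feasible set is a nonempty affine subspace, which is precisely the regime in which strong $\ell_1/\ell_\infty$ duality holds.

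First I would establish the easy (weak-duality) inequality by a direct triangle-inequality estimate: for any feasible $c$ and any $\ket\omega \in V$, using $\braket{\omega|\psi}=\braket{\omega|\Pi|\psi}$ and the definition $F_G(\ket\omega)=\max_{\ket s}|\braket{s|\omega}|^2$,
\[
|\braket{\omega|\psi}| = \Bigl|\sum_i c_i\braket{\omega|s_i}\Bigr| \le \|c\|_1\,\max_i |\braket{s_i|\omega}| = \|c\|_1\,\sqrt{F_G(\ket\omega)}.
\]
Minimizing over $c$ and rearranging gives $|\braket{\omega|\psi}|^2/F_G(\ket\omega) \le \xi_G(\ket\psi)$ for every $\ket\omega$ with $F_G(\ket\omega)>0$, hence $\sup_\omega(\cdots) \le \xi_G(\ket\psi)$.

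The substantive step is the reverse inequality, i.e.\ exhibiting a witness $\ket{\omega^\star}$ attaining equality. Here I would invoke conic (second-order-cone / linear-programming) duality exactly as for the ordinary stabilizer extent in Ref.~\cite{Bravyi2019simulationofquantum}: the Lagrangian dual of $\min\{\|c\|_1 : Ac = b\}$ with $b = \Pi\ket\psi$ is $\max\{\operatorname{Re}\braket{\omega|b} : \|A^\dagger\omega\|_\infty \le 1\}$, and strong duality holds because the primal is feasible with finite value under an affine constraint. Since $(A^\dagger\omega)_i = \braket{s_i|\omega}$ gives $\|A^\dagger\omega\|_\infty = \sqrt{F_G(\ket\omega)}$, the dual constraint is $F_G(\ket\omega)\le 1$, so
\[
\sqrt{\xi_G(\ket\psi)} = \max\bigl\{\,\operatorname{Re}\braket{\omega|\Pi\psi} : F_G(\ket\omega)\le 1 \,\bigr\}.
\]
Attainment follows because, restricting to $\ket\omega \in V$, the map $\ket\omega\mapsto\sqrt{F_G(\ket\omega)}$ is a genuine norm on the finite-dimensional space $V$ (the $\ket{s_i}$ span $V$), so the sublevel set $\{F_G\le 1\}\cap V$ is compact.

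Finally I would convert to the stated scale-invariant form. Both $|\braket{\psi|\omega}|^2$ and $F_G(\ket\omega)$ are homogeneous of degree two and phase-invariant in $\ket\omega$, so any candidate may be normalized to $F_G(\ket\omega)=1$ and rephased so that $\braket{\omega|\psi}=\operatorname{Re}\braket{\omega|\psi}\ge 0$; squaring the previous display then yields $\xi_G(\ket\psi)=\max_\omega |\braket{\psi|\omega}|^2/F_G(\ket\omega)$, with the maximum over $\ket\omega \in V$ (equivalently over all $\ket\omega$ once $\ket\psi$ is replaced by $\Pi\ket\psi$, the only part of $\ket\psi$ that the extent sees). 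The main obstacle is the strong-duality step: one must either carry out the conic-duality computation carefully over $\mathbb{C}$ (tracking that the $\ell_1/\ell_\infty$ pairing is taken with real parts) or reprove it via a Hahn--Banach / separating-hyperplane argument, and then confirm attainment of the dual optimum; the remaining homogeneity and compactness bookkeeping is routine.
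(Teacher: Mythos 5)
Your argument is correct and follows essentially the same route as the paper, which omits its own proof and defers to the $\ell_1/\ell_\infty$ conic-duality argument of Theorem~4 in Ref.~\cite{Bravyi2019simulationofquantum}: your primal--dual pair, the identification $\|A^\dagger\omega\|_\infty=\sqrt{F_G(\ket{\omega})}$, and the compactness argument for attainment on $V=\Span\mSS_G$ are exactly the ingredients that reference supplies.

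One point you relegate to a final parenthesis deserves to be promoted: the restriction of the maximum to $\ket{\omega}\in V$ (equivalently, replacing $\ket{\psi}$ by $P_V\ket{\psi}$ in the numerator) is not optional. As printed, with the supremum over all $\ket{\omega}\in\mH$ and numerator $|\braket{\psi|\omega}|^2$, the identity fails whenever $P_V\ket{\psi}\neq\ket{\psi}$ --- a case the paper explicitly admits, since it does not assume $\Span\mSS_G=\mH$. Concretely, for one qubit with $G=\{I,Z\}$ one has $\mSS_G=\{\ket{0}\}$, and for $\ket{\psi}=\ket{+}$ the extent is $\xi_G=\tfrac12$ while the unrestricted supremum diverges (take $\ket{\omega}=\sin\theta\ket{0}+\cos\theta\ket{1}$ with $\theta\to 0^{+}$). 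Your own weak-duality step already signals the issue, since $\braket{\omega|\psi}=\braket{\omega|P_V|\psi}$ requires $\ket{\omega}\in V$. So what you have proved is the corrected statement, and the theorem as stated should carry the same restriction.
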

\noindent
 Thus any state $\ket{\omega}$ in the Hilbert space can act as a witness to provide a lower bound on $\xi_G$ and, furthermore, there exists at least one optimal witness state $\ket{\omega_{\star}}$ which achieves the maximum in Eq.~\eqref{eq:witness}. For example, choosing $\ket{\omega}=\ket{\psi}$, we get the lower bound 
\begin{equation}
\xi_G(\ket{\psi}) \geq \frac{1}{F_G(\ket{\psi})}.
\label{eq:mirrorwitness}
\end{equation}
For $\mQ$-stabilizer states where $\mQ$ is a finite subgroup of $\mathrm{N}_{\mathrm{U(\mH)}}(G)$ this lower bound is tight as stated in Theorem~\ref{theo:G-stabilizer extent}. The proof closely parallels that of Theorem 4 in~\cite{Bravyi2019simulationofquantum}, and is therefore omitted here for brevity.

\newpage
\section{$\alpha$-SRE for single qubits}
\label{app:alpha-SREforsinglequbits}

The calculated $\alpha$-SRE for a general single qubit state,
\[
    \ket{\psi} = \cos\theta \ket{0} + e^{i\phi}\sin\theta \ket{1},
\]
is given by
\begin{equation}
        M_\alpha =\frac{1}{1-\alpha} \log  \Bigl( \tfrac{1}{2}\bigl( 1 + (\cos^2 2\theta)^{\alpha} + (\sin^2 2\theta)^{\alpha} \left[ (\cos^2\phi)^{\alpha} + (\sin^2\phi)^{\alpha} \right] \bigr) \Bigr).
\end{equation}
For the two canonical magic states, $\ket{T}$ and $\ket{H}$, the corresponding stabilizer $\alpha$-Rényi entropies read
\begin{subequations}
    \begin{gather}
        M_\alpha(\ket{T})= \frac{1}{1-\alpha} \log  \left[ \tfrac{1}{2}\left( 1 + 3^{1-\alpha} \right) \right],\\
        M_\alpha(\ket{H})= \frac{1}{1-\alpha} \log  \left[ \tfrac{1}{2}\left( 1 + 2^{1-\alpha} \right) \right].
    \end{gather}
\end{subequations}
As seen, the $\ket{T}$ state attains the largest SRE under this measure, thereby saturating the upper bound in Eq.~(\ref{eq: Bound on SRE}).
For the particularly relevant case $\alpha = 2$, these expressions yield
\begin{equation}
    M_2 \left( \ket{T} \right) = \log \frac{3}{2}
    \quad , \quad
    M_2 \left( \ket{H} \right) = \log \frac{4}{3}.
\end{equation}

\section{Wigner function for the single-qutrit states}
\label{app:Wigner function for the single-qutrit states}

Here, we present detailed information about the Wigner functions, stabilizer fidelities, and nearest stabilizer states for single-qutrit nonstabilizer Clifford-inequivalent non-degenerate eigenstates.

\begin{table}[h!]
    \centering
    \resizebox{\linewidth}{!}{
    \begin{tabular}{|c|c|c|c|c|}
        \hline
        \textbf{State} &
        \textbf{\makecell{Wigner Function Matrix}} &
        \textbf{\makecell{Wigner Trace\\Norm}} \\ 
        \hline
        
        $\ket{\mathbb{S}}=\frac{\ket{1}-\ket{2}}{\sqrt{2}}$  & 
        $\begin{pmatrix}
-\frac{1}{3} & \frac{1}{6} & \frac{1}{6} \\
\frac{1}{6}  & \frac{1}{6} & \frac{1}{6} \\
\frac{1}{6}  & \frac{1}{6} & \frac{1}{6}
\end{pmatrix}$ & $\frac{5}{3}\approx1.66667$ \\ 
        \hline
        
        $\ket{\mathbb{N}}=\frac{-\ket{0}+2\ket{1}-\ket{2}}{\sqrt{6}}$ & 
        $\begin{pmatrix}
-\frac{1}{6} & \frac{1}{6} & \frac{1}{6} \\
\frac{1}{3}  & \frac{1}{6} & \frac{1}{6} \\
-\frac{1}{6} & \frac{1}{6} & \frac{1}{6}
\end{pmatrix}$ & $\frac{5}{3}\approx1.66667$ \\ 
        \hline
        
        $\ket{H_+}=\frac{(1+\sqrt{3})\ket{0}+\ket{1}+\ket{2}}{\sqrt{2(3+\sqrt{3})}}$  & 
        $\begin{pmatrix}
\frac{1}{3} & \frac{1}{12}\bigl(1+\sqrt{3}\bigr) & \frac{1}{12}\bigl(1+\sqrt{3}\bigr) \\
\frac{1}{12}\bigl(1+\sqrt{3}\bigr) & \frac{1}{12}\bigl(1-\sqrt{3}\bigr) & \frac{1}{12}\bigl(1-\sqrt{3}\bigr) \\
\frac{1}{12}\bigl(1+\sqrt{3}\bigr) & \frac{1}{12}\bigl(1-\sqrt{3}\bigr) & \frac{1}{12}\bigl(1-\sqrt{3}\bigr)
\end{pmatrix}$ & $\frac{1}{3} + \frac{2}{\sqrt{3}}\approx1.48803$ \\ 
        \hline
        
        $\ket{\mathbb{T}}=\frac{e^{2\pi i/9}\ket{0}+\ket{1}+e^{-2\pi i/9}\ket{2}}{\sqrt{3}}$ &  
        $\begin{pmatrix}
\frac{1}{9}\Bigl(1+2\cos\Bigl(\frac{2\pi}{9}\Bigr)\Bigr) & \frac{1}{9}\Bigl(1-2\cos\Bigl(\frac{\pi}{9}\Bigr)\Bigr) & \frac{1}{9}\Bigl(1+2\sin\Bigl(\frac{\pi}{18}\Bigr)\Bigr) \\
\frac{1}{9}\Bigl(1+2\sin\Bigl(\frac{\pi}{18}\Bigr)\Bigr) & \frac{1}{9}\Bigl(1+2\cos\Bigl(\frac{2\pi}{9}\Bigr)\Bigr) & \frac{1}{9}\Bigl(1-2\cos\Bigl(\frac{\pi}{9}\Bigr)\Bigr) \\
\frac{1}{9}\Bigl(1+2\cos\Bigl(\frac{2\pi}{9}\Bigr)\Bigr) & \frac{1}{9}\Bigl(1-2\cos\Bigl(\frac{\pi}{9}\Bigr)\Bigr) & \frac{1}{9}\Bigl(1+2\sin\Bigl(\frac{\pi}{18}\Bigr)\Bigr)
\end{pmatrix}$ & $\frac{1}{3}\Bigl( 1 + 4\cos\Bigl(\frac{\pi}{9}\Bigr)\Bigr)\approx1.58626$ \\ 
        \hline
    \end{tabular}
    }
    \caption{Single-qutrit nonstabilizer Clifford-inequivalent Clifford nondegenerate eigenstates with their corresponding Wigner function matrices, and Wigner trace norms.}
    \label{tab:Single-qutrit nonstabilizer Clifford-inequivalent Clifford nondegenerate eigenstates and Wigner function}
\end{table}

\begin{table}[h!]
    \centering
    \resizebox{\linewidth}{!}{
    \begin{tabular}{|c|c|c|c|c|}
        \hline
        \textbf{State} & 
        \textbf{\makecell{Stabilizer\\Fidelity}} & 
        \textbf{\makecell{Number of\\Nearest\\Stabilizer States}} & 
        \textbf{\makecell{Nearest\\Stabilizer States}} \\ 
        \hline
        
        $\ket{\mathbb{S}}=\frac{\ket{1}-\ket{2}}{\sqrt{2}}$ & $\frac{1}{2}$ & $8$ & 
        \makecell{
        $\ket{1}$ ,
        $\ket{2}$ , \\
        $\frac{\ket{0}+e^{\pm 2\pi i/3}\ket{1}+\ket{2}}{\sqrt{3}}$ , \\
        $\frac{e^{2\pi i/3}\ket{0}+e^{\pm 2\pi i/3}\ket{1}+\ket{2}}{\sqrt{3}}$ , \\
        $\frac{e^{-2\pi i/3}\ket{0}+e^{\pm 2\pi i/3}\ket{1}+\ket{2}}{\sqrt{3}}$
        }
        \\
        \hline
        
        $\ket{\mathbb{N}}=\frac{-\ket{0}+2\ket{1}-\ket{2}}{\sqrt{6}}$ & $\frac{2}{3}$ & $3$ & 
        \makecell{$\ket{1}$ , \\ $\frac{\ket{0}+e^{\pm 2\pi i/3}\ket{1}+\ket{2}}{\sqrt{3}}$}
         \\ 
        \hline
        
        $\ket{H_+}=\frac{(1+\sqrt{3})\ket{0}+\ket{1}+\ket{2}}{\sqrt{2(3+\sqrt{3})}}$ & $\frac{3+\sqrt{3}}{6}$ & $2$ & 
        \makecell{$\ket{0}$ , \\ $\frac{\ket{0}+\ket{1}+\ket{2}}{\sqrt{3}}$} \\ 
        \hline
        
        $\ket{\mathbb{T}}=\frac{e^{2\pi i/9}\ket{0}+\ket{1}+e^{-2\pi i/9}\ket{2}}{\sqrt{3}}$ & $\frac{1}{9} \left(1 + 2\cos\left(\frac{2\pi}{9}\right)\right)^2$ & $3$ & 
        \makecell{$\frac{\ket{0}+\ket{1}+\ket{2}}{\sqrt{3}}$ , \\ $\frac{e^{2\pi i/3}\ket{0}+\ket{1}+\ket{2}}{\sqrt{3}}$ , \\ $\frac{e^{2\pi i/3}\ket{0}+e^{2\pi i/3}\ket{1}+\ket{2}}{\sqrt{3}}$} \\ 
        \hline
    \end{tabular}
    }
    \caption{Single-qutrit nonstabilizer Clifford-inequivalent Clifford nondegenerate eigenstates with their stabilizer fidelities, number of nearest stabilizer states, and nearest stabilizer states.}
    \label{tab:Single-qutrit nonstabilizer Clifford-inequivalent Clifford nondegenerate eigenstates and nearest SS}
\end{table}

\begin{figure}[h!]
    \centering
    \begin{subfigure}[b]{0.45\linewidth}
        \centering
        \includegraphics[width=\linewidth]{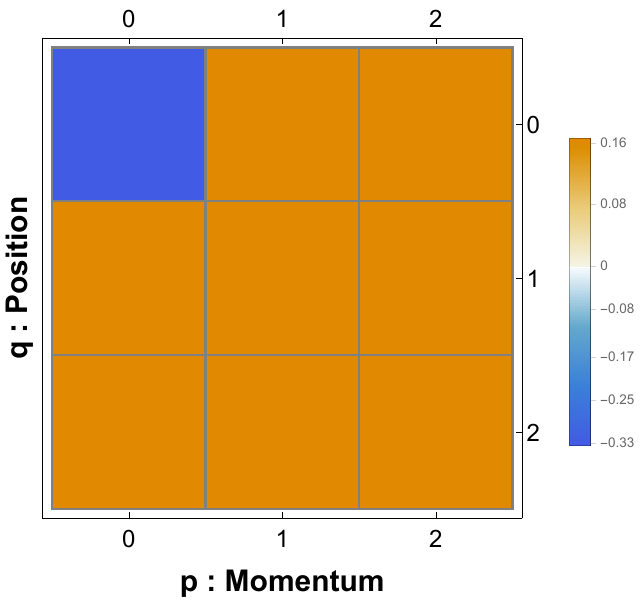}
        \caption{$\ket{\mathbb{S}}$}
        \label{fig:Strange}
    \end{subfigure}
    \hfill
    \begin{subfigure}[b]{0.45\linewidth}
        \centering
        \includegraphics[width=\linewidth]{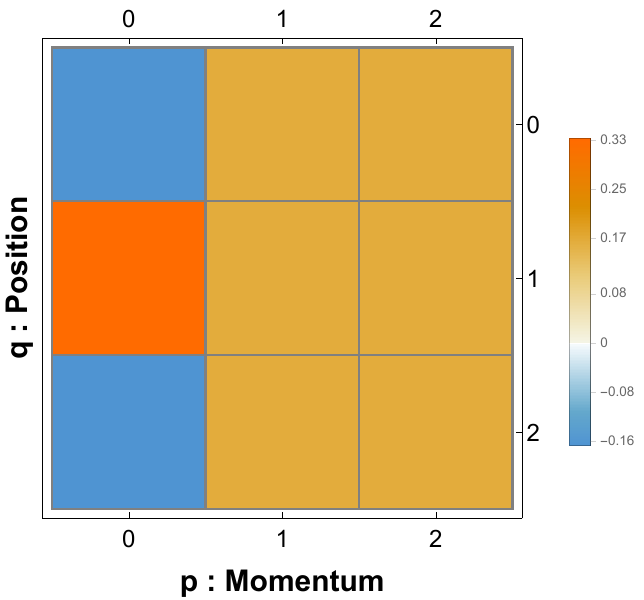}
        \caption{$\ket{\mathbb{N}}$}
        \label{fig:Norrell}
    \end{subfigure}
    
    \vspace{0.5cm}
    
    \begin{subfigure}[b]{0.45\linewidth}
        \centering
        \includegraphics[width=\linewidth]{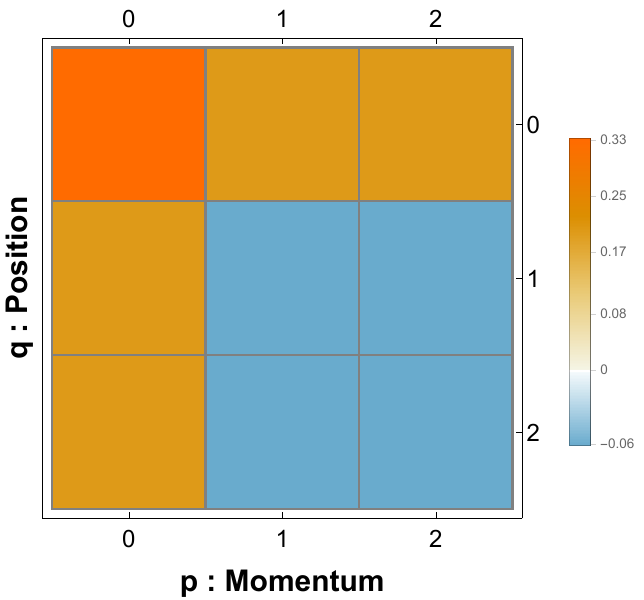}
        \caption{$\ket{H_+}$}
        \label{fig:Hplus}
    \end{subfigure}
    \hfill
    \begin{subfigure}[b]{0.45\linewidth}
        \centering
        \includegraphics[width=\linewidth]{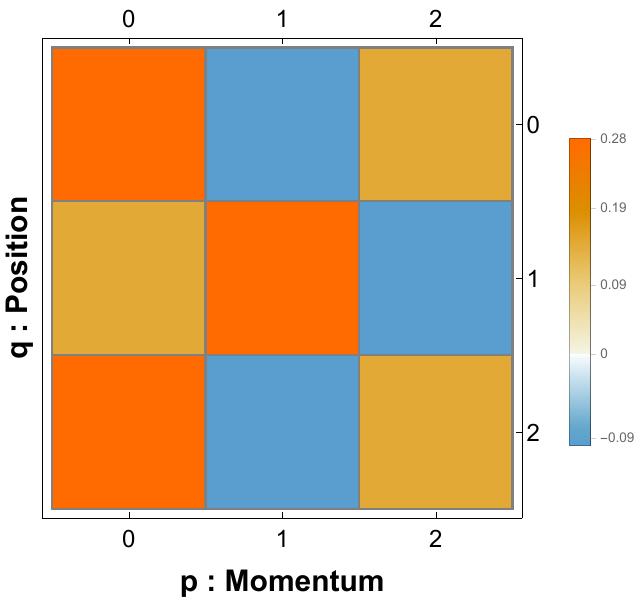}
        \caption{$\ket{\mathbb{T}}$}
        \label{fig:Tstate}
    \end{subfigure}
    
    \caption{Wigner function heatmap visualizations for all nonstabilizer Clifford-inequivalent non-degenerate eigenstates of Clifford operations of single qutrits.}
    \label{fig:qutrit Wigner functions visualizations}
\end{figure}

These data concretely illustrate the behavior of certain single-qutrit states with respect to magic measures, supporting the theoretical discussions and examples of the main text.

\newpage
\section{SRE for Single qutrits}
\label{app:SREforSingleQutrits}

\newpage
\subsection*{General Form of the probability distribution and of $2$-SRE}
The calculated phase-space probability distribution 
$P_{\boldsymbol{\chi}}$ for a general single qutrit state,
\[
    \ket{\psi} =
    \cos\theta_1 \ket{0}
    + e^{i\phi_1}\sin\theta_1\cos\theta_2 \ket{1}
    + e^{i\phi_2}\sin\theta_1\sin\theta_2 \ket{2},
\]
is given by
\begin{subequations}
    \begin{gather}
        3P_{(0,0)} = 1, \\
        3P_{(0,1)} = 3P_{(0,2)} = \cos^4\theta_1 - \cos^2\theta_1 \sin^2\theta_1 + \tfrac{1}{8}(5 + 3\cos 4\theta_2)\sin^4\theta_1, \\
        \begin{split}
            48P_{(1,0)} = 48P_{(2,0)} = & \sin^2\theta_1 \bigl( 9 + 7\cos 2\theta_1 - 2\cos 4\theta_2 \sin^2\theta_1 \\
             & + 8\sin2\theta_1 \sin2\theta_2 (\cos\theta_2 \cos(2\phi_1-\phi_2) + \cos(\phi_1 - 2\phi_2)\sin\theta_2) \bigr)\\
             & + 4\cos(\phi_1 + \phi_2)\sin^2 2\theta_1 \sin 2\theta_2
        \end{split}, \\
        \begin{split}
            96P_{(1,1)} = & \sin^2\theta_1 \bigl[ 18 + 14\cos 2\theta_1 + \cos 2(\theta_1 - 2\theta_2) - 2\cos 4\theta_2 + \cos 2(\theta_1 + 2\theta_2)\\
            & - 16 \sin 2\theta_1 \sin2\theta_2 (\sin\theta_2
            \sin( 2\phi_2 - \phi_1+\frac{\pi}{6}) + \cos\theta_2
            \sin(2\phi_1 - \phi_2+\frac{\pi}{6})) \bigr] \\
            & + 8\sin^2 2\theta_1 \sin 2\theta_2 \sin(\phi_1 + \phi_2-\frac{\pi}{6}) ,
        \end{split} \\
        \begin{split}
            3P_{(1,2)} = 3P_{(2,1)} = & \sin^2\theta_1  \Bigl[ \cos\theta_2 \sin\theta_2
            \bigl( -\sin 2\theta_1 \sin\theta_2
            \sin(\phi_1 - 2\phi_2+\frac{\pi}{6}) \\
            & + \cos\theta_2 (\sin^2\theta_1 \sin\theta_2 + \sin 2\theta_1
            \sin(2\phi_1 - \phi_2-\frac{\pi}{6})) \bigr) \\
            & + \cos^2\theta_1 (1 - \sin 2\theta_2
            \sin(\phi_1 + \phi_2+\frac{\pi}{6}) ) \Bigr] ,
        \end{split}
        \\
        \begin{split}
            96P_{(2,2)} = & \sin^2\theta_1 \bigl[ 18 + 14\cos 2\theta_1 + \cos 2(\theta_1 - 2\theta_2) - 2\cos 4\theta_2 + \cos 2(\theta_1 + 2\theta_2) \\
            & - 16\sin 2\theta_1 \sin2\theta_2 (\sin\theta_2
            \sin(2\phi_2-\phi_1+\frac{\pi}{6}) + \cos\theta_2\sin(2\phi_1 - \phi_2+\frac{\pi}{6})) \bigr] \\
            & + 8\sin^2 2\theta_1 \sin 2\theta_2
            \sin(\phi_1 + \phi_2-\frac{\pi}{6}) .
        \end{split}
    \end{gather}
\end{subequations}
Hence, the corresponding $\Xi_2$ function evaluates to
\begin{equation}
    \begin{split}
        4608 \Xi_2 = 
        & 512 + 16 \bigl( 8\cos^4\theta_1 + (5 + 3\cos 4\theta_2)\sin^4\theta_1 - 2\sin^2 2\theta_1 \bigr)^2 \\
        & \quad + 4\Bigl( 4\cos(\phi_1 + \phi_2)\sin^2 2\theta_1 \sin 2\theta_2 + \sin^2\theta_1 \bigl(9 + 7\cos 2\theta_1 
        - 2\cos 4\theta_2 \sin^2\theta_1 \\
        & \quad + 8\sin 2\theta_1 (\cos\theta_2 \cos(2\phi_1 - \phi_2) + \cos(\phi_1 - 2\phi_2)\sin\theta_2)\sin 2\theta_2\bigr) \Bigr)^2 \\
        & \quad + 1024\sin^4\theta_1 \Bigl( \cos\theta_2 \sin\theta_2 \bigl( -\sin 2\theta_1 \sin\theta_2 \sin  \left(\tfrac{\pi}{6} + \phi_1 - 2\phi_2\right) \\
        & \quad + \cos\theta_2 (\sin^2\theta_1 \sin\theta_2 - \sin 2\theta_1 \sin  \left(\tfrac{\pi}{6} - 2\phi_1 + \phi_2\right)) \bigr) \\
        & \quad + \cos^2\theta_1 \bigl(1 - \sin 2\theta_2 \sin  \left(\tfrac{\pi}{6} + \phi_1 + \phi_2\right)\bigr)
        \Bigr)^2 \\
        & \quad + \Bigl( \sin^2\theta_1 \bigl(
        18 + 14\cos 2\theta_1 + \cos 2(\theta_1 - 2\theta_2)
        - 2\cos 4\theta_2 + \cos 2(\theta_1 + 2\theta_2) \\
        & \quad - 16\sin 2\theta_1 \sin 2\theta_2 
        \bigl(\cos\theta_2 \sin  \left(\tfrac{\pi}{6} + 2\phi_1 - \phi_2\right) + \sin\theta_2 \sin  \left(\tfrac{\pi}{6} - \phi_1 + 2\phi_2\right)\bigr) \bigr) \\
        & \quad - 8\sin^2 2\theta_1 \sin 2\theta_2 \sin  \left(\tfrac{\pi}{6} - (\phi_1 + \phi_2)\right)
        \Bigr)^2 ,
    \end{split}
\end{equation}
while the $2$-SRE has the form $M_2=-\log \left[3\Xi_2\right]$.

\subsection*{Examples of $\alpha$-SRE}
Calculating the SRE for the four Clifford-inequivalent eigenstates of single-qudit Clifford operators yields:
\begin{subequations}
    \begin{equation}
        M_{\alpha}(\ket{\mathbb{S}})=M_{\alpha}(\ket{\mathbb{N}})=\frac{1}{1-\alpha}\log{\frac{8+4^\alpha}{3\cdot4^\alpha}}
    \end{equation}
    \begin{equation}
        M_{\alpha}(\ket{H_+})=\frac{1}{1-\alpha}\log{\frac{8^\alpha+4(2-\sqrt{3})^\alpha+4(2+\sqrt{3})^\alpha}{3\cdot8^\alpha}}
    \end{equation}
    \begin{equation}
        M_{\alpha}(\ket{\mathbb{T}})=\frac{1}{1-\alpha}\log{\frac{6+3^\alpha}{3^{\alpha+1}}}
    \end{equation}
\end{subequations}
Note: The result for the $\ket{H_+}$ state differs from that reported in Ref.~\cite{SRE2}!
As observed, the Strange and Norell states exhibit the maximal SRE according to this measure, thereby saturating the upper bound in Eq.~(\ref{eq: Bound on SRE}).
For the particularly relevant case $\alpha = 2$, these expressions yield
\begin{equation}
    M_2 \left( \ket{\mathbb{S}} \right) =
    M_2 \left( \ket{\mathbb{N}} \right) = \log 2
    \quad , \quad
    M_2 \left( \ket{H_+} \right) = \log \frac{8}{5}
    \quad , \quad
    M_2 \left( \ket{\mathbb{T}} \right) = \log \frac{9}{5}.
\end{equation}

\newpage
\section{The bases worked with for each single-ququint Clifford}
\label{app:The bases worked with for each single-ququint Clifford}

This appendix lists the explicit eigenbases employed for analyzing various single-ququint Clifford operations, including the normalization constants and parametrizations used throughout our calculations.
Recall that
\begin{subequations}
    \begin{gather}
        \chi = \sqrt{\frac{5 + \sqrt{5}}{10}}   , \\
        \eta_\pm = \mp\sqrt{30 - 6\sqrt{5}} + \sqrt{5} - 3   , \\
        \kappa_\pm = \frac{1}{2} \left( \pm \sqrt{6(5 + \sqrt{5})} - \sqrt{5} - 3 \right)   .
    \end{gather}
\end{subequations}
The vectors employed in this work are the following, each normalized by an appropriate positive real constant.
\begin{enumerate}
    \item The unnormalized orthonormal eigenstates of $H$ we work with are:
\begin{subequations}
    \begin{gather}
        \ket{H,\pm i} \propto \sqrt{1\pm\chi}(\ket{1}-\ket{4}) 
        \pm \sqrt{1\mp \chi}(\ket{2} -\ket{3})   , \\
        \ket{H,-1} \propto (1-\sqrt{5})\ket{0} 
        + \ket{1} 
        +\ket{2} 
        + \ket{3} 
        + \ket{4}   , \\
        \ket{H,1;1} \propto (1+\sqrt{5})\ket{0} + 2\big( \ket{1} + \ket{4} \big), \\
        \ket{H,1;2} \propto (1+\sqrt{5})\ket{0} + 2\big( \ket{2} + \ket{3} \big).
    \end{gather}
\end{subequations}

\item The unnormalized orthonormal eigenstates of $XV_{\hat{S}}$ we work with are:
\begin{subequations}
    \begin{gather}
\ket{X V_{\hat{S}},1} \propto \ket{0} + \ket{1} + \omega^3\ket{2} + \ket{3} + \omega^2\ket{4}   , \\
\ket{X V_{\hat{S}},\omega} \propto \omega^{4} \ket{0} + \omega^{3} \ket{1} + \ket{2} + \omega \ket{3} + \omega^{2} \ket{4}   , \\
\ket{X V_{\hat{S}},\omega^{-1}} \propto \omega \ket{0} + \omega^{2} \ket{1} + \omega \ket{2} + \omega^{4} \ket{3} + \omega^{2} \ket{4}   , \\
\ket{X V_{\hat{S}},\omega^2} \propto \omega^{3} \ket{0} + \omega \ket{1} + \omega^{2} \ket{2} + \omega^{2} \ket{3} + \omega^{2} \ket{4}   , \\
\ket{X V_{\hat{S}},\omega^{-2}} \propto \omega^{2} \ket{0} + \omega^{4} \ket{1} + \omega^{4} \ket{2} + \omega^{3} \ket{3} + \omega^{2} \ket{4}   .
    \end{gather}
\end{subequations}

\item The unnormalized orthonormal eigenstates of $B^\prime=V_{\hat{K}} B V_{\hat{K}}^{-1}$ (where $B=H^3V_{\hat{S}}$ and $\hat{K}=\begin{pmatrix} 1 & 2 \\ 2 & 0 \end{pmatrix}$) we work with are:
\begin{subequations}
    \begin{gather}
        \ket{B^\prime,-1} \propto (3+\sqrt{5})\ket{0} 
        - 2(\ket{1} + \ket{2} + \ket{3} + \ket{4})   , \\
        \ket{B^\prime,-e^{\pm \frac{2\pi i}{3}}} \propto 4 \kappa_\pm\ket{0} 
        - \kappa_\pm^2(\ket{1}+\ket{4}) +4(\ket{2} +\ket{3})   , \\
        \ket{B^\prime,e^{\pm\frac{2\pi i}{3}}} \propto \eta_\pm(\ket{1}-\ket{4} )
        + 4(\ket{2} - \ket{3} )   .
    \end{gather}
\end{subequations}

\item The unnormalized orthonormal eigenstates of $A=V_{\hat{S}}H^2$ we work with are:
\begin{subequations}
    \begin{gather}
    \ket{A,e^{\frac{4\pi i}{5}}} \propto \ket{2} + \ket{3} \\
    \ket{A,e^{-\frac{4\pi i}{5}}} \propto \ket{1}+\ket{4}\\
    \ket{A,e^{\frac{\pi i}{5}}} \propto \ket{4}-\ket{1} \\
    \ket{A,e^{-\frac{\pi i}{5}}} \propto \ket{2}-\ket{3} \\
    \ket{A,1} = \ket{0}   .
    \end{gather}
\end{subequations}
\end{enumerate}

Having these eigenbases explicitly available facilitates reproducibility and provides clarity in extending or verifying the results concerning ququint systems.

\newpage
\section{Wigner function for the single-ququint states}

We now provide the Wigner function matrices, stabilizer fidelities, and nearest stabilizer states for the relevant nonstabilizer Clifford-inequivalent non-degenerate eigenstates of single-ququint Clifford operations.

\begin{table}[h!]
    \centering
    \resizebox{\linewidth}{!}{
    \begin{tabular}{|c|c|c|}
        \hline
        \textbf{Unnormalized State} & 
        \textbf{\makecell{Wigner Function Matrix}} &
        \textbf{\makecell{Wigner Trace\\Norm}} \\ \hline
        $\displaystyle \ket{H,i} 	\propto \sqrt{1+\chi}(\ket{1}-\ket{4}) 
        + \sqrt{1-\chi}(\ket{2} -\ket{3})$ 
        & $\begin{pmatrix}
-0.2 & 0.1451 & -0.0451 & -0.0451 & 0.1451 \\
0.1451 & 0.1088 & 0.05 & 0.05 & 0.1088 \\
-0.0451 & 0.05 & -0.0088 & -0.0088 & 0.05 \\
-0.0451 & 0.05 & -0.0088 & -0.0088 & 0.05 \\
0.1451 & 0.1088 & 0.05 & 0.05 & 0.1088
\end{pmatrix}$ & \makecell{$\frac{3+2\sqrt{5+2\sqrt{5}}}{5}$\\$\approx1.8311$} \\ \hline
        $\displaystyle \ket{H,-1} 	\propto (1-\sqrt{5})\ket{0} 
        + \ket{1} 
        +\ket{2} 
        + \ket{3} 
        + \ket{4}$ 
        & $\begin{pmatrix}
0.2 & 0.0191 & 0.0191 & 0.0191 & 0.0191 \\
0.0191 & 0.1309 & -0.05 & -0.05 & 0.1309 \\
0.0191 & -0.05 & 0.1309 & 0.1309 & -0.05 \\
0.0191 & -0.05 & 0.1309 & 0.1309 & -0.05 \\
0.0191 & 0.1309 & -0.05 & -0.05 & 0.1309
\end{pmatrix}$ & $\frac{9}{5}=1.8$ \\ \hline
        $\displaystyle \ket{X_{V\hat{S},1}} 	\propto \ket{0} + \ket{1} + \omega^3\ket{2} + \ket{3} + \omega^2\ket{4}$ 
        & $\begin{pmatrix}
-0.0894 & 0.0894 & 0.1447 & 0& 0.0553 \\
-0.0894 & 0.0894 & 0.1447 & 0& 0.0553 \\
0.0553 & -0.0894 & 0.0894 & 0.1447 & 0\\
0.1447 & 0& 0.0553 & -0.0894 & 0.0894 \\
0.0553 & -0.0894 & 0.0894 & 0.1447 & 0.
\end{pmatrix}$ & \makecell{$1+\frac{2}{\sqrt{5}}$\\$\approx1.8944$} \\ \hline
        $\displaystyle \ket{B^\prime,-1} 	\propto (3+\sqrt{5})\ket{0} 
        - 2(\ket{1} + \ket{2} + \ket{3} + \ket{4})$ 
        & $\begin{pmatrix}
0.2 & 0.1079 & 0.1079 & 0.1079 & 0.1079 \\
-0.0412 & 0.1079 & -0.0412 & -0.0412 & 0.1079 \\
-0.0412 & -0.0412 & 0.1079 & 0.1079 & -0.0412 \\
-0.0412 & -0.0412 & 0.1079 & 0.1079 & -0.0412 \\
-0.0412 & 0.1079 & -0.0412 & -0.0412 & 0.1079
\end{pmatrix}$ & \makecell{$1+\frac{4}{\sqrt{5}}$\\$\approx1.9889$} \\ \hline
        $\displaystyle \ket{B^\prime,-\omega_3} 	\propto 4 \kappa_+\ket{0} 
        - \kappa_+^2(\ket{1}+\ket{4}) +4\ket{2} +4\ket{3}$ 
        & $\begin{pmatrix}
0.2 & 0.0849 & -0.0928 & -0.0928 & 0.0849 \\
0.0916 & -0.0928 & 0.0496 & 0.0496 & -0.0928 \\
0.0496 & 0.0916 & 0.0849 & 0.0849 & 0.0916 \\
0.0496 & 0.0916 & 0.0849 & 0.0849 & 0.0916 \\
0.0916 & -0.0928 & 0.0496 & 0.0496 & -0.0928
\end{pmatrix}$ & \makecell{$\frac{3+\sqrt{5}+\sqrt{15+6\sqrt{5}}}{5}$\\$\approx2.1134$} \\ \hline
        $\displaystyle \ket{B^\prime_{\omega_3}} 	\propto \eta_+(\ket{1}-\ket{4})
        + 4(\ket{2} - \ket{3})$ 
        & $\begin{pmatrix}
-0.2 & 0.071 & 0.029 & 0.029 & 0.071 \\
-0.0388 & 0.029 & 0.1388 & 0.1388 & 0.029 \\
0.1388 & -0.0388 & 0.071 & 0.071 & -0.0388 \\
0.1388 & -0.0388 & 0.071 & 0.071 & -0.0388 \\
-0.0388 & 0.029 & 0.1388 & 0.1388 & 0.029
\end{pmatrix}$ & \makecell{$\frac{4+\sqrt{15+6\sqrt{5}}}{5}$\\$\approx1.8661$} \\ \hline
        $\displaystyle \ket{A_{-\omega_5^2}} 	\propto \ket{2} - \ket{3}$ 
        & $\begin{pmatrix}
-0.2 & -0.0618 & 0.1618 & 0.1618 & -0.0618 \\
0& 0& 0& 0& 0\\
0.1 & 0.1 & 0.1 & 0.1 & 0.1 \\
0.1 & 0.1 & 0.1 & 0.1 & 0.1 \\
0& 0& 0& 0& 0.
\end{pmatrix}$ & \makecell{$\frac{6}{5}+\frac{1}{\sqrt{5}}$\\$\approx1.6472$} \\ \hline
        $\displaystyle \ket{A_{\omega_5^2}} 	\propto \ket{2}+\ket{3}$ 
        & $\begin{pmatrix}
0.2 & 0.0618 & -0.1618 & -0.1618 & 0.0618 \\
0& 0& 0& 0& 0\\
0.1 & 0.1 & 0.1 & 0.1 & 0.1 \\
0.1 & 0.1 & 0.1 & 0.1 & 0.1 \\
0& 0& 0& 0& 0.
\end{pmatrix}$ & \makecell{$\frac{6}{5}+\frac{1}{\sqrt{5}}$\\$\approx1.6472$} \\ \hline
    \end{tabular}
    }
    \caption{
    Unnormalized Single-ququint nonstabilizer Clifford-inequivalent Clifford nondegenerate eigenstates, their corresponding Wigner function matrices, and Wigner trace norms.}
    \label{tab:Single-ququint nonstabilizer Clifford-inequivalent Clifford nondegenerate eigenstates and Wigner function}
\end{table}

\begin{table}[h!]
    \centering
    \resizebox{\linewidth}{!}{
    \begin{tabular}{|c|c|c|c|}
        \hline
        \textbf{Unnormalized State} & 
        \textbf{\makecell{Stabilizer\\Fidelity}} & 
        \textbf{\makecell{Number of Nearest\\Stabilizer States}} &
        \textbf{\makecell{Nearest\\Stabilizer States}} \\ \hline
        $\displaystyle \ket{H,i} 	\propto \sqrt{1+\chi}(\ket{1}-\ket{4}) 
        + \sqrt{1-\chi}(\ket{2} -\ket{3})$ 
        & $0.4627$ & $4$ &
        \makecell{
        $\ket{1}$ ,
        $\ket{4}$ , \\
        $\frac{e^{\pm 2\pi i/5}\ket{0}+e^{\pm 4\pi i/5}\ket{1}+e^{\mp 4\pi i/5}\ket{2}+e^{\mp 2\pi i/5}\ket{3}+\ket{4}}{\sqrt{5}}$
        }
        \\ \hline
        
        $\displaystyle \ket{H,-1} 	\propto (1-\sqrt{5})\ket{0} 
        + \ket{1} 
        +\ket{2} 
        + \ket{3} 
        + \ket{4}$ 
        & $0.7236$ & $2$ &
        \makecell{
        $\frac{e^{\pm 4\pi i/5}\ket{0}+\ket{1}+e^{\mp 2\pi i/5}\ket{2}+e^{\mp 2\pi i/5}\ket{3}+\ket{4}}{\sqrt{5}}$
        }
        \\ \hline
        
        $\displaystyle \ket{X_{V\hat{S},1}} 	\propto \ket{0} + \ket{1} + \omega^3\ket{2} + \ket{3} + \omega^2\ket{4}$ 
        & $0.5236$ & $5$ & 
        \makecell{
        $\frac{e^{4\pi i/5}\ket{0}+e^{-2\pi i/5}\ket{1}+e^{2\pi i/5}\ket{2}+e^{-4\pi i/5}\ket{3}+\ket{4}}{\sqrt{5}}$ , \\
        $\frac{e^{-4 \pi i/5}\ket{0}+e^{4\pi i/5}\ket{1}+e^{4\pi i/5}\ket{2}+e^{-4\pi i/5}\ket{3}+\ket{4}}{\sqrt{5}}$ , \\
        $\frac{e^{\pm 4\pi i/5}\ket{0}+e^{\pm 4\pi i/5}\ket{1}+\ket{2}+e^{\pm2\pi i/5}\ket{3}+\ket{4}}{\sqrt{5}}$ , \\
        $\frac{\ket{0}+e^{-2\pi i/5}\ket{1}+e^{4\pi i/5}\ket{2}+e^{-2\pi i/5}\ket{3}+\ket{4}}{\sqrt{5}}$
        }
         \\ \hline
        
        $\displaystyle \ket{B^\prime,-1} 	\propto (3+\sqrt{5})\ket{0} 
        - 2(\ket{1} + \ket{2} + \ket{3} + \ket{4})$ 
        & $0.6315$ & $3$ & 
        \makecell{
        $\ket{0}$ , $\frac{e^{\pm 4\pi i/5}\ket{0}+\ket{1}+e^{\mp 2\pi i/5}\ket{2}+e^{\mp 2\pi i/5}\ket{3}+\ket{4}}{\sqrt{5}}$
        }
        \\ \hline
        
        $\displaystyle \ket{B^\prime,-\omega_3} 	\propto 4 \kappa_+\ket{0} 
        - \kappa_+^2(\ket{1}+\ket{4}) +4\ket{2} +4\ket{3}$ 
        & $0.4824$ & $3$ & 
        \makecell{
        $\frac{\ket{0}+\ket{1}+\ket{2}+\ket{3}+\ket{4}}{\sqrt{5}}$ , \\
        $\frac{e^{\pm 2\pi i/5}\ket{0}+\ket{1}+e^{\pm 4\pi i/5}\ket{2}+e^{\pm 4\pi i/5}\ket{3}+\ket{4}}{\sqrt{5}}$
        }
        \\ \hline
        
        $\displaystyle \ket{B^\prime_{\omega_3}} 	\propto \eta_+(\ket{1}-\ket{4})
        + 4(\ket{2} - \ket{3})$ 
        & $0.4487$ & $6$ & 
        \makecell{
        $\frac{e^{\pm 4\pi i/5}\ket{0}+e^{\mp 2\pi i/5}\ket{1}+e^{\pm 2\pi i/5}\ket{2}+e^{\mp 4\pi i/5}\ket{3}+\ket{4}}{\sqrt{5}}$ , \\
        $\frac{\ket{0}+e^{\pm 4\pi i/5}\ket{1}+e^{\pm 2\pi i/5}\ket{2}+e^{\pm 4\pi i/5}\ket{3}+\ket{4}}{\sqrt{5}}$ , \\
        $\frac{e^{\pm 4\pi i/5}\ket{0}+e^{\pm 4\pi i/5}\ket{1}+\ket{2}+e^{\pm 2\pi i/5}\ket{3}+\ket{4}}{\sqrt{5}}$
        }
        \\ \hline
        
        $\displaystyle \ket{A_{-\omega_5^2}} 	\propto \ket{2} - \ket{3}$ 
        & $0.5$ & $2$ & 
        \makecell{
        $\ket{2}$ , $\ket{3}$
        }
        \\ \hline
        
        $\displaystyle \ket{A_{\omega_5^2}} 	\propto \ket{2}+\ket{3}$ 
        & $0.5$ & $2$ & 
        \makecell{
        $\ket{2}$ , $\ket{3}$
        }
        \\ \hline
    \end{tabular}
    }
    \caption{
    Unnormalized Single-ququint nonstabilizer Clifford-inequivalent Clifford nondegenerate eigenstates, their stabilizer fidelities, number of nearest stabilizer states, and nearest stabilizer states.}
    \label{tab:Single-ququint nonstabilizer Clifford-inequivalent Clifford nondegenerate eigenstates and nearest SS}
\end{table}

\begin{figure}[h!]
    \centering
    \begin{subfigure}[b]{0.3\linewidth}
        \centering
        \includegraphics[width=\linewidth]{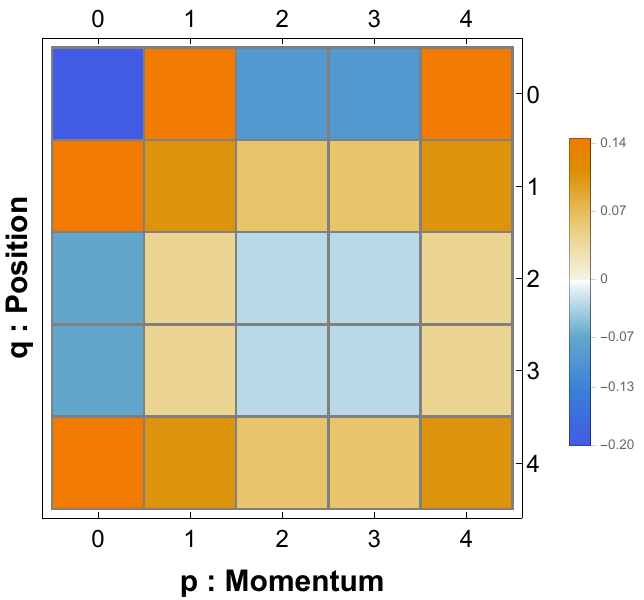}
        \caption{$\ket{H,i}$}
        \label{fig:H_i}
    \end{subfigure}
    \hfill
    \begin{subfigure}[b]{0.3\linewidth}
        \centering
        \includegraphics[width=\linewidth]{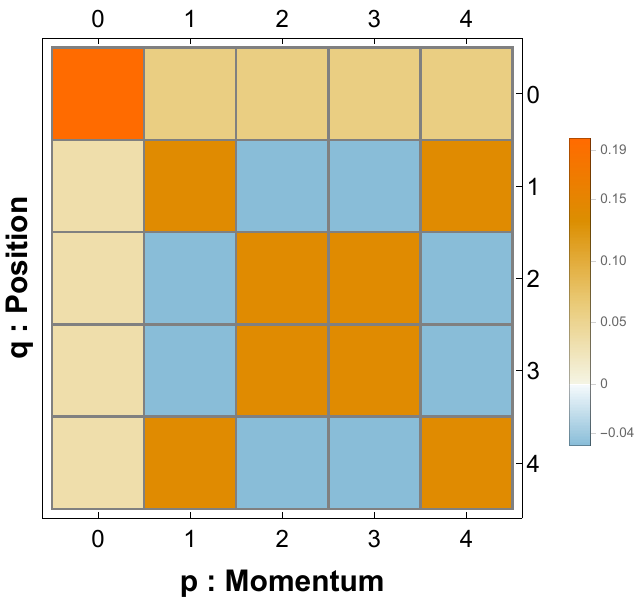}
        \caption{$\ket{H,-1}$}
        \label{fig:H_minus1}
    \end{subfigure}
    \hfill
    \begin{subfigure}[b]{0.3\linewidth}
        \centering
        \includegraphics[width=\linewidth]{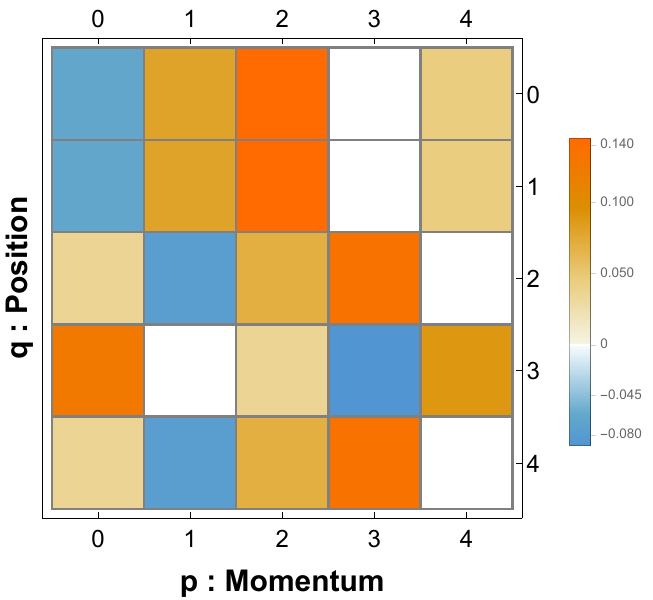}
        \caption{$\ket{X_{V\hat{S},1}}$}
        \label{fig:X_V_hatS_1}
    \end{subfigure}
    
    \vspace{0.5cm}
    
    \begin{subfigure}[b]{0.3\linewidth}
        \centering
        \includegraphics[width=\linewidth]{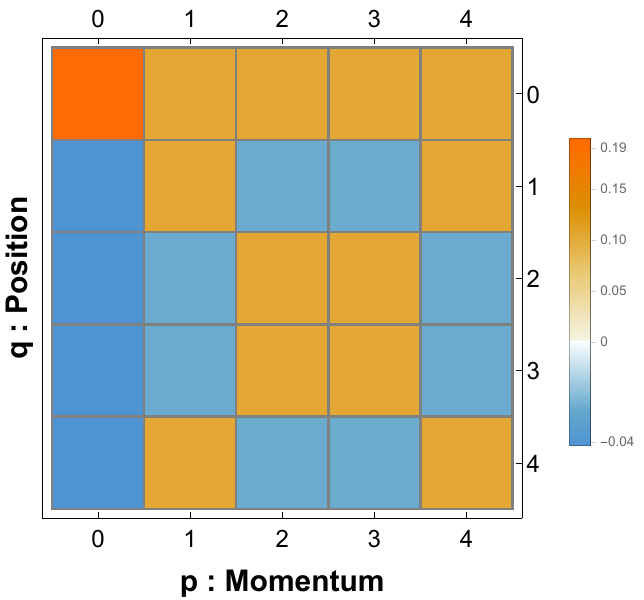}
        \caption{$\ket{B^\prime,-1}$}
        \label{fig:Bprime_minus1}
    \end{subfigure}
    \hfill
    \begin{subfigure}[b]{0.3\linewidth}
        \centering
        \includegraphics[width=\linewidth]{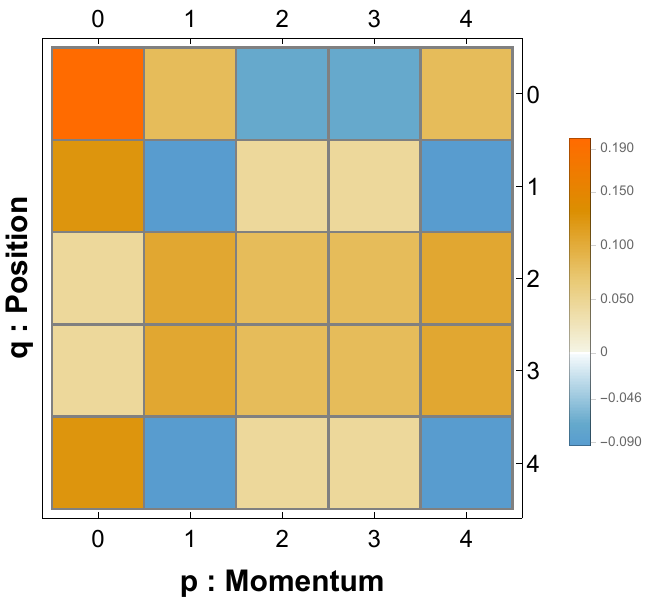}
        \caption{$\ket{B^\prime,-\omega_3}$}
        \label{fig:Bprime_minus_omega3}
    \end{subfigure}
    \hfill
    \begin{subfigure}[b]{0.3\linewidth}
        \centering
        \includegraphics[width=\linewidth]{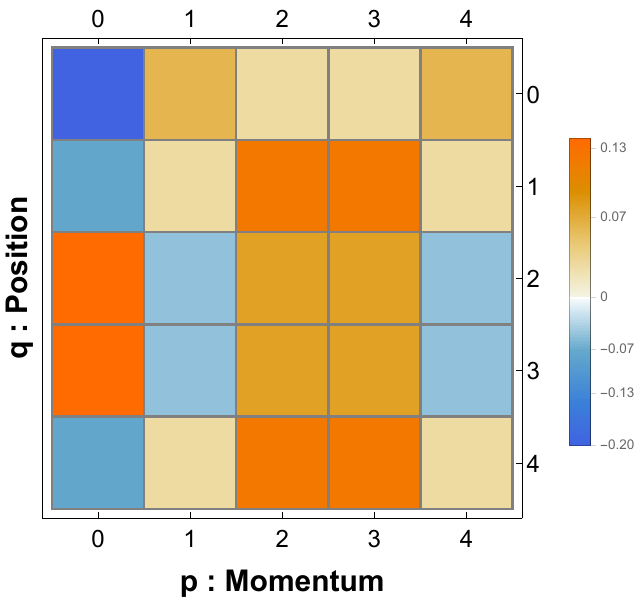}
        \caption{$\ket{B^\prime_{\omega_3}}$}
        \label{fig:Bprime_omega3}
    \end{subfigure}
    
    \vspace{0.5cm}
    
    \begin{subfigure}[b]{0.45\linewidth}
        \centering
        \includegraphics[width=\linewidth]{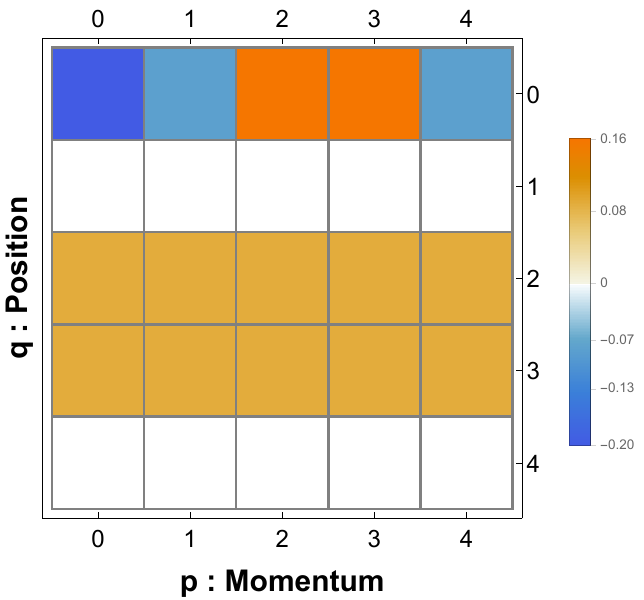}
        \caption{$\ket{A_{-\omega_5^2}}$}
        \label{fig:A_minus_omega5sq2}
    \end{subfigure}
    \hfill
    \begin{subfigure}[b]{0.45\linewidth}
        \centering
        \includegraphics[width=\linewidth]{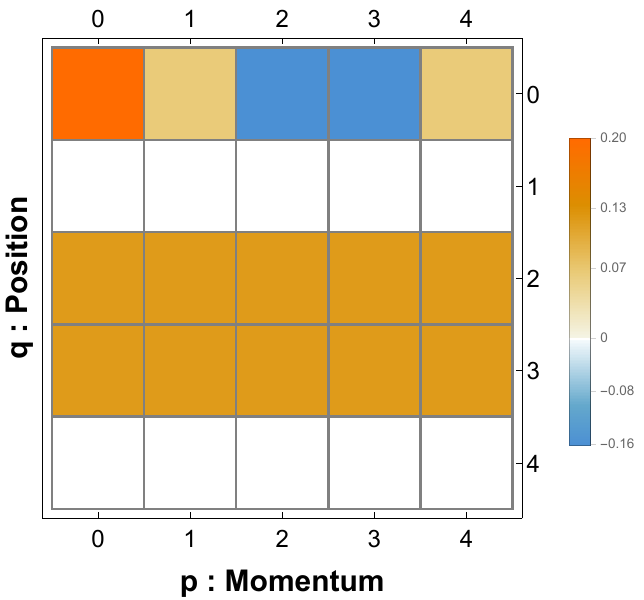}
        \caption{$\ket{A_{\omega_5^2}}$}
        \label{fig:A_omega5sq2}
    \end{subfigure}
    
    \caption{Wigner function heatmap visualizations for all nonstabilizer Clifford-inequivalent non-degenerate eigenstates of Clifford operations of single ququints.}
    \label{fig:ququint Wigner functions visualizations}
\end{figure}

These results highlight the structure and properties of ququint eigenstates, supporting the theoretical discussions and examples of the main text.

\section{$\mL$ matrices and Stabilizer Fidelity for the single ququint states}
\label{app:L matrices and Stabilizer Fidelity for Single Ququint Clifford-stabilizer States}
After numbering the nearest stabilizer states as in the order presented in Table~\ref{tab:Single-ququint nonstabilizer Clifford-inequivalent Clifford nondegenerate eigenstates and Wigner function} and Table~\ref{tab:Single-ququint nonstabilizer Clifford-inequivalent Clifford nondegenerate eigenstates and nearest SS}, we computed the corresponding $\mL$ matrices. The results are shown below:
\newpage
\small
\begin{subequations}
\label{eq:L matrices for ququints}
    \begin{gather}
        \mL \left( \ket{H,+i}, \ket{H,-i}, \ket{H,-1} , \ket{H,1;1} , \ket{H,1;2} \right) \approx 
        \left(
\begin{array}{cccc}
0.1314 & 0.2893 & 0.3165 & 0\\
0.1314 & -0.2893 & -0.3165 & 0\\
-0.1314 & 0.2893 i & -0.3165 i & 0\\
-0.1314 & -0.2893 i & 0.3165 i & 0\\
\end{array}
\right)
  ,  
\\
\mL  \left( \ket{H,-1}, \ket{H,+i}, \ket{H,-i} , \ket{H,1;1} , \ket{H,1;2} \right) 
        \approx 
        \left(
\begin{array}{cccc}
0& 0& 0.2081 i & -0.2081 i \\
0& 0& -0.2081 i & 0.2081 i \\
\end{array}
\right)
  ,
 \\
 \begin{split}
             \mL & \left( \ket{XV_{\hat{S}},1},  \ket{XV_{\hat{S}},e^{\frac{2\pi i}{5}}},  \ket{XV_{\hat{S}},e^{-\frac{2\pi i}{5}}} , \ket{XV_{\hat{S}},e^{\frac{4\pi i}{5}}} , \ket{XV_{\hat{S}},e^{\frac{-4\pi i}{5}}} \right) \\ &  \approx 
        \left(
\begin{array}{cccc}
0.1  -0.3078 i & 0& 0.2618  +0.1902 i & -0.1618+0.1176 i \\
0.3236 & 0& -0.3236 & 0.2 \\
-0.2618+0.1902 i & 0& -0.1+0.3078 i & 0.0618  +0.1902 i \\
0.1  +0.3078 i & 0& 0.2618  -0.1902 i & -0.1618-0.1176 i \\
-0.2618-0.1902 i & 0& -0.1-0.3078 i & 0.0618  -0.1902 i \\
\end{array}
\right)
  ,
 \end{split}
 \\
 \begin{split}
             \mL & \left( \ket{B^\prime,-1}, \ket{B^\prime,  -e^{\frac{2\pi i}{3}}}, \ket{B^\prime,-e^{-\frac{2\pi i}{3}}} , \ket{B^\prime, e^{\frac{2\pi i}{3}}} , \ket{B^\prime,e^{-\frac{2\pi i}{3}}} \right)
             \\
             & \approx 
        \left( \begin{array}{cccc}
             0.3411 & -0.3411 & 0& 0\\
             -0.1706+0.2954 i & 0.1706  +0.2954 i & 0& 0\\
             -0.1706-0.2954 i & 0.1706  -0.2954 i & 0& 0\\
            \end{array} \right)
          ,
 \end{split}
 \\
 \begin{split}
             \mL & \left( \ket{B^\prime,-e^{\frac{2\pi i}{3}}}, \ket{B^\prime,-e^{-\frac{2\pi i}{3}}} ,  \ket{B^\prime, e^{\frac{2\pi i}{3}}} , \ket{B^\prime,e^{-\frac{2\pi i}{3}}} , \ket{B^\prime,-1} \right) \\  & \approx \left(
\begin{array}{cccc}
 -0.4824 & 0& 0& -0.1303 \\
 0.2412  +0.4178 i & 0& 0& 0.0651  -0.1128 i \\
 0.2412  -0.4178 i & 0& 0& 0.0651  +0.1128 i \\
\end{array}
\right)   ,
 \end{split}
 \\
 \begin{split}
             \mL & \left( \ket{B^\prime, e^{\frac{2\pi i}{3}}} ,  \ket{B^\prime,e^{-\frac{2\pi i}{3}}} , \ket{B^\prime,-1} , \ket{B^\prime,-e^{\frac{2\pi i}{3}}}, \ket{B^\prime,-e^{-\frac{2\pi i}{3}}} \right) \\ & \approx \left(
\begin{array}{cccc}
 0.1518 & 0.329 i & 0.2812 i & 0.1924 i \\
 0.1518 & -0.329 i & -0.2812 i & -0.1924 i \\
 -0.0759+0.1314 i & 0.2849  -0.1645 i & 0.2812 i & -0.1666-0.0962 i \\
 -0.0759-0.1314 i & 0.2849  +0.1645 i & -0.2812 i & -0.1666+0.0962 i \\
 -0.0759-0.1314 i & -0.2849-0.1645 i & 0.2812 i & 0.1666  -0.0962 i \\
 -0.0759+0.1314 i & -0.2849+0.1645 i & -0.2812 i & 0.1666 +0.0962 i \\
\end{array}
\right)    ,
 \end{split}
 \\
          \mL \left( \ket{A,e^{-\frac{\pi i}{5}}}, \ket{A,e^{\frac{\pi i}{5}}}, \ket{A,e^{\frac{4\pi i}{5}}} , \ket{A,e^{-\frac{4\pi i}{5}}} , \ket{A,1} \right)   \approx \left(
\begin{array}{cccc}
 0& 0.5 & 0& 0\\
 0& -0.5 & 0& 0\\
\end{array}
\right)     ,
 \\
             \mL \left( \ket{A,e^{\frac{4\pi i}{5}}},  \ket{A,e^{-\frac{4\pi i}{5}}}, \ket{A,e^{\frac{\pi i}{5}}} , \ket{A,e^{-\frac{\pi i}{5}}} , \ket{A,1} \right) \approx \left(
\begin{array}{cccc}
 0& 0& 0.5 & 0\\
 0& 0& -0.5 & 0\\
\end{array}
\right)   .
    \end{gather}
\end{subequations}
\normalsize{}

Taking a general variation $\ket{\varphi} = \sum_{j=2}^5 (a_j + i b_j) \ket{\psi_j}$ with real coefficients $a_j, b_j$ satisfying $\sum_{j=2}^5 (a_j^2 + b_j^2) = 1$ and $\ket{\psi_j}$ running over the four eigenstates of the relevant operator, orthogonal to $\ket{\psi_1}$, we compute the first order in $\epsilon$ of the variation in the squared overlap amplitude with the nearest stabilizer states for each of the investigated Clifford-equivalent, non-degenerate Clifford eigenstates, as follows:
\small
\begin{subequations}
    \begin{gather}
\text{for } \ket{H,+i} \text{ : }
\begin{pmatrix}
 0.262866 a_2+0.578607 a_3+0.633044 a_4 \\
 0.262866 a_2-0.578607 a_3-0.633044 a_4 \\
 -0.262866 a_2-0.578607 b_3+0.633044 b_4 \\
 -0.262866 a_2+0.578607 b_3-0.633044 b_4 \\
\end{pmatrix}
\\
\text{for } \ket{H,-1} \text{ : }
\left(
\begin{array}{c}
 0.416214 b_5-0.416214 b_4 \\
 0.416214 b_4-0.416214 b_5 \\
\end{array}
\right)
\\
\text{for } \ket{XV_{\hat{S}},1} \text{ : }
\left(
\begin{array}{c}
 0.2 a_2+0.523607 a_4-0.323607 a_5+0.615537 b_2-0.380423 b_4-0.235114 b_5 \\
 0.647214 a_2-0.647214 a_4+0.4 a_5 \\
 -0.523607 a_2-0.2 a_4+0.123607 a_5-0.380423 b_2-0.615537 b_4-0.380423 b_5 \\
 0.2 a_2+0.523607 a_4-0.323607 a_5-0.615537 b_2+0.380423 b_4+0.235114 b_5 \\
 -0.523607 a_2-0.2 a_4+0.123607 a_5+0.380423 b_2+0.615537 b_4+0.380423 b_5 \\
\end{array}
\right)
\\
\text{for } \ket{B^\prime,-1} \text{ : }
\left(
\begin{array}{c}
 0.682223 a_2-0.682223 a_3 \\
 -0.341112 a_2+0.341112 a_3-0.590822 b_2-0.590822 b_3 \\
 -0.341112 a_2+0.341112 a_3+0.590822 b_2+0.590822 b_3 \\
\end{array}
\right)
\\
\text{for } \ket{B^\prime,-e^\frac{2\pi i}{3}} \text{ : }
\left(
\begin{array}{c}
 -0.964809 a_2-0.260586 a_5 \\
 0.482405 a_2+0.130293 a_5-0.835549 b_2+0.225674 b_5 \\
 0.482405 a_2+0.130293 a_5+0.835549 b_2-0.225674 b_5 \\
\end{array}
\right)
\\
\begin{split}
& \text{for } \ket{B^\prime,e^\frac{2\pi i}{3}} \text{ : }
\\ & \left(
\begin{array}{c}
 0.303531 a_2-0.657932 b_3-0.562487 b_4-0.384786 b_5 \\
 0.303531 a_2+0.657932 b_3+0.562487 b_4+0.384786 b_5 \\
 -0.151765 a_2+0.569786 a_3-0.333235 a_5-0.262866 b_2+0.328966 b_3-0.562487
   b_4+0.192393 b_5 \\
 -0.151765 a_2+0.569786 a_3-0.333235 a_5+0.262866 b_2-0.328966 b_3+0.562487
   b_4-0.192393 b_5 \\
 -0.151765 a_2-0.569786 a_3+0.333235 a_5+0.262866 b_2+0.328966 b_3-0.562487
   b_4+0.192393 b_5 \\
 -0.151765 a_2-0.569786 a_3+0.333235 a_5-0.262866 b_2-0.328966 b_3+0.562487
   b_4-0.192393 b_5 \\
\end{array}
\right)
\end{split}
\\
\text{for } \ket{A,e^{-\frac{\pi i}{5}}} \text{ : }
\left(
\begin{array}{c}
 a_3 \\
 - a_3 \\
\end{array}
\right)
\\
\text{for } \ket{A,e^\frac{4\pi i}{5}} \text{ : }
\left(
\begin{array}{c}
  a_4 \\
 - a_4 \\
\end{array}
\right)
    \end{gather}
\end{subequations}
\normalsize{}
It is evident that none of these states can be a sharp minimum in all directions. A more detailed investigation of their behavior is left to the reader.

\newpage
\section{Examples of $\alpha$-SRE for single ququints}
\label{app:SREforSingleQuquints}

Here we present the computed values of the $\alpha$-SRE for all single-ququint, non-stabilizer, Clifford-inequivalent, nondegenerate Clifford eigenstates.

\begin{subequations}
\begin{gather}
\begin{split}
M_\alpha(\ket{H,\pm i}) 
&= \frac{1}{1-\alpha}\log  \left[\frac{1}{5} 2^{-5\alpha}  \left(
2^{3+\alpha} + 32^{\alpha} \right.\right. \\
&\quad \left.\left.
+ 4(7 - \sqrt{5} - 2\sqrt{10 - 2\sqrt{5}})^{\alpha} 
+ 4(7 - \sqrt{5} + 2\sqrt{10 - 2\sqrt{5}})^{\alpha}
\right.\right.\\
&\quad \left.\left.
+ 4(7 + \sqrt{5} - 2\sqrt{2(5 + \sqrt{5})})^{\alpha}
+ 4(7 + \sqrt{5} + 2\sqrt{2(5 + \sqrt{5})})^{\alpha}
\right)\right],
\end{split} \\
M_\alpha(\ket{H,-1}) = \frac{1}{1-\alpha}\log{\left[\frac{1}{5} 2^{-5\alpha}  \left(2^{3+\alpha} + 32^\alpha 
+ 8(7 - 3\sqrt{5})^\alpha 
+ 8(7 + 3\sqrt{5})^\alpha\right)\right]}, \\
M_\alpha(\ket{X V_{\hat{S}},1}) = \frac{1}{1-\alpha}\log{\left[\frac{1}{5} + 4\cdot 5^{-\alpha}\right]}, \\
M_\alpha(\ket{B^\prime,-1}) = \frac{1}{1-\alpha}\log{\left[\frac{1}{5} 18^{-\alpha}  \left(18^\alpha + 12(3 - \sqrt{5})^\alpha + 12(3 + \sqrt{5})^\alpha\right)\right]}, \\
\begin{split}
M_\alpha(\ket{B',-e^{\pm \frac{2\pi i}{3}}}) 
&= \frac{1}{1-\alpha}\log  \left[\frac{1}{5} 144^{-\alpha}  \left(
3\cdot 2^{1+\alpha}(12 + \sqrt{5} - \sqrt{15 - 6\sqrt{5}})^{\alpha}
\right.\right.\\
&\quad \left.\left.
+ 2(2 + \sqrt{5} + \sqrt{15 - 6\sqrt{5}})^{2\alpha} \right.\right.\\
&\quad \left.\left.
+ 2^{\alpha}  \Big(
72^{\alpha}
+ 6(12 + \sqrt{5} + \sqrt{15 - 6\sqrt{5}})^{\alpha}
+ 6(12 - \sqrt{5} - \sqrt{15 + 6\sqrt{5}})^{\alpha}  \right.\right. \\
& \quad \left.\left. + 4(12 - \sqrt{5} + \sqrt{15 + 6\sqrt{5}})^{\alpha}
\Big)
\right)\right],
\end{split} \\
\begin{split}
M_\alpha(\ket{B',e^{\pm\frac{2\pi i}{3}}}) 
&= \frac{1}{1-\alpha}\log  \left[\frac{1}{5} 24^{-\alpha}  \left(
24^{\alpha}
\right.\right.\\
&\quad \left.\left.
+ 6(4 - \sqrt{5} - \sqrt{15 - 6\sqrt{5}})^{\alpha}
+ 6(4 - \sqrt{5} + \sqrt{15 - 6\sqrt{5}})^{\alpha}
\right.\right.\\
&\quad \left.\left.
+ 6(4 + \sqrt{5} - \sqrt{15 + 6\sqrt{5}})^{\alpha}
+ 6(4 + \sqrt{5} + \sqrt{15 + 6\sqrt{5}})^{\alpha}
\right)\right],
\end{split} \\
M_\alpha(\ket{A_{\pm\omega_5^2}}) = \frac{1}{1-\alpha}\log{\left[\frac{1}{5} 8^{-\alpha}  \left(2^\alpha(10 + 4^\alpha) 
+ 2(3 - \sqrt{5})^\alpha + 2(3 + \sqrt{5})^\alpha\right)\right]}. 
    \end{gather}
\end{subequations}
As can be seen, none of them achieve the upper bound in Eq.~(\ref{eq: Bound on SRE}). This discrepancy may be related to the fact that none of these states attains the minimum value of the stabilizer fidelity too.
For the particularly relevant case $\alpha = 2$, these expressions yield
\begin{equation}
    \begin{split}
        & M_2 \left( \ket{H_\pm} \right) =
        M_2 \left( \ket{H_{-1}} \right) =
        M_2 \left( \ket{B',e^{\pm\frac{2\pi i}{3}}} \right) =
        M_2 \left( \ket{A_{\pm\omega_5^2}} \right) = \log 2,\\
        & M_2 \left( \ket{X V_{\hat{S}},1} \right) = \log \frac{25}{9}
        \quad , \quad
        M_2 \left( \ket{B^\prime,-1} \right) =  \log \frac{27}{11}
        \quad , \quad
        M_2 \left( \ket{B',-e^{\pm \frac{2\pi i}{3}}} \right) =  \log \frac{54}{19}.
    \end{split}
\end{equation}

\newpage
\section{Clifford-inequivalent Non-degenerate Eigenstates of $\mC_{2,2}$}
\label{app:Finding the eigenstates of 2 qubits}

In this appendix, we systematically identify and classify all Clifford-inequivalent non-degenerate eigenstates of two-qubit Clifford operations, providing explicit constructions and equivalence arguments.

\subsection*{Eigenstates of representatives of the group classes}

We first find and identify the non-degenerate eigenstates of representatives of the conjugacy classes of the two-qubit Clifford group (see Table 1 in \cite{CliffordClassesForTwoQubits}). The results are summarized in Table~\ref{tab:eigenstates of Clifford classes representatives for two qubits}, where we define the following states here.

\begin{subequations}
    \begin{gather}
        \resizebox{\textwidth}{!}{$\ket{G_{16},1}=-\frac{\left(10+3\sqrt{5}-2\sqrt{5\bigl(5+2\sqrt{5}\bigr)}\right)^{\frac{1}{4}}}{\sqrt{10}}  e^{i\pi/4}\ket{00} + \frac{\left(10+3\sqrt{5}+2\sqrt{5\bigl(5+2\sqrt{5}\bigr)}\right)^{\frac{1}{4}}}{\sqrt{10}}  e^{i\pi/4}\ket{01} + \frac{i}{\sqrt{5(3+\sqrt{5})+\sqrt{250+110\sqrt{5}}}}\ket{10} + \frac{1}{\sqrt{5(3+\sqrt{5})-\sqrt{250+110\sqrt{5}}}}\ket{11}   $} \\
        \resizebox{\textwidth}{!}{$\ket{G_{16},2} =-\frac{\left(10-3\sqrt{5}-2\sqrt{5\bigl(5-2\sqrt{5}\bigr)}\right)^{\frac{1}{4}}}{\sqrt{10}}  e^{i\pi/4}\ket{00} - \frac{\left(10-3\sqrt{5}+2\sqrt{5\bigl(5-2\sqrt{5}\bigr)}\right)^{\frac{1}{4}}}{\sqrt{10}}  e^{i\pi/4}\ket{01} + \frac{i}{2}\sqrt{\frac{5+\sqrt{5}+\sqrt{2\bigl(5+\sqrt{5}\bigr)}}{5}} \ket{10} + \frac{1}{2}\sqrt{1+\frac{1}{\sqrt{5}}-\frac{1}{5}\sqrt{2\bigl(5+\sqrt{5}\bigr)}} \ket{11}   $} \\
        \resizebox{\textwidth}{!}{$\ket{G_{16},3}=\frac{\left(10-3\sqrt{5}+2\sqrt{5\bigl(5-2\sqrt{5}\bigr)}\right)^{\frac{1}{4}}}{\sqrt{10}}  e^{i\pi/4}\ket{00} - \frac{\left(10-3\sqrt{5}-2\sqrt{5\bigl(5-2\sqrt{5}\bigr)}\right)^{\frac{1}{4}}}{\sqrt{10}}  e^{i\pi/4}\ket{01} - \frac{i}{2}\sqrt{1+\frac{1}{\sqrt{5}}-\frac{1}{5}\sqrt{2\bigl(5+\sqrt{5}\bigr)}} \ket{10} + \frac{1}{2}\sqrt{\frac{5+\sqrt{5}+\sqrt{2\bigl(5+\sqrt{5}\bigr)}}{5}} \ket{11}   $} \\
        \resizebox{\textwidth}{!}{$\ket{G_{16},4}=-\frac{\left(10+3\sqrt{5}+2\sqrt{5\bigl(5+2\sqrt{5}\bigr)}\right)^{\frac{1}{4}}}{\sqrt{10}}  e^{i\pi/4}\ket{00} - \frac{\left(10+3\sqrt{5}-2\sqrt{5\bigl(5+2\sqrt{5}\bigr)}\right)^{\frac{1}{4}}}{\sqrt{10}}  e^{i\pi/4}\ket{01} - \frac{i}{2}\sqrt{\frac{5-\sqrt{5}+\sqrt{10-2\sqrt{5}}}{5}} \ket{10} + \frac{1}{\sqrt{5(3+\sqrt{5})+\sqrt{250+110\sqrt{5}}}} \ket{11}   . $}
    \end{gather}
\end{subequations}

\begin{subequations}
    \begin{gather}
        \ket{G_{18},1}=\frac{i\sqrt{2} \ket{00}+2i \ket{10}+\sqrt{2} \ket{11}}{2\sqrt{2}}   ,   \\
        \ket{G_{18},2}=\frac{-i\sqrt{2} \ket{00}-2i \ket{01}+\sqrt{2} \ket{11}}{2\sqrt{2}}   ,  \\
        \ket{G_{18},3}=\frac{-i\sqrt{2} \ket{00}+2i \ket{01}+\sqrt{2} \ket{11}}{2\sqrt{2}}   ,  \\       \ket{G_{18},4}=\frac{i\sqrt{2} \ket{00}-2i \ket{10}+\sqrt{2} \ket{11}}{2\sqrt{2}}   .
    \end{gather}
\end{subequations}

\begin{subequations}
    \begin{gather}
        \ket{G_{20},1}=\frac{i \ket{00}-i \ket{01}-(1+2i) \ket{10}+\ket{11}}{2\sqrt{2}}   ,   \\
        \ket{G_{20},2}=\frac{-(2-i) \ket{00}-(2-i) \ket{01}+(1+2i) \ket{10}+5 \ket{11}}{2\sqrt{10}}   ,  \\
        \ket{G_{20},3}=\frac{-i \ket{00}+(2+i) \ket{01}-\ket{10}+\ket{11}}{2\sqrt{2}}   ,  \\
        \ket{G_{20},4}=\frac{(2-i) \ket{00}-i \ket{01}+\ket{10}+\ket{11}}{2\sqrt{2}}   .
    \end{gather}
\end{subequations}

Table~\ref{tab:eigenstates of Clifford classes representatives for two qubits} presents the representatives of the conjugacy classes of the two-qubit Clifford group (see Table 1 in \cite{CliffordClassesForTwoQubits}) alongside their corresponding non-degenerate eigenstates. The results for non-entangling operations are straightforward. However, the table highlights in red the entangling operations, specifically, those whose representatives include the CZ gate.

\renewcommand{\arraystretch}{1.1}
\begin{table}[h!]
    \centering
    \begin{tabular}{|c|c|c|c|}
        \hline
        \textbf{Label} & \textbf{Size} & \textbf{Representative} & \textbf{Non-Degenerate Eigenstates} \\
        \hline
        No.1  & 1    & $I$ & none \\
        \hdashline
        No.2  & 640  & $S_2 S_1 H_2 H_1$ & $|T_0T_0\rangle$ , $|T_1T_1\rangle$ \\
        \hdashline
        No.3  & 60   & $C$ & $|11\rangle$ \\
        \hdashline
        {\color{red}No.4} & 1920 & $C H_2 H_1$ & \tiny{\makecell{$\frac{\ket{01}-\ket{10}}{\sqrt{2}}$ , $\ket{G_4,2}=\frac{2 \ket{00}+\ket{01}+\ket{10}}{\sqrt{6}}$ , \\ $\ket{G_4,3}=\frac{-i \ket{00}+i \ket{01}+i \ket{10}+\sqrt{3} \ket{11}}{\sqrt{6}}$ , \\  $\ket{G_4,4}=\frac{i \ket{00}-i \ket{01}-i \ket{10}+\sqrt{3} \ket{11}}{\sqrt{6}}$}} \\
        \hdashline
        No.5  & 15   & $S_1^2$ & none \\
        \hdashline
        No.6  & 180  & $S_2 S_1$ & $|00\rangle$ , $|11\rangle$ \\
        \hdashline
        {\color{red}No.7} & 720  & $C H_1$ & $|H_0 0\rangle$ , $|H_1 0\rangle$ , $|R 1\rangle$ , $|L 1\rangle$ \\
        \hdashline
        No.8  & 30   & $S_1$ & none \\
        \hdashline
        No.9  & 360  & $S_2 H_1$ & $|H_0 0\rangle$ , $|H_0 1\rangle$ , $|H_1 0\rangle$ , $|H_1 1\rangle$ \\
        \hdashline
        {\color{red}No.10} & 180  & $H_2 S_2^2 H_2 C$ & $|0+\rangle$ , $|0-\rangle$ , $|1R\rangle$ , $|1L\rangle$ \\
        \hdashline
        No.11 & 90   & $S_2 S_1^2$ & $|00\rangle$ , $|01\rangle$ , $|10\rangle$ , $|11\rangle$ \\
        \hdashline
        No.12  & 120  & $H_1$ & none \\
        \hdashline
        No.13  & 160  & $S_1 H_1$ & none \\
        \hdashline
        No.14  & 480  & $S_2^2 S_1 H_1$ & $|T_0 0\rangle$ , $|T_0 1\rangle$ , $|T_1 0\rangle$ , $|T_1 1\rangle$ \\
        \hdashline
        No.15  & 960  & $S_2 S_1 H_1$ & $|T_0 0\rangle$ , $|T_0 1\rangle$ , $|T_1 0\rangle$ , $|T_1 1\rangle$  \\
        \hdashline
        {\color{red}No.16} & 2304 & $C S_1 H_2 H_1$ & $\ket{G_{16},1}$ , $\ket{G_{16},2}$ , $\ket{G_{16},3}$ , $\ket{G_{16},4}$ \\
        \hdashline
        No.17  & 180  & $H_2 H_1$ & none \\
        \hdashline
        {\color{red}No.18} & 720  & $H_1 H_2 S_2 S_1^{-1} C H_1 H_2 C$ & $\ket{G_{18},1}$ , $\ket{G_{18},2}$ , $\ket{G_{18},3}$ , $\ket{G_{18},4}$ \\
        \hdashline
        No.19  & 960  & $S_1 H_2 H_1$ & $|T_0 H_0\rangle$ , $|T_0 H_1\rangle$ , $|T_1 H_0\rangle$ , $|T_1 H_1\rangle$  \\
        \hdashline
        {\color{red}No.20} & 720  & $H_1 C H_2 S_2 C$ & $\ket{G_{20},1}$ , $\ket{G_{20},2}$ , $\ket{G_{20},3}$ , $\ket{G_{20},4}$ \\
        \hdashline
        {\color{red}No.21} & 720  & $C S_2 H_1$ & $|H_0 0\rangle$ , $|H_1 0\rangle$ , $|R 1\rangle$ , $|L 1\rangle$ \\
        \hline
    \end{tabular}
    \caption{Representatives of conjugacy classes of the two-qubit Clifford group and their non-degenerate eigenstates. Here each $C$ is the CZ operation controlled by the first qubit.}
    \label{tab:eigenstates of Clifford classes representatives for two qubits}
\end{table}

\subsection*{Identifying the Clifford-inequivalent states}

Looking at the states in Table [1]:
\begin{enumerate}
    \item The states $|11\rangle$ , $|00\rangle$ , $|R 1\rangle$ , $|L 1\rangle$ , $\dfrac{\ket{01}-\ket{10}}{\sqrt{2}}$ , $|0+\rangle$ , $|0-\rangle$ , $|1R\rangle$ , $|1L\rangle$ , $|01\rangle$ , $|10\rangle$ are all stabilizer states and Clifford-equivalent to $\ket{00}$.
    \item The states $|H_0 0\rangle$ , $|H_1 0\rangle$ , $|H_0 1\rangle$ , $|H_1 1\rangle$ are all Clifford-equivalent to $\ket{H0}$.
    \item The states $|T_0 0\rangle$ , $|T_1 0\rangle$ , $|T_0 1\rangle$ , $|T_1 1\rangle$ are all Clifford-equivalent to $\ket{T0}$.
    \item The states $|T_0 H_0\rangle$ , $|T_1 H_0\rangle$ , $|T_0 H_1\rangle$ , $|T_1 H_1\rangle$ are all Clifford-equivalent to $\ket{TH}$.
    \item The states $|T_0 T_0\rangle$ , $|T_1 T_1\rangle$  are Clifford-equivalent.
    \item One can verify that
        \begin{equation}
        \ket{G_{4},2}=S_1^\dagger H_1\text{CZ}H_2\text{CZ} S_1^\dagger H_2 H_1 \frac{\ket{T_0T_0}-\ket{T_1T_1}}{\sqrt{2}}.
        \end{equation}
    \item One can verify that
        \begin{subequations}
        \begin{gather}
        \ket{G_{4},3}= -e^{i\frac{\pi}{4}}S_2^\dagger H_2H_1\text{CZ}H_1\text{CZ}  \ket{T_1T_0},\\[1ex]
        \ket{G_{4},4}=-\text{CZ} \ket{G_{4},3},
        \end{gather}
        \end{subequations}
    so that they both are Clifford-equivalent to $\ket{TT}$ (remember that all $T$-states are Clifford-equivalent; in particular $\ket{T_1}=-e^{-i\frac{\pi}{4}}S^3X\ket{T_0}$).
    \item One can verify that
        \begin{subequations}
        \begin{gather}
        \ket{G_{18},1}=i S_2 H_1 Z_1 \text{CNOT} \ket{H_0H_0},\\[1ex]        \ket{G_{18},2}=\text{SWAP}\cdot\text{CZ} \ket{G_{18},1},\\[1ex]
        \ket{G_{18},3}=Z_1Z_2 \ket{G_{18},2},\\[1ex]
        \ket{G_{18},4}=Z_1Z_2 \ket{G_{18},1},
        \end{gather}
        \end{subequations}
    so that these four states are Clifford-equivalent to $\ket{HH}$.
    \item One can verify that
        \begin{subequations}
        \begin{gather}       \ket{G_{20},2}=H_1S_1^\dagger H_1H_2 \ket{G_{20},1},\\[1ex]
        \ket{G_{20},3}=H_1S_1^\dagger H_1\text{CZ}H_2 \ket{G_{20},1},\\[1ex]
        \ket{G_{20},4}=H_2S_2^\dagger Z_1 H_1 S_1^\dagger \ket{G_{20},1},
        \end{gather}
        \end{subequations}
    so that these four states are Clifford-equivalent to $\ket{G_{20},1}$.
\end{enumerate}

\subsection*{Summarizing}

In Table~\ref{tab:Clifford-inequivalent non-degenerate eigenstates of Clifford operations for two qubits} in the maintext, we collect all different Clifford-inequivalent non-degenerate eigenstates of Clifford operations for two qubits. For each eigenstate, we also report the stabilizer fidelity and the number of nearest stabilizer states.

\section{Calculations for The Doubled five-qubit Stabilizer Code}

This appendix details the explicit calculations for the doubled five-qubit stabilizer code presented in Section~\ref{sec:An Inefficient Distillation Protocol Demonstrating Magic}, including projections of tensor-product magic states and the analysis of logical operations. The notation adopted here follows that of~\cite{KitaevBravyi}, to which we refer the reader for additional details.

\subsection*{The Five Qubit Code}

Using similar notations as in \cite{KitaevBravyi}, we define
\begin{align*}
    S_\pm(5) &\equiv \left\{ \pm\bigotimes_{i=1}^{5} \sigma_{\mu_i} \; \mid \; \mu_i \in \{0,1,2,3\} \right\}, \\
    G &\equiv \langle S_1, S_2, S_3, S_4 \rangle \subset S_+(5), \\
    \hat{T}_L &\equiv \hat{T}^{\otimes 5}, \quad \left|T_{\vec{x}}\right\rangle \equiv \bigotimes_{i=1}^{5} \left| T_{x_i} \right\rangle, \\
    \Pi &\equiv \frac{1}{16} \prod_{i=1}^{4} (I + S_i) = \frac{1}{16} \sum_{h \in G} h,
\end{align*}
where $\sigma_0, \sigma_1, \sigma_2, \sigma_3$ are the known Pauli matrices, and $S_1, S_2, S_3, S_4$ are the four generators of the stabilizer group of the 5-qubit code:
\begin{align*}
    S_1 &= XZZXI, \quad S_2 = IXZZX, \quad S_3 = XIXZZ, \quad S_4 = ZXIXZ.
\end{align*}
$\ket{T_0}, \ket{T_1}$ are the $T$-magic states with eigenvalues $e^{\pm i \pi/3}$ for the $\hat{T}$ gate:
\begin{align*}
    \hat{T} = e^{i \pi/4} S H = \frac{e^{i \pi/4}}{\sqrt{2}} \begin{bmatrix} 1 & 1 \\ i & -i \end{bmatrix}.
\end{align*}

It is easy to see that:
\begin{align*}
    \left|T_x\right\rangle \langle T_x| &= \frac{1}{2} \left( I + \frac{(-1)^x}{\sqrt{3}} (X+Y+Z) \right), \\
    \left|T_{\vec{x}}\right\rangle \langle T_{\vec{x}}| &= \frac{1}{2^5} \bigotimes_{i=1}^{5} \left( I + \frac{(-1)^{x_i}}{\sqrt{3}} (X+Y+Z) \right) \\
    &= \frac{1}{2^5} \sum_{g \in S_+(5)} \left( \frac{1}{\sqrt{3}} \right)^{|g|} (-1)^{\sum_{i=1}^{5} x_i I_{Pauli}(g,i)} g,
\end{align*}
where $I_{Pauli}(g,i)$ is the indicator for a Pauli matrix in $g$ at the $i$-th place.

Therefore,
\begin{align*}
    \bra{T_x} \Pi \ket{T_x} &= \bra{T_x} \sum_{\vec{y}} \ket{T_y} \bra{T_y} \Pi \ket{T_x} 
    = \sum_{\vec{y}} \bra{T_y} \Pi \ket{T_x} \bra{T_x} \ket{T_y} \\
    &= \text{tr} \left[ \Pi \ket{T_x} \bra{T_x} \right] 
    = \frac{1}{16} \sum_{h \in G} \text{tr} \left[ h \ket{T_x} \bra{T_x} \right] \\
    &= \frac{1}{16} \frac{1}{2^5} \sum_{h \in G} \sum_{g \in S_+(5)} \left(\frac{1}{\sqrt{3}}\right)^{|g|} (-1)^{\sum_{i=1}^{5} x_i I_{Pauli}(g,i)} \text{tr}[hg] \\
    &= \frac{1}{16} \frac{1}{2^5} \sum_{h \in G} \sum_{g \in S_+(5)} \left(\frac{1}{\sqrt{3}}\right)^{|g|} (-1)^{\sum_{i=1}^{5} x_i I_{Pauli}(g,i)} 2^5 \delta_{h,g} \\
    &= \frac{1}{16} \sum_{g \in G} \left( (-1)^{\sum_{i=1}^{5} x_i I_{Pauli,i}(g,i)} \left(\frac{1}{\sqrt{3}}\right)^{|g|} \right) \\
    &= \frac{1}{16} \left[ 1 + 3 \left(\frac{1}{\sqrt{3}}\right)^4 (-1)^{\sum_{i=1}^{5} x_i} \sum_{i=1}^{5} (-1)^{x_i} \right] \\
    &= \frac{1}{16} \left[ 1 + \frac{1}{3} (-1)^{\sum_{i=1}^{5} x_i} \sum_{i=1}^{5} (-1)^{x_i} \right] 
    = \begin{cases} 
        \frac{1}{6}, & |\vec{x}| = 0 \\
        0, & |\vec{x}| = 1 \\
        \frac{1}{12}, & |\vec{x}| = 2 \\
        \frac{1}{12}, & |\vec{x}| = 3 \\
        0, & |\vec{x}| = 4 \\
        \frac{1}{6}, & |\vec{x}| = 5
    \end{cases}
\end{align*}

So define:
\begin{align*}
    \ket{T_{0,L}} &= \sqrt{6} \Pi \ket{T_1}, \\
    \ket{T_{1,L}} &= \sqrt{6} \Pi \ket{T_0}.
\end{align*}
These states have the right corresponding eigenvalues for the logical transversal $\hat{T}_L$ gate composed of 5 $\hat{T}$ gates.
And then:
\begin{align*}
    \Pi \ket{T_x} = 
    \begin{cases} 
        \Pi \ket{T_0}, & |\vec{x}| = 0 \\
        0, & |\vec{x}| = 1 \\
        \frac{e^{i\phi_{\vec{x}}}}{\sqrt{2}} \Pi \ket{T_1}, & |\vec{x}| = 2 \\
        \frac{e^{i\phi_{\vec{x}}}}{\sqrt{2}} \Pi \ket{T_0}, & |\vec{x}| = 3 \\
        0, & |\vec{x}| = 4 \\
        \Pi \ket{T_1}, & |\vec{x}| = 5
    \end{cases} 
    = \begin{cases} 
        \frac{1}{\sqrt{6}} \ket{T_{1,L}}, & |\vec{x}| = 0 \\
        0, & |\vec{x}| = 1 \\
        \frac{e^{i\phi_{\vec{x}}}}{\sqrt{12}} \ket{T_{0,L}}, & |\vec{x}| = 2 \\
        \frac{e^{i\phi_{\vec{x}}}}{\sqrt{12}} \ket{T_{1,L}}, & |\vec{x}| = 3 \\
        0, & |\vec{x}| = 4 \\
        \frac{1}{\sqrt{6}} \ket{T_{0,L}}, & |\vec{x}| = 5
    \end{cases}
\end{align*}

And from explicit direct calculation:
\begin{align*}
    \phi_{\vec{x}} &=
    \begin{cases}
        -\frac{2\pi}{3}, & \vec{x} = 00111, 01110, 10011, 11001, 11100 \\
        \frac{2\pi}{3}, & \vec{x} = 01011, 01101, 10101, 10110, 11010
    \end{cases} \\
    \varphi_{\vec{x}} &=
    \begin{cases}
        -\frac{\pi}{3}, & \vec{x} = 00011, 00110, 01100, 10001, 11000 \\
        \frac{\pi}{3}, & \vec{x} = 00101, 01001, 01010, 10010, 10100
    \end{cases}
\end{align*}

\subsection*{The Doubled Five Qubit Code}

Note that
\begin{subequations}
    \begin{gather}
        \ket{\psi_{\vec{3}}}\equiv\ket{\psi_{(3,3,3,3,3)}}=\frac{1}{\sqrt{2^5}}\sum_{\vec{x}\in \left\{ 0,1 \right\}^5} \ket{T_{\vec{x}}T_{\vec{x}}}   , \\
        \ket{\psi_{\vec{0}}}\equiv\ket{\psi_{(0,0,0,0,0)}}=\frac{1}{\sqrt{2^5}}\sum_{\vec{x}\in \left\{ 0,1 \right\}^5} (-1)^{|\vec{x}|} \ket{T_{\vec{x}}T_{\vec{x}}}   . 
    \end{gather}
\end{subequations}
Therefore,
\tiny{
\begin{equation}
\begin{split}
    \Pi \ket{\psi_{(3,3,3,3,3)}} & = \frac{1}{\sqrt{2^5}} \sum_{\vec{x} \in \{0,1\}^5} \Pi \ket{T_{\vec{x}} T_{\vec{x}}} = \frac{1}{\sqrt{2^5}} \left[ \sum_{i=0}^{5} \sum_{\substack{\vec{x} \in \{0,1\}^5 \\ |\vec{x}| = i}} \Pi \ket{T_{\vec{x}} T_{\vec{x}}} \right] \\
    & = \frac{1}{\sqrt{2^5}} \left[ \Pi \ket{T_{\vec{0}} T_{\vec{0}}} + \sum_{\substack{\vec{x} \in \{0,1\}^5 \\ |\vec{x}| = 2}} \Pi \ket{T_{\vec{x}} T_{\vec{x}}} + \sum_{\substack{\vec{x} \in \{0,1\}^5 \\ |\vec{x}| = 3}} \Pi \ket{T_{\vec{x}} T_{\vec{x}}} + \Pi \ket{T_{\vec{1}} T_{\vec{1}}} \right] \\
    & =\frac{1}{\sqrt{2^5}} \left[ \frac{1}{6} \ket{T_{(1,L)} T_{(1,L)}} + \sum_{\substack{\vec{x} \in \{0,1\}^5 \\ |\vec{x}| = 2}} \frac{e^{2i\varphi_{\vec{x}}}}{12} \ket{T_{(0,L)} T_{(0,L)}} + \sum_{\substack{\vec{x} \in \{0,1\}^5 \\ |\vec{x}| = 3}} \frac{e^{2i\phi_{\vec{x}}}}{12} \ket{T_{(1,L)} T_{(1,L)}} + \frac{1}{6} \ket{T_{(0,L)} T_{(0,L)}} \right] \\
    & = \frac{1}{6} \frac{1}{\sqrt{2^5}} \left[ \left(1 + \sum_{\substack{\vec{x} \in \{0,1\}^5 \\ |\vec{x}| = 3}} \frac{e^{2i\phi_{\vec{x}}}}{2} \right) \ket{T_{(1,L)} T_{(1,L)}} + \left(1 + \sum_{\substack{\vec{x} \in \{0,1\}^5 \\ |\vec{x}| = 2}} \frac{e^{2i\varphi_{\vec{x}}}}{2} \right) \ket{T_{(0,L)} T_{(0,L)}} \right] \\
    & = \frac{1}{6} \frac{1}{\sqrt{2^5}} \left(1 + 5 \frac{e^{4i\pi/3} + e^{-4i\pi/3}}{2} \right) \ket{T_{(1,L)} T_{(1,L)}} + \frac{1}{6} \frac{1}{\sqrt{2^5}} \left(1 + 5 \frac{e^{2i\pi/3} + e^{-2i\pi/3}}{2} \right) \ket{T_{(0,L)} T_{(0,L)}} \\
    & = \frac{1}{6} \frac{1}{\sqrt{2^5}} \left(1 + 5 \cos\frac{2\pi}{3} \right) \left[ \ket{T_{(1,L)} T_{(1,L)}} + \ket{T_{(0,L)} T_{(0,L)}} \right] 
     = \frac{1}{6} \frac{1}{\sqrt{2^5}} \left(1 - \frac{5}{2} \right) \left[ \ket{T_{(1,L)} T_{(1,L)}} + \ket{T_{(0,L)} T_{(0,L)}} \right] \\
    & = -\frac{1}{4} \frac{1}{\sqrt{2^5}} \left[ \ket{T_{(1,L)} T_{(1,L)}} + \ket{T_{(0,L)} T_{(0,L)}} \right] = -\frac{1}{16} \frac{\left[ \ket{T_{(1,L)} T_{(1,L)}} + \ket{T_{(0,L)} T_{(0,L)}} \right]}{\sqrt{2}}
\end{split}
\end{equation}
}
\normalsize{}

Similarly,
\tiny{
\begin{equation}
\begin{split}
        \Pi \ket{\psi_{(0,0,0,0,0)}} & = \frac{1}{\sqrt{2^5}} \sum_{\vec{x} \in \{0,1\}^5} (-1)^{|\vec{x}|} \Pi \ket{T_{\vec{x}} T_{\vec{x}}} = \frac{1}{\sqrt{2^5}} \left[ \sum_{i=0}^{5} \sum_{\substack{\vec{x} \in \{0,1\}^5 \\ |\vec{x}| = i}} (-1)^{|\vec{x}|} \Pi \ket{T_{\vec{x}} T_{\vec{x}}} \right] = \\
        & = \frac{1}{\sqrt{2^5}} \left[ \Pi \ket{T_{\vec{0}} T_{\vec{0}}} + \sum_{\substack{\vec{x} \in \{0,1\}^5 \\ |\vec{x}| = 2}} \Pi \ket{T_{\vec{x}} T_{\vec{x}}} - \sum_{\substack{\vec{x} \in \{0,1\}^5 \\ |\vec{x}| = 3}} \Pi \ket{T_{\vec{x}} T_{\vec{x}}} - \Pi \ket{T_{\vec{1}} T_{\vec{1}}} \right] \\
        & = \frac{1}{\sqrt{2^5}} \left[ \frac{1}{6} \ket{T_{(1,L)} T_{(1,L)}} + \sum_{\substack{\vec{x} \in \{0,1\}^5 \\ |\vec{x}| = 2}} \frac{e^{2i\varphi_{\vec{x}}}}{12} \ket{T_{(0,L)} T_{(0,L)}} - \sum_{\substack{\vec{x} \in \{0,1\}^5 \\ |\vec{x}| = 3}} \frac{e^{2i\phi_{\vec{x}}}}{12} \ket{T_{(1,L)} T_{(1,L)}} - \frac{1}{6} \ket{T_{(0,L)} T_{(0,L)}} \right] \\
        & = \frac{1}{6} \frac{1}{\sqrt{2^5}} \left[ \left(1 - \sum_{\substack{\vec{x} \in \{0,1\}^5 \\ |\vec{x}| = 3}} \frac{e^{2i\phi_{\vec{x}}}}{2} \right) \ket{T_{(1,L)} T_{(1,L)}} - \left(1 - \sum_{\substack{\vec{x} \in \{0,1\}^5 \\ |\vec{x}| = 2}} \frac{e^{2i\varphi_{\vec{x}}}}{2} \right) \ket{T_{(0,L)} T_{(0,L)}} \right] \\
        & = \frac{1}{6} \frac{1}{\sqrt{2^5}} \left(1 - 5 \frac{e^{4i\pi/3} + e^{-4i\pi/3}}{2} \right) \ket{T_{(1,L)} T_{(1,L)}} - \frac{1}{6} \frac{1}{\sqrt{2^5}} \left(1 - 5 \frac{e^{2i\pi/3} + e^{-2i\pi/3}}{2} \right) \ket{T_{(0,L)} T_{(0,L)}} \\
        & = \frac{1}{6} \frac{1}{\sqrt{2^5}} \left(1 - 5 \cos\frac{2\pi}{3} \right) \left[ \ket{T_{(1,L)} T_{(1,L)}} - \ket{T_{(0,L)} T_{(0,L)}} \right] 
         = \frac{1}{6} \frac{1}{\sqrt{2^5}} \left(1 + \frac{5}{2} \right) \left[ \ket{T_{(1,L)} T_{(1,L)}} - \ket{T_{(0,L)} T_{(0,L)}} \right] \\
        & = \frac{7}{12} \frac{1}{\sqrt{2^5}} \left[ \ket{T_{(1,L)} T_{(1,L)}} - \ket{T_{(0,L)} T_{(0,L)}} \right] 
         = \frac{7}{48} \frac{\left[ \ket{T_{(1,L)} T_{(1,L)}} - \ket{T_{(0,L)} T_{(0,L)}} \right]}{\sqrt{2}}
\end{split}
\end{equation}
}
\normalsize{}

Thus, the projections satisfy:
\begin{subequations}
    \begin{gather}
        \Pi \ket{\psi_{\vec{1}}} = \frac{1}{6} \ket{\psi_2^L}, \\  \Pi \ket{\psi_{\vec{2}}} = \frac{1}{6} \ket{\psi_1^L}, \\
    \Pi \ket{\psi_{\vec{0}}} = \frac{7}{48} \ket{\psi_0^L}, \\  \Pi \ket{\psi_{\vec{3}}} = -\frac{1}{16} \ket{\psi_3^L}.
    \end{gather}
\end{subequations}

These explicit derivations validate the constructions discussed in in Section~\ref{sec:An Inefficient Distillation Protocol Demonstrating Magic}.

\end{document}